\renewcommand{\backref}[1]{}
\renewcommand{\backrefalt}[4]{%
\ifcase #1 %
\or
[p.\ #2]%
\else
[pp.\ #2]%
\fi}
\newcommand{\trianglewedge}{%
    \begin{tikzpicture}[scale=0.1,baseline=(current bounding box.center)]
        \draw (0,0) -- (1,2) -- (2,0); 
    \end{tikzpicture}%
}
\newcommand{\pathsprogress}{\mathcal{P}}
\newcommand{\cycleprogress}{\mathcal{C}}
\renewcommand{\pathsprogress}{{\Lambda}}
\renewcommand{\cycleprogress}{\bm{\square}}
\renewcommand{\fbox}[1]{%
\tcbox[colframe=black,colback=white,arc=2mm,boxrule=0.5mm,boxrule=0.25mm]{#1}%
}
\renewcommand{\labelitemi}{\tiny$\blacksquare$}
\let\originalleft\left
\let\originalright\right
\renewcommand{\left}{\mathopen{}\mathclose\bgroup\originalleft}
\renewcommand{\right}{\aftergroup\egroup\originalright}
\DeclareMathOperator{\twodots}{..}
\DeclareMathOperator{\1}{\mathbbm{1}}
\DeclareMathOperator{\id}{\mathbb{I}}
\DeclareMathOperator{\pr}{\mathrm{Pr}}
\DeclareMathOperator{\maxdeg}{\mathrm{maxdeg}}
\DeclareMathOperator{\triedge}{\mathsf{TriangleEdge}}
\DeclareMathOperator{\trivertex}{\mathsf{TriangleVertex}}
\DeclareMathOperator{\tri}{\mathsf{Triangle}}
\newcommand{\kcycle}[1]{#1\mathsf{\text{-}CYCLE}}
\DeclareMathOperator{\ED}{\mathrm{ED}}
\DeclareMathOperator{\OR}{\mathrm{OR}}
\DeclareMathOperator{\expect}{\mathrm{E}}
\DeclareMathOperator{\poly}{\mathrm{poly}}
\DeclareMathOperator{\hide}{\mathsf{HIDE}}
\DeclareMathOperator{\shuffle}{\mathsf{SHUFFLE}}
\DeclareMathOperator{\calA}{\mathcal{A}}
\DeclareMathOperator{\calD}{\mathcal{D}}
\DeclareMathOperator{\calG}{\mathcal{G}}
\DeclareMathOperator{\calH}{\mathcal{H}}
\DeclareMathOperator{\calO}{\mathcal{O}}
\DeclareMathOperator{\calR}{\mathcal{R}}
\DeclareMathOperator{\calS}{\mathcal{S}}
\DeclareMathOperator{\calT}{\mathcal{T}}
\DeclareMathOperator{\success}{\mathrm{succ}}
\DeclareMathOperator{\record}{\mathrm{rec}}
\DeclareMathSymbol{\dash}{\mathord}{AMSa}{"39}
\DeclarePairedDelimiter\ceil{\lceil}{\rceil}
\DeclarePairedDelimiter\floor{\lfloor}{\rfloor}
\newcommand{\abs}[1]{\left|#1\right|}
\newcommand{\kdist}[1]{#1\mathsf{\text{-}DIST}}
\newcommand{\ksum}[1]{#1\mathsf{\text{-}SUM}}
\newcommand{\Bad}{\mathrm{Bad}}
\newcommand{\Prob}{\mathrm{Pr}}
\newcommand{\BarPr}{\widetilde{\mathrm{Pr}}}
\newcommand{\norm}[1]{\left\|#1\right\|}
\newcommand{\calP}{\mathcal{P}}
\newcommand{\Adv}{\mathrm{Adv}}
\newcommand{\Act}{\mathrm{Act}}
\newcommand{\Cst}{\mathrm{Cst}}
\newcommand{\pSearch}{\mathrm{pSearch}}
\newcommand{\TVD}{\mathrm{TVD}}
\newcommand{\SEEN}{\mathrm{SEEN}}
\newcommand{\DENSE}{\mathrm{DENSE}}
\newcommand{\SPARSE}{\mathrm{SPARSE}}
\newcommand{\dense}{\mathrm{dense}}
\newcommand{\sparse}{\mathrm{sparse}}
\newcommand{\MAJ}{\mathrm{MAJ}}
\newcommand{\SMAJ}{\mathrm{\Sigma MAJ}}
\newcommand{\Bin}{\mathrm{Bin}}
\newcommand{\tildeO}{\widetilde{O}}
\newcommand{\tildeD}{\widetilde{D}}
\newcommand{\tildeOmega}{\widetilde{\Omega}}
\newcommand{\tildeTheta}{\widetilde{\Theta}}
\newcommand{\polylog}{\mathrm{polylog}}
\renewcommand{\emptyset}{\varnothing}
\newcommand{\ketbra}[2]{|#1\rangle\langle#2|}
\newcommand{\ketbrasame}[1]{|#1\rangle\langle#1|}
\newtheorem{theorem}{Theorem}[section]
\newtheorem{lemma}[theorem]{Lemma}
\newtheorem{fact}[theorem]{Fact}
\newtheorem{claim}[theorem]{Claim}
\newtheorem{conjecture}[theorem]{Conjecture}
\newtheorem{corollary}[theorem]{Corollary}
\newtheorem{proposition}[theorem]{Proposition}
\theoremstyle{definition}
\newtheorem{definition}[theorem]{Definition}
\newtheorem{remark}[theorem]{Remark}
\numberwithin{equation}{section}
\newcommand{\problemtitle}[1]{\gdef\@problemtitle{#1}}
\newcommand{\probleminput}[1]{\gdef\@probleminput{#1}}
\newcommand{\problempromise}[1]{\gdef\@problempromise{#1}}
\newcommand{\problemquestion}[1]{\gdef\@problemquestion{#1}}
\newcommand{\problempremise}[1]{\gdef\@problempremise{#1}}
  \par\addvspace{.5\baselineskip}
  \par\addvspace{.5\baselineskip}
  \par\addvspace{.5\baselineskip}
  \par\addvspace{.5\baselineskip}
\setlist[enumerate,1]{itemsep=-1\parsep, topsep=0mm}
\setlist[description]{itemsep=-1\parsep, topsep=0mm}
\setlist[itemize]{itemsep=-1\parsep, topsep=0mm}
\def\@tocline#1#2#3#4#5#6#7{\relax
  \ifnum #1>\c@tocdepth 
  \else
    \par \addpenalty\@secpenalty\addvspace{\ifnum #1=1 2mm \else #2\fi}%
    \begingroup \hyphenpenalty\@M
    \@ifempty{#4}{%
      \@tempdima\csname r@tocindent\number#1\endcsname\relax
    }{%
      \@tempdima#4\relax
    }%
    \parindent\z@ \leftskip#3\relax \advance\leftskip\@tempdima\relax
    \rightskip\@pnumwidth plus4em \parfillskip-\@pnumwidth
          \ifnum #1=1 \bfseries #5\else #5\fi 
   \leavevmode\hskip-\@tempdima
      \ifcase #1
       \or\or \hskip 1em \or \hskip 2em \else \hskip 3em \fi%
#6     \nobreak\relax
{\ifnum #1=1\hfill \else \SparseDotfill\fi}
 \hbox to\@pnumwidth{\@tocpagenum{
    \ifnum #1=1 \bfseries \fi #7}}\par
    \nobreak
    \endgroup
  \fi}
\renewcommand{\epsilon}{\varepsilon}
\newcommand{\I}{\mathbb{I}}
\title{\centering{Quantum Algorithms on Edge Lists\\
\large Hiding, Shuffling, and Cycle Finding}
}
\author{\normalsize 
Amin Shiraz Gilani\footnote{University of Maryland. \texttt{asgilani@umd.edu}} \;\;\;\; 
Daochen Wang\footnote{University of British Columbia. \texttt{wdaochen@gmail.com}} \;\;\;\;
Pei Wu\footnote{The Pennsylvania State University. \texttt{pei.wu@psu.edu}} \;\;\;\;
Xingyu Zhou\footnote{University of British Columbia. \texttt{zxingyu@cs.ubc.ca}} \;\;\;\;}
\date{\vspace{-5ex}}
\begin{document}

\maketitle
\thispagestyle{empty}
\begin{abstract}
The edge list model is arguably the simplest input model for graphs, where the graph is specified by a list of its edges. In this model, we study the quantum query complexity of three variants of the triangle finding problem. The first asks whether there exists a triangle containing a target edge and raises general questions about the hiding of a problem's input among irrelevant data. The second asks whether there exists a triangle containing a target vertex and raises general questions about the  shuffling of a problem's input. The third asks whether there exists a triangle; this problem bridges the $3$-distinctness and $3$-sum problems, which have been extensively studied by both cryptographers and complexity theorists. We provide tight or nearly tight results for these problems as well as some first answers to the general questions they raise.

Furthermore, given any graph with low maximum degree, such as a typical random sparse graph, we prove that the quantum query complexity of finding a length-$k$ cycle in its length-$m$ edge list is $m^{3/4-1/(2^{k+2}-4)\pm o(1)}$, which matches the best-known upper bound for the quantum query complexity of $k$-distinctness on length-$m$ inputs up to an $m^{o(1)}$ factor. We prove the lower bound by developing new techniques within Zhandry's recording query framework \cite{zhandry_compressed} as generalized by Hamoudi and Magniez \cite{Hamoudi_2023}. These techniques extend the framework to treat any non-product distribution that results from conditioning a product distribution on the absence of rare events. We prove the upper bound by adapting Belovs's learning graph algorithm for $k$-distinctness \cite{kdist_learning_graphs_12}.
Finally, assuming a plausible conjecture concerning only cycle finding, we show that the lower bound can be lifted to an essentially tight lower bound on the quantum query complexity of $k$-distinctness, which is a long-standing open question.
\end{abstract}

\newpage

\thispagestyle{empty}
\tableofcontents

\newpage
\pagenumbering{arabic}
\setcounter{page}{1}

\section{Introduction}
The study of graph problems forms a cornerstone of research in theoretical computer science. These problems have been studied when the input data structure is the \emph{adjacency matrix} or \emph{adjacency list} of the graph. In the adjacency matrix model of an $n$-vertex  undirected simple graph, one is given the presence or absence status of every one of the $\binom{n}{2}$ edges in an $n\times n$ binary matrix. In the adjacency list model, the input is given as $n$ lists of neighbors, one for each of the $n$ vertices. Both models build structural cues into the input: an adjacency matrix assigns each vertex a given row and column, while an adjacency list uses vertices to order edges. These cues allow for fast computation but are arguably not intrinsic to the underlying graph.

In this work, we study graph problems in the edge list model, a minimalist model where the graph is given as an unordered multiset of edges. When computing on edge lists, algorithms must build the graph’s structure for themselves and not rely on any input cues. Thus, studying computation on edge lists offers a new window into the relationship between a graph problem's \emph{structure} and \emph{complexity}.
\begin{center}
\vspace{3mm}
\begin{minipage}{0.2\textwidth}
\begin{tikzpicture}[scale=0.55, every node/.style={circle, draw,  minimum size=0.4cm, inner sep=0}]
    \node (1) at (0, 2) {1};
    \node (2) at (2, 2) {2};
    \node (3) at (0, 0) {3};
    \node (4) at (2, 0) {4};
    
    \draw (1) -- (2);
    \draw (3) -- (4);
    \draw (1) -- (4);
    \draw (1) -- (3);
    \draw (2) -- (4);
\end{tikzpicture}
\end{minipage}
\begin{minipage}{0.1\textwidth}
\vspace{-3.3mm}
\begin{alignat*}{2}
    &&\text{Edge list:} &\quad \stackrel{\{1,2\}}{\rule{0.8cm}{0.5pt}} \ \stackrel{\{3,4\}}{\rule{0.8cm}{0.5pt}}  \
    \stackrel{\{1,4\}}{\rule{0.8cm}{0.5pt}} \ \stackrel{\{1,3\}}{\rule{0.8cm}{0.5pt}} \, \stackrel{\{2,4\}}{\rule{0.8cm}{0.5pt}}
    \\
    &&\text{Adjacency list:} &\quad 1\colon \stackrel{2}{\rule{0.3cm}{0.5pt}} \, \stackrel{4}{\rule{0.3cm}{0.5pt}} \, \stackrel{3}{\rule{0.3cm}{0.5pt}} \quad 
    2 \colon \stackrel{4}{\rule{0.3cm}{0.5pt}} \, \stackrel{1}{\rule{0.3cm}{0.5pt}} \quad 3 \colon \stackrel{1}{\rule{0.3cm}{0.5pt}} \, \stackrel{4}{\rule{0.3cm}{0.5pt}} \quad 4 \colon \stackrel{1}{\rule{0.3cm}{0.5pt}} \, \stackrel{3}{\rule{0.3cm}{0.5pt}} \, \stackrel{2}{\rule{0.3cm}{0.5pt}}
\end{alignat*}
\end{minipage}

\vspace{2mm}\textit{An edge list can be seen as a shuffled adjacency list: both inputs specify the same displayed graph.}
\end{center}

We study the edge list model in the context of \emph{quantum query complexity}. 
In quantum query complexity, we assume access to a quantum computer and characterize the complexity of a problem by the number of times the computer queries an input edge list.

Quantum algorithms on edge lists has previously been studied in the area of streaming algorithms. For example, \cite{kallaugher_streaming} showed that triangle counting can be solved by a quantum streaming algorithm using less space than any classical rival. However, the quantum computational model employed differs significantly from the quantum query model that we study. In the former case, the edge list is still queried classically --- only the memory is quantum --- whereas we assume the edge list can be queried in \emph{quantum superposition}. As we will see in this work, this change of model leads to new and interesting questions.

Inspired by the fact that the triangle problem in the adjacency matrix model has driven innovations in quantum query complexity \cite{triangle_magniez_07,belovs_constant_sized_12,nested_quantum_walk_2013,improved_triangle_magniez_13,legall_triangle_14,yu_bendavid_hypergraph_24} over many years, we were driven to study the same problem but in the edge list model. Specifically, we study three variants of the triangle problem.
\begin{itemize}
    \item $\triedge$ asks whether there exists a triangle containing a target edge and raises general questions about how a problem's complexity increases if its input is \emph{hidden} among irrelevant data. For the specific problem, the irrelevant data consists of those edges in the input that are not incident to the target edge.
    \item $\trivertex$ asks whether there exists a triangle containing a target vertex and raises general questions about how a problem's complexity increases if its input is \emph{shuffled}. For the specific problem, two parts of the input are shuffled: one part containing edges incident to the target vertex and the other part containing the remaining edges. The notion of shuffling is also inherent in the comparison between the edge list and adjacency list models.
    \item $\tri$ asks whether a triangle exists and bridges the $3$-distinctness and $3$-sum problems. The latter two problems have been extensively studied, both in the worst-case setting by complexity theorists seeking to connect problem structure with algorithmic complexity \cite{kdist_learning_graphs_12,spalek_belovs_13,belovs_childs_time_efficient_3dist_13,certificate_structure_belovs_2014,apers_testing_25} and in the average-case setting by cryptographers seeking to understand the security of cryptosystems \cite{ksum_crypto,liu_zhandry_19}.
\end{itemize}
We provide tight or nearly tight results for these problems as well as some first answers to the general questions they raise. Furthermore, we generalize our results for $\tri$ to $k$-cycle finding, a problem motivated by its close ties to the well-studied problem of $k$-distinctness.

\subsection{Our results}
In the following, we write $Q(\cdot)$ for the worst-case (bounded-error) quantum query complexity, $\mathbb{N}$ for the set of positive integers, and $[n]\coloneqq \{1,\dots,n\}$ for every $n\in \mathbb{N}$. 

\paragraph{$\triedge$ and Hiding.} In $\triedge$, when searching for a triangle containing a target edge $\{u,v\}$, any edge in the input of the form $\{u',v'\}$ such that $\{u',v'\} \cap \{u,v\} = \emptyset$ can be viewed as irrelevant data that hides the relevant part of the input. This observation motivates a general definition of hiding.

\begin{definition}[Hiding transform]\label{def:hiding}
Let $a,b$ be integers with $b\geq a\geq 1$. Let $\tildeD,\Sigma$ be finite non-empty sets with $\tildeD\subseteq \Sigma^a$. For a function $f\colon \tildeD  \to \{0,1\}$, we define the \emph{hiding transform} of $f$ to be 
    \begin{equation}
    \hide_b[f] \colon D \subseteq (\Sigma \cup \{*\})^b \to \{0,1\},
    \end{equation}
where $*$ is a symbol outside of $\Sigma$, the set $D$ consists of all strings $y \in (\Sigma \cup \{*\})^b$ with exactly $a$ non-$*$ symbols such that the length-$a$ subsequence\footnote{In general, for integers $b\geq a \geq 1$, $\Sigma$ a finite non-empty set, and a string $x\in \Sigma^b$: a length-$a$ subsequence of $x$ is a string of the form $x_{i_1}x_{i_2}\dots x_{i_a}$ for some integers $1\leq i_1<i_2<\dots<i_a \leq b$.} $\tilde{y}$ formed by those non-$*$ symbols is an element of $\widetilde{D}$, and $\hide_b[f](y)$ is defined to be $f(\tilde{y})$.
\end{definition}

Then $\triedge$ can be seen as $\hide_m[\ED_d]$, where $\ED_d$ is the element distinctness function that decides whether an input string of length $d$ has two distinct positions containing the same symbol. We show the following.\footnote{We use the notation $\tildeO,\tildeOmega,\tildeTheta$ to denote big-$O$,$\Omega$, and $\Theta$ up to poly-logarithmic factors.}
\begin{theorem}[Informal version of \cref{prop:hided_ED,prop:triangle_edge}]
    $Q(\hide_m[\ED_d]) = \tildeTheta(\sqrt{m}d^{1/6})$. Consequently,
    \begin{equation}
        Q(\triedge) = \tildeTheta(\sqrt{m} d^{1/6}),
    \end{equation}
    where the input is a length-$m$ edge list and the target edge has $d$ neighboring edges.\footnote{ We assume $d$ is known in advance. This is for simplicity in presentation: by using approximate counting to estimate $d$, it can be seen that the bound holds even if $d$ is not known in advance. The same comment applies to $\trivertex$.}
\end{theorem}

The study of $\triedge$ naturally led us to investigate how $Q(f)$ relates to $Q(\hide_b[f])$ for arbitrary $f\colon \Sigma^a \to \{0,1\}$. As a first step in this direction, we show $Q(\hide_b[f]) = \tildeTheta(\sqrt{b/a} \cdot Q(f))$ for every symmetric $f\colon \{0,1\}^a \to \{0,1\}$, where ``symmetric'' refers to ``symmetric under permuting the \emph{positions} of input symbols'' throughout this work. The symmetry condition is necessary. For example, let \(f:\{0,1\}^a\to\{0,1\}\) be the dictator function \(f(x)=x_1\). Then
\(Q(f)=1\), but $\hide_b[f]$ requires finding the first non-\(\ast\) symbol among the \(b\) positions. On the subdomain where the last \(a-1\) positions are
fixed non-\(\ast\) symbols and exactly one of the first \(b-a+1\) positions is
non-\(\ast\), computing \(\hide_b[f]\) contains unstructured search on
\(b-a+1\) items as a subproblem. Hence
\begin{equation}
    Q(\hide_b[f])=\Omega(\sqrt{b-a+1}),
\end{equation}
which is not \(O(\sqrt{b/a}\,Q(f))\) in general.
On the other hand, we conjecture that the result holds for every symmetric $f\colon \Sigma^a \to \{0,1\}$, even if $\abs{\Sigma}>2$. We support this conjecture by showing that its randomized analogue is true in \cref{prop:symmetric_partial}.

\paragraph{$\trivertex$ and Shuffling.} 
We show in \cref{prop:triangle_vertex_upper_bound,prop:triangle_vertex_lower_bound} that the quantum query complexity of $\trivertex$, when the input is a length-$m$ edge list and the target vertex $u$ has degree $d$ satisfies
\begin{equation}
    \Omega(\sqrt{m/d} \cdot Q(\kdist{3}_d)) \leq Q(\trivertex) \leq O(\sqrt{m}d^{1/4}),
\end{equation}
where $\kdist{3}_d$ is the $3$-distinctness function that decides whether an input string of length $d$ has three distinct positions containing the same symbol. Note that $Q(\kdist{3}_d)$ is between $\Omega(d^{2/3})$ -- because it is no easier than $\ED_d$ (which could be also called $\kdist{2}_d$) -- and $O(d^{5/7})$ \cite{kdist_learning_graphs_12}.

When seeking a triangle containing the target vertex $u$, it is helpful to think of the input edge list as consisting of two parts: part $A$ containing edges that are incident to $u$, and part $B$ containing edges that are not incident to $u$. We do not a priori know where parts $A$ and $B$ are but neither part contains irrelevant data --- indeed two edges of the triangle must come from $A$ and one edge from $B$ (if it exists). The \emph{shuffling} of parts $A$ and $B$ contributes to the hardness of $\trivertex$.

The above discussion and the comparison of the edge list model with the adjacency list and adjacency matrix models naturally led us to investigate how shuffling a function's input affects its complexity. We formalize and study two concrete versions of this problem.

\underline{(I) Shuffled functions}. As discussed, the edge list can be seen as a ``shuffled version'' of the adjacency list. We now formalize this notion. 
    \begin{definition}[Shuffling transform]\label{def:shuffling}
            Let $n\in \mathbb{N}$. Let $\tildeD,\Sigma$ be finite non-empty sets with $\tildeD\subseteq \Sigma^n$. For a function $f\colon \widetilde{D} \to \{0,1\}$, we define the \emph{shuffling transform} of $f$ to be
    \begin{equation}
        \shuffle[f] \colon D \subseteq (\Sigma\times [n])^n  \to \{0,1\}, \quad \text{where}
    \end{equation}
    \begin{enumerate}
        \item the set $D$ consists of all elements $x\in (\Sigma\times [n])^n$ satisfying $x = ((v_1,\pi(1)),\dots,(v_n,\pi(n)))$ for some bijection $\pi\colon [n]\to [n]$ such that $(v_{\pi^{-1}(1)},v_{\pi^{-1}(2)},\dots,v_{\pi^{-1}(n)})\in \widetilde{D}$.
    \item $\shuffle[f](x)\coloneqq f(v_{\pi^{-1}(1)},v_{\pi^{-1}(2)},\dots,v_{\pi^{-1}(n)})$.
    \end{enumerate}
    \end{definition}
In other words, the inputs to $\shuffle[f]$ are shuffled versions of the inputs to $f$ such that each symbol of the input to $\shuffle[f]$ additionally contains its position pre-shuffling. Of course, every (worst-case) query complexity measure of $\shuffle[f]$ is at least that of $f$, since $\shuffle[f]$ contains a copy of $f$ under the restriction of the domain $D$ to those inputs of the form $(x_1,1),\dots, (x_n,n)$, where $x = x_1 \dots x_n \in \widetilde{D}$. If $f$ is symmetric, then every query complexity measure on $f$ is the same as that measure on $\shuffle[f]$ since the algorithm computing $\shuffle[f]$ could simply ignore the second coordinates of the input. If $f$ is not symmetric, then $Q(\shuffle[f])$ could be much larger than $Q(f)$: the dictator function witnesses a $\Omega(\sqrt{n})$ versus $1$ separation.

Given the above discussion, the interesting question becomes: how does the separation between $Q(\shuffle[f])$ and $Q(f)$ depend on  ``how symmetric'' $f$ is? Natural (partially) symmetric $f$ arises from graph properties. 
We show that there can be massive separations between $Q(\shuffle[f])$ and $Q(f)$ even if $f$ has significant symmetry by being defined as a graph property in either (i) the adjacency list model (exponential separation) or (ii) adjacency matrix model (unbounded separation). 

Since an edge list can be interpreted as a shuffled adjacency list, result (i) can be interpreted as ``quantum computers can compute some graph property exponentially faster given an adjacency list instead of an edge list''; the proof of result (ii) can also be adapted to show ``quantum computers can compute some graph property unboundedly faster given an adjacency matrix instead of an edge list''.
\begin{theorem}[Informal version of \cref{thm:shuffling-adj-list,prop:shuffling_adj_matrix}]
\leavevmode
    \begin{enumerate}
        \item There exists a (family of) graph property $\calP_1:A\to \{0,1\}$, where $A$ denotes a set of adjacency lists of size $n$ such that
        \begin{equation*}
            Q(\calP_1)=O(\polylog(n)) \quad \text{and} \quad Q(\shuffle[\calP_1])=n^{\Omega(1)}.
        \end{equation*}
        \item There exists a (family of) graph property $\calP_2:M\to\{0,1\}$, where $M$ denotes a set of $n\times n$ adjacency matrices such that 
        \begin{equation*}
            Q(\calP_2)=O(1) \quad \text{and} \quad  Q(\shuffle[\calP_2])=n^{\Omega(1)}.
        \end{equation*}
    \end{enumerate}
\end{theorem}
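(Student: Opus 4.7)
The plan is to exploit the gap between direct-indexed access (available in the adjacency list or matrix model) and content-addressed search (forced by shuffling). For both parts I would construct a graph property that reduces to reading a small, intrinsically identifiable fragment of the representation; this fragment can be queried in $\polylog n$ (resp.\ $O(1)$) queries with direct indexing, whereas locating it among the $n$ (resp.\ $n^2$) tagged positions of the shuffled input requires $n^{\Omega(1)}$ queries by a reduction from unstructured search.

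For part (1), I would take $n$ a power of two and pack a Bernstein--Vazirani-style instance into the neighborhoods of $O(\log n)$ anchor vertices singled out by a rigid, automorphism-free skeleton subgraph $H$. The hidden string $s$ is encoded into the neighborhoods of the anchors, and $\calP_1$ returns $s \cdot t$ for a fixed $t$; invariance under isomorphism follows from the rigidity of $H$. Under a natural canonical-form promise that places the anchors in the first indices of the adjacency list, a $\polylog n$ quantum algorithm reads off $s$ from the anchor neighborhoods and outputs the linear form. For the lower bound, a random planting of the anchor positions reduces search on $n$ elements to $\shuffle[\calP_1]$: an $o(\sqrt{n})$-query algorithm for $\shuffle[\calP_1]$ would yield a too-fast search algorithm via a hybrid argument, so $Q(\shuffle[\calP_1]) = \Omega(\sqrt{n})$, which is $n^{\Omega(1)}$.

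For part (2), I would promise that the input is one of two graphs differing only at a single intrinsically-marked edge, say the edge between two structurally distinguished vertices of a rigid skeleton. One adjacency-matrix query at that edge decides the property, yielding $Q(\calP_2) = O(1)$. In the shuffled setting the informative entry hides among $n^2$ tagged positions, and a standard search reduction gives $Q(\shuffle[\calP_2]) = \Omega(n)$.

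The main obstacle is reconciling genuine isomorphism invariance with the polylog (or constant) upper bound, since locating any intrinsically-defined vertex in an arbitrary adjacency list already seems to cost $\Omega(\sqrt{n})$. I expect to resolve this via a promise that the input is in a canonical labeling relative to the skeleton --- a standard device in query-complexity separations that preserves the spirit of a graph property while making direct indexed access meaningful. A secondary technical point is that the position tags of the shuffled input are strictly more informative than plain oracle bits; this is handled by planting the gadget so that tag positions are independent of the answer, reducing $\shuffle[\calP_i]$ to pure unstructured search rather than a richer structured problem.
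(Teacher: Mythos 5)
Your approach is genuinely different from the paper's and has a gap that you yourself flag as the ``main obstacle'' without resolving: the canonical-form promise on which both of your upper bounds rest.

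The paper does not go via privileged positions at all. For part (1), the paper cites \cite{symmetries_bcgkpw} for an existing \emph{bona fide} graph property $\calP$ on degree-$5$ adjacency lists with $Q(\calP) = O(\polylog n)$ and $R(\calP) = \Omega(n^{1/16})$; the $\polylog$ quantum algorithm there is a quantum-walk algorithm that navigates the graph structure rather than dereferencing known indices, so no canonical labeling is needed. The lower bound on $\shuffle[\calP]$ is then immediate from $R(\shuffle[\calP]) \geq R(\calP)$ plus Chailloux's theorem (\cref{fact:shuffle-classical-quantum}), giving $\Omega(n^{1/48})$. For part (2), the paper constructs $\SMAJ_n$, a restricted Majority-of-Majority, and shows $R(\SMAJ_n) = O(1)$ by \emph{sampling}: a few random rows, a few random entries in each, and a threshold comparison. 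That estimator is statistical and completely indifferent to which labeling the input carries, so no canonicalization is needed there either. The lower bound $R(\shuffle[\SMAJ_n]) = \Omega(\sqrt n)$ is a hands-on total variation distance argument between two explicit hard distributions, and Chailloux then gives $\Omega(n^{1/6})$.

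Your Bernstein--Vazirani gadget for part (1) and single-marked-edge gadget for part (2) both need to read a specific handful of positions. You propose to make those positions findable via a ``rigid, automorphism-free skeleton,'' but rigidity alone doesn't let an algorithm \emph{locate} the anchor vertices cheaply: in an adversarially labeled adjacency list or matrix, locating even one intrinsically defined vertex costs $\Omega(\sqrt n)$, which kills the $\polylog n$ (resp.\ $O(1)$) upper bound. Adding a canonical-form promise to fix this makes the domain no longer closed under relabeling, so the resulting function is a graph property only in a vacuous sense, and the point of the theorem --- comparing genuine graph properties across the adjacency list/matrix and edge list models --- is lost. The paper avoids this dilemma by choosing properties whose fast algorithms are \emph{labeling-oblivious} by design (quantum walk for part (1), random sampling for part (2)).

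A secondary difference: you lower-bound $\shuffle[\calP_i]$ by a reduction from unstructured search, which would give a sharper $\Omega(\sqrt n)$ or $\Omega(n)$ if the planting could be made to work; the paper instead lower-bounds the \emph{randomized} complexity of $\shuffle[\calP_i]$ and passes through Chailloux, losing a cube root (e.g., $n^{1/48}$ rather than $\sqrt n$) but avoiding the need to engineer planted search structure into a genuine graph property --- which is exactly the piece your construction struggles with.
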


\underline{(II) Shuffled direct sum.} In $\trivertex$, the hardness arising from shuffling is intuitively not due to shuffling \emph{within} parts $A$ and $B$ but rather \emph{between} them. We formalize a toy version of this type of shuffling in the context of direct sum as follows.

\begin{definition}[Shuffled direct sum]\label{def:shuffled-direct-sum}
    Let $n,k\in \mathbb{N}$. Let $\Sigma$ be a finite non-empty set. For a function $f \colon \Sigma^n \to \{0,1\}$, we define the $k$-shuffled direct sum of $f$ to be 
    \begin{equation}
        \shuffle^k[f]\colon D \subseteq  (\Sigma\times [k])^{kn} \to \{0,1\}^k, \quad \text{where}
    \end{equation}
\begin{enumerate}
    \item the set $D$ consists of all elements $x\in (\Sigma\times [k])^{kn}$ satisfying
    $x = ((v_1,c_1),\dots, (v_{kn},c_{kn}))$ such that: for all $j\in [k]$, there are exactly $n$ indices $i\in [kn]$ with $c_i  = j$.
    \item $\shuffle^k[f](x)$ is defined to equal $(f(v^{(1)}),\dots,f(v^{(k)}))$, where $v^{(j)}$ is the subsequence of $v\coloneqq v_1\dots v_{kn}$ indexed by those $i\in [kn]$ such that $c_i = j$.
\end{enumerate}
\end{definition}

    We have $Q(\shuffle^k[f]) \geq \Omega(kQ(f))$, since computing $\shuffle^k[f]$ is at least as hard as computing $k$ independent copies of $f$, which costs $\Omega(kQ(f))$ queries by a well-known direct-sum theorem for quantum query complexity \cite{ambainis_direct_sum,reichardt_direct_sum}.
This direct sum lower bound can be far from tight for $\shuffle^k[f]$: again consider $f$ being the dictator function. At first sight, computing $\shuffle^k[f]$ seems harder than computing the $k$ independent copies of $f$ even if $f$ is symmetric. Perhaps counterintuitively, we show in \cref{prop:shufflek} that this is not the case if $f$ is symmetric and has Boolean domain $\{0,1\}^n$: for such $f$, we show $Q(\shuffle^k[f]) \leq O(kQ(f))$. In \cref{conj:shuffling_composition}, we conjecture that the same holds for symmetric $f$ with non-Boolean domain. 

\paragraph{Triangle Finding.}
Having discussed $\triedge$ and $\trivertex$, we now turn to $\tri$, where we simply ask whether a triangle exists\footnote{While $\tri$ is a decision problem, it is essentially equivalent to its search variant via a simple reduction as explained in the preliminaries section. Therefore, we often do not distinguish between the search and decision problems.}  without placing any constraints. In the edge list model, we find that $\tri$ bridges the $\kdist{3}$ and $\ksum{3}$ problems since its structure lies between theirs. Recall that in $\kdist{3}$, the goal is to decide whether the input contains three repetitions of the same symbol. In $\ksum{3}$, the input contains symbols from some abelian group and the goal is to decide if there are three symbols that sum to the zero-element of the group.

Intuitively, $\tri$ should be no easier than $\kdist{3}$ and no harder than $\ksum{3}$ by considering its ``certificate structure'' \cite{certificate_structure_belovs_2014}. In $\kdist{3}$, once we have found one symbol of a $1$-certificate, the next symbol for completing the certificate is determined. In $\ksum{3}$, once we have found one symbol of a $1$-certificate, the next symbol can be arbitrary. In comparison, in $\tri$, once we have found one edge $(i,j)$ of a $1$-certificate, the next edge must be incident to one of $i$ or $j$ so it is neither determined nor arbitrary.  In \cref{prop:two_way_reduct}, we show that this intuition is formally correct by reducing $\kdist{3}$ to $\tri$ and $\tri$ to $\ksum{3}$.

Our first main result is stated informally below. 
\begin{theorem}[Informal version of \cref{prop:triangle_search,cor:worst-case-search}]\label{prop:ftriangle_lower_intro}
    Given a length-$m$ edge list on $n = \Theta(m)$ vertices, the quantum query complexity of finding a triangle is
    \begin{equation}\label{intro_tri_query_complexity_result}
        Q(\tri)\geq \widetilde{\Omega}(m^{5/7}).
    \end{equation}
    Furthermore, the lower bound holds even if the edge list is uniformly random, which corresponds to a random sparse graph. As a corollary, \cref{intro_tri_query_complexity_result} holds even if the edge list is promised to have a maximum degree at most $O(\log(m)/\log\log(m))$.
\end{theorem}

We find \cref{prop:ftriangle_lower_intro} interesting for several reasons. Firstly, proving the theorem pushed us to develop new techniques for the recording query framework, which we believe to be of independent interest. The recording query framework was pioneered by Zhandry \cite{zhandry_compressed} to prove the security of cryptosystems against quantum adversaries. There is now a growing line of work developing new techniques within the recording query framework, including \cite{liu_zhandry_19,chung_compressed_21, Hamoudi_2023, beame_matrix_24, majenz_permutationoracles_24,ma_huang_2024,carolan_compressed_25}. Of particular relevance to our work is \cite{Hamoudi_2023} by Hamoudi and Magniez, which generalized Zhandry's original framework in order to lower bound the quantum query complexity of finding collision pairs. 

This framework is particularly well suited for proving average-case quantum query lower bounds, where the input is sampled from certain types of distributions. The framework can also yield \emph{optimal} worst-case lower bounds if the worst-case distribution is of a type it can handle --- this is the case for triangle finding on graphs of low maximum degree as we will see. Ideally, to prove our lower bound, we would like to consider a hard distribution supported \emph{only} on graphs with low max-degree. The low max-degree condition makes finding a cycle harder and hence proving the lower bound easier. To see this, consider the contrapositive in the special case of triangle finding. If there were no degree bound, then an edge queried at step $t$ could help the algorithm find a very large number of \emph{wedges}, i.e., length-$2$ paths. In the extreme case, $\Omega(t)$ wedges could be found. As each wedge can be completed to a triangle, the more wedges the algorithm finds, the easier triangle finding becomes. 

Unfortunately, a product distribution on edge lists cannot be entirely supported on those with low max-degree. In other words, while sampling a graph with high max-degree is a rare event, it is always \emph{possible}. The main technical contribution of our work is to develop tools to handle such rare events within the recording query framework. We name these tools the Mirroring and Exclusion lemmas. They can be used in conjunction to isolate and bound the contribution to the algorithm's success probability when rare events occur.
These tools extend the recording query framework to treat any non-product distribution that results from conditioning a product distribution on the absence of rare events.

\cref{prop:ftriangle_lower_intro} is also interesting because the lower bound curiously matches the best-known upper bound on the quantum query complexity of $3$-distinctness, which is $Q(\kdist{3}_m) \leq O(m^{5/7})$, up to logarithmic factors. This upper bound was first proven by Belovs \cite{kdist_learning_graphs_12} over a decade ago using a ground-breaking learning graph algorithm \cite{belovs_constant_sized_12}. There have been no improvements since. On the other hand, the best-known lower bound on $Q(\kdist{3}_m)$ is \emph{still} $\Omega(m^{2/3})$, which is simply inherited from the element distinctness lower bound of Aaronson and Shi \cite{aaronson_shi} proven over two decades ago. This is despite many attempts at raising the lower bound using the polynomial method \cite{bkt_polynomial,mande_kdist}, adversary method \cite{spalek_adversary_2013,rosmanis_adversary_2014}, and even recording queries (equivalently, compressed oracle) method~\cite{liu_zhandry_19}. In \cref{tab:complexity}, we summarize these results together with those to be discussed later. Note that $\kcycle{3}$ is equivalent to $\tri$.

\begin{table}[h!]
\centering
\resizebox{\textwidth}{!}{
\begin{tabular}{@{}lccc@{}}
\toprule
\textbf{Problem} & \textbf{$\kdist{k}$ Lower Bound} & \textbf{$\kdist{k}$ Upper Bound} & \textbf{$\kcycle{k}$ Tight Bound} \\
\midrule
\textbf{$k=3$} & $\Omega(m^{2/3})$~\cite{aaronson_shi} & $O(m^{5/7})$~\cite{kdist_learning_graphs_12} & $m^{5/7\pm o(1)}$ (this work)\\
\addlinespace
\textbf{$k\geq 4$} & $\tildeOmega(m^{3/4-1/(4k)})$~\cite{mande_kdist} & $O(m^{3/4-1/(2^{k+2}-4)})$~\cite{kdist_learning_graphs_12} & $m^{3/4-1/(2^{k+2}-4)\pm o(1)}$ (this work) \\
\bottomrule
\end{tabular}
}
\caption{\small Best known quantum query complexity bounds for $\kdist{k}$ and $\kcycle{k}$ on length-$m$ inputs. For $\kcycle{k}$, we assume the input has max-degree $\leq O(\log(m)/\log\log(m))$. For $k=3$, the  $\kdist{k}$ lower bound is inherited from that for $k=2$ \cite{aaronson_shi}. Can $\kcycle{k}$ help us close the long-standing gaps for $\kdist{k}$?}\label{tab:complexity}
\end{table}

We strongly believe that the matching of the lower bound in \cref{prop:ftriangle_lower_intro} and the best-known upper bound on $Q(\kdist{3}_m)$ is \emph{not a coincidence}. This is because the ``collision structure'' of a uniformly random edge list on $n$ vertices containing $m = \Theta(n)$ edges resembles that of a candidate worst-case distribution on inputs to $\kdist{3}_m$. By collision structure of an edge list, we mean the number of wedges and triangles it contains; by collision structure of an input to $3$-distinctness, we mean the number of $2$-collisions and $3$-collisions it contains, where an $l$-collision refers to a length-$l$ tuple of indices at which the input contains the same symbol. (For more details about the resemblance, see the overview of cycle finding below.) Indeed, the similarities between the collision structures were so striking that we were led to ask whether Belovs's learning graph algorithm for $3$-distinctness could be adapted to provide a matching upper bound to \cref{prop:ftriangle_lower_intro}. As our second main result, we answer this question affirmatively.
\begin{theorem}[Informal version of \cref{thm:worst-case-decision,cor:worst-case-decision}]\label{prop:ftriangle_upper_intro}
    Given a length-$m$ edge list on $n = \Theta(m)$ vertices with maximum degree at most $O(\log(m)/\log\log(m))$, we construct a quantum algorithm witnessing
    \begin{equation}
        Q(\tri)\leq m^{5/7+o(1)}.
    \end{equation}
    In particular, the upper bound holds if the edge list is uniformly random, which corresponds to a random sparse graph. Moreover, the algorithm is an adaptation of Belovs's learning graph algorithm for $3$-distinctness as developed in \cite{kdist_learning_graphs_12}.
\end{theorem}

Taken together, \cref{prop:ftriangle_lower_intro,prop:ftriangle_upper_intro} give the \emph{first} example of a problem for which Belovs's learning graph algorithm in \cite{kdist_learning_graphs_12} is provably optimal, up to an $m^{o(1)}$ factor.\footnote{Note that the best known quantum query lower bound for $\kdist{k}$ for every $k\geq 4$ (see \cite{bkt_polynomial,mande_kdist}) is also polynomially far from the upper bound witnessed by Belovs's learning graph algorithm.} \cref{prop:ftriangle_lower_intro} also suggests that any quantum algorithm that polynomially improves the best-known $O(m^{5/7})$ upper bound on $Q(\kdist{3}_m)$ must exploit more than just the collision structure of the problem, which seems improbable to us. 

\paragraph{Cycle Finding.} Our results above suggest that triangle finding in a graph of low maximum degree is closely tied to $3$-distinctness. Indeed, they share essentially the same quantum query upper bound and have quite similar quantum algorithms. The $3$-distinctness problem generalizes to a well-studied family of problems known as $k$-distinctness, where $k\geq 3$ is an integer (treated as a constant) and the goal is to decide whether an input list contains $k$ distinct positions with the same symbol. Is there a family of problems generalizing triangle finding in the edge list model that can be similarly tied to $k$-distinctness? 

The answer is yes with $k$-\emph{cycle} finding being the sought for family. To see why, let us return to the concept of collision structure mentioned above. Let $\kdist{k}_m$ denote the function computing $k$-distinctness on length-$m$ inputs. A candidate worst-case distribution on inputs to $\kdist{k}_m$ is the uniform distribution over inputs containing exactly one $k$-collision and $\Omega(m)$ many $(k-1)$-collisions, cf.~\cite[Section 5]{quantum_walk_ed} and~\cite[Section 3]{belovs_childs_time_efficient_3dist_13}. The intuition is that the many $(k-1)$-collisions present many dead-ends that effectively hide the location of the $k$-collision. In comparison, writing $P(n,k)\coloneqq n!/(n-k)!$, a uniformly random length-$m$ edge list on $n=\Theta(m)$ vertices contains
\begin{alignat*}{3}
&\Theta(P(n,k)\cdot (m/n^2)^{k-1}) &&= \Theta(m) \ &&\text{length-$(k-1)$ paths} \\
\text{and} \quad  &\Theta(P(n,k)\cdot (m/n^2)^k) &&= \Theta(1) \ &&\text{length-$k$ cycles}.
\end{alignat*}

    Since a length-$(k-1)$ path is analogous to a $(k-1)$-collision and a length-$k$ cycle is analogous to a $k$-collision, the collision structure of the uniformly random edge list resembles that of the aforementioned worst-case distribution on inputs to $k$-distinctness. This is also a good point to remark that the uniform distribution over inputs to $k$-distinctness, which is studied in \cite{liu_zhandry_19}, \emph{cannot} be made to resemble the worst-case distribution (for any setting of input length and alphabet size) because it has the property that the number of $(k-1)$-collisions cannot be too far from the number of $k$-collisions.

Our main result for cycle finding generalizes our results for triangle finding as follows.
\begin{theorem}[Informal version of \cref{cor:kcycle_lower_bound,cor:worst-case-decision-k}]
    Given a length-$m$ edge list on $n=\Theta(m)$ vertices with maximum degree at most $O(\log(m)/\log\log(m))$, the quantum query complexity of finding a $k$-cycle is
    \begin{equation}
        Q(\kcycle{k})= m^{3/4-1/(2^{k+2}-4)\pm o(1)}.
    \end{equation}
    Furthermore, this holds if the edge list is uniformly random, which corresponds to a random sparse graph.
\end{theorem}
The upper bound is \emph{exactly} the same as the best known upper bound for $k$-distinctness up to an $m^{o(1)}$ factor and is again shown by adapting Belovs's learning graph algorithm for $k$-distinctness in \cite{kdist_learning_graphs_12}. We have summarized the state of affairs in \cref{tab:complexity}.

The comparison of collision structures above suggests that the hardness of $k$-cycle and $k$-distinctness spring from a common source, which hints at the possibility of lifting our nearly tight lower bound for $k$-cycle to a nearly tight lower bound for $k$-distinctness. In fact, we propose \cref{conj:graph_property_partition}, stating that the quantum query complexity of $k$-cycle should not decrease too much under a restriction of the input edges, and prove in \cref{prop:conjecture_implies_lower} that it enables such lifting.

\paragraph{Note added.} Subsequent to this work, Belovs~\cite{belovs_kdistinctness_26} proved a tight quantum query lower bound for \(k\)-distinctness, using a framework inspired by Zhandry's compressed oracle technique and independently of the lifting conjecture discussed above.

\subsection{Technical overview}\label{sec:tech-overview}
We now highlight some of our work's main technical contributions. We inherit the notation above.

\paragraph{$\triedge$ and Hiding.}
To lower and upper bound $Q(\hide_m[\ED_d])$, our main observation is that the problem ``self-reduces'' to a more structured version of itself, $\hide_m'[\ED_d]$, whose query complexity is easier to characterize. 
The inputs to $\hide_m'[\ED_d]$ are restricted to have the form of $d$  blocks each of length $m/d$,\footnote{We may assume $m/d \in \mathbb{Z}$ without loss of generality. We will not make further remarks like this in the technical overview.} where each block contains $(m/d -1)$ $*$s and exactly one non-$*$ symbol. Remarkably, imposing the block structure causes no decrease in hardness.

Observe that $\hide_m'[\ED_d]$ can be viewed as the composition of $\ED_d$ with the so-called $\pSearch$ function that extracts the unique non-$*$ symbol from $m/d$ symbols. The latter function has query complexity $\Theta(\sqrt{m/d})$ and a composition theorem \cite[Theorem 9]{brassard_psearch} yields $Q(\hide_m[\ED_d]) \geq Q(\hide_m'[\ED_d]) \geq \Omega(\sqrt{m/d} \cdot d^{2/3}) = \Omega(\sqrt{m}d^{1/6})$. To upper bound $Q(\hide_m[\ED_d])$, we consider a quantum algorithm that first randomly permutes the positions of a given input string $x$. 
A standard probability argument -- like that used to bound the maximum load in a balls-into-bins experiment -- implies that the resulting string $\tilde{x}$ is highly likely to be in a block form similar to inputs of $\hide_m'[\ED_d]$, except each block may contain up to $\log(d)$ non-$*$ symbols. Since the number of non-$*$ symbols in each block is so small, we simply Grover search for all of them on the fly while running the quantum algorithm for $\ED_{d\log(d)}$ on $d\log(d)$ symbols. 
This algorithm has query complexity $\tildeO(\sqrt{m} d^{1/6})$.

By considering the block structure, we also see $Q(\hide_m[f]) = \Omega(\sqrt{m/d} \cdot Q(f))$ for \emph{every} $f\colon \Sigma^d \to \{0,1\}$. We show $Q(\hide_m[f]) =\tildeO(\sqrt{m/d} \cdot Q(f))$ for symmetric $f\colon \{0,1\}^d \to \{0,1\}$ using the characterization of the optimal quantum query algorithm for such functions in \cite{polynomial_2001}. We show $R(\hide_m[f]) = O((m/d)\cdot R(f))$ for symmetric $f\colon \Sigma^d \to \{0,1\}$, where $R(\cdot)$ denotes the worst-case (bounded-error) randomized query complexity, using the characterization of the optimal randomized query algorithm for such functions in \cite{baryoussef_2001}.

\paragraph{$\trivertex$ and Shuffling.} To obtain  $Q(\trivertex) \leq O(\sqrt{m} d^{1/4})$ in \cref{prop:triangle_vertex_upper_bound}, we use a quantum walk algorithm that walks on the Hamming graph with vertices labeled by $r \coloneqq \ceil{d^{3/4}}$  positions from part $A$ (the part containing edges incident to the target vertex) of the input. Since we do not a priori know where part $A$ is, we perform amplitude amplification in both the setup and update steps of the quantum walk to keep the walk on part $A$. Importantly, this uses the fact that the underlying random walk is uniformly random on the Hamming graph.
To lower bound $Q(\trivertex) \geq \Omega(\sqrt{m/d} \cdot Q(\kdist{3}))$, we give a reduction from $\hide_m[\kdist{3}]$ to $\trivertex$ in \cref{prop:triangle_vertex_lower_bound}.

We obtain an exponential separation between $Q(f)$ and $Q(\shuffle[f])$ when $f$ is defined by the graph property $\calP$ from \cite[Section 6]{symmetries_bcgkpw} in the adjacency list model as follows. \cite{symmetries_bcgkpw} showed that computing $f$ witnesses an exponential separation between randomized and quantum query complexities. But the quantum query complexity of computing $\shuffle[f]$ is polynomially related to its randomized query complexity since the problem is symmetric \cite{chailloux_symmetric}.

If $f$ is defined by a graph property in the adjacency matrix model, the above argument cannot work since \cite{symmetries_bcgkpw} showed that the quantum query complexity of computing $f$ is polynomially related to its randomized query complexity. Nonetheless, we found that an unbounded separation between $Q(f)$ and $Q(\shuffle[f])$ can be witnessed by the following Majority-of-Majority function on a restricted partial domain, that we name $\SMAJ$:\footnote{We chose this name because $\Sigma$ resembles a rotated M and $\SMAJ_n$ is (a restriction of) the composition of two $\MAJ_n$s.}

\begin{definition}\label{def:smaj}
\begin{equation}
     \SMAJ_n \colon D_0\dot{\cup} D_1 \subseteq \{0,1\}^{n^2} \to \{0,1\},
\end{equation}
where $\SMAJ_n(x) = 0$ if and only if $x\in D_0$ and 
    \begin{enumerate}
    \item $x = (x_{1,1},x_{1,2},\dots, x_{n,n}) \in \{0,1\}^{n^2}$ is in $D_0$ if and only if there exists a subset $S\subseteq[n]$ of size $\geq 2n/3$ such that for all $i\in S$, $x^i \coloneqq (x_{i,1},\dots,x_{i,n})$ has Hamming weight $\abs{x^i} \geq 2n/3$ and for all $i\in [n]-S$,  $\abs{x^i} \leq n/3$.
    \item $x = (x_{1,1},x_{1,2},\dots, x_{n,n}) \in \{0,1\}^{n^2}$ is in $D_1$ if and only if there exists a subset $S\subseteq[n]$ of size $\leq n/3$ such that for all $i\in S$, $x^i \coloneqq (x_{i,1},\dots,x_{i,n})$ has Hamming weight $\abs{x^i} \geq 2n/3$ and for all $i\in [n]-S$, $\abs{x^i} \leq n/3$.
    \end{enumerate}
\end{definition}

In other words, inputs in $D_0$ have at least $2n/3$ ``dense'' rows, i.e., a substring of the form $x_{i,1}x_{i,2}\dots x_{i,n}$ for some $i\in [n]$, with at least $2n/3$ ones; and at most $n/3$ ``sparse'' rows with at most $n/3$ ones. In contrast, inputs in $D_1$ have at most $n/3$ dense rows and at least $2n/3$ sparse rows.

It is not hard to see that $R(\SMAJ_n) = O(1)$ as follows. For a given row, we can test whether it is dense or sparse using $O(1)$ queries. Since the fractions of dense blocks for inputs in $D_0$ and $D_1$ differ by a constant, we can distinguish between these cases using $O(1)$ queries. Therefore $Q(\SMAJ_n) \leq R(\SMAJ_n) = O(1)$.

In contrast, computing $\shuffle[\SMAJ_n]$ seems at least as hard as finding two distinct input symbols $(x_{i_1,j_1},(i_1,j_1))$ and $(x_{i_2,j_2},(i_2,j_2))$ that came from the same row pre-shuffling, i.e., $i_1=i_2$ but $j_1\neq j_2$, which is a collision-type problem. Formally, we prove $R(\shuffle[\SMAJ_n]) = \Omega(\sqrt{n})$ in \cref{prop:shuffling_adj_matrix} by showing that two particular distributions, one supported on the set $D_0$ and another on $D_1$, are hard to distinguish by any few-query randomized algorithm using a hands-on total variation distance argument. Therefore, $Q(\shuffle[\SMAJ_n]) = \Omega(R(\shuffle[\SMAJ_n])^{1/3}) = \Omega(n^{1/6})$, where the first equality uses \cite{chailloux_symmetric}, which applies since $\shuffle[\SMAJ_n]$ is symmetric.

\paragraph{Triangle Finding (lower bound).}
We prove our lower bound on triangle finding (\cref{prop:ftriangle_lower_intro}) in Zhandry's recording query framework \cite{zhandry_compressed}. Following the framework, we define a ``progress quantity'' that tracks the progress the algorithm has made in ``recording'' the searched-for object in its internal memory. The progress quantity can be roughly thought of as the square root of the probability with which the quantum algorithm can find the searched-for object, where the probability is over randomness in \emph{both} the input distribution and the algorithm. The progress quantity depends on the number of queries the quantum algorithm makes. If this quantity is small after the last query, then the algorithm cannot find what it is searching for with high probability. 

Our proof has two steps:\footnote{The arguments here are better understood by considering $n$, which represents the number of vertices in the graph. But recall that \cref{prop:ftriangle_lower_intro} concerns the regime $m=\Theta(n)$, so all results here can also be expressed in terms of $m$.}
\begin{enumerate}
    \item we first show that the progress in recording much more than $r^*(t)\coloneqq t^{3/2}\log^2(n)/\sqrt{n}$ wedges in $t$ queries is negligible;
    \item then we show that, given we record $O(r^*(t))$ wedges in $t$ queries, the progress of recording a triangle at the $(t+1)$-th query increases by at most $O(\sqrt{r^*(t)}/n)$, which corresponds to the square root of the probability that a random edge completes one of the $r^*(t)$ recorded wedges to a triangle.
\end{enumerate}
Therefore, at the $T$-th query, the progress of recording a triangle is $\sum_{t=0}^T \sqrt{r^*(t)}/n$, which equals $O(T^{7/4}\log(n)/n^{5/4})$,
and is $o(1)$ unless $T \geq \Omega(n^{5/7}/\log^{4/7}(n))$. The ability to perform this type of step-by-step analysis is a known strength of the recording query framework.\footnote{To quantum query lower bound experts: the standard quantum adversary method \cite{A00,negative_weights_adv} is \emph{not} well-suited to performing this type of step-by-step analysis because it gives only weak lower bounds for small success probabilities, and step (ii) needs the progress in step (i) to be inverse-polynomially small to work. If we were forced to redo this analysis using the adversary method, we would have to switch to its \emph{multiplicative} version, see, e.g., \cite{ambainis_multiplicative,lee_roland_multiplicative,jeffery_zur_compressed_multiplicative_25}.} For example, it was exploited to great effect by Liu and Zhandry \cite{liu_zhandry_19} in proving their tight lower bound on the quantum query complexity of \emph{average-case} $k$-distinctness. 

What is new to our work is how we perform step (i) above. As previously discussed, the issue is that a newly queried edge could contribute to more than one wedge. Let us now see how this issue manifests itself at a technical level. We begin by following the recording queries framework and define a progress quantity $\Lambda_{t,r} \in [0,1]$ for integer $t,r$ with $t\geq 0$ where $\Lambda_{t,r}^2$ represents the probability a quantum query algorithm has recorded at least $r$ wedges immediately after the $t$-th query. Directly using existing techniques in the framework gives the following recurrence for $\Lambda_{t,r}$:
\begin{equation}\label{eq:naive_recurrence}
    \Lambda_{t,r} \leq \Lambda_{t-1,r} + O(\sqrt{t/n}) \cdot \Lambda_{t-1,r-t+1},
\end{equation}
where the factor $O(\sqrt{t/n}) = O(\sqrt{tn/n^2})$ arises as the square root of the probability that a randomly chosen edge is incident to one of the at most $t-1$ edges that can be recorded after the $(t-1)$-th query; the subscript $r-t+1 = r - (t-1)$ arises from the possibility of the new edge recorded at the $t$-th query contributing $t-1$ additional wedges. However, solving \cref{eq:naive_recurrence} leads to a trivial lower bound for triangle finding that does not even beat the $\Omega(m^{2/3})$ lower bound it inherits from element distinctness.

The main problem with \cref{eq:naive_recurrence} is the subscript $r-t+1$ on the second term on the right-hand side. However, the event it corresponds to seems unlikely to happen when the input is a sparse graph and $t$ is large: if the new edge contributes $t-1$ additional wedges, it must be incident to a degree-$\Omega(t)$ vertex recorded by the quantum query algorithm. Now, our input is a random sparse graph whose maximum degree is at most $O(\log(n)/\log\log(n)) \leq O(\log(n))$ with high probability, \emph{independent of $t$}. Does this property also hold for the internal memory of the quantum query algorithm doing the recording? Our first technical contribution, the Mirroring Lemma (\cref{lem:mirror}), answers this question affirmatively. This lemma allows us to transfer, or mirror, properties of the initial input distribution onto the internal memory of the quantum algorithm, independently of the value of $t$. Directly using this technique allows us to improve \cref{eq:naive_recurrence} to
\begin{equation}\label{eq:intermediate_recurrence}
    \Lambda_{t,r} \leq \Lambda_{t-1,r} + O(\sqrt{t/n}) \cdot \Lambda_{t-1,r-O(\log^2(n))} + \epsilon,
\end{equation}
where $\epsilon>0$ is a small number corresponding to the tail probability of the input graph having a vertex of degree $\Omega(\log(n))$.

Unfortunately, \cref{eq:intermediate_recurrence} still does not yield the desired result: to see this, note that the solution to a similar recurrence $A_{t,r} = A_{t-1,r} + p A_{t-1,r-1} + \epsilon$ (with $p,\epsilon \in [0,1]$ and boundary conditions $A_{0,0} = 1$ and $A_{0,r} = 0$ for all $r>0$) is $A_{t,r} = \binom{t}{r}p^r + \epsilon(1 + (1+p) + \dots + (1+p)^{t-1})$. Even for an exponentially small $\epsilon$, the term $\epsilon(1+p)^{t-1}$ blows up for large $t$. Our second technical contribution, the Exclusion Lemma (\cref{lem:exclusion_lemma}), allows us to overcome this problem. To employ this lemma, we introduce a new progress quantity called $\Lambda_{t,r}'$ that is defined like $\Lambda_{t,r}$ except we additionally require the quantum query algorithm to \emph{not} have recorded a degree-$\Omega(\log(n))$ vertex at any point before the $t$-th query. By definition, $\Lambda_{t,r}'$ satisfies recurrence \cref{eq:intermediate_recurrence} with $\epsilon$ set to $0$, that is,
\begin{equation}\label{eq:final_recurrence}
    \Lambda_{t,r}' \leq \Lambda_{t-1,r}' + O(\sqrt{t/n}) \cdot \Lambda_{t-1,r-O(\log^2(n))}'.
\end{equation}
The Exclusion Lemma allows us to upper bound $\Lambda_{t,r}$ by $\Lambda_{t,r}' + O(t \epsilon)$, where the second term no longer blows up for large $t$ and is easy to make negligible. Therefore, solving \cref{eq:final_recurrence} first for $\Lambda_{t,r}'$ and then using $\Lambda_{t,r} \leq \Lambda_{t,r}' + O(t \epsilon)$ yields the claimed result of step (i).

\paragraph{Triangle Finding (upper bound).}
We prove our upper bound on triangle finding (\cref{prop:ftriangle_lower_intro}) 
by adapting Belovs's learning graph algorithm for $3$-distinctness from \cite{kdist_learning_graphs_12}. Formally, a ``learning graph algorithm'' is a directed acyclic graph that encapsulates a solution to a semi-definite program that characterizes quantum query complexity \cite{reichardt_adv_tight,reichardt_direct_sum}. Our main adaptation of Belovs's algorithm pertains to its handling of so-called \emph{faults} in \cite[Section 6]{kdist_learning_graphs_12}. 

The notion of a fault is easier to explain in Jeffery and Zur's interpretation of Belovs's algorithm as a quantum walk \cite{jeffery_zur}.\footnote{\cite{jeffery_zur} goes much beyond merely interpreting Belovs's algorithm. However, in this paper, we will only use \cite{jeffery_zur} to aid our explanation of Belovs's algorithm.} The following explanation is based on \cite[Section 1.3]{jeffery_zur}. The quantum walk first creates a uniform superposition over subsets $R_1$ of indices of some size $r_1$, and queries all $r_1$ indices. Then, for each $R_1$ in the superposition, the algorithm creates a uniform superposition over all subsets $R_2$ (disjoint from $R_1$) of indices of some size $r_2$. But rather than querying every index in $R_2$, the algorithm \emph{only} queries those $i_2 \in R_2$ that have a \emph{match} in $R_1$, i.e., $x_{i_2} = x_{i_1}$ for some $i_1\in R_1$, where $x$ is the input. This significantly reduces query complexity by exploiting the structure of $3$-distinctness: any two unequal symbols could not be part of a $1$-certificate. Unfortunately, it also leads to the aforementioned faults. The issue is that when performing the update step of the quantum walk by adding a new index $j_1$ to $R_1$, we cannot afford the queries needed to update a corresponding $R_2$ by searching for a $j_2 \in R_2$ such that $x_{j_2} = x_{j_1}$ because $R_2$ was not fully queried. But if we do not search and there does exist $j_2\in R_2$ with $x_{j_2} = x_{j_1}$, then the set of queried indices in $R_2$ becomes incorrect, introducing a fault. 

In our setting, there can be more faults because ``matching'' in our case naturally needs to be redefined to mean: $i_2\in R_2$ matches with $i_1\in R_1$ if and only if  $x_{i_2}$ (which is an edge in our case) is \emph{incident} to $x_{i_1}$. In particular, $i_2$ could match $i_1$ even if $x_{i_2} \neq x_{i_1}$. However, the number of faults introduced is bounded above by using the maximum degree $d$ of the input, which for a random sparse graph satisfies $d\leq O(\log(n)/\log\log(n))$. Then we adapt the ``error-correcting'' technique of \cite{kdist_learning_graphs_12} to correct $O(d)$ faults by paying a multiplicative factor of $2^{O(d)} = n^{o(1)}$ on the quantum query complexity, which leads to the theorem. Along the way, we construct a learning graph algorithm that may be easier to understand than that in \cite{kdist_learning_graphs_12}; for example, our algorithm genuinely corresponds to a graph, unlike that in \cite{kdist_learning_graphs_12}.

\paragraph{Cycle Finding.} We prove our lower and upper bounds on $k$-cycle finding by generalizing our proofs in the $k=3$ (triangle) case.\footnote{While these generalized proofs fully generalize those in the case $k=3$, we encourage the interested reader to read the proofs in the $k=3$ case first as they contain most of the key ideas and should make the generalized proofs easier to follow.} Here is a sketch of how this works.

To prove the lower bound, we consider a uniformly random length-$m$ edge list on $n$ vertices such that $n=\Theta(m)$. Our proof proceeds in $(k-1)$ steps. We use the Mirroring Lemma in the first step for the same reason it was used in the triangle case. On the other hand, the role of the Exclusion Lemma is expanded ``by a factor of order $k$'' as it serves to ``glue'' together the $(k-1)$ steps. For $i\in \{1,\dots,k-2\}$, we show in the $i$-th step that it is hard for a $t$-query algorithm to record much more than $r_{i+1}(t)$ length-$(i+1)$ paths without having recorded at least $r_{j}(t)$ length-$j$ paths for some $1\leq j \leq i$. Here, the $r_l(t)$s are certain positive integers essentially \emph{defined} by the relation $r_l(t) = t\sqrt{r_{l-1}(t)/n}$ and boundary condition $r_1(t) = t+1$. The term $\sqrt{r_{l-1}(t)/n}$ corresponds to the square root of the probability that a single uniformly random edge $e$ extends one of the $r_{l-1}(t)$ length-$(l-1)$ paths to a length-$l$ path. (Note that the precise expression of this probability is $r_{l-1}(t)\cdot 2(n-l)/(n(n-1)/2)$, which is $\Theta(r_{l-1}(t)/n)$.) One technical challenge here (that was not present in the triangle case) is the need to account for the scenario of edge $e$ creating a length-$l$ path by joining together a length-$a$ path with a length-$b$ path such that $a,b$ are non-negative integers and $a+b+1 = l$ \emph{but} neither $a$ nor $b$ equals $0$. However, we find that this scenario can be neglected as its effect is dominated by that of the extension scenario, i.e., the scenario where $a$ or $b$ is $0$. After concluding the first $(k-2)$ steps, we show in the $(k-1)$-th step that it is hard for a $t$-query algorithm to record a $k$-cycle without having recorded at least $r_{k-1}(t)$ length-$(k-1)$ paths. Finally, we use the Exclusion Lemma $(k-1)$ times to glue together the results of all steps and show that it is hard for a few-query algorithm to record a $k$-cycle.

On the upper bound side, our algorithm for $k$-cycle remains an adaptation of Belovs's algorithm for $k$-distinctness. The adaptation again pertains to the handling of faults due to a natural redefinition of what matching means. By the degree bound, the maximum number of faults that can occur is fewer than $2d$ where $d$ is the maximum degree of the input edge list. We again adapt Belovs's error correcting technique to handle them by paying a multiplicative factor of $2^{O(dk)}$, which is $m^{o(1)}$ when $d\leq O(\log(m)/\log\log(m))$. 

Interestingly, the exact same $r_{l}(t)$s mentioned in the lower bound overview play a ``dual'' role in both Belovs's learning graph algorithm for $k$-distinctness and our adaptation of it to $k$-cycle: given $t$ queries, the learning graph can be thought of as ``recording'' $r_j(t)$ number of $j$-collisions (in the case of $k$-distinctness) or length-$j$ paths (in the case of $k$-cycle) at the $j$-th step, or ``stage'' as Belovs calls it. Therefore, our lower bound can be interpreted as saying that the learning graph algorithm for $k$-cycle is tight at every stage.

\section{Preliminaries}\label{sec:prelims}
\paragraph{Notation.} $\mathbb{N}$ denotes the set of positive integers. Notation such as $\mathbb{Z}_{\geq 0}$ denotes the set of non-negative integers, $\mathbb{R}_{>0}$ denotes the set of positive reals, for $a\geq 0$, $\mathbb{Z}_{[0,a]}$ denotes the set $\mathbb{Z}\cap[0,a]$, and so on. For $n\in \mathbb{N}$, $[n]$ denotes the set $\{1,\dots,n\}$. For two complex square matrices $X,Y$ of the same dimension, we write $X\leq Y$ to mean $Y-X$ is positive semidefinite, and $X\geq Y$ to mean $X-Y$ is positive semidefinite. We use $\id$ to denote an identity matrix of a context-appropriate dimension.

An alphabet is a finite non-empty set. For an alphabet $A$ and $k\in \mathbb{Z}_{\geq 0}$, $\binom{A}{k}$ denotes the set of all size-$k$ subsets of $A$. For $n\in \mathbb{N}$ and $\gamma \geq 0$, we write $\binom{n}{<\gamma}$ for the sum $\sum_{i \in \{0,1,\dots, n\} \cap [0,\gamma)} \binom{n}{i}$.  Given two disjoint sets $A,B$, we sometimes write $A\dot{\cup}B$ for their union, where the dot \emph{emphasizes} that $A$ and $B$ are disjoint. (If we write $A\cup B$ instead, $A$ and $B$ could still be disjoint.) For an alphabet $\Sigma$, we write $x\leftarrow \Sigma$ for sampling $x$ uniformly at random from $\Sigma$. For $x\in \Sigma^n$ and $a,b\in \mathbb{N}$, $x[a\twodots b]$ denotes the substring of $x$ from index $a$ to $b$ inclusive; if $b<a$, then this denotes the empty string. For $k, n\in \mathbb{N}$ with $k\geq 3$, we say $\{u_1,v_1\}, \dots, \{u_k,v_k\} \in \binom{[n]}{2}$ form a $k$-cycle if there exists distinct $w_1,\dots,w_k\in [n]$ such that $\{u_1,v_1\} = \{w_1,w_2\}$, $\dots$, $\{u_{k-1},v_{k-1}\} = \{w_{k-1},w_k\}$, and $\{u_k,v_k\} = \{w_k,w_1\}$. For $k,n,m\in \mathbb{N}$ with $k\geq 3$, we say $x\in \binom{[n]}{2}^m$ contains a $k$-cycle if there exist distinct $i_1,\dots,i_k \in [m]$ such that $x_{i_1},\dots,x_{i_k}$ form a $k$-cycle. The word ``triangle'' formally refers to a $3$-cycle.

For an alphabet $U$, we denote the symmetric group acting on $U$ by $\mathfrak S_U$. For $n\in \mathbb{N}$, we also use $\mathfrak S_n$ to denote the symmetric group acting on $[n]$. We say a function $f\colon E \subseteq \Sigma^n \to \{0,1\}$ is symmetric if $x_1,\dots,x_n \in E \implies x_{\sigma(1)},\dots, x_{\sigma(n)} \in E$ and  $f(x_1,\dots,x_n) = f(x_{\sigma(1)},\dots, x_{\sigma(n)})$ for all $x\in \Sigma^n$ and for all permutations $\sigma \in \mathfrak S_n$. Note that
we do \emph{not} require a symmetric $f$ to be invariant under permutations of the alphabet $\Sigma$.

The abbreviation ``wlog'' stands for ``without loss of generality''; ``s.t.'' stands for ``such that''; ``with high probability'' stands for ``with probability at least $1-10^{-10}$'' unless stated otherwise. The symbol $\log$ stands for the base-$2$ logarithm.

For a function $f\colon D \subseteq \Sigma^n \to \Gamma$, where $D,\Sigma,\Gamma$ are alphabets and $n\in \mathbb{N}$, and $\epsilon \in (0,1/2)$, we write $Q_\epsilon(f)$ and $R_\epsilon(f)$ for the quantum and randomized query complexities for $f$ with two-sided failure probability at most $\epsilon$, respectively. We write $Q(f)$ and $R(f)$ for $Q_{1/3}(f)$ and $R_{1/3}(f)$, respectively.

\paragraph{Function definitions.}
For $m,d,n,s,k\in \mathbb{N}$, $u,v\in [n]$ with $u\neq v$, $G$ a finite abelian group with identity element $0$, we define the following functions of interest in this work. 

\begin{enumerate}
    \item $\kdist{k}_m\colon [s]^m \to \{0,1\}$ denotes the $k$-distinctness function which is defined by $\kdist{k}_m(x) = 1$ if and only if there exist $k$ distinct indices $1\leq i_1<i_2<\dots<i_k \leq m$ such that $x_{i_1} = x_{i_2} = \dots = x_{i_k}$. We also write $\ED_m$ for $\kdist{2}_m$, commonly known as the element distinctness function.
    
    \item $\ksum{k}_m\colon G^m \to \{0,1\}$ denotes the $k$-sum function which is defined by $\ksum{k}_m(x) = 1$ if and only if there exist $k$ distinct indices $1\leq i_1<i_2<\dots<i_k \leq m$ such that $x_{i_1} + x_{i_2} + \dots + x_{i_k} = 0$.
    \item $\triedge^{\{u,v\}}_m \colon D\subseteq \binom{[n]}{2}^m \to \{0,1\}$ denotes the triangle edge function which is defined by $x\in D$ if and only if $x_i\neq x_j$ for all $i\neq j$, and $\triedge_m^{\{u,v\}}(x) = 1$ if and only if there exist distinct $i,j\in [m]$ such that $x_i$, $x_j$ and $\{u,v\}$ form a triangle. 
    \item $\triedge^{\{u,v\}}_{m,d} \colon D_d\subseteq \binom{[n]}{2}^m \to \{0,1\}$ is the restriction of $\triedge^{\{u,v\}}_m$  such that $x \in D_d$ if and only if $x_i\neq x_j$ for all $i\neq j$ and $\abs{\{i\in [m] \mid \abs{x_i \cap \{u,v\}} = 1\}} = d$.
    \item $\trivertex^{u}_m \colon \binom{[n]}{2}^m \to \{0,1\}$ denotes the triangle vertex function which is defined by $\trivertex^{u}_m(x) = 1$ if and only if there exist distinct $i,j,k\in [m]$ such that $x_i$, $x_j$ and $x_k$ form a triangle containing vertex $u$. 
    \item $\trivertex^{u}_{m,d} \colon D_d\subseteq \binom{[n]}{2}^m \to \{0,1\}$ is the restriction of $\trivertex^{u}_{m,d}$ such that $x\in D_d$ if and only if $\abs{\{i\in [m] \mid \abs{x_i \cap \{u\}} = 1\}} \leq d$.  
    \item $\kcycle{k}_m \colon \binom{[n]}{2}^m \to \{0,1\}$ denotes the $k$-cycle function which is defined by $\kcycle{k}_m(x) = 1$ if and only if there exist $k$ distinct indices $1\leq i_1<i_2<\dots<i_k \leq m$ such that $x_{i_1}, \dots, x_{i_k}$ form a $k$-cycle. We also write $\tri_m$ for $\kcycle{3}_m$.
    \item $\kcycle{k}_{m,d} \colon D_d \subseteq \binom{[n]}{2}^m \to \{0,1\}$ denotes the restriction of $\kcycle{k}_m$ such that $x\in D_{d}$ if and only if for all $u\in [n]$, $\abs{\{i\in [m] \mid \abs{x_i \cap \{u\}} = 1\}} \leq d$. (That is, the maximum degree of the graph represented by $x$ is at most $d$.)  We also write $\tri_{m,d}$ for $\kcycle{3}_{m,d}$.
\end{enumerate}
For $\triedge_m^{\{u,v\}}$, $\triedge_{m,d}^{\{u,v\}}$, $\trivertex_{m}^{u}$, and  $\trivertex_{m,d}^{u}$, we refer to their superscripts as their \emph{target edge} and \emph{target vertex} respectively. We will often omit these superscripts since the (quantum) query complexity of these problems does not depend on them. 

We note that the definitions of $\triedge_m$ and $\triedge_{m,d}$ appear more restrictive than that of the other triangle problems: the input cannot have duplicate edges and the number of input edges incident (but not equal) to the target edge is \emph{exactly} $d$. These restrictions allow us to more directly relate them to the hiding transform of element distinctness.

\paragraph{Search-to-decision reduction.}
Let $k\in \mathbb{N}_{\geq 3}$ be a constant. We construct a search algorithm for finding a $k$-cycle in a length-$m$ edge list, or deciding that none exists, that is correct with high probability using $O(\log^2(m))$ calls to a bounded error decision algorithm for $\kcycle{k}_m$. The details are as follows.

Using standard probability amplification techniques, we reduce the error of the bounded-error $\kcycle{k}_m$ algorithm to $O(1/m)$ by repeating it $O(\log (m))$ times and taking the majority vote. 

Given that each positive instance of $\kcycle{k}_m$ contains a size-$k$ certificate, we can divide the input into $k+1$ parts. Among these, there exist $k$ parts that together contain the certificate. Identifying these $k$ parts requires $\binom{k}{k-1}=O(1)$ calls to the $\kcycle{k}_m$ algorithm, and in each call, we pad the input with a disjoint cycle-free graph to maintain the input size. By doing this, we reduce the search range by a factor of $k/(k+1)$. Repeating this process $O(\log(m))$ times shrinks the search range to $O(1)$, at which point the certificate can be directly identified. By applying the union bound over all calls, the probability of any error occurring remains at most $O(\log(m)/m)$, thus ensuring correctness with high probability.

In the other direction, a search algorithm can be directly used as a decision algorithm. Therefore, we typically do not distinguish between the search and decision versions of the $k$-cycle problem.

\paragraph{A convention.} Technically, all of the above functions should also be parametrized by their input alphabet size, especially since their (quantum) query 
complexity \emph{can} strictly increase as a function of this size.\footnote{As a simple example, $\ED_m \colon [s]^m \to \{0,1\}$ when $s < m$ is the constant $1$ function by the pigeonhole principle so $Q(\ED_m) = 0$; but when $s\geq m$, $Q(\ED_m) = \Theta(m^{2/3})$.} In this work, we often omit this parametrization under the \emph{convention} that the input alphabet size is sufficiently large. For example, consider $\ED_m\colon [s] \to \{0,1\}$ as defined above. We write $Q(\ED_m) = \Omega(m^{2/3})$ to mean: for all sufficiently large $s\in \mathbb{N}$, $Q(\ED_m) = \Omega(m^{2/3})$. We write $Q(\ED_m) = O(m^{2/3})$ to mean: for all $s\in \mathbb{N}$, $Q(\ED_m) = O(m^{2/3})$. Note that we do not need ``sufficiently large'' $s$ here since the (quantum) query complexity is always non-decreasing with respect to $s$. We write $Q(\ED_m) = \Theta(m^{2/3})$ to mean: $Q(\ED_m) = O(m^{2/3})$ and $Q(\ED_m) = \Omega(m^{2/3})$ in the aforementioned senses.

\section{\texorpdfstring{$\triedge$}{TriangleEdge} and Hiding}

As discussed in the introduction, $\triedge$ can be seen as an instantiation of the general hiding phenomenon. We recall the formal definition of hiding from the introduction, \cref{def:hiding}.

Let $a,b$ be integers with $b\geq a\geq 1$. Let $\tildeD,\Sigma$ be finite non-empty sets with $\tildeD\subseteq \Sigma^a$. For $f\colon \tildeD  \to \{0,1\}$, the hiding transform of $f$ is defined to be 
    \begin{equation}
    \hide_b[f] \colon D \subseteq (\Sigma \cup \{*\})^b \to \{0,1\},
    \end{equation}
where $*$ is a symbol outside of $\Sigma$, the set $D$ consists of all strings $y \in (\Sigma \cup \{*\})^b$ with exactly $a$ non-$*$ symbols such that the length-$a$ subsequence\footnote{In general, for integers $b\geq a \geq 1$, $\Sigma$ a finite non-empty set, and a string $x\in \Sigma^b$: a length-$a$ subsequence of $x$ is a string of the form $x_{i_1}x_{i_2}\dots x_{i_a}$ for some integers $1\leq i_1<i_2<\dots<i_a \leq b$.} $\tilde{y}$ formed by those non-$*$ symbols is an element of $\widetilde{D}$, and $\hide_b[f](y)$ is defined to be $f(\tilde{y})$.

By hiding the input in a more structured way, one can get a composition-like problem, for which it is easier to show lower bounds.
\begin{definition}[pSearch \cite{brassard_psearch}]
    Let $m\in \mathbb{N}$ and $\Sigma$ an alphabet. The $\pSearch_m$ function is defined by $\pSearch_m \colon D\subseteq (\{*\}\dot{\cup} \Sigma)^m \to \Sigma$, where $D$ contains all length-$m$ strings in $(\{*\}\dot{\cup} \Sigma)^m$ with exactly one non-$*$ symbol and $\pSearch(x)$ outputs the non-$*$ symbol in $x \in D$.
\end{definition}

\begin{theorem}[\textup{\cite[Theorem 9]{brassard_psearch}}]\label{thm:psearch}
    Let $n,m\in\mathbb{N}$ and $f\colon \Sigma^n \to \{0,1\}$. Then 
    \begin{equation}
        Q(f\circ \pSearch_m) = \Omega(\sqrt{m}\cdot Q(f)). 
    \end{equation}
\end{theorem}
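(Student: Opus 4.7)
The plan is to prove the bound using Reichardt's tight characterization of bounded-error quantum query complexity by the general (negative-weights) adversary bound, $Q(h) = \Theta(\Adv^\pm(h))$, together with the composition theorem $\Adv^\pm(f \circ g) \geq \Adv^\pm(f) \cdot \Adv^\pm(g)$. This reduces the task to lower bounding $\Adv^\pm(\pSearch_m)$ by $\Omega(\sqrt m)$.

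First I would establish $\Adv^\pm(\pSearch_m) = \Omega(\sqrt m)$. Since $\pSearch_m$ is essentially a promise version of $\OR_m$ (exactly one input position is non-$*$), the standard Grover-style adversary matrix works: on pairs of inputs of the form $*^{i-1}\sigma\, *^{m-i}$ and $*^{j-1}\sigma\, *^{m-j}$ differing only in the location $i \neq j$ of the unique non-$*$ symbol $\sigma$, placing uniform weight yields the $\sqrt m$ adversary lower bound exactly as in Ambainis's analysis of quantum search. Chaining these facts gives
\[
Q(f \circ \pSearch_m) = \Theta\bigl(\Adv^\pm(f \circ \pSearch_m)\bigr) \geq \Omega\bigl(\Adv^\pm(f) \cdot \Adv^\pm(\pSearch_m)\bigr) = \Omega\bigl(\sqrt m \cdot Q(f)\bigr).
\]

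The main technical obstacle is the non-Boolean output of $\pSearch_m$: the cleanest statements of the $\Adv^\pm$ composition theorem (H\o yer--Lee--\v{S}palek, Reichardt) assume a Boolean-output inner function, whereas $\pSearch_m$ takes values in $\Sigma$. I would address this either by invoking a multi-valued generalization of the composition theorem (as in work of Kimmel or Lee--Mittal--Roetteler--\v{S}palek--Szegedy), or by factoring $\pSearch_m$ through a Boolean sub-problem that first locates the unique non-$*$ position and then reads its value with one additional oracle call. The read-off step costs only $O(1)$ queries and is absorbed into constants, reducing everything to the Boolean inner-function case. As a conceptually simpler but more hands-on backup, one could avoid composition entirely via a BBBV-style hybrid argument: any $T$-query algorithm for $f \circ \pSearch_m$ has behavior on the $i$-th block essentially independent of the hidden symbol unless $\Omega(\sqrt m)$ queries are spent on that block, so at most $T/\sqrt m$ symbols are ``effectively learned,'' giving $Q(f) \leq T/\sqrt m$ and hence the same bound.
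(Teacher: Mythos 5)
The paper does not prove this statement; it imports it verbatim from the cited reference, so there is no in-paper argument to compare against. Judged on its own, your proposal has a concrete error in the adversary matrix for $\pSearch_m$. You place weight on pairs $*^{i-1}\sigma\,*^{m-i}$ and $*^{j-1}\sigma\,*^{m-j}$ carrying the \emph{same} hidden symbol $\sigma$ at different positions; but $\pSearch_m$ maps both of these to the \emph{same} output $\sigma$, and an adversary matrix must be supported exclusively on pairs $(x,y)$ with $\pSearch_m(x)\neq\pSearch_m(y)$. As written, every entry you assign is forced to zero, so the matrix certifies nothing. The repair is to pair inputs hiding \emph{distinct} symbols $\sigma_1\neq\sigma_2$ at distinct positions $i\neq j$: with $\Sigma=\{0,1\}$ this gives (after indexing by position) $\Gamma=J-I$ on $[m]\times[m]$, with $\|\Gamma\|=m-1$ while $\Gamma\circ\Delta_k$ is the rank-two matrix $e_k(e-e_k)^{\top}+(e-e_k)e_k^{\top}$ of norm $\sqrt{m-1}$, yielding $\Adv^{\pm}(\pSearch_m)\geq\sqrt{m-1}$ as needed.

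Two further cautions. Your ``factor through a Boolean sub-problem'' workaround does not close the non-Boolean gap as claimed: the subroutine that locates the unique non-$*$ index has a $\lceil\log_2 m\rceil$-bit output, not a Boolean one, so it does not reduce you to ``the Boolean inner-function case.'' The right move is to cite an $\Adv^{\pm}$ composition theorem stated for inner functions with arbitrary finite output range rather than trying to massage the problem to fit a Boolean statement. Finally, the BBBV-style assertion that at most $T/\sqrt{m}$ symbols are ``effectively learned'' is the \emph{intuition} underlying the composition theorem, not a self-contained alternative proof; a quantum algorithm can split amplitude arbitrarily across blocks, and making the ``effectively learned'' claim rigorous for arbitrary Boolean outer $f$ is precisely the content of the adversary composition theorem, so this route does not avoid the machinery you were hoping to sidestep.
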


We begin with a general result on hiding.

\begin{proposition}\label{prop:hiding_generic}
    Let $m,n\in \mathbb{N}$ with $m\geq n$ and $f\colon \Sigma^n \to \{0,1\}$. Then $Q(\hide_m[f]) \geq \Omega(\sqrt{m/n} \cdot Q(f))$. Moreover, if $\Sigma = \{0,1\}$ and $f$ is symmetric, then $Q(\hide_m[f]) \leq \tildeO(\sqrt{m/n} \cdot Q(f))$. 
\end{proposition}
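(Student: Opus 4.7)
For the lower bound, the plan is to restrict the domain of $\hide_m[f]$ to inputs with a canonical block structure and then recognize the restricted problem as a $\pSearch$-composition. Assuming $n\mid m$ (WLOG), partition the $m$ input positions into $n$ consecutive blocks of length $m/n$ each and let $D'\subseteq D$ be the sub-domain of $\hide_m[f]$ consisting of those $y$ with exactly one non-$*$ symbol in each block. On $D'$, the problem is literally $f\circ\pSearch_{m/n}$: for each of the $n$ blocks, $\pSearch_{m/n}$ reads the unique non-$*$ symbol from that block, and then $f$ is applied to the resulting length-$n$ string, which is precisely $\tilde y$. Since restricting the domain cannot increase quantum query complexity, and since \cref{thm:psearch} applied with inner length $m/n$ gives $Q(f\circ\pSearch_{m/n})=\Omega(\sqrt{m/n}\cdot Q(f))$, we obtain $Q(\hide_m[f])\geq Q(f\circ\pSearch_{m/n})\geq\Omega(\sqrt{m/n}\cdot Q(f))$, as required.

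For the upper bound, now assume $\Sigma=\{0,1\}$ and $f$ symmetric, so $f(\tilde y)=h(|\tilde y|)$ for some $h\colon\{0,\dots,n\}\to\{0,1\}$, where $|\tilde y|$ denotes the Hamming weight of $\tilde y$. Because $*$-symbols contribute nothing to this weight, $|\tilde y|$ equals $w\coloneqq|\{i\in[m]:y_i=1\}|\in\{0,\dots,n\}$, so $\hide_m[f](y)=h(w)$ and it suffices to estimate $w$ at the precision required to evaluate $h$. The plan is to invoke the Beals-Buhrman-Cleve-Mosca-de Wolf characterization $Q(f)=\Theta(\sqrt{n(n-\Gamma(f))})$ from \cite{polynomial_2001}, whose matching algorithm computes $h(|x|)$ by running quantum amplitude estimation (possibly over several adaptively-chosen precision levels) on the length-$n$ bit string $x$. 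I would execute the same algorithm verbatim but replace each internal amplitude-estimation call on a length-$n$ string by the analogous amplitude estimation on the length-$m$ string $y$, addressed through the indicator $\1[y_i=1]$. The key quantitative fact, obtained from the standard amplitude-estimation error bound, is that on a length-$L$ string whose count of marked items is at most $n$, estimating the count to additive precision $\Delta$ costs $O(\sqrt{Ln}/\Delta+\sqrt{L/\Delta})$ quantum queries; both summands scale by exactly $\sqrt{m/n}$ when $L$ grows from $n$ to $m$. Summing across all internal amplitude-estimation calls and paying an $O(\log m)$ overhead to amplify their individual success probabilities so that the chain succeeds overall with constant probability then yields the desired $\tildeO(\sqrt{m/n}\cdot Q(f))$ bound.

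The harder direction is clearly the upper bound: the main obstacle will be verifying that \emph{every} internal subroutine of the BBC+MdW algorithm admits the length-$m$ substitution described above and that the overall cost genuinely scales by the single factor $\sqrt{m/n}$ up to polylogarithmic overhead (rather than, say, different subroutines accruing overhead inconsistently). The lower bound, in comparison, is an immediate consequence of recognizing the block-structured sub-domain as $f\circ\pSearch_{m/n}$ and applying \cref{thm:psearch}.
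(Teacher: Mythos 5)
Your lower bound is identical to the paper's: restrict the domain to a block form with one non-$*$ symbol per block of length $m/n$, recognize this restriction as $f\circ\pSearch_{m/n}$, and apply \cref{thm:psearch}. Your upper bound rests on the same core idea as the paper's — invoke the $Q(f)=\Theta(\sqrt{n(n-\Gamma(f))})$ characterization from \cite{polynomial_2001} and run the optimal symmetric-Boolean algorithm directly on the hidden length-$m$ input — but it remains a sketch precisely at the step you flag as the obstacle. One caution about your ``key quantitative fact'': instantiating it with precision $\Delta=1$ and count bound $n$ gives $O(\sqrt{mn})$, which can greatly exceed the target $\tildeO(\sqrt{m/n}\cdot Q(f))=\tildeO(\sqrt{mA})$ where $A\coloneqq(n-\Gamma(f))/2$, so the ``adaptively-chosen precision levels'' carry real weight and must be spelled out. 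The paper sidesteps this verification burden by making the algorithm completely explicit: Grover-collect up to $A$ distinct positions with $y_i=1$ (treating $*$ as $0$), halting and outputting $f_{k-1}$ if the search fails after finding $k-1$ ones; then Grover-collect up to $n-B$ zeros (treating $*$ as $1$) with $B\coloneqq(n+\Gamma(f)-2)/2$, halting and outputting similarly; else output the constant value $b$ that $f$ takes on Hamming weights in $\{A,\dots,B\}$. The cost $\tildeO(\sqrt{mA}+\sqrt{m(n-B)})=\tildeO(\sqrt{m/n}\cdot Q(f))$ then falls out directly. So your approach is sound and reaches the same destination; the paper's concrete two-phase Grover algorithm is exactly what fills in the gap you yourself identify.
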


\begin{proof}
    For the lower bound, we observe that $f\circ \pSearch_{m/n}$ is a subfunction of $\hide_m[f]$. Therefore, by \cref{thm:psearch}, $Q(\hide_m[f]) \geq \Omega(\sqrt{m/n}\cdot Q(f))$.

    Consider the ``moreover'' part. From \cite{polynomial_2001}, we have $Q(f) = \Theta(\sqrt{n(n- \Gamma(f))})$, where $\Gamma(f) \coloneqq \min\{\abs{2k-n+1}\mid f_k \neq f_{k+1}\}$ and $f_k \coloneqq f(x)$ for all $x$ such that $\abs{x} = k$ as $f$ is symmetric. Observe that $f_k$ must be constant for all $k\in \{\ceil{(n-\Gamma(f))/2},\dots, \ceil{(n+\Gamma(f)-2)/2}\}$. Call this constant $b\in \{0,1\}$. We also write $A \coloneqq \ceil{(n-\Gamma(f))/2}$ and $B\coloneqq \ceil{(n+\Gamma(f)-2)/2}$ for convenience.
    
    We consider the following quantum algorithm for $\hide_m[f]$.

    \begin{figure}[ht]
        \centering
        \fbox{\parbox{0.925\textwidth}{
        \underline{\textbf{Algorithm for $\hide_m(f)$}}\\[1mm]
        On input $y\in \{0,1,*\}^m$:
        \begin{enumerate}
            \item Use Grover search to collect up to $A$ distinct positions $i\in [m]$ such that $y_i = 1$, treating each $*$ as $0$. If the search fails at step $k\in \{1,\dots, A\}$, stop and output $f_{k-1}$. Otherwise, continue.
            \item Use Grover search to collect up to $(n-B)$ distinct positions $i\in [m]$ such that $y_i = 0$, treating each $*$ as $1$. If the search fails at step $k\in \{1,\dots, (n-B)\}$, stop and output $f_{n-k+1}$. Otherwise, output $b$.
        \end{enumerate}
        }\hspace{0.01\textwidth}
        }
    \end{figure}
    
    The quantum query complexity of this algorithm, accounting for error suppression, is 
    \begin{equation}
        \tildeO(\sqrt{mA} + \sqrt{m(n-B})) = \tildeO(\sqrt{m/n}\cdot \sqrt{nA}) = \tildeO(\sqrt{m/n}\cdot Q(f)),
    \end{equation}
    using the basic fact that collecting (up to) $\alpha$ marked items from a list of $\beta$ items costs $\tildeO(\sqrt{\alpha \beta})$ quantum queries by Grover search \cite{vanApeldoorn2024basicquantum}. 
    
    Correctness can be seen as follows. If $\abs{y}<A$, the first part of the algorithm determines $\abs{y}$ exactly and outputs accordingly. Otherwise, $\abs{y}\geq A$ and the algorithm continues. Then, if $\abs{y}>B$, so that $y$ contains fewer than $(n-B)$ $0$s, the second part of the algorithm determines $\abs{y}$ exactly and outputs accordingly. Otherwise, $A\leq \abs{y}\leq B$, so $\hide_m[f](y) = b$.
\end{proof}

\begin{remark}
    The ``moreover'' part of \cref{prop:hiding_generic} is false if $f$ is not symmetric. For example, if $f(x) = x_1$, then it is not hard to see that $Q(g) = \Omega(\sqrt{m-n})$, which is not $O(\sqrt{m/n})$.
\end{remark}

Now, $\triedge$ can be seen as the hiding transform of the element distinctness function. Intuitively, having no triangle that contains the target edge $\{u,v\}$ is equivalent to the neighbors of $u$ and the neighbors of $v$ being distinct.

\begin{theorem}\label{prop:hided_ED}
For $m,d\in \mathbb{N}$ with $m\geq d$, $Q(\hide_m[\ED_d]) = \tildeTheta(\sqrt{m}\cdot d^{1/6})$. 
\end{theorem}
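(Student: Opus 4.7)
For the lower bound, since $Q(\ED_d) = \Theta(d^{2/3})$, \cref{prop:hiding_generic} immediately yields $Q(\hide_m[\ED_d]) \geq \Omega(\sqrt{m/d}\cdot d^{2/3}) = \Omega(\sqrt{m}\cdot d^{1/6})$. The ``moreover'' part of \cref{prop:hiding_generic} does not apply here because $\ED_d$ has a non-Boolean alphabet, so a dedicated upper bound argument is required.

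For the upper bound, I plan a two-level algorithm that first applies a uniformly random permutation to the $m$ positions of the input. This is free and does not change the output, since $\hide_m[\ED_d]$ depends only on the multiset of non-$*$ symbols. Partition $[m]$ into $d$ contiguous blocks of size $m/d$; after permutation the $d$ non-$*$ symbols occupy uniformly random positions, so by a balls-into-bins concentration argument (Chernoff plus union bound over the $d$ blocks), with probability $1-o(1)$ every block contains at most $L\coloneqq c\log(d)$ non-$*$ symbols for some absolute constant $c$. Conditioned on this \emph{regular} event, view the input as a $d\times L$ array $A$ where $A[i,j]$ is the $j$-th non-$*$ symbol in block $i$ (or $*$ if block $i$ has fewer than $j$ such symbols). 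Then $\hide_m[\ED_d](y)$ equals the value of a trivial variant of element distinctness on the $dL$ entries of $A$ (where $*$ does not collide with $*$), which can be computed in $O((dL)^{2/3}) = \tildeO(d^{2/3})$ outer queries to $A$ using the standard $\ED$ algorithm. Each outer query $A[i,j]$ is simulated by an inner Grover-type search for the $j$-th non-$*$ symbol among the $m/d$ positions of block $i$ (using, e.g., sequential marked-element search, which is efficient because $j \leq L = O(\log d)$), at cost $\tildeO(\sqrt{m/d})$ per call. Multiplying outer and inner costs gives $\tildeO(d^{2/3}\cdot\sqrt{m/d}) = \tildeO(\sqrt{m}\cdot d^{1/6})$.

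The main obstacle is correctly composing the bounded-error inner subroutines under the outer algorithm's superposed queries. I would amplify each inner subroutine's success probability to $1-1/\poly(d)$ at only polylog overhead (absorbed into the $\tildeO$ notation), so that the accumulated error across the $\tildeO(d^{2/3})$ outer queries remains $o(1)$; alternatively, one can invoke a known robust quantum oracle composition result. The $o(1)$ failure contribution from the regular event is likewise handled by standard success-probability amplification of the overall algorithm.
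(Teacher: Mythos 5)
Your proposal is correct and takes essentially the same approach as the paper: same lower bound from \cref{prop:hiding_generic}, same random permutation followed by a balls-into-bins argument to get $O(\log d)$ non-$*$ symbols per block, same Grover-based simulation of queries to a compressed $\tildeO(d)$-length string fed to the optimal element-distinctness algorithm with $*$s treated as non-colliding. The only cosmetic difference is that the paper extracts all non-$*$ symbols of a block in one $O(\sqrt{(m/d)\log d})$ sweep whereas you extract the $j$-th on demand, but both cost $\tildeO(\sqrt{m/d})$ per access and the error-composition concern you flag is handled in the same standard way.
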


\begin{proof}
    Since $Q(\ED_d) = \Omega(d^{2/3})$, the lower bound $Q(\hide_m[\ED]) = \Omega(\sqrt{m}\cdot d^{1/6})$ follows directly from the lower bound in \cref{prop:hiding_generic}. The upper bound, however, does not follow from the upper bound in \cref{prop:hiding_generic} since that is only stated for $\Sigma = \{0,1\}$. We can nonetheless prove the upper bound by a different argument as follows. 

    Assume $d$ is not a constant as otherwise, we can decide the problem by using Grover's algorithm to find all non-$\ast$ symbols in $O(\sqrt{m})$ queries. Also, we can assume $r\coloneqq m/d$ is an integer, or we can work with the least integer $m'>m$ such that $m'/d$ is an integer since $m'\leq m+d \leq 2m$. We proceed to describe a quantum algorithm for $\hide_m[\ED_d]$.

    Given an input $\tilde{x} \in (\Sigma\cup\{*\})^m$ to $\hide_m[\ED_d]$, we first apply a uniformly random permutation to $\tilde{x}$. Let $x$ denote the resulting string. For $b\in [d]$, let $x^{(b)} \coloneqq x[(b-1)r+1\twodots br] \in (\Sigma\cup\{*\})^r$. A standard probability argument, such as that used to bound the maximum load in a balls-into-bins experiment \cite[Theorem 1]{raab1998balls}, yields
    \begin{equation}\label{eq:block_form_whp}
        \Pr[\text{$\forall b \in [d]$, $x^{(b)}$ contains at most $\log(d)$ non-$*$ symbols}]\geq 1 - O(1/d),
    \end{equation}
    where the probability is over the random permutation. We continue assuming the event in \cref{eq:block_form_whp} occurs.

    For each given $b\in [d]$, we can Grover search for all of the non-$*$ symbols in $x^{(b)}$ using $O(\sqrt{r\log(d)})$ queries \cite{vanApeldoorn2024basicquantum} and output a string of length $\ceil{\log(d)}$, such that its first symbols are the non-$*$ symbols in $x^{(b)}$ and the rest are $*$s. Therefore, we can instantiate one query to a string $y$ of length $d \cdot \ceil{\log(d)}$ containing $*$s and precisely the same non-$*$ symbols as $x$ using $O(\sqrt{r\log(d)})$ queries to $x$. 
    
    Now, we run the optimal quantum algorithm for element distinctness on $y$ while mapping $y_i$ to some symbol $\bot_i$ outside of $\Sigma\cup \{*\}$ if $y_i=*$. (The $\bot_i$s are distinct for distinct $i$s.) This allows us to compute $\hide_m[\ED_d]$ with $O(\sqrt{r\log(d)}\cdot (d\log(d))^{2/3}) = \tildeO(\sqrt{m}\cdot d^{1/6})$ queries to the original input $x$, as required.
\end{proof}

\begin{remark}\leavevmode\label{remark:amplifying_walk}
\begin{enumerate}
    \item Another way of showing the upper bound in \cref{prop:hided_ED} is by a quantum walk that is amplified to the non-$*$ part of the input $x$ using $\tildeO(\sqrt{m/d})$ overhead, cf. proof of \cref{prop:triangle_vertex_upper_bound}.
    \item A benefit of the upper bound approach described in \cref{prop:hided_ED} is that it does not rely on the implementation details of the optimal quantum algorithm for element distinctness but rather only the structure of the problem. More specifically, this structure is used in the line ``mapping $y_i$ to some symbol $\bot_i$ outside of $\Sigma\cup \{*\}$ if $y_i=*$''.  In particular, the same argument would also show that, for every constant integer $k\geq 2$, $Q(\hide_m[\kdist{k}_d]) \leq \tildeO(\sqrt{m/d} \cdot Q(\kdist{k}_d))$ because $\kdist{k}_d$ has that same structure. Informally, using the same argument as in \cref{prop:triangle_edge}, this implies that the problem of deciding whether a length-$m$ edge list contains a $(k+1)$-clique that completes a target $k$-clique with $d$ neighbor has quantum query complexity $\tildeO(\sqrt{m/d} \cdot Q(\kdist{k}_d))$.
\end{enumerate}
\end{remark}

We now formally show that $\triedge_{m,d}$ has the same quantum query complexity as $\hide_m(\ED_d)$ by reducing these problems to each other. Our reduction also works in non-quantum query models.
\begin{corollary}\label{prop:triangle_edge}
For $m,d\in \mathbb{N}$ with $m\geq d$, $Q(\triedge_{m,d}) = \tildeTheta(\sqrt{m}d^{1/6})$. 
\end{corollary}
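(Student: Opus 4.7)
My plan is to sandwich $Q(\triedge_{m,d})$ between matching bounds by reducing back-and-forth with $\hide_m[\ED_d]$, whose complexity is $\tildeTheta(\sqrt{m}\,d^{1/6})$ by \cref{prop:hided_ED}. For the upper bound, I will give a direct query-by-query reduction: given input $x$ to $\triedge^{\{u,v\}}_{m,d}$, define $y \in (([n]\setminus\{u,v\})\cup\{*\})^m$ by $y_i = *$ if $\abs{x_i \cap \{u,v\}} \neq 1$, and $y_i = w$ if $x_i = \{u,w\}$ or $\{v,w\}$ for some $w\notin\{u,v\}$. The $D_d$ constraint on $x$ guarantees exactly $d$ non-$*$ entries in $y$, so $y$ lies in the domain of $\hide_m[\ED_d]$. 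Crucially, $\triedge_{m,d}(x) = \hide_m[\ED_d](y)$: by the distinctness part of $D_d$, any collision $y_i = y_j = w$ must arise from distinct edges $\{u,w\}$ and $\{v,w\}$, which form a triangle with $\{u,v\}$, and conversely every such triangle supplies a collision. Since $y_i$ is a local function of $x_i$, the simulation is query-free, giving $Q(\triedge_{m,d}) \leq Q(\hide_m[\ED_d]) = \tildeO(\sqrt{m}\,d^{1/6})$.

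For the lower bound, a naive reverse reduction from $\hide_m[\ED_d]$ stumbles on the distinctness requirement: two equal $\ED$ symbols would naively produce duplicate same-side edges. I sidestep this by exhibiting a hard subfunction of $\triedge_{m,d}$ via $\pSearch$-composition. Assume wlog $d$ is even, partition $[m]$ into $d$ blocks of size $m/d$, and restrict to inputs where each of the first $d/2$ blocks contains exactly one edge of the form $\{u,w\}$ and each of the last $d/2$ blocks contains exactly one edge of the form $\{v,w'\}$, with remaining positions filled by fresh distinct dummy edges disjoint from $\{u,v\}$ and all vertex labels $w,w'$ drawn from a large pool so that $x \in D_d$. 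This restriction is precisely $g \circ \pSearch_{m/d}$, where $g$ is the claw-finding problem that takes two lists of $d/2$ vertices and decides whether they share a common element. By \cref{thm:psearch}, $Q(g \circ \pSearch_{m/d}) \geq \Omega(\sqrt{m/d}\cdot Q(g))$. A standard randomized reduction from $\ED_d$ to claw-finding --- random $2$-coloring of the $\ED$ input makes any collision cross-split with probability $\geq 1/2$, and we boost $g$'s success to $1-o(1)$ before taking $\mathsf{OR}$ over $O(1)$ colorings --- yields $Q(g) = \Omega(Q(\ED_d)) = \Omega(d^{2/3})$. Chaining gives $Q(\triedge_{m,d}) \geq \Omega(\sqrt{m/d}\cdot d^{2/3}) = \Omega(\sqrt{m}\,d^{1/6})$.

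The principal obstacle is the distinctness constraint that blocks a clean direct reduction $\hide_m[\ED_d] \to \triedge_{m,d}$; the subfunction approach with a hard-wired $u/v$ split per block and $\pSearch$-composition circumvents it without any delicate handling of invalid inputs, and the $\ED$-to-claw amplification only costs constants absorbed into the $\tildeTheta$.
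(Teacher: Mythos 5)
Your upper bound matches the paper's proof essentially verbatim: the per-symbol map $x_i\mapsto y_i$ is exactly the paper's $\tilde{x}_i$, and the correctness argument via the distinctness constraint of $D_d$ is the same.

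Your lower bound is correct but packages the ingredients in a slightly different order from the paper. The paper reduces $\hide_m[\ED_d']$ (element distinctness promised to have at most one collision pair) directly to $\triedge_{m,d}$: each non-$*$ symbol $\tilde{x}_i$ is attached to one of the two endpoints of the target edge by an independent coin flip, each $*$ becomes a fresh dummy edge, and the $\sqrt{m/d}$ factor then comes for free from the already-proved hiding lower bound (\cref{prop:hiding_generic}, which uses $\pSearch$-composition internally). You instead exhibit a subfunction of $\triedge_{m,d}$ of the form $g'\circ\pSearch_{m/d}$, where $g'$ is promise claw-finding on two length-$d/2$ lists, apply \cref{thm:psearch} directly, and push the random $2$-coloring one level down into the reduction from element distinctness to $g'$. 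In both proofs the essential trick is identical --- split an $\ED$ collision across the two endpoints of the target edge by a random $2$-coloring --- so the routes are closely related. Yours names claw-finding as an explicit intermediate and bypasses $\hide_m[\ED_d']$; the paper's reuses the hiding machinery and is marginally shorter.

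Three small points to tighten. You should reduce from $\ED_d'$ rather than from $\ED_d$: a general $\ED_d$ instance with several collisions might have one pair intra-split by the $2$-coloring, producing a list with a repeated element on one side and hence an input outside the promise domain of your $g'$, and then the OR-over-colorings analysis breaks; restricting to $\ED_d'$ (as the paper does) fixes this, and $Q(\ED_d')=\Omega(d^{2/3})$ still holds by Aaronson--Shi. Relatedly, a uniformly random $2$-coloring does not produce sides of size exactly $d/2$; take a uniformly random balanced partition. Finally, since $g'$ is a partial function, you need \cref{thm:psearch} for partial outer functions --- the paper acknowledges this extension in the footnote to the lower-bound direction of \cref{prop:hiding_generic}, and you should invoke it in the same way.
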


\begin{proof}
Let $x \in \binom{[n]}{2}^m$ be in the domain of $\triedge^{\{1,2\}}_{m,d}$. Let
$\ED_d \colon [n]^d \to \{0,1\}$. We can compute $\triedge^{\{1,2\}}_{m,d}(x)$ by computing $\hide_m[\ED_d](\tilde{x})$, where for each $i\in [m]$, $\tilde{x}_i$ is defined from $x_i \coloneqq \{u,v\}$ by
\begin{equation}
    \tilde{x}_i \coloneqq \begin{cases}
        * &\text{if  $\{u,v\} \cap \{1,2\} = \emptyset$ or $\{u,v\} = \{1,2\}$},
        \\
        v &\text{if  $u = 1$ and $v \neq 2$},
        \\
        u &\text{if  $u \neq 1$ and $v = 2$},
    \end{cases}
\end{equation}
Observe that the mapping from $x_i$ to $\tilde{x}_i$ can be performed on the fly. Therefore $Q(\triedge_{m,d}) \leq O(Q(\hide_m[\ED_d])) \leq \tildeO(\sqrt{m} d^{1/6})$ by \cref{prop:hided_ED}.

Conversely, let $\tilde{x}\in ([s]\cup \{*\})^m$ be in the domain of $\hide_m[\ED_d']$, where $\ED_d'$ is $\ED_d$ but restricted to inputs containing at most a single collision pair (i.e., two symbols that are the same). Assume that $s\geq m$.\footnote{This assumption is without loss of generality under our convention in the preliminaries of what $Q(\triedge_{m,d})$ means when the alphabet size is not specified.} Let $\triedge^{\{s+1,s+2\}}_{m,d} \colon \binom{[s+2]}{2}^{m} \to \{0,1\}$. Then we can compute $\hide_m[\ED_d'](\tilde{x})$ by computing $\triedge^{\{s+1,s+2\}}_{m,d}(x)$, where for each $i\in [m]$, $x_i$ is defined by
\begin{equation}
    x_i \coloneqq \begin{cases}
        \{s+1,\tilde{x}_i\} \text{ or } \{s+2,\tilde{x}_i\} &\text{if  $\tilde{x}_i \in [s]$},
        \\
        \{1,i\} &\text{if  $\tilde{x}_i= *$},
    \end{cases}
\end{equation}
where the choice of ``or'' in the first case is made uniformly at random. (Note that we consider $\ED_d'$ instead of $\ED_d$ and assume $s\geq m$ so that the $x$ defined above is in the domain of $\triedge^{\{s+1,s+2\}}_{m,d}$ for some choices of ``or''.)

If $\tilde{x}$ contains a collision pair, then with probability at least $1/2$, $x$ will be a yes-instance of $\triedge^{\{s+1,s+2\}}_{m,d}$. On the other hand, if $\tilde{x}$ does not contain any collision pairs, $x$ will be a no-instance of  $\triedge^{\{s+1,s+2\}}_{m,d}$. Therefore, by repeating the above reduction a constant number of times, we can distinguish between these cases with high probability. Therefore $Q(\triedge^{\{s+1,s+2\}}_{m,d}) \geq \Omega(Q(\hide_m[\ED_d'])) =\Omega(\sqrt{m/d} \cdot Q(\ED_d')) = \Omega(\sqrt{m}d^{1/6})$ by \cref{prop:hiding_generic} since $Q(\ED_d') = \Omega(d^{2/3})$.\footnote{The  $Q(\hide_m[f]) = \Omega(\sqrt{m/n}\cdot Q(f))$ part of \cref{prop:hiding_generic} also holds for partial $f$ (like $\ED_d'$) by the same argument.}
\end{proof}

\begin{remark}
The quantum query complexity of the analogue of the $\triedge$ for $\kdist{3}_m$ is $\Theta(\sqrt{m})$ and for $\ksum{3}_m$ is $\Theta(m^{2/3})$. By analogue, we mean detecting whether the input contains a full certificate given one position of the certificate.  So we see that the quantum query complexity of $\triedge$ lies between that of these analogues.
\end{remark}

The results of this section naturally raise the following conjecture.
\begin{conjecture}\label{conj:symmetric_hiding}
    Let $m,n\in \mathbb{N}$ with $m\geq n$. For every symmetric $f\colon \Sigma^n \to \{0,1\}$, $Q(\hide_m[f]) \leq O(\sqrt{m/n} \cdot Q(f))$.
\end{conjecture}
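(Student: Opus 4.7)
The plan is to mimic the ``moreover'' part of \cref{prop:hiding_generic} but without relying on the specific binary-alphabet characterization $Q(f) = \Theta(\sqrt{n(n-\Gamma(f))})$ from \cite{polynomial_2001}. Instead, I would try to use $Q(f)$ as a black box, paying a $\tildeO(\sqrt{m/n})$ overhead per query via a quantum amplification/simulation argument, mirroring the amplification described in \cref{remark:amplifying_walk}.

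First I would apply a uniformly random permutation to $y \in (\Sigma \cup \{*\})^m$, so that the $n$ non-$*$ positions form a uniformly random size-$n$ subset of $[m]$. Then I would construct a coherent ``virtual oracle'' $\tilde O$ that behaves like a query oracle to some string $\tilde z \in \Sigma^n$, where $\tilde z$ is \emph{some} permutation of $\tilde y$. The key observation that makes this well-defined is that since $f$ is symmetric, $f(\tilde z) = f(\tilde y)$ regardless of the permutation used, so the simulation is allowed to produce the entries of $\tilde z$ in any order. Each call to $\tilde O$ asks for a fresh non-$*$ symbol of $y$: by amplitude amplification on the set of unseen non-$*$ positions, which has density at least $\Omega(n/m)$ as long as $\mathcal{A}$ makes at most $n/2$ queries, a single call can be implemented with $\tildeO(\sqrt{m/n})$ queries to $y$. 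Running the optimal quantum algorithm $\calA$ for $f$ with $\tilde O$ in place of its oracle then yields the desired complexity $\tildeO(\sqrt{m/n} \cdot Q(f))$.

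The hard part will be the coherent implementation of $\tilde O$, specifically the ``sampling without replacement'' across multiple amplified queries and the requirement that repeated queries by $\calA$ to the same virtual index return identical answers. The natural fix is to maintain a quantum register of already-assigned (virtual index, physical position) pairs and to define $\tilde O$ as: search the complement of the register for a non-$*$ symbol, then append. But then the number of queries to $\tilde O$ by $\calA$ must be bounded by $Q(f)$ with no repetitions, which for an adaptive $\calA$ is not automatic. One way around this is to exploit the symmetry of $f$ to assume (by a symmetrization/randomization of $\calA$ over permutations of its input coordinates) that $\calA$ queries distinct virtual indices, or at least that we may re-index on the fly. A second potential route, closer to \cref{remark:amplifying_walk}(i), is to build a genuine quantum walk algorithm for $\hide_m[f]$ that amplifies an MNRS-type walk for $f$ to the non-$*$ part of $y$ with $\tildeO(\sqrt{m/n})$ overhead in both setup and update.

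The main conceptual obstacle is that, unlike the binary case or the $\kdist{k}$ case handled in \cref{remark:amplifying_walk}(ii), there is no ``safe padding'' symbol for a general symmetric $f\colon \Sigma^n \to \{0,1\}$ (i.e., no symbol one can substitute for $*$ that does not affect the value of $f$), so the block-structured reduction used in \cref{prop:hided_ED} does not carry over. Hence one must commit to a genuine oracle simulation or walk-amplification argument, and the technical cost of that argument is establishing that the overhead really is $\tildeO(\sqrt{m/n})$ without incurring an extra multiplicative factor that depends on $f$ beyond $Q(f)$.
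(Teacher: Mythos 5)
This statement is labeled as a \emph{conjecture} in the paper and remains open there: the authors prove only the Boolean case (via the $\Gamma(f)$ characterization in \cref{prop:hiding_generic}), the case $f = \ED_d$ (via the ``safe padding'' structure in \cref{prop:hided_ED}), the randomized analogue $R(\hide_m[f]) \leq O(\frac{m}{n}R(f))$, and the weaker bound $Q(\hide_m[f]) \leq \tildeO(\sqrt{m/n}\cdot Q(f)^3)$ (both in \cref{prop:symmetric_partial}). There is no proof of the conjectured $\tildeO(\sqrt{m/n}\cdot Q(f))$ bound to compare against.

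Your proposal is a reasonable outline of what a proof would look like, and you correctly identify the central obstacle: there is no safe padding symbol for a general symmetric $f\colon \Sigma^n \to \{0,1\}$, so one must simulate a coherent oracle $\tilde O$ to a length-$n$ string that the algorithm does not know how to address. That obstacle is genuine and is precisely what keeps the conjecture open. In particular, your ``maintain a quantum register of (virtual index, physical position) pairs and search the complement'' idea does not obviously give a unitary, because the index-to-position map is input-dependent and different superposition branches will produce different, entangled maps; the standard quantum-walk fix from \cref{remark:amplifying_walk}(i) relies on the uniformity of the underlying Hamming walk and on the structure of a specific search problem, and it is not clear how to amplify an arbitrary (possibly non-walk) optimal quantum algorithm for $f$ at $\tildeO(\sqrt{m/n})$ overhead per query while preserving correctness. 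Your suggested ``symmetrize $\calA$ over permutations of its input coordinates so it queries distinct virtual indices'' would change the algorithm's behavior on adversarial inputs and does not by itself resolve the coherence issue. If you want a provable partial result along your lines, note that one \emph{can} cheaply classically maintain the index-to-position map, which yields a randomized simulation and recovers essentially the $R(\hide_m[f]) = O(\frac{m}{n}R(f))$ bound the paper proves (paying $m/n$ rather than $\sqrt{m/n}$ per classical query, without a square-root speedup on the search because the classical algorithm cannot search in superposition across branches); combining that with Chailloux's symmetric-function theorem gives the paper's cubic-loss bound $\tildeO(\sqrt{m/n}\cdot Q(f)^3)$, which is the best currently known.
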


We proved this conjecture (up to logarithmic factors) when the domain of $f$ is Boolean using the structure of the optimal quantum query algorithm for such $f$. On the other hand, we proved this conjecture when $f=\ED_d$ using the structure of the function itself. We believe that attacking this conjecture for arbitrary $f$ would yield further insights into the structure of symmetric functions and their optimal quantum query algorithms.

Towards the above conjecture, we show the following proposition.
\begin{proposition}\label{prop:symmetric_partial}
     Let $m,n\in \mathbb{N}$ with $m\geq n$. For every symmetric $f\colon \Sigma^n \to \{0,1\}$, $R(\hide_m[f]) \leq O(\frac{m}{n}\cdot R(f))$. Moreover, $Q(\hide_m[f]) \leq O(\sqrt{m/n}\cdot Q(f)^3 \cdot \log^{3/2}(n))$.
\end{proposition}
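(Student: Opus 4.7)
The plan is to prove both bounds via the same ``simulate by sampling'' strategy, using symmetry to pretend that the base algorithm for $f$ makes only uniformly random queries. For the randomized bound I take an optimal randomized algorithm $A$ for $f$ using $R(f)$ queries to inputs in $\Sigma^n$ and pre-compose it with a uniformly random permutation $\sigma\in\mathfrak S_n$ of positions. Since $f$ is symmetric, $A$ still computes $f(\tilde y)$ when run on $\sigma(\tilde y)$, and each of $A$'s queries, reinterpreted as a query to $\tilde y$, is a uniformly random position of $\tilde y$ among those not yet queried. I would simulate such a query by rejection sampling on the actual input $y$: draw uniformly random positions of $y$ until hitting a non-$*$ position whose value has not been returned before. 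Provided at most a constant fraction of the $n$ non-$*$ symbols have been used, each sample costs $O(m/n)$ queries in expectation, and a standard Markov/repetition argument upgrades the $O((m/n)\cdot R(f))$ expected total to a worst-case bound of the same order. The regime where more than, say, $n/2$ samples are needed corresponds to $R(f) = \Omega(n)$, in which case the claimed bound is witnessed trivially by reading all of $y$.

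For the quantum bound I would layer two further ideas on top. First, invoke Chailloux's theorem \cite{chailloux_symmetric} to convert the quantum algorithm for $f$ into a randomized algorithm $A'$ making $k\coloneqq O(Q(f)^3)$ queries; symmetrizing by a random $\sigma\in\mathfrak S_n$ again makes these queries uniform on unqueried positions of $\tilde y$. Second, replace classical rejection sampling by quantum amplitude amplification: to find a uniformly random non-$*$ position of $y$ outside the already-returned set, amplify over positions marked by the predicate ``non-$*$ and not yet returned'', which have density $\Theta(n/m)$, at a cost of $O(\sqrt{m/n})$ queries per call. Boosting each call's success probability so that a union bound over all $k$ simulated queries holds with constant probability incurs an $O(\log k)$ factor, giving a total of $\tildeO(\sqrt{m/n}\cdot k) = \tildeO(\sqrt{m/n}\cdot Q(f)^3)$. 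The classical controller running $A'$ can maintain the ``already returned'' set, so the amplifications are used only as measured sampling subroutines and no coherence over this set is needed.

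The main obstacle, though conceptually mild, is to formally justify the symmetrization step: that an optimal randomized (or randomized-from-quantum) algorithm for symmetric $f$ may be assumed to query uniformly random unqueried positions. The justification is that symmetry of $f$ lets us pass to the virtual input $\sigma(\tilde y)$ for uniformly random $\sigma$; because $f(\sigma(\tilde y)) = f(\tilde y)$ the output is still correct, and the value the base algorithm receives at its $i$-th query equals the $\sigma^{-1}(i)$-th entry of $\tilde y$, which is uniform over unqueried indices by independence of $\sigma$ from the algorithm's transcript. A milder secondary point is that the density of ``non-$*$ and unreturned'' positions degrades as the simulation proceeds; this stays $\Theta(n/m)$ as long as $k = o(n)$, which we may assume since otherwise the target bound is trivial.
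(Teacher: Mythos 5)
Your proof is correct and achieves both bounds, but via a genuinely different route. For the randomized part, the paper invokes a structural theorem of Bar-Yossef, Kumar and Sivakumar: for symmetric $f$, an optimal randomized algorithm may be taken to be ``variable oblivious, uniform,'' i.e.\ it queries a uniformly random size-$k$ subset $S\subseteq[n]$ (with $k=\Theta(R(f))$) and outputs $1$ with a probability depending only on the multiset of values seen. The resulting $\hide_m[f]$ algorithm is therefore non-adaptive: sample a random size-$\Theta((m/n)k)$ subset of $[m]$, argue via Paley--Zygmund that it contains at least $k$ non-$*$ entries with high probability, and apply the oblivious rule to those. You instead symmetrize an arbitrary optimal algorithm by a random $\sigma\in\mathfrak S_n$ and simulate by rejection sampling, which avoids the Bar-Yossef structure theorem at the cost of an adaptive analysis. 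For the quantum part, both proofs use Chailloux for the cubic factor, but the paper routes modularly through the first part: it compresses $\hide_m$ to $\hide_{\lceil n\log n\rceil}$ via the permutation-into-blocks argument (an $\tildeO(\sqrt{m/n})$ overhead) and then applies $Q\leq R$ and the already-proved randomized bound, whereas you convert $Q(f)$ to $R(f)=O(Q(f)^3)$ up front and simulate each classical sample directly with a boosted amplitude-amplification subroutine. Your route is more self-contained and explicitly algorithmic; the paper's buys modularity. Two small corrections to your write-up: (i) the simulator should track which \emph{positions} of $y$ have been returned, not which \emph{values} --- as written, ``whose value has not been returned before'' would misbehave on inputs with repeated symbols (e.g.\ element distinctness, the motivating case); and (ii) the justification ``uniform over unqueried indices by independence of $\sigma$ from the algorithm's transcript'' is imprecisely stated, since $\sigma$ is not independent of the transcript. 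The correct statement is that a uniformly random permutation can be \emph{lazily sampled}: conditioned on the images assigned so far, the next image is uniform over the remaining positions, which is exactly what your rejection-sampling (respectively, uniform-Grover) simulator implements.
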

\begin{proof}
    Write $k \coloneqq R_{1/100}(f) \geq 1$. Since $f$ is symmetric, \cite[Lemma 4.8 of full version]{baryoussef_2001} applies and shows that there exists a \emph{variable oblivious, uniform} $k$-query randomized algorithm for computing $f$ with bounded error $1/100$ of the following form. 
    
    {\vspace{3mm}\centering
    \noindent\fbox{\parbox{0.925\textwidth}{
        Given input $x\in \Sigma^n$:
        \vspace{1mm}
        \begin{enumerate}
            \item Choose a uniformly random subset $S\subseteq [n]$ of size $k$.
            \item Query $x_i$ for all $i\in S$.
            \item Output $1$ with probability $p$, where $p = p(X)$ is a function of the size-$k$ multiset $X\coloneqq \{x_i\mid i\in S\}$.
       \end{enumerate}}
        }
    \vspace{1mm}
    }
    
    Let $K \coloneqq \frac{m}{n}\cdot 100k$ and consider the following $K$-query randomized algorithm for computing $\hide_m[f]$:
    
    {\vspace{3mm}\centering
    \noindent\fbox{\parbox{0.925\textwidth}{
    Given input $y\in (\{*\}\cup\Sigma)^m$ in the domain of $\hide_m[f]$:
    \vspace{1mm}
        \begin{enumerate}
            \item Choose a uniformly random subset $T\subseteq [m]$ of size $K$. Let $Y$ denote the multiset $\{y_i \mid i \in T\}$.
            \item If the number of non-$*$ symbols in $Y$ is fewer than $k$, output $0$ or $1$ uniformly at random.
            \item If the number of non-$*$ symbols in $Y$ is at least $k$, choose a uniformly random size-$k$ sub-multiset $X$ of the non-$*$ symbols in $Y$ and output $1$ with probability $p(X)$, where $p(X)$ is as defined above.
        \end{enumerate}}
        }
    \vspace{1mm}
    }
    
    This algorithm behaves exactly the same as the previous algorithm provided we reach the third step. The expected number of non-$*$ symbols in $Y$ is $100k$.
    By the Paley-Zygmund inequality, the probability of $Y$ containing at least $k$ non-$*$ symbols is at least $(1-1/100)^2 K^2/(K+K^2) \geq 0.97$ since $K\geq 100$. Therefore, we reach the third step with probability at least $0.97$. Therefore the algorithm is correct with probability at least $0.97(1-1/100)\geq 2/3$. Hence $R(\hide_m[f]) \leq O(\frac{m}{n}\cdot R(f))$.

    For the ``moreover'' part, by the same permutation argument as in the proof of the upper bound in \cref{prop:hided_ED}, it follows that 
    $Q(\hide_m[f]) \leq O(\sqrt{(m/n)\cdot \log(n)} \cdot Q(\hide_{\ceil{n\log(n)}}[f]))$. Therefore, it suffices to show that $Q(\hide_{\ceil{n\log(n)}}[f]) \leq O(Q(f)^3 \cdot \log(n))$. But this is true since 
    \begin{equation}
         Q(\hide_{\ceil{n\log(n)}}[f]) \leq R(\hide_{\ceil{n\log(n)}}[f]) \leq O(\log(n)\cdot R(f)) \leq O(\log(n) \cdot Q(f)^3),
    \end{equation}
    where the first inequality uses $Q(\cdot) \leq R(\cdot)$, the second inequality uses the first part of this proposition, and the third inequality uses the result of \cite{chailloux_symmetric} for symmetric functions $f$.
\end{proof}

\section{\texorpdfstring{$\trivertex$}{TriangleVertex} and Shuffling}\label{sec:shuffle}
In this section, we consider the $\trivertex$ problem of deciding whether an input edge list of length $m$ has a triangle containing a target vertex of degree $d \leq m$.

We take this opportunity to study the shuffling transformation as well. As mentioned in the introduction, the edge list itself can be viewed as a shuffled adjacency list. In $\trivertex$, there is an additional ``block-wise'' notion of shuffling, where the blocks consist of  edges incident and not incident to the target vertex, respectively. This latter notion motivates our definition of shuffled direct sum.

\subsection{\texorpdfstring{$\trivertex$}{TriangleVertex}}
We first build intuition by directly working out some upper and lower bounds on the quantum query complexity of $\trivertex$. The upper bound uses a quantum walk algorithm, and the lower bound uses a reduction from the hiding transform of $\kdist{3}$.
\begin{proposition}\label{prop:triangle_vertex_upper_bound}
For $m,d\in \mathbb{N}$ with $m\geq d$, $Q(\trivertex_{m,d}) = \tildeO(\sqrt{m}d^{1/4})$.
\end{proposition}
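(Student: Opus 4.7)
The plan is to apply the MNRS quantum walk framework. Let $A \coloneqq \{i \in [m] : u \in x_i\} \subseteq [m]$ denote the (unknown) set of positions in the input edge list $x$ whose edges are incident to the target vertex $u$; by definition of $\trivertex_{m,d}$, $|A| \leq d$. Fix $r \coloneqq \lceil d^{3/4}\rceil$ and walk on the Hamming graph whose vertices are the size-$r$ subsets $R$ of $A$, each carrying the queried values $\{x_i\}_{i\in R}$. A vertex $R$ is \emph{marked} if some $i,j\in R$ and some $k\in [m]$ are such that $x_i,x_j,x_k$ form a triangle containing $u$; given the $r$ queried edges at $R$, checking for such a $k$ reduces to a Grover search over $[m]$.

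I estimate the three walk parameters as follows. For the setup $S$: to draw one uniformly random element of $A$ I amplitude-amplify Grover's search over $[m]$, using that a uniform position of $[m]$ is in $A$ with probability $\Omega(d/m)$, at cost $\tildeO(\sqrt{m/d})$ per sample. Building a size-$r$ subset therefore gives $S = \tildeO(r\sqrt{m/d})$; the uniformity of the Hamming-graph stationary distribution is exactly what makes amplitude amplification produce the right target state. The update $U$ swaps one element of $R$ for a uniform element of $A\setminus R$; since $r = o(d)$, the good-fraction remains $\Omega(d/m)$, so $U = \tildeO(\sqrt{m/d})$. The checking cost is $C = \tildeO(\sqrt{m})$, the Grover search over $[m]$ for the third edge completing any of the $\binom{r}{2}$ wedges determined by the values at $R$. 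The Hamming graph has spectral gap $\delta = \Omega(1/r)$, and if any triangle through $u$ exists, the probability that a uniform size-$r$ subset of $A$ contains its two $A$-positions is $\epsilon \geq \Omega((r/d)^2)$.

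Plugging into the MNRS complexity formula yields
\begin{equation*}
\tildeO\Bigl(S + \tfrac{1}{\sqrt{\epsilon}}\bigl(\tfrac{U}{\sqrt{\delta}} + C\bigr)\Bigr) \;=\; \tildeO\Bigl(r\sqrt{m/d} \;+\; \sqrt{dm/r} \;+\; d\sqrt{m}/r\Bigr).
\end{equation*}
With $r = \lceil d^{3/4}\rceil$, the first and third terms balance at $\sqrt{m}\,d^{1/4}$, while the middle term is $\sqrt{m}\,d^{1/8}$ and thus lower order, giving the claimed bound $\tildeO(\sqrt{m}\,d^{1/4})$.

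The main obstacle is to carry out the amplitude-amplified setup and update as a genuine unitary subroutine compatible with MNRS, rather than as a black-box distribution over measurements. Amplitude amplification only prepares the target state up to inverse-polynomial error and with a success-flag qubit, so the walk operator must be constructed to tolerate these errors; this is handled by standard $O(\log)$ repetition (absorbed into the $\tildeO$), and by the key structural observation that both the setup distribution (uniform over $r$-subsets of $A$) and each update transition (uniform over $A \setminus R$) are uniform on a subset of $[m]$. This is the precise sense in which ``the underlying random walk is uniformly random on the Hamming graph'' is exploited — a non-uniform target would force a more general QSampling subroutine whose overhead could ruin the balance above.
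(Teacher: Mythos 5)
Your proof is essentially the same as the paper's: both perform an MNRS quantum walk with amplitude-amplified setup and update to stay on the unknown incident-edge part $A$, using identical parameter choices ($S,U,C,\epsilon,\delta$ and $r=\lceil d^{3/4}\rceil$) and the same observation that the uniformity of the walk's transitions is what makes amplitude amplification applicable. The one small discrepancy is terminological: you say "Hamming graph" but describe Johnson-graph vertices (size-$r$ subsets) and transitions (swap with an element of $A\setminus R$), whereas the paper walks on the genuine Hamming graph on $A^r$ (ordered $r$-tuples, replace a coordinate by a uniform element of $A$); both have spectral gap $\Omega(1/r)$ and give the same final bound, so this does not affect the result.
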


\begin{proof}
     Suppose the given vertex is vertex $u$. Let $x\in \binom{[n]}{2}^m$ be the input to the $\trivertex_{m,d}$ problem. Then $\abs{\{i\in [m] \mid \abs{x_i \cap \{u\}} = 1\}} \leq d$ and we may wlog assume the inequality is saturated. Observe that $x$ can be viewed as having two parts: one part $A\subseteq[m]$ such that $i\in A$ if and only if $x_i$ is an edge incident to vertex $u$, and the other part $B \coloneqq [m]-A$ containing the remaining edges. Note $\abs{A} = d$.
    
    We perform a quantum walk based on the uniform random walk on the Hamming graph $G$ labeled by $r$-tuples of indices from $A$, where $r\in \mathbb{N}$ will be optimized later. We say a vertex $(i_1,\dots,i_r) \in A^r$ of $G$ is marked if there exists a $b\in B$ such that $x_b$ forms a triangle with $x_{i_j}$ and $x_{i_k}$ for some $j,k\in [r]$. At each vertex, we also store the data $(x_{i_1},\dots,x_{i_r})$.
    
    We now analyze the query complexity of this quantum walk following \cite{mnrs_2011}.
    
    {\vspace{2mm}\centering
    \noindent\fbox{\parbox{0.925\textwidth}{
        \begin{enumerate}
            \item The setup cost $S$ of the quantum walk is $S = \tildeO(r\sqrt{m/d} + \sqrt{m})$ by amplitude amplification and then Grover search, where the $\tildeO$ accounts for error reduction. In more detail, amplitude amplification gives the state
            \begin{equation}
                \Bigl(\frac{1}{\sqrt{\abs{A}}} \sum_{i\in A} \ket{i} \ket{x_i} \Bigr)^{\otimes r}  = \frac{1}{\sqrt{\abs{A}^r}} \sum_{i_1,\dots, i_r \in A} \ket{i_1,\dots, i_r} \ket{x_{i_1},\dots, x_{i_r}},
            \end{equation}
            and costs $\tildeO(r\sqrt{m/d})$ queries since $\abs{A} = d$. Then, Grover search is used to compute into a separate register whether the $(i_1,\dots,i_r)$ is marked. Assume wlog that measuring that register gives ``unmarked'', then the state becomes
            \begin{equation}
                \sum_{i_1,\dots, i_r \in A \, \colon \,  (i_1,\dots i_r) \text{ unmarked}} \ket{i_1,\dots, i_r} \ket{x_{i_1},\dots, x_{i_r}}.
            \end{equation}
            \item  The update cost of the quantum walk is $U = \tildeO(\sqrt{m/d})$ by amplitude amplification, where the $\tildeO$ accounts for error reduction, which crucially uses the uniformity of the underlying random walk.
            \item The checking cost of the quantum walk is $C = O(\sqrt{m})$ by Grover search.
            \item The fraction of marked vertices is $\epsilon = \Omega((r/d)^2)$ if $x$ is a yes-instance, else $\epsilon =0$.
            \item The spectral gap of the transition matrix of the random walk is $\delta = \Omega(1/r)$.
        \end{enumerate}}
    }
    \vspace{1mm}
    }

    Therefore, \cite{mnrs_2011} shows that the query complexity of the quantum walk is
    \begin{equation}
        O\Bigl(S + \frac{1}{\sqrt{\epsilon}}\bigl(\frac{1}{\sqrt{\delta}}\cdot U + C\bigr)\Bigr) = \tildeO\Bigl(r \sqrt{\frac{m}{d}} + \sqrt{m} + \frac{1}{\sqrt{r}}\sqrt{md} +\frac{d}{r}\sqrt{m}\Bigr) = \tildeO(\sqrt{m}\cdot d^{1/4}),
    \end{equation}
    by optimally setting $r$ to be $\ceil{d^{3/4}}$. 
\end{proof}

\begin{proposition}\label{prop:triangle_vertex_lower_bound}
   For $m,d\in \mathbb{N}$ with $m\geq d$, $Q(\trivertex_{m,d})\geq \Omega(\sqrt{m/d}\cdot Q(\kdist{3}_d))$.
\end{proposition}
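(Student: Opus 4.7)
The plan is to reduce $\hide_m[\kdist{3}_d]$ to $\trivertex_{m,d}$ and then invoke \cref{prop:hiding_generic} with $f = \kdist{3}_d$ to conclude, since that proposition gives $Q(\hide_m[\kdist{3}_d]) \geq \Omega(\sqrt{m/d}\cdot Q(\kdist{3}_d))$. So the main task is to exhibit an efficient randomized reduction that turns any bounded-error quantum algorithm for $\trivertex_{m,d}$ into one for $\hide_m[\kdist{3}_d]$ with only a constant-factor blowup in queries.

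Given an input $\tilde x \in (\Sigma\cup\{*\})^m$ with exactly $d$ non-$*$ symbols, I would sample offline a uniformly random $r\colon [m]\to\{1,2,3\}$ (a ``role'' per position) and define an edge list $x\in \binom{[n]}{2}^m$ on a vertex set $V = \{u\}\cup\{v_a, w_a : a\in \Sigma\}\cup\{s_i, t_i : i\in [m]\}$, with the $s_i, t_i$ all fresh and pairwise distinct, as follows. Position $i$ with $\tilde x_i = *$ becomes the dummy edge $\{s_i, t_i\}$; position $i$ with $\tilde x_i = a$ becomes the edge $\{u, v_a\}$, $\{u, w_a\}$, or $\{v_a, w_a\}$ according to whether $r(i)$ is $1$, $2$, or $3$. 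Each quantum query to $x_i$ is simulated by $O(1)$ queries to $\tilde x_i$ via the standard compute--copy--uncompute trick, since $r$ is known classically. Only non-$*$ positions with $r(i)\in\{1,2\}$ contribute an edge at $u$, so the degree of $u$ in $x$ is at most $d$ and $x$ is a valid input to $\trivertex_{m,d}$.

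The crux is the following claim: $x$ has a triangle through $u$ iff there exist three distinct positions $i_1,i_2,i_3$ all carrying the same non-$*$ symbol $a$ with $\{r(i_1),r(i_2),r(i_3)\} = \{1,2,3\}$. The ``if'' direction is clear: the three edges $\{u,v_a\},\{u,w_a\},\{v_a,w_a\}$ are then all present. The ``only if'' direction is the subtle part. Any triangle at $u$ has the form $\{u,\alpha\},\{u,\beta\},\{\alpha,\beta\}$ with $\alpha\neq \beta$; the only non-$u$ edges in $x$ that share endpoints with any edge at $u$ are the gadget edges $\{v_a,w_a\}$, since the dummy edges $\{s_i,t_i\}$ use fresh vertices disjoint from everything. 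So $\{\alpha,\beta\} = \{v_a,w_a\}$ for some $a$, which pins down the other two edges as $\{u,v_a\},\{u,w_a\}$, and the three edges must come from three different positions (single-position roles cannot produce two edges), all bearing symbol $a$ and covering all three roles.

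Given the claim, in the YES case any fixed colliding triple has roles forming a permutation of $\{1,2,3\}$ with probability $3!/3^3 = 2/9$, so $x$ has a triangle at $u$ with probability at least $2/9$; in the NO case every symbol occurs at most twice among the non-$*$ positions, no gadget is ever completed, and $x$ never contains such a triangle. First boosting the $\trivertex_{m,d}$ algorithm to failure probability much smaller than $2/9$ at $O(1)$ overhead, then running the reduction $O(1)$ times with independent resamples of $r$ and declaring ``YES'' iff any call reports a triangle, gives a bounded-error algorithm for $\hide_m[\kdist{3}_d]$ using $O(Q(\trivertex_{m,d}))$ quantum queries, which closes the argument. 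The main obstacle I expect is the ``only if'' direction of the triangle characterization: one needs to be careful enough in the vertex-naming scheme to exclude stray triangles from dummy edges, cross-gadget combinations, or a single position instantiating two edges of one gadget; the rest is routine error analysis and an appeal to \cref{prop:hiding_generic}.
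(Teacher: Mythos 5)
Your proof is correct and follows essentially the same route as the paper's: reduce $\hide_m[\kdist{3}_d]$ to $\trivertex_{m,d}$ by randomly assigning each position one of three roles that determines which of the three gadget-triangle edges it contributes, routing $*$-positions to disjoint dummy edges, and then invoke \cref{prop:hiding_generic}. The only cosmetic differences are that the paper uses a uniformly random balanced partition of $[m]$ into three blocks where you use i.i.d.\ uniform roles, and the paper sends all $*$-positions through a single auxiliary vertex where you use fresh vertex pairs; both variants give the same $\geq 2/9$ one-sided success probability and the same conclusion.
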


\begin{proof}
    We describe a reduction from $\hide_m[\kdist{3}_d]$ to $\trivertex_{m,d}$. Assume wlog that $m$ is a multiple of $3$. Let $x \in ([s]\cup \{*\})^m$ be an input to $\hide_m[\kdist{3}_d]$. Choose a uniformly random partition of $[m]$ into three subsets $X,Y,Z$ of the same size $m/3$. Let $x^X,x^Y,x^Z$ denote the subsequence of $x$ formed by indices in $X,Y,Z$ respectively. If $x$ has a $3$-collision, the probability of the event that exactly one of the three colliding indices is in each of $X$, $Y$, and $Z$ is at least $2/9$.

    We can compute $\hide_m[\kdist{3}_d](x)$ by computing $\trivertex_{m,d}$ on an edge list $y$ with vertex set 
    \begin{equation}
        \{v\} \ \dot{\cup} \ \{u\} \ \dot{\cup} \ \{a^X\mid a \in [s]\} \ \dot{\cup} \ \{a^Y\mid a\in [s]\} \ \dot{\cup} \ \{*_i \mid i \in [m]\}
    \end{equation}
    and target vertex $v$. For each $i\in [m]$, we map $x_i$ to an edge $y_i$ as follows.
    \begin{enumerate}
        \item Case $x_i \in [s]$: Map $x_i$ to edge $\{x_i^X,v\}$ if $i\in X$, edge $\{x_i^Y,v\}$ if $i\in Y$, and edge $\{x_i^X,x_i^Y\}$ if $i\in Z$.
        \item Case $x_i = *$: Map $x_i$ to $\{u,*_i\}$.
    \end{enumerate}
    Note that $y$ defined this way is of length $m$, and the number of edges in $y$ incident to $v$ is exactly $d$. Therefore, $y$ is a valid input to $\trivertex_{m,d}$.
     
    If $x$ has a $3$-collision, then $y$ has a triangle containing vertex $v$ with probability at least $2/9$. If $x$ does not have a $3$ collision, then $y$ never has a triangle containing vertex $v$. Therefore, repeating the reduction a constant number of times allows us to compute $\hide_m[\kdist{3}_d](x)$ with high probability. Therefore, $Q(\trivertex_{m,d})$ is lower bounded by $Q(\hide_m[\kdist{3}_d])$, which is $\Omega(\sqrt{m/d}\cdot Q(\kdist{3}_d))$ by \cref{prop:hiding_generic}, as required.
\end{proof}

Observe that in the proof of \cref{prop:triangle_vertex_upper_bound} above, the algorithm does not a priori know which part of the input $x$ is in $A$ and which part is in $B$. It appears that the shuffling of these parts adds to the complexity of the problem. To investigate how generic this phenomenon is, we consider two types of questions motivated by shuffling.
\begin{enumerate}
    \item How does shuffling a function change its complexity?
    \item How does shuffling affect direct sum theorems?
\end{enumerate}
We formalize these questions in the next two sections and prove some first results.

\subsection{Shuffled functions}\label{sec:shuffled_functions}

We recall \cref{def:shuffling} of shuffling from the introduction. Let $n\in \mathbb{N}$. Let $\tildeD,\Sigma$ be finite non-empty sets with $\tildeD\subseteq \Sigma^n$. For a function $f\colon \widetilde{D} \to \{0,1\}$, the shuffling transform of $f$ is defined to be
    \begin{equation}
        \shuffle[f] \colon D \subseteq (\Sigma\times [n])^n  \to \{0,1\}, \quad \text{where}
    \end{equation}
    \begin{enumerate}
        \item the set $D$ consists of all elements $x\in (\Sigma\times [n])^n$ satisfying $x = ((v_1,\pi(1)),\dots,(v_n,\pi(n)))$ for some bijection $\pi\colon [n]\to [n]$ such that $(v_{\pi^{-1}(1)},v_{\pi^{-1}(2)},\dots,v_{\pi^{-1}(n)})\in \widetilde{D}$.
    \item $\shuffle[f](x)\coloneqq f(v_{\pi^{-1}(1)},v_{\pi^{-1}(2)},\dots,v_{\pi^{-1}(n)})$.
    \end{enumerate}

Next, we collect some simple facts about $\shuffle[f]$. 
\begin{fact}\label{fact:shuffle}
For every $f\colon E\subseteq\Sigma^n\to\{0,1\}$, the following holds:
\begin{enumerate}
    \item\label{enu:shuffle-vs-origin} $Q(\shuffle[f])=\Omega(Q(f)).$
    \item\label{enu:shuffle-sym} If $f$ is symmetric, then $Q(\shuffle[f])=Q(f).$
    \item\label{enu:shuffle-sep} $Q(\shuffle[f])=O(\sqrt{n}\cdot Q(f)).$ Moreover, there exists $f$ such that $Q(\shuffle[f])=\Omega(\sqrt{n}\cdot Q(f))$.
\end{enumerate}
\end{fact}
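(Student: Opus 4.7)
The plan is to handle (i)--(iii) in order; only the ``moreover'' part of (iii) takes any real work.

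For (i), I realize $f$ as a subfunction of $\shuffle[f]$ by restricting to the identity permutation: for $v\in \tildeD$, the tuple $x\coloneqq((v_1,1),\dots,(v_n,n))$ lies in $D$ and $\shuffle[f](x)=f(v)$. Since the second coordinate of $x$ at each position is known classically, one query to $x$ is simulated by one query to $v$, giving $Q(f)\leq Q(\shuffle[f])$. For (ii), when $f$ is symmetric we have $f(v_{\pi^{-1}(1)},\dots,v_{\pi^{-1}(n)})=f(v_1,\dots,v_n)$, so simply running the optimal algorithm for $f$ on the first coordinates of $x$ already computes $\shuffle[f]$ using $Q(f)$ queries. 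Combined with (i), this gives equality.

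For the upper bound in (iii), I view $\shuffle[f]$ as the composition $f\circ g$, where $g_i(x)\coloneqq v_{\pi^{-1}(i)}$ recovers the $i$-th unshuffled symbol. Each $g_i$ can be computed in $O(\sqrt{n})$ queries by exact quantum search for the unique $j\in[n]$ with $\pi(j)=i$ followed by reading the first coordinate at $j$. Applying the tight composition theorem for bounded-error quantum query complexity (which incurs no logarithmic overhead) then yields $Q(\shuffle[f])\leq O(\sqrt{n}\cdot Q(f))$.

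For the ``moreover'' part, take $f\colon\{0,1\}^n\to\{0,1\}$ to be the dictator $f(v)=v_1$, so $Q(f)=1$. Fix $v_i\coloneqq \1[i\leq n/3]$ independently of $\pi$; then $\shuffle[f]$ on input $((v_1,\pi(1)),\dots,(v_n,\pi(n)))$ equals $\1[\pi^{-1}(1)\leq n/3]$. Given a quantum oracle for an arbitrary permutation $\pi$ of $[n]$, each query to this shuffled input is simulated by one query to $\pi$ (the $v$-values are hard-coded), so a $T$-query algorithm for $\shuffle[f]$ decides ``$\pi^{-1}(1)\leq n/3$?'' in $T$ queries to $\pi$. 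But the latter is at least as hard as unstructured search on $\{1,\dots,n/3\}$ for the unique index $j$ with $\pi(j)=1$ using the Boolean oracle $i\mapsto \1[\pi(i)=1]$, and so requires $\Omega(\sqrt{n})$ quantum queries by the standard search lower bound. Hence $Q(\shuffle[f])=\Omega(\sqrt{n})=\Omega(\sqrt{n}\cdot Q(f))$.

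The only real subtlety is this last lower bound: a direct positive-weight adversary argument on the obvious hard-instance pairs for $\shuffle[f]$ collapses because the permutation coordinate at any single position carries ``global'' information about $\pi$, inflating the product $\ell\cdot \ell'$ in Ambainis's bound. Routing through the permutation-oracle search promise problem above sidesteps this technical difficulty cleanly and recovers the desired $\Omega(\sqrt{n})$ separation.
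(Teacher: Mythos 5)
Parts (i) and (ii) match the paper's proof exactly (restriction to identity-permutation inputs is the ``subfunction'' observation; symmetry lets you ignore the position coordinate). The upper bound in (iii) is also the paper's argument -- exact Grover search for the position carrying label $i$ -- although you package it as a composition theorem. Strictly, the tight composition theorem for quantum query complexity applies to $f\circ g$ on \emph{disjoint} input blocks, whereas your $g_i$'s all read the same string $x$; but since you implement each $g_i$ \emph{exactly}, there is no error propagation and the bare simulation argument already gives $Q(\shuffle[f])\le O(\sqrt{n})\cdot Q(f)$, which is how the paper phrases it.

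The ``moreover'' part of (iii) has a genuine gap. You assert that deciding ``$\pi^{-1}(1)\leq n/3$?'' given a \emph{permutation} oracle for $\pi$ ``is at least as hard as unstructured search \ldots using the Boolean oracle $i\mapsto \1[\pi(i)=1]$,'' and then invoke the Grover lower bound. But the reduction runs in the wrong direction: the permutation oracle is strictly more informative (it returns $\pi(i)$, not just $\1[\pi(i)=1]$), so from a $\pi$-oracle one can simulate the Boolean oracle, giving only $Q(\pi\text{-decision})\le Q(\text{Boolean search})$. The Grover lower bound for the Boolean oracle therefore does not transfer to the $\pi$-oracle problem by this argument. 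What is actually needed is a lower bound for inverting a permutation given full $\pi$-oracle access; this is true and $\Omega(\sqrt{n})$, but it requires a separate argument -- for instance, an adversary bound pairing each $\pi$ with $\pi\circ(j_0\;j_1)$ where $j_0=\pi^{-1}(1)\le n/3$ and $j_1>n/3$, or Nayak's theorem that inverting a permutation is as hard as unordered search. Note the paper itself only asserts the $1$ vs.\ $\Omega(\sqrt{n})$ separation without proof, so you are not missing anything the paper supplies, but the step you added is not a valid justification as written and you should replace it with one of the arguments above.
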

\begin{proof}
    \cref{enu:shuffle-vs-origin}: Every query complexity measure on $\shuffle[f]$ is at least that on the $f$ since $\shuffle[f]$ contains $f$ as a subfunction. \cref{enu:shuffle-sym}: If $f$ is symmetric, then every query complexity measure on $f$ is the same as that measure on $\shuffle[f]$ since the algorithm computing $\shuffle[f]$ could simply ignore the second coordinate of each input symbol. \cref{enu:shuffle-sep}: Let $\calA$ be an arbitrary algorithm for $f$. For each $z\in \Sigma^n$, we can simulate the quantum query oracle in the algorithm $\calA$ by exact Grover search through $x\in (\Sigma\times [n])^n$ over the second coordinate for the target index, and exact Grover search incurs a multiplicative overhead of $O(\sqrt{n})$. The ``moreover'' part follows by simply taking $f(x) = x_1$ to be the dictator function. This $f$ witnesses a $1$ vs $\Omega(\sqrt{n})$ separation between $Q(f)$ and $Q(\shuffle[f])$. 
\end{proof}
We record one more fact that relates the quantum and randomized query complexities for $\shuffle[f]$.
\begin{fact}[cf. \cite{chailloux_symmetric}]\label{fact:shuffle-classical-quantum}
For every $f\colon E\subseteq\Sigma^n\to\{0,1\}$,
    \begin{equation*}
        Q(\shuffle[f])) = \Omega(R(\shuffle[f])^{1/3}).
    \end{equation*}  
\end{fact}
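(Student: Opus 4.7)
The plan is to observe that $\shuffle[f]$ is a symmetric function (in the sense of being invariant under permutations of input positions) and then directly invoke the polynomial relationship between randomized and quantum query complexity for symmetric functions from \cite{chailloux_symmetric}.

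First, I would verify the symmetry claim. Fix any $\sigma\in\mathfrak{S}_n$ and any $x=((v_1,\pi(1)),\dots,(v_n,\pi(n)))\in D$. The position-permuted input is $\sigma(x)=((v_{\sigma(1)},\pi(\sigma(1))),\dots,(v_{\sigma(n)},\pi(\sigma(n))))$. Setting $\pi'\coloneqq \pi\circ\sigma$ and $v'_i\coloneqq v_{\sigma(i)}$, the map $\pi'$ is again a bijection on $[n]$, and
\begin{equation*}
v'_{(\pi')^{-1}(j)} \;=\; v_{\sigma(\sigma^{-1}\pi^{-1}(j))} \;=\; v_{\pi^{-1}(j)} \qquad\text{for all } j\in[n].
\end{equation*}
Hence the tuple $(v'_{(\pi')^{-1}(1)},\dots,v'_{(\pi')^{-1}(n)})$ is the same as $(v_{\pi^{-1}(1)},\dots,v_{\pi^{-1}(n)})$, which lies in $\widetilde{D}$ because $x\in D$. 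Thus $\sigma(x)\in D$ and $\shuffle[f](\sigma(x))=\shuffle[f](x)$. So $\shuffle[f]$ is symmetric in the paper's sense (symmetric under permutations of input positions, as defined in the preliminaries).

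Next I would apply \cite{chailloux_symmetric}, which asserts that for any (partial) Boolean function $g$ whose domain and value are invariant under permutations of input positions, $R(g) \le O(Q(g)^3)$. Applying this to $g=\shuffle[f]$ yields $R(\shuffle[f]) \le O(Q(\shuffle[f])^3)$, and rearranging gives $Q(\shuffle[f]) = \Omega(R(\shuffle[f])^{1/3})$, as claimed.

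There is no genuine obstacle here; the only content of the proof is the verification that $\shuffle[f]$ meets the symmetry hypothesis of Chailloux's theorem, which is immediate from the definition since each input symbol carries its own pre-shuffling position in its second coordinate. This symmetry is precisely what was already exploited in \cref{prop:symmetric_partial} and in the discussion of $\SMAJ_n$, where the same $Q=\Omega(R^{1/3})$ conversion was applied to $\shuffle[\SMAJ_n]$.
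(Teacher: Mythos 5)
Your proposal is correct and follows exactly the paper's argument: observe that $\shuffle[f]$ is symmetric under permutations of input positions (since each symbol carries its pre-shuffle index) and then invoke Chailloux's $R(g) = O(Q(g)^3)$ bound for symmetric functions. The only difference is that you spell out the symmetry verification, which the paper leaves implicit.
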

The importance of the above fact is that it provides a generic lower bound method for the quantum query complexity for problems in the edge list model, since analyzing the randomized query complexity is often easier. The proof of this fact is based on the simple observation that $\shuffle[f]$ transforms every function $f$ into a symmetric function. The beautiful work of Chailloux \cite{chailloux_symmetric} shows that the quantum query complexity of every symmetric function is at least big-$\Omega$ of its randomized query complexity raised to the power $1/3$.

As we have observed, for general $f$, $Q(\shuffle[f])$ can be unboundedly larger than $Q(f)$ while for symmetric $f$, $Q(\shuffle[f]) = Q(f)$. Therefore, the interesting question becomes what happens when $f$ is partially symmetric, for example, by being defined via graph properties? This question can be used to understand how the complexity of a graph property changes in the edge list model versus the adjacency list and adjacency matrix models. 

\begin{quote}
\emph{Can there be a large separation between the quantum query complexity of a graph property in the edge list model versus the adjacency list model?}
\end{quote}

We show there can be an exponential separation.
\begin{theorem}\label{thm:shuffling-adj-list}
    There is (partial) graph property $\calP$ on graphs of maximum degree $5$, such that its quantum query complexity in the adjacency list model is $O(\polylog(n))$, but its quantum query complexity in the edge list model is $\Omega(n^{1/48}).$
\end{theorem}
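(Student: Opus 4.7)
The plan is to use the partial graph property $\calP$ constructed in Section 6 of \cite{symmetries_bcgkpw}, which lives on graphs of maximum degree $5$ and exhibits an exponential separation in the adjacency list model: $Q(\calP)_{\mathrm{adj}} = O(\polylog n)$ via an efficient quantum algorithm, while $R(\calP)_{\mathrm{adj}} = n^{\Omega(1)}$ (with the specific exponent from \cite{symmetries_bcgkpw} chosen so that the cube root below gives $n^{1/48}$). The adjacency list upper bound appearing in the theorem is then immediate from this reference.

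For the edge list lower bound, the key observation is that $\calP$, viewed as a function on a length-$m$ edge list, is \emph{symmetric} in its input positions because a graph property is invariant under reordering its edges. Chailloux's theorem \cite{chailloux_symmetric} then yields $Q(\calP)_{\mathrm{edge}} = \Omega(R(\calP)_{\mathrm{edge}}^{1/3})$, so it suffices to show $R(\calP)_{\mathrm{edge}} = n^{\Omega(1)}$. I would establish this by a reduction that turns any randomized edge-list algorithm $\calA$ into a randomized adjacency-list algorithm with only constant query overhead. Using the degree-$5$ bound, define the canonical length-$5n$ edge list whose $(5(i-1)+k)$-th entry is $\{i, v_i[k]\}$, with a fixed dummy edge padding the slot whenever $\deg(i) < 5$. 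Running $\calA$ on this canonical edge list decides $\calP$ correctly since $\calA$ works on arbitrary worst-case edge lists; moreover each of $\calA$'s queries at position $j = 5(i-1)+k$ is answered by a single adjacency list query to position $i$, reading its $k$-th neighbor entry.

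Chaining the inequalities gives $Q(\calP)_{\mathrm{edge}} \geq \Omega(R(\calP)_{\mathrm{edge}}^{1/3}) \geq \Omega(R(\calP)_{\mathrm{adj}}^{1/3}) = \Omega(n^{1/48})$, completing the proof. The main subtlety I anticipate is ensuring that the dummy-padded canonical edge list lies within $\calA$'s promise domain, so that $\calA$'s correctness guarantee genuinely transfers; if $\calP$ is already restricted to $5$-regular graphs (as in \cite{symmetries_bcgkpw}) no padding is needed and the adjacency list / canonical edge list correspondence is a clean bijection, otherwise a small gadget graph on the padding symbols must be included so that $\calP$ remains well-defined and invariant under the dummy entries.
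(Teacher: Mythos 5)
Your proposal is correct and follows essentially the same route as the paper: both cite the degree-$5$ partial graph property from \cite{symmetries_bcgkpw} with quantum adjacency-list complexity $O(\polylog n)$ and randomized adjacency-list complexity $\Omega(n^{1/16})$, transfer the randomized lower bound to the edge-list model, and then apply Chailloux's theorem \cite{chailloux_symmetric} to obtain $\Omega(n^{1/48})$. The one place the paper is more streamlined is the transfer step: rather than constructing a canonical length-$5n$ edge list with dummy padding and worrying about the double-counting of edges or the promise domain (the subtlety you flag), it formalizes the edge-list model as $\shuffle[\calP]$ and directly invokes \cref{fact:shuffle}\cref{enu:shuffle-vs-origin}, which observes that $\calP$ is a subfunction of $\shuffle[\calP]$ (via the identity permutation), so $R(\shuffle[\calP]) \geq R(\calP)$ with no padding or repetition issues to resolve.
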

\begin{proof}
We know from \cite{symmetries_bcgkpw} that there is a graph property $\calP$ 
in the adjacency list model on graphs of maximum degree $5$ with quantum query complexity $O(\polylog(n))$ and randomized query complexity $\Omega(n^{1/16})$, where the input adjacency list has size $n$. Now, by~\cref{fact:shuffle-classical-quantum}, when we try to compute $\calP$ in the edge list model, the quantum query complexity is at least $\Omega(n^{1/48})$.
\end{proof}

As the edge list model is the shuffled version of the adjacency list model, we find it natural to also consider the shuffled version of the adjacency matrix model and ask the following question.

\begin{quote}\label{quote:adjacency_matrix_shuffled}
\emph{Can there be a large separation between the quantum query complexity of a graph property in the adjacency matrix model versus the shuffled adjacency matrix model?}
\end{quote}

We can no longer appeal to the previous argument because \cite{symmetries_bcgkpw} shows that the quantum query complexity of every graph property in the adjacency matrix model is at least that of its randomized query complexity raised to the $1/6$ power.
Nonetheless, we can answer the question with ``yes'' by considering a function $\SMAJ_n$ defined in~\cref{def:smaj}. $\SMAJ_n$ can be seen as a (directed) graph property computed in the adjacency matrix model.

Recall that $\SMAJ_n\colon D_0\dot{\cup} D_1 \subseteq \{0,1\}^{n^2} \to \{0,1\}$ is defined to be the following restriction of $\MAJ_n\circ \MAJ_n$ on $n^2$ bits.
\begin{equation*}\label{eq:maj_circ_maj}
     \SMAJ_n : x \mapsto \begin{cases}
         0, &x\in D_0,\\
         1, &x\in D_1,
     \end{cases}
\end{equation*}
where
 $D_0$ consists of all $x = (x_{1,1},x_{1,2},\dots, x_{n,n}) \in \{0,1\}^{n^2}$  such that there exists a subset $S\subseteq[n]$ of size at least $2n/3$ such that for all $i\in S$, $x^i \coloneqq (x_{i,1},\dots,x_{i,n})$ has Hamming weight $\abs{x^i} \ge 2n/3$ and for all $i\in [n]-S$,  $\abs{x^i} \le n/3$; while 
 $x = (x_{1,1},x_{1,2},\dots, x_{n,n}) \in \{0,1\}^{n^2}$ is in $D_1$ if and only if there exists a subset $S\subseteq[n]$ of size at most $n/3$ such that for all $i\in S$, $x^i \coloneqq (x_{i,1},\dots,x_{i,n})$ has Hamming weight $\abs{x^i} \ge 2n/3$ and for all $i\in [n]-S$, $\abs{x^i} \le n/3$.

\begin{theorem}\label{prop:shuffling_adj_matrix}
$R(\SMAJ_n) = O(1)$ but $R(\shuffle[\SMAJ_n]) = \Omega(\sqrt{n})$. In addition, $Q(\SMAJ_n) = O(1)$ but $Q(\shuffle[\SMAJ_n]) = \Omega(n^{1/6})$.
\end{theorem}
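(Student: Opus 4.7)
The first two claims, $R(\SMAJ_n)=O(1)$ and $Q(\SMAJ_n)=O(1)$, follow the algorithm sketched just before the statement. The plan is to test whether a single row is dense or sparse by querying $O(1)$ entries of that row and thresholding the empirical weight at $1/2$: the constant gap between $\geq 2/3$ (dense) and $\leq 1/3$ (sparse) makes this correct with constant probability. Repeating the test on $O(1)$ independently sampled rows and comparing the fraction of dense outcomes against $1/2$ distinguishes $D_0$ (at least $2n/3$ dense rows) from $D_1$ (at most $n/3$ dense rows) with constant probability. The bound $Q(\SMAJ_n)\leq R(\SMAJ_n)=O(1)$ is immediate.

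For the hard part, $R(\shuffle[\SMAJ_n])=\Omega(\sqrt{n})$, the plan is to exhibit distributions $\mu_0$ on $D_0$ and $\mu_1$ on $D_1$ designed so that bits indexed by positions in \emph{distinct} rows are i.i.d.\ Bernoulli$(1/2)$ under both. Concretely, under $\mu_0$ I take each row independently: with probability $3/4$ the row is dense and $x^i$ is a uniformly random weight-$2n/3$ vector, and with probability $1/4$ the row is sparse and $x^i=0^n$; I then condition on the event $A_0$ that at least $2n/3$ rows are dense, which holds except with probability $e^{-\Omega(n)}$ by a Chernoff bound. Under $\mu_1$ I take each row independently: with probability $1/4$ the row is dense and $x^i=1^n$, and with probability $3/4$ the row is sparse and $x^i$ is a uniformly random weight-$n/3$ vector, conditioning analogously on the event $A_1$ that at most $n/3$ rows are dense. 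A direct computation shows that the per-bit marginal is $(3/4)(2/3)+(1/4)(0)=1/2$ under $\mu_0$ and $(1/4)(1)+(3/4)(1/3)=1/2$ under $\mu_1$, and since rows are mutually independent in both unconditional distributions, bits indexed by distinct rows are mutually independent Bernoulli$(1/2)$'s in both.

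The total variation argument then proceeds as follows. A $T$-query randomized algorithm for $\shuffle[\SMAJ_n]$ observes $T$ pairs $(v_{k_\ell},\pi(k_\ell))$, where the labels $\pi(k_\ell)$, interpreted as elements of $[n]\times[n]$, are uniform without replacement and are independent of the input values. Let $E$ be the event that the $T$ observed row coordinates are pairwise distinct; the birthday bound gives $\Pr[\neg E]=O(T^2/n)$, and this probability is identical under both distributions because the labels are determined by the random permutation alone. Conditioned on $E$, the transcript consists of $T$ i.i.d.\ Bernoulli$(1/2)$ bits together with a uniform random injection of $[T]$ into $[n]\times[n]$, which is the \emph{same} distribution under both $\mu_0$ and $\mu_1$. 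Combining this with the $e^{-\Omega(n)}$ TV loss arising from conditioning on $A_0$ and $A_1$ bounds the transcript TV distance by $O(T^2/n)+e^{-\Omega(n)}$, which is $o(1)$ unless $T=\Omega(\sqrt{n})$; this yields $R(\shuffle[\SMAJ_n])=\Omega(\sqrt{n})$. The quantum bound $Q(\shuffle[\SMAJ_n])=\Omega(n^{1/6})$ then follows immediately from \cref{fact:shuffle-classical-quantum}, since $\shuffle[\SMAJ_n]$ is a symmetric function.

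The main obstacle is engineering $\mu_0$ and $\mu_1$ so that three properties hold simultaneously: (i) their (conditioned) supports lie inside $D_0$ and $D_1$; (ii) the per-bit marginals match exactly; and (iii) bits across distinct rows are \emph{exactly} i.i.d., not merely approximately so, so that no higher-order correlation survives for an adaptive algorithm to exploit. Condition (iii) is what forces the specific parameters above: pairing ``all-$0$s sparse row'' with ``uniform weight-$2n/3$ dense row'' at mixing weights $(1/4,3/4)$ on the $\mu_0$ side and the dual ``all-$1$s dense row'' with ``uniform weight-$n/3$ sparse row'' at mixing weights $(3/4,1/4)$ on the $\mu_1$ side produces equal Bernoulli$(1/2)$ marginals while still living inside the $\SMAJ_n$ domain. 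Adaptivity is handled cleanly because $\pi(k)$ is uniform and independent of the value in the shuffled model, so on event $E$ the positions and values decouple across queries regardless of which indices the algorithm chooses to read.
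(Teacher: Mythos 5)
Your proof of the theorem is correct, but the hard direction $R(\shuffle[\SMAJ_n]) = \Omega(\sqrt{n})$ takes a genuinely different route from the paper's. The paper places a uniform distribution over matrices with \emph{exact} counts: in $H_0$ there are exactly $2n/3$ dense rows, each of exact weight $3n/4$, and the sparse rows have weight $0$ (with dual parameters for $H_1$). Because both the number of dense rows and the weight within each row are fixed, the bits revealed to the algorithm are sampled \emph{without} replacement at two levels, and the bulk of the paper's proof is devoted to controlling this: they define an explicit stateful sampling process $P(\alpha,\beta,\gamma)$, decompose the transcript TVD by conditioning on row collisions, and invoke Freedman's sampling with/without replacement bound to reduce the remaining discrepancy to a comparison of two identical Binomial distributions. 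Your construction sidesteps all of this by making the rows fully \emph{independent} -- each row is independently dense or sparse with a Bernoulli coin, and within a dense (resp.\ sparse) row the vector is uniform over the appropriate weight shell -- so that bits in distinct rows are exactly i.i.d.\ Bernoulli$(1/2)$ under both $\mu_0$ and $\mu_1$ with no without-replacement correction needed. The price you pay is the conditioning on the events $A_0,A_1$ to pull the supports back inside $D_0,D_1$, which costs only an $e^{-\Omega(n)}$ additive term in TVD. Both routes give $O(T^2/n)$; yours eliminates the Freedman lemma and the process-level bookkeeping at the cost of the conditioning step, and is arguably simpler. One small imprecision worth tightening if written up: conditioned on $E$, for an adaptive algorithm the transcript is not literally ``$T$ i.i.d.\ bits together with a uniform injection'' (the labels are constrained to distinct rows and the algorithm's position choices are transcript-dependent); the cleaner statement, which your reasoning supports, is that the conditional transcript distributions under $\mu_0|E$ and $\mu_1|E$ coincide, proved by a one-step induction using that $\pi$ is independent of $x$ and hence the next label is uniform over unseen cells given any fixed transcript prefix.
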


\begin{proof}
To see that $R(\SMAJ_n) = O(1)$, observe that for a given $x^i$ we can test with high probability whether $\abs{x^i} \geq 2n/3$ (call such an $x^i$ \emph{dense}) or $\abs{x^i} \leq n/3$ (call such an $x^i$ \emph{sparse}) by querying $x^i$ repeatedly at uniformly random locations a constant number of times. Then we choose a subset of $T\subseteq [n]$ of constant size uniformly at random and for each $i\in T$ test whether $x^i$ is dense or sparse. If more than half are dense, we output $0$. If at most half are dense, we output $1$. It is not hard to see that this algorithm computes $\SMAJ_n$ with at most $1/3$ probability of error.\footnote{An alternative way to see this is simply that $\SMAJ_n$ is the composition of two gapped $\MAJ_n$ functions, each of which has constant randomized query complexity. Therefore, the randomized query complexity of the composition is also constant using the fact that $R(f\circ g) = O(R(f)\cdot R(g)\cdot \log(R(f)))$.} Therefore, $Q(\SMAJ_n) \leq R(\SMAJ_n)  = O(1)$. 

For the lower bounds, by~\cref{fact:shuffle-classical-quantum}, it suffices to show $R(\shuffle[\SMAJ_n]) = \Omega(\sqrt{n})$ to conclude the proof.  
Assume wlog that $n$ is a multiple of $12$. Then by Yao's lemma, it suffices to show that the following two distributions $\calD_0$ and $\calD_1$ on inputs to $\shuffle[\SMAJ_n]$ are hard to distinguish by a deterministic algorithm of query complexity $O(\sqrt{n})$, where we define
\begin{description}
    \item $\calD_b$: sample $x\leftarrow H_b$ and $\pi \leftarrow \mathfrak S_{[n]\times [n]}$ (the set of all permutations on $[n]\times [n]$), then output 
    \begin{equation}
        (x_{\pi(1,1)}, \pi(1,1)), (x_{\pi(1,2)}, \pi(1,2)), \dots, (x_{\pi(n,n)}, \pi(n,n)).
    \end{equation}
\end{description}
Here, $H_0\subseteq D_0$ and $H_1\subseteq D_1$ are defined by
\begin{equation}
\begin{aligned}
    H_0 \coloneqq \{(x_{1,1},x_{1,2},&\dots, x_{n,n}) \in \{0,1\}^{n^2} \mid \exists S\subseteq[n] ,  \text{ s.t.}
    \\
    & \text{(i) } |S|=2n/3,\, \text{(ii) }\forall i\in S, \,  |x^i| = 3n/4, \text{ and (iii) }\forall i\not\in S, \, |x^i|=0\},
    \\
    H_1 \coloneqq \{(x_{1,1},x_{1,2},&\dots, x_{n,n}) \in \{0,1\}^{n^2} \mid \exists S\subseteq[n] ,  \text{ s.t.}
        \\
    & \text{(i) } |S|=n/3,\, \text{(ii) }\forall i\in S, \,  |x^i| = n, \text{ and (iii) }\forall i\not\in S, \, |x^i|=n/4\}.
\end{aligned}
\end{equation}

Let $t\in \mathbb{N}$. For $0 \leq \beta < 1/2 < \alpha\leq 1$ and $\gamma \in [0,1]$, define the random process $P(\alpha,\beta,\gamma)$ as in \cref{fig:random_process}. We refer to $P(3/4,0,2/3)$ as $P_0$ and $P(1,1/4,1/3)$ as $P_1$. The $P_i$s are defined in such a way that the distribution on query outcomes induced by every $t$-query deterministic decision tree (that we assume wlog never queries an index it has already queried and is a balanced tree) running on an input chosen sampled according to $\calD_i$ is the same as the distribution $P_i$.

\begin{figure}[ht]
\centering
\fbox{\begin{minipage}[t]{0.9\columnwidth}
\textbf{\underline{Process $P(\alpha,\beta,\gamma)$.}}
\vspace{2mm}

Set $\SEEN =\emptyset$, $\DENSE =\emptyset$, $\SPARSE =\emptyset$, and $\forall i\in [n]$, $M_i =  N_i = 0$.

Repeat the following $t$ times:
\begin{enumerate}
    \item Randomly select $(i,j) \leftarrow [n]\times [n] - \SEEN$.
    \item If $i\in \DENSE$, select ``dense''. If $i\in \SPARSE$, select ``sparse''. Else randomly select ``dense'' or ``sparse'' such that ``dense'' is selected with probability
    \begin{equation}
        p_\dense \coloneqq 
        \frac{\gamma \cdot n - \abs{\DENSE}}{n - \abs{\DENSE} - \abs{\SPARSE}}.
    \end{equation}
    \begin{enumerate}
    \item If ``dense'' is selected, randomly set a bit $b$ to be $1$ or $0$ such that $1$ is selected with probability
    \begin{equation}
        p_1 \coloneqq \frac{(\alpha n - M_i)}{n - N_i}.
    \end{equation}
     If $b=1$, set $M_i$ to $M_i + 1$. Set $\DENSE$ to $\DENSE\cup\{i\}$. 
     \item If ``sparse'' is selected, randomly set a bit $b$ to be $1$ or $0$ such that $1$ is selected with probability
     \begin{equation}
         q_1 \coloneqq \frac{(\beta n - M_i)}{n - N_i}.
     \end{equation}
      If $b=1$, set $M_i$ to $M_i + 1$. Set $\SPARSE$ to $\SPARSE\cup\{i\}$. 
    \end{enumerate}
    \item Set $\SEEN$ to $\SEEN \cup (i,j)$. Set $N_i$ to $N_i+1$.
    \item Output $(i,j,b) \in [n]\times [n]\times \{0,1\}$.\vspace{2mm}
\end{enumerate}
\end{minipage}
\hspace{2mm}}
\caption{The random process $P(\alpha,\beta,\gamma)$.}\label{fig:random_process}
\end{figure}

Our goal is to show that the total variation distance between the output distributions of $P_0$ and $P_1$ scales as $O(t^2/n)$. We write $\TVD$ for this particular total variation distance. In the following, we write $\Pr_i[\cdot]$ for taking the probability of an event over the randomness in the process $P_i$. We also assume $t-1\leq n^2/2$ to help simplify calculations.

Let $\Bad$ denote the subset of $([n]\times[n]\times\{0,1\})^t$ such that a $t$-tuple $(i_k,j_k,b_k)$, $k=1,2,\dots,t$ belongs to $\Bad$ if and only if there exists $k_1\neq k_2$ such that $i_{k_1} = i_{k_2}$. Then it is easy to see that
\begin{equation}
\begin{aligned}
    \Prob_0[\Bad] = \Prob_1[\Bad] =& 1 - 1\cdot \frac{n^2-n}{n^2-1}\cdot \frac{n^2-2n}{n^2-2} \cdots \frac{n^2-(t-1)n}{n^2-(t-1)} 
    \\
    =&1 - 1\cdot \Bigl(1 - \frac{n-1}{n^2-1}\Bigr)\cdot \Bigl(1 - \frac{2n-2}{n^2-2}\Bigr)\cdots \Bigl(1 - \frac{(t-1)(n-1)}{n^2-(t-1)}\Bigr) \leq \frac{t^2}{n},
\end{aligned}
\end{equation}
where the last inequality uses $\forall a,b\geq 0, (1-a)(1-b) \geq 1-a-b$ and $t-1 \leq n^2/2$.

Let $X,Y\in ([n]\times[n]\times \{0,1\})^t$ denote the output of process $P_0$ and $P_1$ respectively. Then
\begin{equation}\label{eq:tvd_two_parts}
    \TVD = \sum_{x}\abs{\Prob_0[X = x] - \Prob_1[Y = x]} \leq \sum_{x \notin \Bad}\abs{\Prob_0[X = x] - \Prob_1[Y = x]} + \frac{2t^2}{n}.
\end{equation}

We proceed to analyze the first term. Write $X^1 \in ([n]\times [n])^t$ for the non-bit-part of $X$. Write $X^2 \in \{0,1\}^t$ for the bit-part of $X$. Similarly write $Y^i,x^i$. Write $\Bad^1 \subseteq ([n]\times [n])^t$ for the non-bit-part of $\Bad$. Then
\begin{align}
    \sum_{x \notin \Bad}\bigl|&\Prob_0[X = x] - \Prob_1[Y = x]\bigr| \nonumber
    \\
    =& \sum_{x \notin \Bad}\bigl|\Prob_0[X^1 = x^1, X^2 = x^2] - \Prob_1[Y^1 = x^1, Y^2 = x^2]\bigr| \nonumber
    \\
    =&\sum_{x \notin \Bad}\bigl|\Prob_0[X^2 = x^2\mid X^1 = x^1]\Pr[X^1 = x^1] - \Prob_1[Y^2 = x^2 \mid Y^1 = x^1 ]\Pr[Y^1 = x^1]\bigr| \nonumber
    \\
    =&\sum_{x \notin \Bad}\bigl|\Prob_0[X^2 = x^2\mid X^1 = x^1] - \Prob_1[Y^2 = x^2 \mid Y^1 = x^1 ]\bigr| \cdot \Pr[X^1 = x^1],
    \nonumber\\
    =&\sum_{x^1 \notin \Bad^1}\biggl(\sum_{x^2}\bigl|\Prob_0[X^2 = x^2\mid X^1 = x^1] - \Prob_1[Y^2 = x^2 \mid Y^1 = x^1 ]\bigr| \biggr)\cdot \Pr[X^1 = x^1],
    \label{eq:tvd_no_collision}
\end{align}
where the second-to-last equality uses $\Pr[X^1 = x^1] = \Pr[Y^1 = x^1]$, which holds because the first steps defining $P_0$ and $P_1$ are identical.

Now, write $U,V \in \{\dense,\sparse\}^t$ for the sequence of ``dense'' or ``sparse'' choices during the process $P_0$ and $P_1$ respectively. Given $u\in \{\dense,\sparse\}^t$, write $\bar{u}$ for $u$ but switching ``dense'' and ``sparse'' component-wise.

For a fixed $x\notin \Bad$,  write $\BarPr_0[\cdot]$ for $\Pr_0[\cdot \mid X^1 = x^1]$ and $\BarPr_1[\cdot]$ for $\Pr_1[\cdot \mid Y^1 = x^1]$. Then,
\begin{align}
    \Prob_0[X^2 &= x^2\mid X^1 = x^1] - \Prob_1[Y^2 = x^2 \mid Y^1 = x^1]\nonumber\\
    =&\sum_u \BarPr_0[X^2 = x^2\mid U = u]\BarPr_0[U=u] - \BarPr_1[Y^2 = x^2 \mid V = u]\BarPr_1[V=u]\nonumber
    \\
    =&\sum_u \Bigl(\BarPr_0[X^2 = x^2\mid U = u] - \BarPr_1[Y^2 = x^2 \mid V = \bar{u}]\Bigr)\BarPr_0[U=u],\label{eq:tvd_to_simplify}
\end{align}
where the last equality uses $\Prob_0[U=u \mid X^1 = x^1] = \Prob_1[V=\bar{u} \mid Y^1 = x^1]$, which holds because when $x\notin \Bad$, ``dense'' and ``sparse'' choices are made randomly in both $P_0$ and $P_1$; the important bar over $u$ on the right-hand side arises from the fact that the $\gamma$ defining $P_1$ is $1$ minus that defining $P_0$.

Now let $W \in \{\dense,\sparse\}^t$ denote a random variable such that each $W_i \in \{\dense,\sparse\}$ is independently distributed and equal to ``dense'' with probability $2/3$. Consider a bag of $n$ balls such that $2n/3$ are labeled ``dense'' and $n/3$ are labeled ``sparse''. Observe that the distribution on $u \in \{\dense,\sparse\}^t$ defined by $\BarPr_0[U=u]$ is the same as the distribution of labels when drawing $t$ balls from the bag \emph{without} replacement. On the other hand, the distribution of $W$ is the same as the distribution of labels when drawing $t$ balls from the bag \emph{with} replacement. Therefore, we can appeal to \cite{sampling_freedman} to deduce that
\begin{equation}\label{eq:tvd_replacement_sampling}
    \sum_u \bigl|\Pr[W=u] - \BarPr_0[U=u]\bigr| \leq \frac{t(t-1)}{2n}.
\end{equation}
Therefore, summing \cref{eq:tvd_to_simplify} over $x^2\in \{0,1\}^t$ gives
\begin{align}
     \sum_{x^2}\Bigl|\Prob_0[&X^2 = x^2\mid X^1 = x^1] - \Prob_1[Y^2 = x^2 \mid Y^1 = x^1]\Bigr|\nonumber
     \\
     =&\sum_{x^2}\Bigl|\sum_u \Bigl(\BarPr_0[X^2 = x^2\mid U = u]\BarPr_0[U=u] - \BarPr_1[Y^2 = x^2 \mid V = \bar{u}]\BarPr_0[U=u]\Bigr)\Bigr|\nonumber
     \\
     \leq&\sum_{x^2}\Bigl|
     \sum_u \Bigl(\BarPr_0[X^2 = x^2\mid U = u]\BarPr_0[U=u] - \BarPr_0[X^2 = x^2\mid U = u]\Prob[W=u]\Bigr)\Bigr|\nonumber
     \\
    &\hspace{8mm} +
    \sum_{x^2}\Bigl|\sum_u\Bigl(\BarPr_0[X^2 = x^2\mid U = u]\Prob[W=u] - \BarPr_1[Y^2 = x^2 \mid V = \bar{u}]\Prob[W=u]\Bigr)\Bigr|\nonumber
     \\
    &\hspace{8mm} +
    \sum_{x^2} \Bigl| \sum_u\Bigl(\BarPr_1[Y^2 = x^2 \mid V = \bar{u}]\Prob[W=u] -
     \BarPr_1[Y^2 = x^2 \mid V = \bar{u}]\BarPr_0[U=u]
     \Bigr) 
     \Bigr|\nonumber
     \\
     \leq&\sum_{x^2}\Bigl|\sum_u \Bigl(\BarPr_0[X^2 = x^2\mid U = u] - \BarPr_1[Y^2 = x^2 \mid V = \bar{u}]\Bigr)\Prob[W=u]\Bigr|+ \frac{2t^2}{n} = \frac{2t^2}{n},\label{eq:iid_conversion}
\end{align}
where the first inequality is the triangle inequality, the second inequality uses \cref{eq:tvd_replacement_sampling} twice, and the last equality follows from the observations that 
\begin{enumerate}
\item  the distribution over $x^2 \in \{0,1\}^t$ defined by $\sum_u\BarPr_0[X^2 = x^2\mid U = u] \cdot \Prob[W=u]$ is the same as binomial $B \coloneqq \Bin(t, \frac{2}{3}\cdot \frac{3}{4}) = \Bin(t,\frac{1}{2})$.
\item the distribution over $x^2 \in \{0,1\}^t$ defined by $\sum_u\BarPr_1[Y^2 = x^2 \mid V = \bar{u}]\cdot\Pr[W=u]$ is the same as binomial $\Bin(t, \frac{2}{3}\cdot\frac{1}{4} + \frac{1}{3} \cdot 1) = \Bin(t,\frac{1}{2}) = B$.
\end{enumerate}
Combining \cref{eq:tvd_no_collision,eq:iid_conversion,eq:tvd_two_parts}, we see
\begin{align}
    \TVD &~\leq~ \frac{2t^2}{n} + \sum_{x \notin \Bad}\bigl|\Prob_0[X = x] - \Prob_1[Y = x]\bigr|
    \nonumber \\
    &~\leq~ \frac{2t^2}{n} + \sum_{x^1 \notin \Bad^1}\frac{2t^2}{n}\Pr[X^1 = x^1] \leq \frac{4t^2}{n},\nonumber
\end{align}
as required, where the first step uses \cref{eq:tvd_two_parts}, and the second step uses~\cref{eq:tvd_no_collision} and~(\ref{eq:iid_conversion}).
\end{proof}

\begin{remark}
    The lower bound for $R(\shuffle[\SMAJ_n])=\Omega(\sqrt n)$ is tight: for large enough constant $C$, when we query $C\sqrt n$ random positions of the input, there will be $\Omega(1)$ blocks in each of which we have queried at least $2$ entries. Then we can determine $\shuffle[\SMAJ_n]$ with an $\Omega(1)$ advantage.
    We conjecture that $Q(\shuffle[\SMAJ_n]) = \Omega(n^{1/3})$.
\end{remark}

\subsection{Shuffled direct sum}
In the $\trivertex$ problem, the shuffling can be seen as arising from a two-step process. The first step shuffles \emph{among} edges incident to the target vertex $v$ (or not incident to $v$), and the second step shuffles \emph{between} edges incident or not incident to $v$. The latter type of shuffling motivates the notion of shuffled direct sum in~\cref{def:shuffled-direct-sum}, which we now recall.

Let $n,k\in \mathbb{N}$. Let $\Sigma$ be a finite non-empty set. For a function $f \colon \Sigma^n \to \{0,1\}$, the $k$-shuffled direct sum of $f$ is defined to be 
    \begin{equation}
        \shuffle^k[f]\colon D \subseteq  (\Sigma\times [k])^{kn} \to \{0,1\}^k, \quad \text{where}
    \end{equation}
\begin{enumerate}
    \item the set $D$ consists of all elements $x\in (\Sigma\times [k])^{kn}$ satisfying
    $x = ((v_1,c_1),\dots, (v_{kn},c_{kn}))$ such that, for all $j\in [k]$, there are exactly $n$ indices $i\in [kn]$ with $c_i  = j$.
    \item $\shuffle^k[f](x)$ is defined to equal $(f(v^{(1)}),\dots,f(v^{(k)}))$, where $v^{(j)}$ is the subsequence of $v\coloneqq v_1\dots v_{kn}$ indexed by those $i\in [kn]$ such that $c_i = j$.
\end{enumerate}

Naturally, the question is how $Q(\shuffle^k(f))$ relates to  $Q(\shuffle(f))$.  Since $\shuffle^k(f)$ restricts to the $k$-fold direct sum of $f$, the direct sum theorem for the quantum query complexity \cite{ambainis_direct_sum,reichardt_direct_sum} immediately gives the following.
\begin{fact}
    For every $f\colon\Sigma^n \to \{0,1\}$, $Q(\shuffle^k[f])=\Omega(k\cdot Q(f))$.
\end{fact}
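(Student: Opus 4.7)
The plan is to exhibit the $k$-fold direct sum of $f$ as a subfunction of $\shuffle^k[f]$ and then invoke the standard direct sum theorem for quantum query complexity from \cite{ambainis_direct_sum,reichardt_direct_sum}, which states that computing $k$ independent copies of $f$ requires $\Omega(k\cdot Q(f))$ quantum queries. This is essentially the argument the excerpt already advertises; the only content is to write down the right restriction of the domain.

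Concretely, I would restrict the domain of $\shuffle^k[f]$ to inputs of the form $x = ((v_1,c_1),\dots,(v_{kn},c_{kn}))$ where the labels $c_i$ are hard-coded as $c_i = \lceil i/n\rceil$ for every $i\in [kn]$. Each $j\in [k]$ then appears exactly $n$ times among the $c_i$'s, so such $x$ lies in the domain $D$ of $\shuffle^k[f]$ (as given in \cref{def:shuffled-direct-sum}). On these inputs,
\begin{equation*}
    \shuffle^k[f](x) = \bigl(f(v_1,\dots,v_n),\; f(v_{n+1},\dots,v_{2n}),\; \dots,\; f(v_{(k-1)n+1},\dots,v_{kn})\bigr),
\end{equation*}
which is precisely the output of the $k$-fold direct sum $f^{\oplus k}$ evaluated on the $k$ blocks $(v_{(j-1)n+1},\dots,v_{jn})$, $j\in[k]$.

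From this restriction I would build the reduction: given any algorithm $\calA$ that computes $\shuffle^k[f]$, we obtain an algorithm for $f^{\oplus k}$ of the same quantum query complexity by simulating each query of $\calA$ to position $i$ using a single query to the $i$-th symbol of the concatenation of the $k$ inputs to $f^{\oplus k}$, and appending the constant label $\lceil i/n\rceil$ locally (with no additional queries, since the labels are fixed). Applying the direct sum theorem of \cite{ambainis_direct_sum,reichardt_direct_sum} to $f^{\oplus k}$ then yields
\begin{equation*}
    Q(\shuffle^k[f]) \;\geq\; Q(f^{\oplus k}) \;=\; \Omega(k \cdot Q(f)),
\end{equation*}
as required. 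I do not anticipate any genuine obstacle: the proof is a one-step reduction, and the only thing to verify is that the second coordinates can be hard-coded rather than queried, which is immediate since they are deterministic functions of the index.
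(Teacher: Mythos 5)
Your proposal is correct and reproduces exactly the paper's argument: the Fact is stated with no separate proof, and the justification given in the surrounding text is precisely that $\shuffle^k[f]$ restricts to the $k$-fold direct sum of $f$ by hard-coding the labels, after which the direct-sum theorem of \cite{ambainis_direct_sum,reichardt_direct_sum} gives the $\Omega(k\cdot Q(f))$ bound. Your explicit choice $c_i = \lceil i/n\rceil$ and the remark that the labels are computed locally (costing no queries) are correct and fill in the one detail the paper leaves implicit.
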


The following result shows that the above lower bound is tight when the domain of $f$ is $\{0,1\}^n$.
\begin{proposition}\label{prop:shufflek}
     For every symmetric $f\colon \{0,1\}^n \to \{0,1\}$, $Q(\shuffle^k[f]) = \tildeTheta(k \cdot Q(f))$.
\end{proposition}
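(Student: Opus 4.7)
The lower bound $Q(\shuffle^k[f])\ge\Omega(k\cdot Q(f))$ is already explained immediately before the statement, via the quantum direct-sum theorem applied to the fact that $\shuffle^k[f]$ contains $k$ independent copies of $f$ as a subproblem. My plan is therefore to prove the matching upper bound $Q(\shuffle^k[f])\le \tildeO(k\cdot Q(f))$ by lifting the symmetric-Boolean algorithm used in the ``moreover'' part of \cref{prop:hiding_generic} from a single instance of $f$ to all $k$ color classes simultaneously, via a single adaptive Grover collection that shares queries across classes.

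Concretely, I would reuse the parameters of \cref{prop:hiding_generic}: write $A=(n-\Gamma(f))/2$ and $B=(n+\Gamma(f)-2)/2$ so that $n-B=A+1$ and $f_w$ equals some fixed $b\in\{0,1\}$ for every integer $w\in[A,B]$, and recall from \cite{polynomial_2001} that $Q(f)=\Theta(\sqrt{nA})$. The algorithm maintains classical counters $a_j,z_j$ for each color class $j\in[k]$, initialized to $0$, and calls a position $i\in[kn]$ currently marked if either $v_i=1$ with $a_{c_i}<A$ or $v_i=0$ with $z_{c_i}<n-B$. It repeatedly applies Grover's algorithm (with a BBHT-style termination check, e.g., as in \cite{vanApeldoorn2024basicquantum}) to find any currently marked position, each success incrementing the relevant counter, until termination certifies that the marked set is empty. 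Finally, for each $j$, the algorithm outputs $f_{a_j}$ if $a_j<A$; $f_{n-z_j}$ if $a_j=A$ but $z_j<n-B$; and $b$ otherwise. Correctness follows from exactly the same case analysis as in \cref{prop:hiding_generic}.

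For the query analysis, the per-class caps ensure that the total number $T$ of successful iterations satisfies $T\le k(A+(n-B))=O(kA)$. The marking predicate is monotone because counters never decrease, so once a position becomes unmarked it stays unmarked; consequently the $T-\ell+1$ items still to be collected at or after iteration $\ell$ are all already marked at iteration $\ell$, giving $|M^{(\ell)}|\ge T-\ell+1$. The cost of iteration $\ell$ is then $\tildeO(\sqrt{kn/(T-\ell+1)})$, and summing gives total cost $\tildeO(\sqrt{kn\cdot T})=\tildeO(\sqrt{kn\cdot kA})=\tildeO(k\sqrt{nA})=\tildeO(k\cdot Q(f))$, while a single final $\tildeO(\sqrt{kn})$ no-marked-item test is absorbed into this bound. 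The part I expect to require the most care is verifying that an adaptive Grover collection whose marking predicate is updated from the classical transcript still achieves the standard $\tildeO(\sqrt{MN})$ collection cost, together with error suppression so that the total failure probability is at most $1/3$; both of these contribute only $\polylog(k,n)$ overhead and are swallowed by $\tildeO$.
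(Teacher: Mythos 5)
Your proof is correct and takes essentially the same approach as the paper's: both lift the two-threshold Grover-collection scheme from \cref{prop:hiding_generic} to all $k$ color classes simultaneously with per-class caps, giving $\tildeO(\sqrt{kA\cdot kn}) = \tildeO(k\sqrt{nA}) = \tildeO(k\cdot Q(f))$. The only cosmetic difference is that you interleave the collection of $1$s and $0$s under a single adaptive marking predicate while the paper runs two sequential collection passes, and your explicit monotonicity argument ($|M^{(\ell)}|\ge T-\ell+1$) spells out the adaptive Grover-collection cost that the paper justifies only by citing the generic $\tildeO(\sqrt{\alpha\beta})$ collection bound.
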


\begin{proof}
    As in the proof of \cref{prop:hiding_generic}, we use the fact from \cite{polynomial_2001} that $Q(f) = \Theta(\sqrt{n(n- \Gamma(f))})$, where $\Gamma(f) \coloneqq \min\{\abs{2k-n+1}\mid f_k \neq f_{k+1}\}$ and $f_k \coloneqq f(x)$ for all $x$ such that $\abs{x} = k$. Observe that $f_k$ must be constant for all $k\in \{(n-\Gamma(f))/2,\dots, (n+\Gamma(f)-2)/2\}$. Call this constant $b\in \{0,1\}$. We also write $A \coloneqq (n-\Gamma(f))/2$ and $B\coloneqq (n+\Gamma(f)-2)/2$ for convenience.
    
    We consider the quantum algorithm for $\shuffle^k[f]$ defined in \cref{fig:algo_shuffle}. The quantum query complexity of this algorithm, accounting for error suppression, is 
    \begin{equation}
        \tildeO(\sqrt{kA\cdot (kn)} + \sqrt{k(n-B)\cdot (kn)}) = \tildeO(k\sqrt{n(n-\Gamma(f)})) = \tildeO(k\cdot Q(f)),
    \end{equation}
    using the basic fact that collecting (up to) $\alpha$ marked items from a list of $\beta$ items costs $\tildeO(\sqrt{\alpha \beta})$ quantum queries by Grover search \cite{vanApeldoorn2024basicquantum}. 
    
    The correctness of the algorithm can be argued similarly to \cref{prop:hiding_generic} so we omit it.
\end{proof}

We conjecture that \cref{prop:shufflek} still holds for symmetric $f\colon \Sigma^n\to \{0,1\}$ even if $\Sigma$ is non-Boolean.

\begin{conjecture}\label{conj:shuffling_composition}
    For every symmetric $f\colon \Sigma^n \to \{0,1\}$, 
    \begin{equation}
        \Omega(k\cdot Q(f)) \leq Q(\shuffle^k[f]) \leq O(k\log(k) \cdot Q(f)).
    \end{equation}
\end{conjecture}
We remark that this conjecture holds for randomized query complexity, i.e., with $Q(\cdot)$ replaced by $R(\cdot)$: the upper bound follows by a similar argument to the proof of \cref{prop:symmetric_partial}, while the lower bound follows by \cite[Theorem 2]{jain_rand_directsum_10}.

\begin{figure}[H]
        \centering
        \fbox{\parbox{0.925\textwidth}{
        \underline{\textbf{Algorithm for $\shuffle^k[f]$}}\\[1mm]
        On input $x \coloneqq ((v_1,c_1),\dots,(v_{kn},c_{kn}))\in (\{0,1\} \times [k])^{kn}$:
        \begin{enumerate}
            \item[I.] Use Grover search to collect up to $kA$ distinct indices $i\in [kn]$ such that $v_i = 1$, but as soon as we collect at least $A$ such indices $i$ from a fixed copy $c\in [k]$ of $f$, i.e., $c_i = c$, we \emph{stop} collecting all further indices $i$ from copy $c$ and record that ``copy $c$ is above lower threshold''. (This stopping can be done on the fly since $c_i$ tells which copy of $f$ it is associated with.) When no more indices $i$ can be collected, for each copy of $f$ not ``above lower threshold'', record the number of indices $i$ collected that are associated with it.
            \item[II.] Use Grover search to collect up to $k(n-B)$ distinct indices $i\in [kn]$ such that $v_i = 0$, but as soon as we collect at least $(n-B)$ such indices $i$ from a fixed copy $c\in [k]$ of $f$, i.e., $c_i = c$, we \emph{stop} collecting all further indices $i$ from copy $c$ and record that ``copy $c$ is below upper threshold''. When no more indices $i$ can be collected, for each copy of $f$ not ``below upper threshold'', record the number of indices $i$ collected that are associated with it.
        \end{enumerate}
    
        Then:
        \begin{enumerate}
            \item  For copies of $f$ that are ``above lower threshold'' and ``below upper threshold'', output $b$. 
            \item   For copies of $f$ that are not ``above lower threshold'', output $f_k$ where $k$ is the number of indices $i$ recorded that are associated with that copy in step I. 
            \item   For copies of $f$ that are not ``below upper threshold'', output $f_{n-k}$ where $k$ is the number of indices $i$ recorded that are associated with that copy in step II. 
        \end{enumerate}
        }
        \hspace{0pt}
        }
    \caption{Algorithm for $\shuffle^k[f]$.}\label{fig:algo_shuffle}
    \end{figure}

\section{Triangle Finding}\label{sec:tri-finding}
In this section, we study the triangle problem, $\tri$. We first show that this problem bridges $\kdist{3}$ and $\ksum{3}$. Then, we give a nearly tight characterization of the problem's quantum query complexity when the input has low maximum degree, for example, if it is a random sparse graph.

\subsection{Bridging \texorpdfstring{$\kdist{3}$ and $\ksum{3}$}{3-DIST and 3-SUM}}
We formalize the aforementioned connection of $\tri$ to $\kdist{3}$ and $\ksum{3}$ in the following proposition.
\begin{proposition}\label{prop:two_way_reduct}
The following relationships hold between $\kdist{3}_m$, $\tri_m$, and $\ksum{3}_m$.
\begin{center}
\begin{enumerate*}
    \item \label{itm:3-dist_tri} $Q(\kdist{3}_m) \leq O(Q(\tri_m))$ \quad \text{and} \quad \quad
    \item \label{itm:tri_3-sum} $Q(\tri_m) \leq O(Q(\ksum{3}_m))$.
\end{enumerate*}
\end{center}
In particular, if $Q(\kdist{3}_m) \geq \Omega(m^\alpha)$, then $\Omega(m^{\alpha})\leq Q(\tri_m) \leq O(m^{3/4})$.
\end{proposition}

For comparison, we note that the ``trivial'' lower bound on $Q(\tri_m)$ by reducing from $\OR_{m-2}$ is $Q(\tri_m)=\Omega(\sqrt{m})$ while the above proposition gives $Q(\tri_m)=\Omega(m^{2/3})$ as the latter is the best-known lower bound on $\kdist{3}_m$~\cite{aaronson_shi}.

\begin{proof}
We prove each item as follows.

\cref{itm:3-dist_tri} We show $Q(\kdist{3}_m) \leq O(Q(\tri_m))$ by the following reduction. Let $x$ be an input to $\kdist{3}_m$. Assume wlog that $m$ is a multiple of $3$. Uniformly randomly partition $[m]$ into $3$ sets $P_1, P_2, P_3$ of length $m/3$ each. We construct a length-$m$ edge list $y$ of a graph on vertex set $V \coloneqq \{(x_i,v):i \in [m], v \in [3]\}$ as follows. For each $j \in [m]$, $y_j$ is the edge $\{(x_j,\ell), (x_j,\ell+1 \bmod 3)\}$ where $\ell \in [3]$ is such that $j \in P_\ell$. 
Note that for each $j \in [n]$, querying $y_j$ only requires querying $x_j$. Compute $\tri(y)$ and output the result. Repeat the entire procedure a large but constant number of times, if the output of some repeat is $1$, output $1$, otherwise, output $0$.

We now show correctness. Suppose $x$ contains a $3$-collision in $x$. That is, there exist indices $i_1, i_2, i_3$ and an element $z$ such that $x_{i_1} = x_{i_2} = x_{i_3} = z$. Then, each of the sets $P_1, P_2, P_3$ will contain exactly one index from $\{i_1, i_2, i_3\}$ with probability at least $2/9$. In that case, $y$ will contain the edges $\{(z,1), (z,2)\}$, $\{(z,2), (z,3)\}$, $\{(z,3), (z,1)\}$, which form a triangle. 

Conversely, suppose $y$ contains a triangle. Then there must exist an element $z$ such that $y$ contains the edges $\{(z,1), (z,2)\}$, $\{(z,2), (z,3)\}$, $\{(z,3), (z,1)\}$. Therefore, there exist $j_1\in P_1$, $j_2\in P_2$, $j_3\in P_3$ such that $x_{j_1}=x_{j_2} = x_{j_3} = z$, which implies that $z$ occurs (at least) $3$ times in $x$.

\cref{itm:tri_3-sum} We show $Q(\tri_m) \leq O(Q(\ksum{3}_m))$ by the following reduction. Assume wlog that $m$ is a multiple of $3$. Let $\tilde{x}$ be an input to $\tri_m$. Apply a uniformly random permutation to $\tilde{x}$ and denote the resulting string as $x$. Choose a uniformly random element $s\in (\{-1,1\}^2)^m$. For a given $i\in [m]$, we map the edge $x_i = \{a,b\}$, where $a<b$, to the $4$-dimensional vector $(0,s_i(1)a,s_i(2)b,-1)$ if $i\in  [1,m/3]$, $(s_i(1)a,0,s_i(2)b,-2)$ if $i\in [m/3+1,2m/3]$, and $(s_i(1)a,s_i(2)b,0,3)$ if $i\in[2m/3+1,m]$, where $s_i = (s_i(1), s_i(2))\in \{-1,1\}^2$. Denote the string after the mapping by $y\in (\mathbb{Z}^4)^m$. Compute $\ksum{3}(y)$ and output the result. Repeat the entire procedure a large but constant number of times, if the output of some repeat is $1$, output $1$, otherwise, output $0$.

We now show correctness. Write $y^{i}\coloneqq y[(i-1)m/3+1\twodots im/3]$ for $i\in \{1,2,3\}$. Suppose $\tilde{x}$ contains the triangle $\{a,b\}$, $\{b,c\}$, $\{a,c\}$ where $a<b<c$. Then with probability at least a constant, $y$ will contain symbols $(0,-b,-c,-1)$ in $y^1$, $(-a,0,c,-2)$ in $y^2$, and $(a,b,0,3)$ in $y^3$, which sum to $0$. 

Conversely, suppose $y$ is a yes-instance for $\ksum{3}$, then $y$ must contain symbol $(0,a_1,b_1,-1)$ in $y^1$, $(a_2,0,b_2,-2)$ in $y^2$, and $(a_3,b_3,0,3)$ in $y^3$ that sum to $0$ for some $a_i,b_i\in \mathbb{Z}$. (Note that the last coordinate ensures that one symbol must come from each of $y^1,y^2,y^3$ in the yes-certificate.) Therefore $\abs{a_2}=\abs{a_3}$ (call $a$), $\abs{a_1} = \abs{b_3}$ (call $b$), and $\abs{b_1} = \abs{b_2}$ (call $c$), so $x$ contains edges $\{b,c\}$, $\{a,c\}$, and $\{a,b\}$, which form a triangle.

The ``in particular'' part follows from $Q(\ksum{3}_m) \leq O(m^{3/4})$  \cite{childs_eisenberg,quantum_walk_ed}.  
\end{proof}

\begin{remark}

Notice that the lower bound of $\tri_m$ in \Cref{prop:two_way_reduct} applies when $n = \Omega(m)$. On the other hand, when $n = O(1)$, we have $Q(\tri_m) = O(\sqrt{m})$ since we can search whether one of the size-$3$ subsets of $[n]$ form a triangle. What is the complexity of $\tri_m$ when $n = \omega(1)$ and $n = o(m)$? In particular, can we show a better lower bound when $n = \Theta(\sqrt{m})$, which also captures dense graphs? The answer is yes, and we can reduce $\ED_{n^2}$ to $\tri_{\Theta(n^2)}$ (with $\Theta(n)$ vertices) as follows.

Let $x \in ([n] \times [n])^{n^2}$ be an input to $\ED_{n^2}$. We create an edge list $y$ of size $m = n^2 + n$ as follows. Let $V = \{j^A, j^B, j^C: j \in [n]\}$. For each $j \in [n]$, add an edge $(j^A, j^B)$ in $y$. Partition the set $[n^2]$ into two sets $A$ and $B$ of equal size. For each $i \in [n^2]$, denote $x_i = (j_i,k_i)\in ([n]\times [n])$, if $i \in A$ (respectively $B$), add the edge $(j_i^A, k_i^C)$ (respectively $(j_i^B, k_i^C)$) in $y$.    
We see the correctness of this reduction as follows. Suppose $x$ is a positive instance of $\ED_{n^2}$. Then, with probability $1/2$, we will have indices $i_1$ and $i_2$ such that $i_1 \in A$, $i_2 \in B$ and $x_{i_1} = x_{i_2} = (j,k)$ for some $j, k \in [n]$. Thus, we will have $(j^A, j^B), (j^A, k^C), (j^B, k^C) \in y$ so $y$ will be a positive instance of triangle. On the other hand, suppose that $y$ is a positive instance of triangle. Then, for some $j, k \in [n]$, we will have $(j^A, j^B), (j^A, k^C), (j^B, k^C) \in y$ since this is the only way there could be a triangle in $y$. It follows that $(j,k)$ must have appeared at least twice in $x$ so $x$ is a positive instance of $\ED_{n^2}$. 
\end{remark}

\subsection{The uniformly random edge list}
We now turn to the average-case analysis of $\tri$. In particular, we consider the random graph model with $m$ (multi-)edges, such that the $m$ edges are each sampled independently and uniformly from $\binom{[n]}{2}$. In other words, we consider the sample-with-replacement model. In this model, the input edge list $x$ can be written succinctly as $x\leftarrow \binom{[n]}{2}^m$.

Denote the graph on $n$ vertices formed by the edges in $x$ by $\calG(x)$. We use $\maxdeg(\calG(x))$ or simply $\maxdeg(x)$ to denote the max vertex degree of $\calG(x)$. A wedge in $\calG(x)$ is a length-two path. Since $x$ could contain repeated edges, $\calG(x)$ could be non-simple. In particular, the degree and wedge counts of $\calG(x)$ account for the multiplicities of repeated edges. Later, we will also consider strings $x\in (\binom{[n]}{2}\cup \{\bot\})^m$, where $\bot$ means the input edge has not yet been revealed.

Our focus will be the $m=\Theta(n)$ regime of sparse graphs. In this regime, we obtain near-matching upper and lower bounds for the quantum query complexity of triangle finding. The following basic facts about sparse random graphs will be useful. We defer the proof to \cref{app:random_graphs}.
\begin{fact}[Sparse Random Graph]
\label{fact:sparse-random-graph}
Let $x\leftarrow \binom{[n]}{2}^m$.
\begin{enumerate}
    \item \label{enu:random-graph-maxdeg} (Low max degree) For $m  \leq O(n)$, 
    \begin{equation}
        \Pr[\maxdeg(x) \leq 2\log (n) /\log\log (n)] \geq  1-o(1).
    \end{equation}
    \item \label{enu:random-graph-triangle} (Existence of a triangle) For $m \geq \Omega(n)$, 
    \begin{equation}
        \Pr[x \textup{ contains a triangle}] \geq  \Omega(1).
    \end{equation}
\end{enumerate} 
In particular, for $m = \Theta(n)$, 
\begin{equation}
    \Pr[\maxdeg(x) \leq 2\log(n) /\log\log(n) \textup{ and $x$ contains a triangle}] \geq \Omega(1).
\end{equation}
\end{fact}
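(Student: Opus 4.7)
The plan is to establish the two parts of the fact separately by direct probabilistic calculations, and then combine them through a union bound.

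For part (i), I would use the fact that $\deg_{\calG(x)}(v)\sim\Bin(m,2/n)$ for every fixed $v\in[n]$, since a uniformly random element of $\binom{[n]}{2}$ is incident to $v$ with probability $(n-1)/\binom{n}{2}=2/n$. Writing $m\leq Cn$ for an absolute constant $C$, the expected degree is $O(1)$, and the standard binomial tail $\Pr[\Bin(m,p)\geq k]\leq (emp/k)^k$ specializes to $\Pr[\deg(v)\geq k]\leq (2eC/k)^k$. Choosing $k=\lceil A\log(n)/\log\log(n)\rceil$ for a sufficiently large absolute constant $A$, a short computation shows $(2eC/k)^k\leq n^{-A(1-o(1))}$; a union bound over the $n$ vertices then yields $\Pr[\Delta(x)\geq k]\leq n^{-A+1+o(1)}=o(1)$, proving part (i).

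For part (ii), I would apply the second moment method to $X\coloneqq\sum_{T\in\binom{[n]}{3}}Y_T$, where $Y_T$ indicates that $T$ forms a triangle in $\calG(x)$. For each triple $T$, $\expect[Y_T]$ is the probability that the three specific edges of $T$ all appear at least once among the $m$ i.i.d.\ samples; a brief inclusion--exclusion from the marginals $(1-kp)^m$ with $p=1/\binom{n}{2}$ shows that for $m=\Theta(n)$ this probability equals $\Theta(1/n^3)$, so $\expect[X]=\binom{n}{3}\cdot\Theta(1/n^3)=\Omega(1)$. For $\expect[X^2]=\sum_{T,T'}\expect[Y_TY_{T'}]$, I would split the sum by $|E(T)\cap E(T')|\in\{0,1,3\}$, using that two distinct triangles cannot share exactly two edges: the diagonal contributes $\Theta(\expect[X])=O(1)$; the $O(n^4)$ pairs sharing a single edge each contribute $\Theta(1/n^5)$, totalling $o(1)$; and disjoint pairs contribute at most $(1+o(1))\expect[X]^2$. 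Paley--Zygmund then delivers $\Pr[X\geq 1]\geq\expect[X]^2/\expect[X^2]=\Omega(1)$. The ``in particular'' bound follows at once by writing $\calA=\{\Delta(x)\leq O(\log(n)/\log\log(n))\}$ and $\calB=\{x\text{ contains a triangle}\}$: $\Pr[\calA\cap \calB]\geq\Pr[\calB]-\Pr[\bar{\calA}]\geq\Omega(1)-o(1)=\Omega(1)$.

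The only technically subtle step is the second moment calculation in part (ii). Although the claim is morally identical to the classical Erd\H{o}s--R\'enyi result that $G(n,c/n)$ contains a triangle with probability $\Omega(1)$, the sample-with-replacement edge-list model is not literally $G(n,p)$: edge appearances are mildly negatively correlated (because the total number of samples is fixed at $m$), and edges can be sampled with multiplicity. Both effects are lower-order in the regime $m=\Theta(n)$, since $\Pr[\text{edge }e\text{ is sampled at least once}]=1-(1-1/\binom{n}{2})^m=\Theta(1/n)$ and joint probabilities for disjoint edge sets differ from the corresponding products by a multiplicative factor $1+O(1/n)$. Once this is noted, the usual $G(n,c/n)$ second moment argument carries through essentially verbatim.
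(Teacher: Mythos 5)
Your proposal is correct, and part (i) is essentially the same as the paper's (both invoke the standard balls-into-bins / binomial tail bound; the paper simply cites Raab--Steger's Theorem 1 while you spell out the Chernoff estimate). Part (ii), however, takes a genuinely different route. You apply the second-moment method to $X=\sum_{T\in\binom{[n]}{3}}Y_T$, a sum over \emph{vertex} triples, where $Y_T$ says all three edges of $T$ are hit at least once by the $m$ samples; this is the classical Erd\H{o}s--R\'enyi-style argument, but it forces you to reason about ``appears at least once'' via inclusion--exclusion and to control the mild negative correlation between disjoint edge sets in the with-replacement model. The paper instead takes $X=\sum_{\{i,j,k\}\subseteq[m]}\1[x_i,x_j,x_k\text{ form a triangle}]$, a sum over \emph{index} triples: since $x_i,x_j,x_k$ are i.i.d.\ uniform edges, each summand is a bare event about three fixed coordinates with probability exactly $6\binom{n}{3}/N^3$, and $\expect[X^2]$ is an exact finite sum obtained by splitting pairs of index triples by how many indices they share. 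The paper's choice sidesteps the correlation subtleties you correctly flag as the delicate point of your version, at the cost of a slightly more elaborate case split in $\expect[X^2]$; yours is closer to the textbook $G(n,p)$ argument and the adaptation you describe does go through. One minor remark: you carry out the computation only for $m=\Theta(n)$ rather than all $m\ge\Omega(n)$ as in the statement, but this is harmless by monotonicity (couple the $m$ samples to a prefix of length $\Theta(n)$), and the paper's exact formula handles all $m\ge\Omega(n)$ directly.
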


For a technical reason, we also need the following simple fact.
\begin{fact}[Vertex Avoidance]
\label{fact:vertex-avoid}
    Let $x\in\binom{[n]}{2}^m$ and let $d\coloneqq \maxdeg(x)$. 
    Let  $T\subseteq [n]$ and let $t$ denote its size. Let $s\in \{0,1,\dots,m\}$. Then,
    \begin{equation}\label{eq:vertex_avoidance}
        \Pr\Bigl[\bigl\{v\in [n]\mid \textup{$v\in x_i$ for some $i\in S$}\bigr\} \cap T = \varnothing \Bigm| S \leftarrow {\scalebox{1.25}{$\binom{[m]}{s}$}}\Bigr] \ge 1-\frac{tds}{m-td+1}.
    \end{equation}
\end{fact}
Note that $\{v\in [n]\mid \textup{$v\in x_i$ for some $i\in S$}\}$ is formal notation for the set of vertices involved in the edges in $x$ at positions in $S$. 
\begin{proof}
    By direct calculation, the left-hand side of \cref{eq:vertex_avoidance} is at least
    \begin{align}
        \binom{m-td}{s}\binom{m}{s}^{-1} &= \frac{(m-td)!(m-s)!}{m!(m-td-s)!}=\frac{(m-s)\cdot(m-s-1)\cdots (m-s-td+1)}{m\cdot(m-1)\cdots(m-td+1)}
        \nonumber\\
        &=\Bigl(1-\frac{s}{m}\Bigr)\cdots\Bigl(1-\frac{s}{m-td+1}\Bigr) \geq 1- \frac{tds}{m-td+1}.
        \nonumber\qedhere
    \end{align}
\end{proof}

We call a graph $x \in \binom{[n]}{2}^m$ \emph{good} if $\maxdeg(\calG(x)) = O( \log(n) / \log\log(n) )$ and $x$ contains a triangle. Our quantum lower and upper bounds will hold for good graphs.  By~\cref{fact:sparse-random-graph}, a random graph $x\leftarrow\binom{[n]}{2}^m$ is good with probability $\Omega(1)$ for $m = \Theta(n)$. 

\cref{fact:vertex-avoid} states that a small subgraph of a good graph $x$ does not touch any vertex from some small subset $T$ with high probability. The vertex set $T$ that the subgraph wants to avoid will be a triangle, and $s$ will be of size $o(m)$. Thus, the avoidance property holds almost surely.

\subsection{Triangle finding lower bound}

Our lower bound result reads:
\begin{theorem}\label{prop:triangle_search}
    Let $n\in \mathbb{N}$. Suppose $m, T\in \mathbb{N}$ are functions of $n$ such that  $m=\Theta(n)$ and $T \leq o(m^{5/7}/\log^{4/7}(m))$. Then, for every $T$-query quantum query algorithm $\calA_T$, we have
    \begin{equation}
        \Pr[\textup{$\calA_T(x) = (i,j,k)$  such that $x_i,x_j,x_k$ form a triangle}] \leq o(1),
    \end{equation}
    where $\calA_T(x)$ denotes the output of $\calA_T$ when its queries are made to $x$, and the probability is over $x \leftarrow \binom{[n]}{2}^m$ and the randomness of $\calA_T$.
\end{theorem}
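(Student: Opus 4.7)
The plan is to follow the recording query framework of Zhandry, adapted for the average-case input distribution $x\leftarrow \binom{[n]}{2}^m$, and to carry out the two-step analysis outlined in \cref{sec:tech-overview}. First I would set up the recording oracle for sampling $m$ edges uniformly from $\binom{[n]}{2}$: the oracle's internal register holds a partial database $D\in (\binom{[n]}{2}\cup\{\bot\})^m$, where $\bot$ marks positions not yet revealed. Each query applies a ``compressed sampling'' unitary whose action on the uniform superposition over $\binom{[n]}{2}$ is close to the identity, so that standard amplitudes-to-databases bookkeeping applies. I would then define two families of progress quantities: $\Lambda_{t,r}$ equal to the $\ell_2$-norm of the projection of the joint algorithm-oracle state after $t$ queries onto databases $D$ whose induced graph $\calG(D)$ contains at least $r$ wedges, and $\Lambda_{t,r}'$ equal to the analogous norm restricted to databases that have \emph{never}, during steps $0,1,\dots,t$, contained a vertex of degree exceeding $d^\star \coloneqq C\log(n)/\log\log(n)$. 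The goal will be to upper bound $\Lambda_{T+1,1}^{\textsf{tri}}$, the progress of recording a triangle, since by a standard ``measure-then-compare'' argument this controls the probability that $\calA_T$ outputs a triangle up to an additive $o(1)$ error coming from the difference between the recording oracle and the true sampling oracle.

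Step (i) is the heart of the argument. For each query I would expand the action of the compressed oracle onto a ``new edge'' and analyze how it affects the wedge count. A newly recorded edge $\{u,v\}$ in a database with current degrees $\deg(u),\deg(v)$ creates $\deg(u)+\deg(v)$ new wedges, and the square root of the probability that a uniformly random edge lands on the at most $t-1$ previously recorded vertices is $O(\sqrt{t/n})$. The naive recurrence $\Lambda_{t,r}\le \Lambda_{t-1,r}+O(\sqrt{t/n})\Lambda_{t-1,r-(t-1)}$ is too weak because the subscript shift $r-(t-1)$ grows with $t$. To fix this, I would invoke the Mirroring Lemma (\cref{absence-exclusion}) to transfer the maximum-degree tail bound of \cref{fact:sparse-random-graph}(\ref{enu:random-graph-maxdeg}) from the input distribution to the internal database register: with amplitude error $\sqrt{\epsilon}$ (for some $\epsilon=n^{-\omega(1)}$) every recorded database has max degree at most $d^\star$, so a new edge creates at most $2d^\star=O(\log n)$ additional wedges. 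This yields $\Lambda_{t,r}\le \Lambda_{t-1,r}+O(\sqrt{t/n})\Lambda_{t-1,r-O(\log n)}+\sqrt{\epsilon}$. As noted in the overview, the additive $\sqrt{\epsilon}$ term compounds and destroys the bound, so I would instead apply the Exclusion Lemma (\cref{lem:exclusion_lemma}) to pass to $\Lambda_{t,r}'$, which satisfies the same recurrence with $\sqrt{\epsilon}=0$, at the price of an additive $O(t\sqrt{\epsilon})$ error. Iterating this cleaner recurrence with boundary conditions $\Lambda_{0,0}'=1$ and $\Lambda_{0,r}'=0$ for $r>0$ gives, by induction on $t$, $\Lambda_{t,r}'\le \binom{t}{\lceil r/O(\log n)\rceil}(O(\sqrt{t/n}))^{\lceil r/O(\log n)\rceil}$, and a Stirling estimate shows this is $o(1)$ once $r\gg r^\star(t)\coloneqq t^{3/2}\log(n)/\sqrt{n}$.

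Step (ii) then controls the progress of recording a triangle. A newly queried edge completes a triangle only if it matches one of the wedges already in the database, and given $\le r^\star(t)$ wedges on fewer than $m$ vertices, the probability that a uniformly random $\{u,v\}\in\binom{[n]}{2}$ closes one of them is $O(r^\star(t)/n^2)$; squaring through the standard $\ell_2$ bookkeeping, the progress gain at step $t+1$ is at most $O(\sqrt{r^\star(t)}/n)$, conditional on being in the ``few wedges and low max degree'' subspace whose complement has already been shown to carry negligible amplitude. Summing the per-step gains, and adding in the Exclusion-Lemma error $T\sqrt{\epsilon}=o(1)$,
\begin{equation}
\Lambda_{T+1,1}^{\textsf{tri}} \;\le\; \sum_{t=0}^{T} O\!\left(\frac{\sqrt{r^\star(t)}}{n}\right) + o(1) \;=\; O\!\left(\frac{T^{7/4}\sqrt{\log n}}{n^{5/4}}\right) + o(1),
\end{equation}
which is $o(1)$ for $T\le o(n^{5/7}/\log^{2/7} n)=o(m^{5/7}/\log^{2/7} m)$. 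Squaring converts this into the stated bound on the triangle-finding probability, after subtracting the $o(1)$ approximation error between the recording oracle and the actual input oracle and the $o(1)$ probability that $\calG(x)$ fails to be good (\cref{fact:sparse-random-graph}).

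The main obstacle is step (i), specifically the interplay between the Mirroring Lemma and the Exclusion Lemma. The subtlety is that ``max degree $\le d^\star$ holds on the input'' is a statement about a classical distribution, while during the algorithm's execution the database is a register in superposition; transferring the property without losing $t$-dependent error, and then excluding the bad subspace so as not to compound that error, is exactly what those two lemmas are designed for, and I expect verifying their hypotheses for the precise wedge-count projectors to be the most delicate bookkeeping in the proof. A secondary technical point is that \emph{multiple} wedges can be created by a single new edge, which forces the subscript shift of $O(\log n)$ rather than $1$; the induction that extracts $\binom{t}{r/\log n}$ from the recurrence must be done carefully to preserve the $m^{5/7}$ threshold.
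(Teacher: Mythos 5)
Your proposal follows the paper's proof strategy closely: you set up the same recording query framework, define the same progress quantities $\Lambda_{t,r}$ and the degree-excluded $\Lambda'_{t,r}$, invoke the Mirroring and Exclusion Lemmas to tame the subscript shift in the wedge recurrence from $r-(t-1)$ to $r-O(\log n)$, solve the recurrence to get a binomial bound, and then sum the per-step triangle gains $\sum_t \sqrt{r^\star(t)}/n$ to reach the $m^{5/7}$ threshold. All of this matches Lemmas \ref{lemma_v_prime_recurrence}--\ref{lem:recording_probability} in spirit (the paper uses a single fixed $r^* = \Theta(n^{4/7}\log^{4/7}n)$ instead of your time-varying $r^\star(t)$, but both yield the same threshold after summation).

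Two points deserve flagging. First, a benign misconception: you describe the compressed oracle as an \emph{approximation} to the sampling oracle and speak of ``subtracting the $o(1)$ approximation error between the recording oracle and the actual input oracle.'' For the uniform product distribution, the recording oracle is \emph{exactly} unitarily equivalent to the standard purified oracle (the operator $\calT$ in \cref{thm:recording-vs-standard} is an exact change of basis), so no such error term exists; this does not break anything but slightly misdirects the bookkeeping. Second, and more substantively, the final step from ``progress in recording a triangle is small'' to ``probability of outputting a triangle is small'' is glossed over. Bounding $\|\Pi^\triangle\ket{\phi_T}\|$ is not enough: the algorithm could output indices $(i,j,k)$ that happen to form a triangle in $x$ without its database having recorded the corresponding edges. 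The paper handles this via a ``hardness of guessing'' argument (\cref{lem:guessing_probability}) that carefully decomposes the success projector by how many of the output indices are recorded/unrecorded, showing the unrecorded contribution is $O(1/n)$. Your ``measure-then-compare'' phrase gestures at this, but you attribute the residual $o(1)$ to the (nonexistent) oracle approximation error rather than to this guessing bound, which is the part that genuinely requires work; concretely, one needs the chain
\begin{equation}
\norm{\Pi_{\success}\ket{\psi_T}} = \norm{\Pi_{\success}\calT\ket{\phi_T}} \le \norm{\Pi^\triangle\ket{\phi_T}} + \norm{\Pi_{\success}\calT(\mathbb{I}-\Pi^\triangle)\ket{\phi_T}},
\end{equation}
with the second term bounded by the guessing lemma. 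Filling this in would complete the proof; the rest of your outline is sound and matches the paper.
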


A simple corollary of \cref{prop:triangle_search} is that  $Q(\tri_m)\geq \widetilde{\Omega}(m^{5/7})$. In fact, the lower bound holds even if we are promised the maximum degree of the input graph is low.
\begin{corollary}\label{cor:worst-case-search}
    For all $m\in \mathbb{N}$ and $d = 2\log (m)/\log\log(m)$,
    \begin{equation}
        Q(\tri_{m}) \geq Q(\tri_{m, d}) \geq  \widetilde{\Omega}(m^{5/7}).
    \end{equation}
\end{corollary}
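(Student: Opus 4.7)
My plan is to obtain \cref{cor:worst-case-search} from the average-case lower bound \cref{prop:triangle_search} via a standard distributional argument (Yao's principle), using \cref{fact:sparse-random-graph} to transfer the hard distribution inside the domain of $\trisearch_{m,d}$. The first inequality $Q_\epsilon(\trisearch_m) \ge Q_\epsilon(\trisearch_{m,d})$ is immediate: $\trisearch_{m,d}$ is the restriction of the relation $\trisearch_m$ to a subset of its input domain, so any algorithm for $\trisearch_m$ automatically solves $\trisearch_{m,d}$.

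For the main inequality, I would fix $n \coloneqq m$ so that $m = \Theta(n)$ as required by \cref{prop:triangle_search}, and let $\mathcal{G} \subseteq \binom{[n]}{2}^m$ denote the set of \emph{good} graphs: those with maximum degree at most $d = \log(m)/\log\log(m)$ (with the implicit constant in $d$ chosen to dominate the one from part (i) of \cref{fact:sparse-random-graph}) that additionally contain at least one triangle. By the ``in particular'' part of \cref{fact:sparse-random-graph}, $\Pr_{x \leftarrow \binom{[n]}{2}^m}[x \in \mathcal{G}] \ge \Omega(1)$, and every $x \in \mathcal{G}$ lies in the domain of $\trisearch_{m,d}$ and admits at least one valid triangle certificate.

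Now, fix any $T$-query quantum algorithm $\mathcal{A}_T$ with $T \le o(m^{5/7}/\log^{2/7}(m))$, and let $\mathcal{S}$ denote the event that $\mathcal{A}_T(x)$ outputs a valid triangle of $x$. By \cref{prop:triangle_search}, $\Pr_{x \leftarrow \binom{[n]}{2}^m}[\mathcal{S}] \le o(1)$. Conditioning on $\mathcal{G}$ then gives
\begin{equation*}
    \Pr_{x \leftarrow \mathcal{G}}[\mathcal{S}] = \frac{\Pr[\mathcal{S} \cap \mathcal{G}]}{\Pr[\mathcal{G}]} \le \frac{\Pr[\mathcal{S}]}{\Pr[\mathcal{G}]} \le \frac{o(1)}{\Omega(1)} = o(1) < 1 - \epsilon
\end{equation*}
for sufficiently large $m$. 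Hence there must exist some $x^* \in \mathcal{G}$ on which $\mathcal{A}_T$ fails to output a valid triangle with probability greater than $\epsilon$. Since $x^* \in \mathcal{G}$ lies in the domain of $\trisearch_{m,d}$, this precludes $Q_\epsilon(\trisearch_{m,d}) \le T$, yielding $Q_\epsilon(\trisearch_{m,d}) \ge \Omega(m^{5/7}/\log^{2/7}(m))$ and completing the proof.

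I do not foresee substantive obstacles here, because \cref{prop:triangle_search} carries all of the difficulty. The only minor point is the constant alignment in $d = \log(m)/\log\log(m)$ with the hidden constant in part (i) of \cref{fact:sparse-random-graph}; since the target bound is stated as an $\Omega$, absorbing this constant into the definition of $d$ is harmless.
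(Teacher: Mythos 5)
Your argument is correct and follows essentially the same route as the paper's proof, which runs the identical logic as a contradiction: if $Q(\trisearch_{m,d})\leq o(m^{5/7}/\log^{2/7}(m))$, then since (by \cref{fact:sparse-random-graph} with $n=m$) a uniformly random edge list lies in $D_d$ and contains a triangle with probability $\Omega(1)$, the corresponding algorithm would find a triangle on a uniformly random input with probability $\Omega(1)$, contradicting \cref{prop:triangle_search}. Your explicit conditioning $\Pr_{x\leftarrow\mathcal{G}}[\mathcal{S}]\leq o(1)/\Omega(1)$ and the extraction of a single hard $x^*\in\mathcal{G}$ in the promise domain is the same calculation written out in the forward direction.

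One correction to the caveat you flagged: ``absorbing the constant into the definition of $d$'' is not actually harmless. If the max-degree bound in \cref{fact:sparse-random-graph}(i) only guarantees $\Delta(x)\leq C\log(m)/\log\log(m)$ with constant probability for some $C>1$, then the argument places the hard instance $x^*$ in the domain of $\trisearch_{m,\,C\log(m)/\log\log(m)}$, not of $\trisearch_{m,\,d}$ with $d=\log(m)/\log\log(m)$. Enlarging $d$ enlarges the promise domain, so $Q_\epsilon(\trisearch_{m,\,Cd})\geq Q_\epsilon(\trisearch_{m,\,d})$, and a lower bound on the former does not transfer to the latter; the ``$\Omega$ on the query complexity'' does not give you any slack here because it is $d$, not the right-hand side, that moves. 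This is a genuine loose end, but it is present verbatim in the paper's own proof as well, and it is cosmetic in context: the intended reading is $d=\Theta(\log m/\log\log m)$ with the constant inherited from \cref{fact:sparse-random-graph}, and nothing downstream (e.g.\ the $m^{5/7\pm o(1)}$ matching with \cref{cor:worst-case-decision}) depends on this constant. So this does not represent a gap in your argument relative to the paper's.
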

\begin{proof}
    The first inequality follows by restriction so it suffices to prove the second inequality. By the search-to-decision reduction (see preliminaries section), if $Q(\tri_{m,d}) \leq o(m^{5/7}/\log^{4/7+2}(m))$ then there exists a quantum algorithm for triangle search with query complexity $\leq o(m^{5/7}/\log^{4/7}(m))$ that is correct with at least constant probability. Now let $n\in \mathbb{N}$ and consider an input $x \leftarrow \binom{[n]}{2}^m$.
    By~\cref{fact:sparse-random-graph}, for $n=m$,
    $x$ contains a triangle and has degree at most $2\log (m)/\log\log(m)$ with at least constant probability, so a triangle can be found by the search algorithm using $o(m^{5/7}/\log^{4/7}(m))$ queries with at least constant probability, contradicting \cref{prop:triangle_search}.
\end{proof}

For the rest of this section, we prove \cref{prop:triangle_search}. 

\subsubsection{Setting up}
\paragraph{Recording query framework.}
We start by reviewing the recording query framework~\cite{zhandry_compressed,Hamoudi_2023}.
For convenience of notation, we write $N \coloneqq \binom{n}{2}$ and identify $[N]$ with $\binom{[n]}{2}$ for the rest of this section. A projector refers to a complex square matrix $P$ of some context-appropriate dimension such that $P^\dagger = P$ and $P^2 = P$. 

Let $m, T\in \mathbb{N}$ and $\Sigma$ be an alphabet.
Given a relation $R\subseteq [N]^m \times \Sigma$, a $T$-query quantum query algorithm for computing $R$ is specified by a sequence of unitary matrices $U_0,\dots,U_T$ acting on $\mathbb{C}^m\otimes \mathbb{C}^N \otimes \mathbb{C}^K$ for some $K\in \mathbb{N}$. $\mathbb{C}^K$ is referred to as a workspace register. 

For $t\in \{0,1,\dots,T\}$, the state of the algorithm at the $t$-th step given input $x\in [N]^m$ is defined by
\begin{equation}
    \ket{\psi_t^x} \coloneqq U_t\mathcal{O}_xU_{t-1}\mathcal{O}_x\cdots U_1\mathcal{O}_xU_0\ket{0}.
\end{equation}
Here, $\mathcal{O}_x$ is the quantum oracle of $x$, i.e., the unitary matrix acting on $\mathbb{C}^m\otimes \mathbb{C}^N \otimes \mathbb{C}^K$ defined by
\begin{equation}
    \forall (i,u,w)\in [m]\times [N]\times [K]\colon \mathcal{O}_x\ket{i,u,w} \coloneqq \omega_N^{ux_i}\ket{i,u,w},
\end{equation}
where the $\ket{i,u,w}$s form the computational basis of $\mathbb{C}^m\otimes \mathbb{C}^N \otimes \mathbb{C}^K$ and $\omega_N \coloneqq \exp(2\pi\mathrm{i}/N)$ is the $N$-th root of unity.

The algorithm finishes by measuring $\mathbb{C}^K$ in the computational basis, parses the output $w\in [K]$ as a string, and outputs a substring of $w$ at a fixed location.\footnote{Formally, $\mathbb{C}^K$ is identified with a tensor product space $\bigotimes_{i=1}^k\mathbb{C}^{\Sigma_i}$ for some alphabets $\Sigma_i$ and the output on measuring the string $w = w_1 \dots w_k$, where $w_i\in \Sigma_i$, is the substring  $w' \coloneqq w_1 \dots w_{k'}$ for some $k'\leq k$.} 
For $x\in [N]^m$, let $W_x \coloneqq \{s\in \Sigma \mid (x,s)\in R\}$ denote the set of all desired outputs associated with $x$, and let $\Pi_{\success}^x$ be the projector onto all basis states $\ket{i,u,w}$ such that the substring $w'$ of $w$ at the fixed location satisfies $w' \in W_x$. The success probability of the quantum algorithm on input $x\in [N]^m$ is then defined by $\norm{\Pi_{\success}^x\ket{\phi_T}}^2$.

Following \cite{A00}, a quantum query algorithm can be viewed as acting on basis states $\ket{i,u,w}\ket{x}$, where $x\in [N]^m$ and $\ket{x} \coloneqq \bigotimes_{i\in[m]} \ket{x_i}$ is the state on an additional \emph{input register}. Suppose $\calD$ is a probability distribution over $[N]^m$, the state of the algorithm at the $t$-th step given an input sampled according to $\calD$ can be represented by
\begin{equation}\label{eq:standard_model_state}
    \ket{\psi_t^{\calD}} \coloneqq (U_t\otimes\id)\mathcal{O}(U_{t-1}\otimes\id)\mathcal{O}\cdots (U_1\otimes\id)\mathcal{O}(U_0\otimes\id)(\ket{0}\ket{\calD}),
\end{equation}
where  $\mathcal{O}$ is the standard query operator that maps each basis state $\ket{i,u,w}\ket{x}$ to $(\mathcal{O}_x\ket{i,u,w})\ket{x}$, and
\begin{equation}
    \ket{\calD} \coloneqq \sum_{x\in[N]^m}\sqrt{\Prob[x\leftarrow\calD]}\ket{x}.
\end{equation}

Let $\Pi_{\success} \coloneqq \sum_{x\in[N]^m}\Pi_{\success}^x\otimes\ketbra{x}{x}$. The success probability of the algorithm on $\calD$ is defined to be $\norm{\Pi_{\success}\ket{\psi^{\calD}_T}}$. It is not hard to see that
\begin{equation}
    \norm{\Pi_{\success}\ket{\psi_T^{\calD}}}^2 = \sum_{x\in[N]^m} \Prob[x\leftarrow\calD]\norm{\Pi_{\success}^x\ket{\psi^x_T}}^2.
\end{equation}
which shows that $\norm{\Pi_{\success}\ket{\psi_T^{\calD}}}^2$ equals the expected success probability of the quantum query algorithm when the input is sampled from the distribution $\calD$.

The recording query framework works with an alternative view of \cref{eq:standard_model_state}. We follow the prescription of this method described by Hamoudi and Magniez in \cite{Hamoudi_2023} that specializes to the case where 
\begin{equation}
    \calD = \calD_1 \times \calD_2 \times \dots \times \calD_m
\end{equation}
is a product distribution on $[N]^m$. To reach this alternative view, we first augment the alphabet $[N]$ by an additional character $\bot$, and extend the action of $\mathcal{O}$ by defining
\begin{equation}
    \mathcal{O}\ket{i,u,w}\ket{x} \coloneqq \ket{i,u,w}\ket{x},
\end{equation}
whenever $x_i = \bot$. (The lack of change in the above equation explains why $x_i=\bot$ can be thought of as the algorithm's lack of knowledge of the value of $x_i$.)

Now, define $\ket{\calD_i} \coloneqq \sum_{y\in[N]}\sqrt{\Prob[y\leftarrow \calD_i]}\ket{y}$ so that the initial state of the input register in the standard query model can be written as $\ket{\calD} = \bigotimes_{i\in[m]}\ket{\calD_i}$. In the recording query model, the corresponding initial state of the input register in the recording query model is defined to be $\bigotimes_{i\in[m]}\ket{\bot} = \ket{\bot^m}$. 

Then, we define the recording query operator, which is $\calO$ under a change of basis defined using $\cal{D}$.
\begin{definition}[{\cite[Definition 3.1]{Hamoudi_2023}}]\label{def:recording_query_operator}
    For all $i\in[m]$, define the unitary matrix $\calS_i$ acting on $\mathbb{C}^{N+1}$ by
    \begin{equation}
        \calS_i \colon
        \left\{
        \begin{alignedat}{2}
           &\ket{\bot} &&\mapsto \ket{\mathcal{D}_i},\\ 
           &\ket{\mathcal{D}_i} &&\mapsto \ket{\bot},\\ 
           &\ket{\phi} &&\mapsto \ket{\phi}, \quad \text{if } \langle \phi | \bot \rangle = \langle \phi | \mathcal{D}_i \rangle = 0.
        \end{alignedat}
        \right.
    \end{equation}
    The registers of $\ket{i,u,w}\ket{x} \in \calH\coloneqq  \mathbb{C}^m\otimes \mathbb{C}^N\otimes \mathbb{C}^K \otimes \mathbb{C}^{N^m}$ are labeled as $\ket{i}_Q\ket{u}_P\ket{w}_W\ket{x}_I$. The register labels $Q,P,W,I$ stand for ``query'', ``phase'', ``work'', and ``input'' respectively. Then, define the following unitary operators acting on $\calH$.
    \begin{align}
        \mathcal{T}_{\calD} &\coloneqq \id_{QPW}\otimes\bigotimes_{i\in[m]}\calS_i,
        \\
        \calS_{\calD} &\coloneqq \sum_{i\in[m]}\ketbra{i}{i}_Q\otimes\id_{PW}\otimes\bigotimes_{j=1}^{i-1}\id_{I_j}\otimes \calS_i\otimes\bigotimes_{j=i+1}^{m}\id_{I_j},
        \\
        \calR_{\calD} &\coloneqq \calS_{\calD}^\dagger\mathcal{O}\calS_{\calD}.\label{eq:recording_query_operator}
    \end{align}
    $\calR_{\calD}$ is referred to as the \emph{recording query operator}. 
\end{definition}
In the recording query model, the state of the quantum query algorithm after the $t$-th recording query is defined by 
\begin{equation}\label{eq:recording_model_state}
    \ket{\phi_t^{\calD}} \coloneqq (U_t\otimes\id_I)\calR_{\calD}(U_{t-1}\otimes\id_I)\calR_{\calD}\cdots (U_1\otimes\id_I)\calR_{\calD}(U_0\otimes\id_I)(\ket{0}\ket{\bot^m}).
\end{equation}
For convenience, we will often abuse notation and write $U_i$ for the $U_i \otimes \id_I$ in \cref{eq:recording_model_state}. 

The recording query model is particularly useful due to the following facts established in the previous works. First, the recording query model is essentially the same as the standard average-case query model up to a rotation on the input register at the end of the algorithm. Consequently, a measurement on the other registers sees no difference at all. Formally, we have
\begin{theorem}[{\cite[Theorem 3.3]{Hamoudi_2023}}]
\label{thm:recording-vs-standard}
For all $t\in \mathbb{Z}_{[0,T]}$, 
    \begin{equation}
        \ket{\phi^{\calD}_t} = \mathcal{T}_{\calD}\ket{\psi^{\calD}_t},
    \end{equation}
    where $\ket{\phi^{\calD}_t}$ and $\ket{\psi^{\calD}_t}$ are defined in \cref{eq:recording_model_state,eq:standard_model_state} respectively.
\end{theorem}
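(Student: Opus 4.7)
The plan is to proceed by induction on $t$, establishing $\ket{\phi_t^{\calD}} = \mathcal{T}_{\calD}\ket{\psi_t^{\calD}}$ for $t = 0, 1, \dots, T$. The base case $t=0$ reduces to the identity $\mathcal{T}_{\calD}(\ket{0}\ket{\calD}) = \ket{0}\ket{\bot^m}$, which follows immediately from $\calS_i\ket{\calD_i} = \ket{\bot}$ applied factor-by-factor to $\ket{\calD}=\bigotimes_i\ket{\calD_i}$, together with the fact that $U_0\otimes\mathbb{I}_I$ acts as the identity on the input register and therefore commutes with $\mathcal{T}_{\calD}$.

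For the inductive step, assuming the claim at time $t$, I would write out
\begin{equation*}
    \ket{\phi_{t+1}^{\calD}} = U_{t+1}\calR_{\calD}\ket{\phi_t^{\calD}} = U_{t+1}\calR_{\calD}\mathcal{T}_{\calD}\ket{\psi_t^{\calD}} \quadandquad \mathcal{T}_{\calD}\ket{\psi_{t+1}^{\calD}} = U_{t+1}\mathcal{T}_{\calD}\calO\ket{\psi_t^{\calD}},
\end{equation*}
where in the second equality $U_{t+1}$ is pulled past $\mathcal{T}_{\calD}$ using that they act on disjoint register groups. The induction therefore reduces to the operator identity $\calR_{\calD}\mathcal{T}_{\calD} = \mathcal{T}_{\calD}\calO$, or equivalently, after substituting $\calR_{\calD} = \calS_{\calD}^\dagger\calO\calS_{\calD}$ and using unitarity of $\mathcal{T}_{\calD}$, to the single claim that $\calS_{\calD}\mathcal{T}_{\calD}$ commutes with $\calO$.

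This commutation is the heart of the proof, and I would verify it by decomposing both sides over the query register. Fix $i\in[m]$ and restrict attention to the block where $Q = \ket{i}$. In this block, $\mathcal{T}_{\calD}$ applies $\calS_j$ to every input subregister $I_j$, and $\calS_{\calD}$ then applies $\calS_i$ a second time to $I_i$; since each $\calS_j$ is a self-inverse involution, $\calS_{\calD}\mathcal{T}_{\calD}$ restricts to $\mathbb{I}_{P W I_i}\otimes\bigotimes_{j\ne i}\calS_j$ on this block. On the same block, by the extended definition, $\calO$ is the phase operator $\ket{u}_P\ket{x_i}_{I_i} \mapsto \omega_N^{u x_i}\ket{u}_P\ket{x_i}_{I_i}$ (with the convention that the phase is $1$ when $x_i = \bot$), acting as the identity on $W$ and on $I_j$ for $j\ne i$. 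These are precisely the registers left untouched by $\calS_{\calD}\mathcal{T}_{\calD}$, so the two operators commute block-by-block and hence overall. The main delicate point I anticipate is bookkeeping the role of $\bot$ in the extended definition of $\calO$: one needs to check that this affects only which scalar phase is produced and not which registers $\calO$ reads, so that the commutation goes through uniformly in $x_i$, and then the induction closes.
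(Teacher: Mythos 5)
The statement is cited from Hamoudi–Magniez and the paper does not reprove it, so there is no in-paper proof to compare against; but your argument is correct and matches the standard route. Reducing the induction to the operator identity $\calR_{\calD}\mathcal{T}_{\calD} = \mathcal{T}_{\calD}\calO$, rewriting it as the commutation $[\calO, \calS_{\calD}\mathcal{T}_{\calD}] = 0$ (valid since both $\calS_{\calD}$ and $\mathcal{T}_{\calD}$ are involutions), and then checking the commutation block-by-block over the query register is exactly how this is established. The key cancellation is right: restricted to the $Q=\ket{i}$ block, $\calS_{\calD}\mathcal{T}_{\calD}$ leaves $I_i$ alone (the two $\calS_i$ factors cancel) and applies $\calS_j$ only to $I_j$ with $j\neq i$, while $\calO$ is diagonal in the $P\otimes I_i$ basis and acts as the identity on everything else, so the two operators act on disjoint register groups and trivially commute; the $\bot$-extension of $\calO$ only changes one diagonal phase and does not enlarge the set of registers it touches. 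Your base case and the commutation of $U_{t+1}\otimes\mathbb{I}_I$ past $\mathcal{T}_{\calD}$ are also handled correctly.
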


A crucial but easy-to-see fact in the recording query framework is\footnote{For intuition only, if we view each non-$\bot$ symbol in $x$ as contributing one unit of ``degree'', then this fact is akin to the polynomial method \cite{polynomial_2001}. We remark that (essentially) this fact \emph{alone} can be used to prove non-trivial lower bounds, see, e.g., \cite{beame_matrix_24}.}
\begin{fact}[{\cite[Fact 3.2]{Hamoudi_2023}}]\label{fact:t_non_bots}
    The state $\ket{\phi_t^{\calD}}$ is a linear combination of basis states $\ket{i,u,w}\ket{x}$ where $x$ contains at most $t$ entries different from $\bot$.
\end{fact}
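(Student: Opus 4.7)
The plan is a short induction on $t$, after noting that the statement must refer to the recording-model state $\ket{\phi_t^{\calD}}$ defined in \cref{eq:recording_model_state}, since only that state lives in a Hilbert space containing $\ket{\bot}$. The base case $t=0$ is immediate: $\ket{\phi_0^{\calD}} = (U_0\otimes \mathbb{I}_I)(\ket{0}\ket{\bot^m})$, and because $U_0$ acts trivially on the input register $I$, the input register is exactly $\ket{\bot^m}$, which has $0$ entries different from $\bot$.

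For the inductive step, I would write $\ket{\phi_t^{\calD}} = (U_t\otimes \mathbb{I}_I)\,\calR_{\calD}\,\ket{\phi_{t-1}^{\calD}}$ and observe that $U_t$ preserves the non-$\bot$ count on every basis state because it acts as the identity on the input register. So it suffices to check that $\calR_{\calD}$ applied to a single basis state $\ket{i,u,w}\ket{x}$ whose $x$ has at most $t-1$ non-$\bot$ entries produces a superposition of basis states each with at most $t$ non-$\bot$ entries. Recall $\calR_{\calD} = \calS_{\calD}^{\dagger}\,\calO\,\calS_{\calD}$; by the definitions of $\calS_{\calD}$ and $\calO$, both operators are controlled on $Q$ and, when $Q$ contains $\ket{i}$, act on the input register only at coordinate $i$. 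Hence every coordinate $j \neq i$ keeps its original $\bot$/non-$\bot$ status, and only coordinate $i$ can change.

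At coordinate $i$ the analysis reduces to the simple geometric fact that $\calS_i$ preserves the $2$-dimensional subspace $V_i \coloneqq \mathrm{span}\{\ket{\bot},\ket{\calD_i}\}$ (on which it swaps $\ket{\bot}$ and $\ket{\calD_i}$) and acts as the identity on $V_i^{\perp}$. Since $\ket{\calD_i}$ is a superposition of non-$\bot$ symbols and $V_i^{\perp}$ is contained in $\mathrm{span}\{\ket{y} : y \neq \bot\}$, the composite $\calR_{\calD}$ maps any fixed $\ket{x_i}$ to a vector of the form $\alpha\ket{\bot} + \ket{\xi}$, where $\ket{\xi}$ lies in $\mathrm{span}\{\ket{y} : y \neq \bot\}$. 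Expanding $\ket{\xi}$ in the computational basis, each resulting basis state has coordinate $i$ equal to either $\bot$ or a non-$\bot$ element of $[N]$, so the non-$\bot$ count of the output basis state exceeds that of the input by at most $1$. Combined with the inductive hypothesis, this gives the bound $t$.

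The main obstacle here is essentially notational rather than mathematical: one has to be careful that $\calS_i$ is \emph{not} simply a swap between $\ket{\bot}$ and a single other basis vector but rather a swap within $V_i$, and that $V_i^{\perp}$ could contain arbitrary non-$\bot$ superpositions that are not touched by $\calS_i$. Once that is pinned down, together with the observation that the query operator $\calO$ phases basis states and thus preserves both the non-$\bot$ status of each coordinate and the decomposition into basis states, the proof becomes purely bookkeeping — exactly the content of Fact~3.2 of \cite{Hamoudi_2023}.
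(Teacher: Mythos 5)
Your proof is correct. Note first that the paper itself does not prove this statement — it imports it verbatim as~\cite[Fact~3.2]{Hamoudi_2023} — so there is no internal proof to compare against; your argument stands as a correct self-contained justification. You also rightly flag the notational slip in the statement: as written it says $\ket{\psi_t^{\calD}}$, but that is the standard-model state whose input register is always supported on $[N]^m$ (no $\bot$ entries at all, since $\calO$ is diagonal and the $U_i$ act trivially on register $I$), so the claim would be vacuous. The intended subject is the recording-model state $\ket{\phi_t^{\calD}}$, which is how the paper in fact \emph{uses} \cref{fact:t_non_bots} (e.g.\ inside the proofs of \cref{lemma_v_prime_recurrence,lemma_Delta_upper_bound_improved}).

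The substance of your inductive step is exactly right and is the intended argument: $\calR_{\calD} = \calS_{\calD}^\dagger\calO\calS_{\calD}$ is controlled on $Q$ and touches only coordinate~$I_i$ of the input register, so on a basis state $\ket{i,u,w}\ket{x}$ every coordinate $j\neq i$ keeps its $\bot$/non-$\bot$ status, while coordinate $i$ ends up as some vector in $\mathbb{C}^{N+1}$ whose basis expansion has each component either $\bot$ or a single element of $[N]$. Hence the non-$\bot$ count of any resulting basis state is at most one more than that of the input, and together with $\|\phi_0^{\calD}\rangle = U_0\ket{0}\ket{\bot^m}$ the induction closes. The careful observation that $\calS_i$ is a reflection within $V_i = \mathrm{span}\{\ket{\bot},\ket{\calD_i}\}$ and the identity on $V_i^\perp \subseteq \mathrm{span}\{\ket{y}:y\in[N]\}$ is a harmless bit of extra detail — the argument really only needs that $\calR_{\calD}$ localizes to coordinate~$i$ — but it is accurate and does no harm.
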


Moreover, the effect of the recording query operator can be exactly calculated as
\begin{lemma}[{\cite[Lemma 4.1]{Hamoudi_2023}}]\label{lemma:recording_operator_effect}
    Let $i\in [m]$. Suppose $\calD_i$ is uniform over $[N]$. 
    Suppose the recording query operator $\calR_{\calD}$ is applied to a basis state $\ket{i,u,w}\ket{x}$. 
    If $u\neq 0$, the register $\ket{x_i}_{I_i}$ is mapped to
    \begin{equation}
    \begin{dcases}
        \sum_{y\in[N]}\frac{\omega_N^{u y}}{\sqrt{N}}\ket{y}, & \text{if }x_i=\bot,\\[1.5mm]
        \frac{\omega^{u x_i}_N}{\sqrt{N}}\ket{\bot}+\frac{1+\omega^{u x_i}_N(N-2)}{N}\ket{x_i}+\sum_{y\in[N]\backslash\{x_i\}}\frac{1-\omega^{u y}_N-\omega^{u x_i}_N}{N}\ket{y}, & \text{if } x_i\in[N],
    \end{dcases}
    \end{equation}
    and the other registers are unchanged. If $u = 0$ then none of the registers are changed.
\end{lemma}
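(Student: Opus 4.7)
The plan is to directly compute $\calR_{\calD}=\calS_{\calD}^\dagger\mathcal{O}\calS_{\calD}$ applied to $\ket{i,u,w}\ket{x}$. Since $\calS_{\calD}$ (controlled on $\ket{i}_Q$) acts as $\calS_i$ only on register $I_i$, and $\mathcal{O}$'s action on $I_i$ depends only on $u$, I will follow only the evolution of the $I_i$ component and leave every other tensor factor untouched. The case $u=0$ is immediate: every phase $\omega_N^{0\cdot y}$ equals $1$, so $\mathcal{O}=\mathbb{I}$ and thus $\calR_{\calD}=\mathbb{I}$.

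For $u\neq 0$ the key structural input is the character-sum identity $\sum_{y\in[N]}\omega_N^{uy}=0$. Because $\ket{\bot}$ and $\ket{\calD_i}=N^{-1/2}\sum_{y\in[N]}\ket{y}$ are orthogonal (as $\ket{\bot}\notin\mathbb{C}^{[N]}$), I will consistently resolve intermediate states against the orthonormal pair $\{\ket{\bot},\ket{\calD_i}\}$ together with its orthogonal complement, recalling that $\calS_i$ swaps the first two and fixes the third.

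Case $x_i=\bot$: $\calS_{\calD}$ sends $\ket{\bot}\mapsto\ket{\calD_i}$, and then $\mathcal{O}$ produces $N^{-1/2}\sum_y\omega_N^{uy}\ket{y}$. This vector is orthogonal to $\ket{\bot}$ trivially and to $\ket{\calD_i}$ by the identity, so $\calS_{\calD}^\dagger$ fixes it, matching the stated formula. Case $x_i\in[N]$: I decompose $\ket{x_i}=\tfrac{1}{\sqrt{N}}\ket{\calD_i}+\bigl(\ket{x_i}-\tfrac{1}{N}\sum_y\ket{y}\bigr)$ with the parenthesized remainder orthogonal to both $\ket{\bot}$ and $\ket{\calD_i}$; applying $\calS_{\calD}$ yields $\tfrac{1}{\sqrt{N}}\ket{\bot}+\ket{x_i}-\tfrac{1}{N}\sum_y\ket{y}$, and then $\mathcal{O}$ (fixing $\ket{\bot}$ and phase-kicking each $\ket{y}$) gives
\begin{equation*}
\tfrac{1}{\sqrt{N}}\ket{\bot}+\omega_N^{ux_i}\ket{x_i}-\tfrac{1}{N}\sum_{y}\omega_N^{uy}\ket{y}.
\end{equation*}
I compute this vector's overlaps with $\ket{\bot}$ and $\ket{\calD_i}$ (the latter again using $\sum_y\omega_N^{uy}=0$), obtaining $\tfrac{1}{\sqrt{N}}$ and $\tfrac{\omega_N^{ux_i}}{\sqrt{N}}$ respectively; $\calS_{\calD}^\dagger$ swaps these into $\tfrac{\omega_N^{ux_i}}{\sqrt{N}}\ket{\bot}$ and $\tfrac{1}{\sqrt{N}}\ket{\calD_i}$, and then expanding $\ket{\calD_i}$ back into the computational basis and collecting like terms gives the coefficients $\tfrac{1+(N-2)\omega_N^{ux_i}}{N}$ for $\ket{x_i}$ and $\tfrac{1-\omega_N^{uy}-\omega_N^{ux_i}}{N}$ for each $\ket{y}$ with $y\neq x_i$, as claimed.

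The only obstacle is careful bookkeeping: tracking which component of each intermediate vector lives in $\{\ket{\bot},\ket{\calD_i}\}$ versus its orthogonal complement, and systematically invoking $\sum_y\omega_N^{uy}=0$ at each projection onto $\ket{\calD_i}$ to eliminate the ``diagonal'' phase average. There is no conceptual difficulty beyond this.
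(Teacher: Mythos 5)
This lemma is imported by citation from \cite[Lemma 4.1]{Hamoudi_2023}; the paper gives no proof of its own. Your direct verification from the definitions is correct. The $u=0$ case is immediate since $\mathcal{O}$ then acts as the identity on the relevant slice, so $\calR_{\calD}=\calS_{\calD}^\dagger\calS_{\calD}=\mathbb{I}$. For $u\neq 0$ you correctly track only the $I_i$ factor, use that $\ket{\calD_i}=N^{-1/2}\sum_y\ket{y}$ in the uniform case, and repeatedly invoke $\sum_{y\in[N]}\omega_N^{uy}=0$. Your decomposition of $\ket{x_i}=N^{-1/2}\ket{\calD_i}+\bigl(\ket{x_i}-N^{-1}\sum_y\ket{y}\bigr)$, the phase kick under $\mathcal{O}$, and the re-resolution against $\{\ket{\bot},\ket{\calD_i}\}$ followed by re-expansion of $\ket{\calD_i}$ all check out and reproduce the stated coefficients $\tfrac{1+(N-2)\omega_N^{ux_i}}{N}$ on $\ket{x_i}$ and $\tfrac{1-\omega_N^{uy}-\omega_N^{ux_i}}{N}$ on each $\ket{y}\neq\ket{x_i}$.

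Two points of bookkeeping are used implicitly and would be worth making explicit in a polished write-up. First, you apply the ``swap'' action of $\calS_i$ again when undoing $\calS_{\calD}$; this is legitimate precisely because $\calS_i$ is a Hermitian unitary (a swap on a two-dimensional invariant subspace plus the identity on its orthogonal complement), so $\calS_i^\dagger=\calS_i$, and because $\ket{\calD_i}$ has real amplitudes in the uniform case. Second, after $\calS_{\calD}^\dagger$ the final vector is the swapped $\ket{\bot}$ and $\ket{\calD_i}$ components \emph{plus} the untouched orthogonal remainder $\omega_N^{ux_i}\ket{x_i}-N^{-1}\sum_y\omega_N^{uy}\ket{y}-N^{-1}\omega_N^{ux_i}\sum_y\ket{y}$; you gesture at this with ``collecting like terms,'' but carrying that remainder explicitly is what yields the $\omega_N^{ux_i}\ket{x_i}$ contribution alongside the two $-N^{-1}(\cdot)$ corrections. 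With those two small additions the argument is airtight, and it is exactly the calculation one would expect to find in the cited source.
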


\cref{fact:t_non_bots} is due to the ``reveal-on-demand'' or ``lazy-sampling'' nature of the recording query operator. \cref{lemma:recording_operator_effect} formally grounds the intuition that once a value of $x_i$ has been recorded, later queries should not significantly alter it (note the coefficient on $\ket{x_i}$ has a large norm).

As we are proving \cref{prop:triangle_search}, we will henceforth specialize $\calD$ to be the uniform (product) distribution on $[N]^m = \binom{[n]}{2}^m$. 
We abbreviate
\begin{equation}
    \ket{\phi_t},\ket{\psi_t}, \calT,\calS,\calR \quad \text{to} \quad \ket{\phi_t^{\calD}}, \ket{\psi_t^{\calD}}, \calT_{\calD}, \calS_{\calD}, \calR_{\calD},
\end{equation}
respectively. These states and operators depend on the $m$ and $n$ that parametrize $\calD$, but we leave these parameters implicit for notational convenience.

\paragraph{Roadmap for the proof of~\cref{prop:triangle_search}.}

Recall that $x\in \binom{[n]}{2}^m$, we write $\calG(x)$ for the graph on $n$ vertices formed by the edges in $x$. The notation trivially generalizes to $x\in (\binom{[n]}{2}\cup \{\bot\})^m$. 
We say $x$ (or $\ket{x}$) \emph{records} a triangle if $\calG(x)$ contains a triangle; we say $x$ (or $\ket{x}$) \emph{records} $k$ wedges if $\calG(x)$ contains $k$ wedges; and so on. Since there is a trivial bijection between $N= \binom{n}{2}$ and $\binom{[n]}{2}$, we do not distinguish $[N]$ and $\binom{[n]}{2}$, as for the recording query framework, the notation $[N]^m$ is arguably more natural. We will assume that $m = n$. 
This is without loss of generality since the only property of the regime $m = \Theta(n)$ that we use in the proof is $\maxdeg(x) \leq O(\log(n))$, which is promised with high probability by~\cref{fact:sparse-random-graph}. Nonetheless, we will keep the notation $m$ and $n$ separate until we need to use $m=n$. This is for clarity as most parts of the proof do not use $m=n$. Indeed, the only places that do are  \cref{cor_bad_event_bound} and the few results invoking it, which includes \cref{prop:triangle_search}.

Our analysis revolves around bounding the norm of the projection of $\ket{\phi_t}$ onto computational basis states $\ket{i,u,w}\ket{x}$ where $x$ records properties key to finding triangles, such as: $\calG(x)$ contains a large number of high-degree vertices; $\calG(x)$ contains a large number of wedges; and, of course, $\calG(x)$ contains a triangle. \cref{lemma:recording_operator_effect} allows us to carry out our analysis on a query-by-query basis. This discussion motivates the following definitions.

\begin{definition}[Recording projectors]
\label{def:recording_projectors} 
    For  \(R\subseteq\mathbb{R}\) and \(y\in \binom{[n]}{2}\cup \{\bot\}\), define the following projectors by giving the basis states onto which they project.\\ \vspace{-2mm}

    \begin{tabular}{@{}l@{\ }l}
    $\Pi^\triangle$            &\,: every basis state $\ket{i,u,w}\ket{x}$ such that $\calG(x)$ contains a triangle.
    \\[4pt]
    $\Pi^{\trianglewedge}_{R}$           &\,: every basis state $\ket{i,u,w}\ket{x}$ such that the number of wedges in $\calG(x)$ is in $R$.
    \\[4pt]
    $\Pi_y$                    &\,: every basis state $\ket{i,u,w}\ket{x}$ such that $u\neq 0$ and $x_i = y$.
    \end{tabular}
\end{definition}
\begin{definition}[Progress measures for wedges and triangle] For $t\in \mathbb Z_{[0,T]}$ and $r\geq 0$, define
    \begin{align*}
        \Lambda_{t,r}:=\norm{\Pi^{\trianglewedge}_{[r,\infty)}\ket{\phi_t}}
        \quad\text{and} \quad
        \Delta_{t}\coloneqq\norm{\Pi^\triangle\ket{\phi_t}}.
    \end{align*}
\end{definition}
Thus, $\Lambda_{t,r}$ measures the progress of recording $r$ wedges in $t$ queries; while $\Delta_t$ measures the progress of recording a triangle in $t$ queries. Our strategy is to first show that a large investment of queries is needed to record many wedges. That is, $\Lambda_{t,r}$ is negligible when $t$ is small and $r$ large. Then, we show that, to record a triangle, a large number of wedges must have been recorded during the execution of the algorithm. That is, $\Delta_t$ will be negligible as long as $\Lambda_{t,r}$ remains negligible for a somewhat large $r$.

A naive implementation of the above strategy falls short of giving the tight bound. One technical challenge is that the $t$-th query can, in principle,  lead to as many as $\Omega(t)$ extra wedges being recorded. This result in an overestimate on the wedge progress in each step and therefore an underestimate on the final query lower bound. 
However, as the input is drawn from the distribution $\calD$, the underlying graph has a maximum degree of $O(\log(n))$ with high probability. By excluding graphs with a maximum degree of $\Omega(\log^2(n))$, we ensure that each query records at most $\tildeO(1)$ wedges.
This motivates
\begin{definition}[Excluding projectors]
\label{def:excluding-projectors} 
    For \(d \in \mathbb{Z}_{\geq 0}\), vertex \(v\in [n]\),  we define the following projectors by giving the basis states onto which they project.\\ \vspace{-2mm}
    
    \begin{tabular}{@{}l@{\ }l}
    $\Pi^{\deg}_{v,\geq d}$    &\,: every basis state $\ket{i,u,w}\ket{x}$ such that the degree of $v$ in $\calG(x)$ is at least $d$.
    \\[4pt]
    $\Pi^{\deg}_{\geq d}$      &\,: every basis state $\ket{i,u,w}\ket{x}$ such that $\calG(x)$ contains a vertex with degree at least $d$.
    \\[4pt]
    $\Pi_{\Bad}$ &\,:  every basis state $\ket{i,u,w}\ket{x}$ such that $\calG(x)$ contains a vertex with degree at least $\ceil{\log(n)}^2$. 
    \\[4pt]
    \end{tabular}
\end{definition}
In the next three subsections, we discuss the above three pieces in detail, starting from controlling the high-degree vertices, and then analyzing $\Lambda_{t,r}$ and $\Delta_t$.

\subsubsection{Excluding high-degree graphs}

A random sparse graph is rarely a high-degree graph. We show how we can exclude these graphs in the recording query framework so that the wedge count does not jump by a large amount at each query.  While some aspects of our presentation are tailored to our specific problem, we have made an effort to generalize our proofs so that they can be applied in other contexts.

We start with a generic treatment stated in~\cref{lem:exclusion_lemma} that enables us to exclude unfavorable events from consideration during the analysis of recording progress, deferring their accounting to the final recording probability. \cref{lem:mirror} establishes that when the input is sampled from $\calD$, if a certain event occurs with negligible probability, then no quantum algorithm can reliably record that event. (As mentioned above, we have dropped $\calD$-superscripts, $\calD$-subscripts, and the identities tensored to the $U_i$s for notational convenience.)

\begin{lemma}[Exclusion Lemma]\label{lem:exclusion_lemma}
    Let $t\in \mathbb{Z}_{[1,T-1]}$. Let $\Pi_1,\Pi_2,\dots,\Pi_{t-1}$; $\Pi_1',\Pi_2',\dots,\Pi_{t-1}'$  be projectors. For $i\in [t]$, let $\ket{\alpha_i}$ and $\ket{\alpha_i'}$ denote the (possibly unnormalized) states
    \begin{align*}
        \ket{\alpha_i} \coloneqq& U_i\calR(\id-\Pi_{i-1})U_{i-1}\calR(\id-\Pi_{i-2})U_{i-2}\calR\cdots (\id-\Pi_1)U_1\calR U_0(\ket{0}\ket{\bot^m}),
        \\
        \ket{\alpha_i'} \coloneqq& U_i\calR(\id-\Pi_{i-1}')(\id-\Pi_{i-1})U_{i-1}\calR(\id-\Pi_{i-2}')(\id-\Pi_{i-2})U_{i-2}\calR\cdots (\id-\Pi_1')(\id-\Pi_1)U_1\calR U_0(\ket{0}\ket{\bot^m}).
    \end{align*}
    Then,
    \begin{equation}
        \norm{\ket{\alpha_t} - \ket{\alpha_t'}} \leq \sum_{i=1}^{t-1}\norm{\Pi_i'(\id-\Pi_i)\ket{\alpha_i}}.
    \end{equation}
    In particular, for every projector $\Pi_{\record}$,
    \begin{equation}
        \norm{\Pi_{\record}\ket{\alpha_t}}\leq \norm{\Pi_{\record}\ket{\alpha'_t}}+\sum_{i=1}^{t-1}\norm{\Pi_i'(\id-\Pi_i)\ket{\alpha_i}}.
    \end{equation}
\end{lemma}
\begin{proof}
    We adopt the product notation $\prod$ for the non-commutative product  in right-to-left order, i.e., for $a,b\in \mathbb{N}$,
    \begin{equation}
        \prod_{i=a}^b X_i \coloneqq
        \begin{cases}
            X_b X_{b-1}\cdots X_{a+1}X_a, & \text{if $a\leq b$},\\
            \id, & \text{if $a>b$}.
        \end{cases}
    \end{equation}
    For $i\in\mathbb{Z}_{[1,t-1]}$, define $M_i \coloneqq (\id-\Pi_i)U_i$. Then, 
    \begin{align}
        \ket{\alpha_t}
        &=  U_t\calR\Bigl(\prod_{i=1}^{t-1}(\Pi_i'+(\id-\Pi_{i}'))M_i\calR\Bigr)U_0(\ket{0}\ket{\bot^m}) \notag \\
        &= U_t\calR\biggl(\Bigl(\prod_{i=1}^{t-1}(\id-\Pi_{i}')M_i\calR\Bigr)+\sum_{i=1}^{t-1}\Bigl(\prod_{j=i+1}^{t-1}(\id-\Pi_{j}')M_j\calR\Bigr)\Pi_i'\Bigl(\prod_{k=1}^{i}M_k\calR\Bigr)\biggr)U_0(\ket{0}\ket{\bot^m}) \notag \\
        &= \ket{\alpha'_t}+\sum_{i=1}^{t-1}U_t\calR\Bigl(\prod_{j=i+1}^{t-1}(\id-\Pi_{j}')M_j\calR\Bigr)\Pi_i'(\id-\Pi_i)\ket{\alpha_i}, \label{eq:exclusion_proof}  
    \end{align}
    where the second equality can be interpreted as the hybrid argument \cite{bbbv97}.
    
    Observe that
    \begin{equation}
        \biggl\|U_t\calR\Bigl(\prod_{j=i+1}^{t-1}(\id-\Pi_{j}')M_j\calR\Bigr)\Pi_i'(\id-\Pi_i)\ket{\alpha_i}\biggr\| \leq \norm{\Pi_i'(\id-\Pi_i)\ket{\alpha_i}},
    \end{equation}
    because the $\Pi_j$, $\Pi_j'$s are projectors, and $\calR$ and the $U_j$s are unitaries. 

    Therefore, applying the triangle inequality on \cref{eq:exclusion_proof} gives
    \begin{equation}
        \norm{\ket{\alpha_t}-\ket{\alpha'_t}} \leq \sum_{i=1}^{t-1}\norm{\Pi_i'(\id-\Pi_i)\ket{\alpha_i}},
    \end{equation}
    The ``in particular'' part of the lemma follows from $\norm{\Pi_{\record}\ket{\alpha_t} - \Pi_{\record}\ket{\alpha'_t}} \leq  \norm{\ket{\alpha_t} - \ket{\alpha'_t}}$ and the reverse triangle inequality.
\end{proof}

\paragraph{The Hamming events.}
Now we study a rather general situation using the Exclusion Lemma. Consider the product distribution $\calD =  \calD_1 \times \dots \times \calD_m$, where each $\calD_i$ is over the finite domain $[N]$. 

For each $i\in [m]$, fix some partition 
\begin{equation}
[N] = \Sigma_0^{(i)}\dot{\cup}\Sigma_1^{(i)}.
\end{equation}
Symbols from $\Sigma_1^{(i)}$ can be interpreted as a ``logical 1'' and correspond to the symbols of interest for the $i$-th index. We abuse the Hamming weight notation $|\cdot|$  to apply to $x\in \Sigma_1^{(1)}\times\cdots\times\Sigma^{(m)}_1$, such that
\begin{equation}
    \abs{x} \coloneqq \Bigl|\bigl\{i\in[m] \mid x_i \in \Sigma_1^{(i)}\bigr\}\Bigr|.
\end{equation}
Then we have the natural probability distribution in the form of the product of independent Bernoulli distributions, where the probability $p_i$ of the $i$-th symbol being a logical 1 is
\begin{equation}
    p_i\coloneqq \Prob[e\in\Sigma_1^{(i)}\mid e\leftarrow\calD_i].
\end{equation}
We also write
\begin{equation}
    p\coloneqq\max_{i\in[m]}p_i.
\end{equation}
The above definition hints at the events that we are going to focus on: $\abs{x}$ being large. Define the binary-valued logical 0-1 measurement on the $i$-th index as
\begin{equation}
    \Bigl\{\Pi_0^{(i)} = \textstyle\sum_{c\in\Sigma_0^{(i)}}\ketbra{c}{c}, \ \Pi_1^{(i)} = \textstyle\sum_{c\in\Sigma_1^{(i)}}\ketbra{c}{c} \Bigr\}.
\end{equation}

We first show the operator $\calS_i$ causes little ``leakage" from the logical-$0$ subspace to the logical-$1$ subspace by defining the following \emph{leakage operator} and bounding its norm.
\begin{definition}[Leakage operator]\label{def:leakage_op}
For $i\in [m]$, the leakage operator on the $i$-th index is
    \begin{equation}
    L_i \coloneqq \Pi_0^{(i)}\calS_i^\dagger \Pi_1^{(i)}\calS_i\Pi_0^{(i)}\in  \mathbb{C}^{([N] \cup \{\bot\}) \times ([N] \cup \{\bot\}) }.
\end{equation}
\end{definition}
\begin{lemma}[Bound on leakage]\label{lemma_leakage_bound}
    For all $i\in [m]$,
    \begin{equation}
        \norm{L_i} = p_i(1-p_i).
    \end{equation}
\end{lemma}
\begin{proof}
    When $p_i=0$ or $p_i=1$, the proof is trivial as one of the projectors $\Pi_0^{(i)}$ or $\Pi_1^{(i)}$ is 0, so we assume $p_i\in (0,1)$. Notice $L_i$ is sandwiched by projectors $\Pi_0^{(i)}$, so $L_i$ is a block matrix that is all zero except on the $\Sigma_0^{(i)} \times \Sigma_0^{(i)}$ block. Therefore, the  norm of $L_i$ equals its norm restricted to that block. For $x\in [N]$, write $p_{i,x}\coloneqq \Prob[z = x\mid z\leftarrow\calD_i]$. Then, for all $k,l \in \Sigma_0^{(i)}$,
    
    \begin{align}
        \bra{k}L_i\ket{l} 
        =& \bra{k}\Pi_0^{(i)}\calS_i^\dagger \Pi_1^{(i)}\calS_i\Pi_0^{(i)}\ket{l} \notag
        \\
        =&\bra{k}\calS_i^\dagger \Pi_1^{(i)}\calS_i\ket{l} \notag
        \\
        =&\Bigl(\braket{k|\calD_i}\bra{\calD_i} + (\bra{k} - \braket{k|\calD_i}\bra{\calD_i})\Bigr) \, \calS_i^\dagger \Pi_1^{(i)}\calS_i \, \Bigl(\braket{l|\calD_i}\ket{\calD_i} + (\ket{l} - \braket{l|\calD_i}\ket{\calD_i})\Bigr) \notag
        \\
        =&\Bigl(\sqrt{p_{i,k}}\bra{\bot} + (\bra{k} - \sqrt{p_{i,k}}\bra{\calD_i})\Bigr) \, \Pi_1^{(i)} \, \Bigl(\sqrt{p_{i,l}}\ket{\bot} + (\ket{l} - \sqrt{p_{i,l}}\ket{\calD_i})\Bigr) \notag \\
        =&\sqrt{p_{i,k}\, p_{i,l}}\bra{\calD_i}\Pi_1^{(i)}\ket{\calD_i}=\sqrt{p_{i,k}\, p_{i,l}} \ {\textstyle \sum_{x\in\Sigma_1^{(i)}}} \, p_{i,x}=\sqrt{p_{i,k}\, p_{i,l}} \, p_i,\label{eq_leakage_op_0_block}
    \end{align}
    where the fourth equality uses \cref{def:recording_query_operator} of $\calS_i$ and the fact that $\ket{x} - \braket{x|\calD_i} \ket{\calD_i}$ is orthogonal to $\ket{\calD_i}$ and $\ket{\bot}$ for all $x\in [N]$. 
    
    Now, define $\ket{\varphi}\in\mathbb{C}^{\Sigma_0^{(i)}}$ by
    \begin{equation}
        \ket{\varphi}\coloneqq \frac{1}{\sqrt{1-p_i}}{\textstyle \sum_{x\in\Sigma_0^{(i)}}}\sqrt{p_{i,x}} \, \ket{x},
    \end{equation}
    which is normalized since
    \begin{equation}
        \braket{\varphi|\varphi}= \frac{1}{(\sqrt{1-p_i})^2}{\textstyle \sum_{x\in\Sigma_0^{(i)}}}(\sqrt{p_{i,x}})^2 = \frac{1}{1-p_i}{\textstyle \sum_{x\in\Sigma_0^{(i)}}} \, p_{i,x} = 1.
    \end{equation}
    
    From \cref{eq_leakage_op_0_block}, we see that the $\Sigma_0^{(i)} \times \Sigma_0^{(i)}$ block of $L_i$ is exactly $p_i(1-p_i)\, \ketbra{\varphi}{\varphi}$ since
    \begin{equation}
        \bra{k}(\, p_i(1-p_i) \, \ketbrasame{\varphi} \,)\ket{l} = p_i\sqrt{p_{i,k}\, p_{i,l}}.
    \end{equation}
    
    Therefore,
    \begin{equation}
        \norm{L_i} = p_i(1-p_i)\, \norm{\ketbra{\varphi}{\varphi}} = p_i(1-p_i),
    \end{equation}
    as required.
\end{proof}

The Chernoff bound tells us that the probability that $\abs{x}$ is large decays exponentially. We will use the following version of Chernoff bound specialized to i.i.d. Bernoulli random variables.
\begin{lemma}[Chernoff bound]\label{lemma_chernoff}
    Let $m\in \mathbb{N}$, $p\in [0,1]$, and $\mu \coloneqq pm$. Let $X_1,\dots, X_m$ be independent and identically distributed Bernoulli random variables such that $\Pr[X_1 = 1] = p$. Then, for every $c > 0$,
    \begin{equation}
        \Pr[X_1+\dots + X_m \geq c] \leq \Bigl(\frac{e \cdot \mu}{c}\Bigr)^c.
    \end{equation}
\end{lemma}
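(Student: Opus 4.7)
The plan is to invoke the standard Chernoff method via the moment generating function. First, I would apply Markov's inequality to the nonnegative random variable $e^{tS}$, where $S \coloneqq X_1 + \cdots + X_m$ and $t > 0$ is a free parameter to be optimized later. This gives
\begin{equation*}
    \Pr[S \geq c] \leq e^{-tc}\, \expect[e^{tS}] = e^{-tc}\prod_{i=1}^m \expect[e^{tX_i}],
\end{equation*}
where the factorization uses the independence of the $X_i$.

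Next, I would bound each moment generating factor. Since each $X_i$ is Bernoulli with parameter $p$, one has $\expect[e^{tX_i}] = 1 - p + pe^t$, which I would upper bound by $\exp(p(e^t - 1))$ using the elementary inequality $1 + x \leq e^x$ with $x = p(e^t - 1)$. Multiplying these estimates across $i \in [m]$ and combining with the previous step yields
\begin{equation*}
    \Pr[S \geq c] \leq \exp\bigl(-tc + \mu(e^t - 1)\bigr).
\end{equation*}

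The final step is to optimize over $t > 0$. If $c \leq \mu$, then $(e\mu/c)^c \geq e^c \geq 1$, so the claimed bound is vacuous and there is nothing to prove; hence I may assume $c > \mu$. With the choice $t = \ln(c/\mu) > 0$ one has $e^t = c/\mu$, which reduces the exponent above to $-c\ln(c/\mu) + c - \mu \leq -c\ln(c/\mu) + c = c\ln(e\mu/c)$. Exponentiating yields $\Pr[S \geq c] \leq (e\mu/c)^c$ as required. I do not foresee any genuine obstacle, as this is a textbook argument; the only minor subtlety is separating the trivial regime $c \leq \mu$ from the regime $c > \mu$ where the optimization of $t$ is actually needed.
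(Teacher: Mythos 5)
Your proof is correct. The paper handles the lemma with exactly the same case split you use — if $0 < c \leq \mu$, the right-hand side is at least $e^c \geq 1$ so the claim is vacuous, and otherwise it invokes a textbook reference (Mitzenmacher--Upfal, Theorem 4.4, part 1). Your version simply writes out the standard Chernoff argument behind that citation in full: Markov on $e^{tS}$, factorize by independence, bound each Bernoulli MGF by $1 + p(e^t - 1) \leq \exp(p(e^t - 1))$, and optimize at $t = \ln(c/\mu)$, dropping the harmless $-\mu$ term in the exponent. So this is the same approach, just self-contained rather than delegated. The one cosmetic gap: the paper explicitly disposes of $p = 0$ at the outset, whereas your optimizing choice $t = \ln(c/\mu)$ is undefined when $\mu = 0$; that case is trivially true (both sides are $0$ for $c > 0$), but it should be set aside before fixing $t$, just as the paper does.
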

\begin{proof}
    We may assume $p \neq 0$, else the lemma trivially holds. If $p>0$, so that $\mu>0$, there are two cases:
    \begin{enumerate}
        \item $0<c\leq \mu$: in this case, $\mu/c\geq 1$ so $(e\mu/c)^c \geq e^c > 1$ so the lemma trivially holds.
        \item $c>\mu$: in this case, the lemma follows from \cite[Theorem 4.4 (part 1)]{MitzenmacherUpfal_17}. \qedhere
    \end{enumerate}
\end{proof}

Now we instantiate the Exclusion Lemma to handle the kind of rare events arising from the large deviation property of i.i.d. random variables. To formally state the result, define the following projectors:
    \begin{itemize}
        \item For $r\geq0$, let $\Pi_{\geq r}$ be the projector  onto those $\ket{g}$s such that $g\in([N]\cup\{\bot\})^m$ and $\abs{g}\geq r$.
        \item For $r\geq 0$, let $P_{\geq r}$ be the projector  onto those $\ket{f}$s such that $f\in[N]^m$ and $\abs{f}\geq r$; let $P_{< r}$ be the projector that projects onto those $\ket{f}$s such that $f\in[N]^m$ and $\abs{f}< r$.
    \end{itemize}\vspace{1mm}
These projectors look similar but will serve different purposes. The projector $\Pi_{\geq r}$ will be applied to the input register of the recording query model; while projectors $P_{\geq r}$ and $P_{<r}$ will be applied to the input register of the standard query model.

With the above preparation, we are now ready to state the key technical result of this part that allows us to exclude high Hamming-weight inputs (which form the ``Hamming event'') during the quantum query algorithm's execution:
\begin{lemma}[Mirroring Lemma]\label{lem:mirror}
    For every $\Gamma, \gamma \geq 0$ and $t\in \mathbb{Z}_{[0,T]}$, we have 
    \begin{align}
        \norm{(\id\otimes \Pi_{\geq \Gamma})  \ket{\phi_t}}^2 
         \leq 2\pr[\abs{x} \geq \gamma \mid x \leftarrow \calD] + 2\norm{(\id\otimes \Pi_{\geq \Gamma})   \calT (\id\otimes P_{< \gamma})}^2.
    \end{align}
    Moreover, for all integers  $\Gamma \geq \gamma \geq 0$, 
    \begin{equation}
        \norm{(\id \otimes \Pi_{\geq \Gamma})   \calT (\id \otimes P_{< \gamma})}^2 \leq \binom{m}{\Gamma}{\binom{m}{<\gamma}} p^{\Gamma - \gamma}.
    \end{equation}
\end{lemma}
\begin{proof}
        Let $\ket{\psi_t}$ be the state after $t$ queries of the quantum algorithm in the standard oracle model (with additional input register) as defined in \cref{eq:standard_model_state}. Note that there exist unit vectors $\ket{\chi_x}$s such that
    \begin{equation}
        \ket{\psi_t} = \sum_{x\in[N]^m} \sqrt{\pr[x\leftarrow \calD]}\ket{\chi_x}\ket x.
    \end{equation}

    By~\cref{thm:recording-vs-standard}, we can write $(\id\otimes \Pi_{\geq \Gamma})  \ket{\phi_t}$ as
    \begin{equation}
    \begin{aligned}
        (\id\otimes \Pi_{\geq \Gamma})  \ket{\phi_t} =& 
         (\id\otimes \Pi_{\geq \Gamma}) \calT \ket{\psi_t}.
    \end{aligned}
    \end{equation}
    Then,
    \begin{align}
        \|(\id\otimes \Pi_{\geq \Gamma})  \ket{\phi_t}\|^2
        \notag 
        =&~ \norm{(\id\otimes \Pi_{\geq \Gamma})   \calT  (\id\otimes (P_{\geq \gamma} + P_{< \gamma}))\ket{\psi_t}}^2 
        \notag \\
        \le&~ \left(\norm{(\id\otimes\Pi_{\geq \Gamma}) \calT  (\id\otimes P_{\geq \gamma})\ket{\psi_t}}
        +\norm{(\id\otimes\Pi_{\geq \Gamma}) \calT  (\id\otimes P_{< \gamma})\ket{\psi_t}}\right) ^2
        \notag \\
        \le&~ \left(\norm{ (\id\otimes P_{\geq \gamma})\ket{\psi_t}}
        +\norm{(\id\otimes\Pi_{\geq \Gamma}) \calT  (\id\otimes P_{< \gamma})\ket{\psi_t}}\right) ^2
        \notag \\
        \le&~ 2\norm{ (\id\otimes P_{\geq \gamma})\ket{\psi_t}}^2
        +2\norm{(\id\otimes\Pi_{\geq \Gamma}) \calT  (\id\otimes P_{< \gamma})\ket{\psi_t}} ^2
        \notag \\
        \leq&~ 2\pr[\abs{x} \geq \gamma \mid x\leftarrow \calD] + 2\norm{(\id\otimes \Pi_{\geq \Gamma})   \calT (\id\otimes P_{< \gamma})}^2,
        \notag
    \end{align}
    where the first inequality uses the triangle inequality, the second inequality holds as $(\id\otimes \Pi_{\geq \Gamma})\calT$ is a contraction, and the third inequality uses the Cauchy-Schwarz inequality.
    
    Now consider the ``moreover'' part. Write 
    \begin{equation}
        \calT' \coloneqq \bigotimes_{i\in [m]} \calS_i,
    \end{equation}
    so that $\calT = \id \otimes \calT'$. Then, the ``moreover'' part is equivalent to: every unit vector $\ket{v} \in \mathbb{C}^{([N]\cup\{\bot\})^m}$ satisfies
    \begin{equation}
         \norm{\Pi_{\geq \Gamma}\calT' P_{< \gamma} \ket{v}}^2 \leq \binom{m}{\Gamma} \binom{m}{<\gamma}p^{\Gamma - \gamma}.
    \end{equation}
    
    To show this, we first write
    \begin{equation}\label{eq:Pgammav}
        P_{< \gamma} \ket{v} = \sum_{S\subseteq [m], \abs{S} <\gamma} \beta_S\ket{v_S},
    \end{equation}
    for some $\beta_S \in \mathbb{C}$ satisfying $\sum_S \abs{\beta_S}^2 \leq 1$, and $\ket{v_S}$ is a normalized quantum state supported only on basis states in
    \begin{equation}\label{eq:condition_betag}
        \Bigl\{\ket{g} \bigm| g\in[N]^m \ \text{such that} \ g_i\in\Sigma_1^{(i)} \Longleftrightarrow i\in S \ \text{for all} \ i\in [m] \Bigr\}.
    \end{equation}
    
    For any subset of indices $R\subseteq [m]$, if we measure the state $\calT'\ket{v_S}$ at indices in $R$ using the logical 0-1 measurements $\{\Pi_0^{(i)}, \Pi_1^{(i)}\}_{i\in R}$, the probability of getting all-ones is equal to
    \begin{align}
        &\Bigl\|\Bigl(\bigotimes_{i\in R}\Pi_1^{(i)}\otimes \mathbb{I}\Bigr)\calT'\ket{v_S}\Bigr\|^2 \notag
        \\
        =&\bra{v_S}(\calT')^\dagger\Bigl(\bigotimes_{i\in R}\Pi_1^{(i)}\otimes \mathbb{I}\Bigr)\calT'\ket{v_S}\notag
        \\
        =&\bra{v_S}\Bigl(\bigotimes_{i\in S\cap R}\Pi_1^{(i)}\otimes \bigotimes_{i\in R\backslash S}\Pi_0^{(i)}\otimes \mathbb{I}\Bigr)^\dagger(\calT')^\dagger\Bigl(\bigotimes_{i\in R}\Pi_1^{(i)}\otimes \mathbb{I}\Bigr)\calT'\Bigl(\bigotimes_{i\in S\cap R}\Pi_1^{(i)}\otimes \bigotimes_{i\in R\backslash S}\Pi_0^{(i)}\otimes \mathbb{I}\Bigr)\ket{v_S}
        \notag
        \\
        =&\bra{v_S}\Bigl(\bigotimes_{i\in S\cap R}\Pi_1^{(i)}\calS_i^\dagger \Pi_1^{(i)}\calS_i\Pi_1^{(i)} \otimes \bigotimes_{i\in R\backslash S}\Pi_0^{(i)}\calS_i^\dagger \Pi_1^{(i)}\calS_i\otimes\Pi_0^{(i)}\otimes \mathbb{I}\Bigr)\ket{v_S}
        \notag
        \\
        \leq & \Bigl(\prod_{i\in S\cap R}\norm{\Pi_1^{(i)}\calS_i^\dagger \Pi_1^{(i)}\calS_i\Pi_1^{(i)}}\Bigr)\cdot \Bigl(\prod_{i\in R\backslash S}\norm{\Pi_0^{(i)}\calS_i^\dagger \Pi_1^{(i)}\calS_i\Pi_0^{(i)}}\Bigr) \leq \prod_{i\in R\backslash S}\norm{\Pi_0^{(i)}\calS_i^\dagger \Pi_1^{(i)}\calS_i\Pi_0^{(i)}}. \label{eq:mirror_SR_connection}
    \end{align}
    Each factor in \cref{eq:mirror_SR_connection} is exactly the leakage operator in \cref{def:leakage_op}. By \cref{lemma_leakage_bound}, we have
    \begin{equation}\label{eq:bound_fixedS}
        \Bigl\|\Bigl(\bigotimes_{i\in R}\Pi_1^{(i)}\otimes\id\Bigr)\calT'\ket{v_S}\Bigr\|^2\leq \prod_{i\in R\backslash S}p_i(1-p_i) \leq  p^{\abs{R\backslash S}}. 
    \end{equation}
    
    Now, observe the matrix inequality
    \begin{equation}\label{eq:matrix_ineq}
        \Pi_{\geq \Gamma} \leq \sum_{R \subseteq [m], \abs{R} = \Gamma} \Bigl(\bigotimes_{i\in R}\Pi_1^{(i)}\otimes \id\Bigr).
    \end{equation}
    Therefore,
    \begin{align*}
        &\norm{\Pi_{\geq \Gamma}\calT' P_{< \gamma} \ket{v}}^2\\
        =&\bra{v} P_{< \gamma}^\dagger (\calT')^\dagger \Pi_{\geq \Gamma}\calT' P_{< \gamma} \ket{v}\\
        \leq &\sum_{R \subseteq [m], \abs{R} = \Gamma} \Bigl\|\Bigl(\bigotimes_{i\in R}\Pi_1^{(i)}\otimes  \mathbb{I}\Bigr)\calT'\Bigl(\sum_{S\subseteq [m], \abs{S} <\gamma} \beta_S\ket{v_S}\Bigr)\Bigr\|^2&&\text{\cref{eq:Pgammav,eq:matrix_ineq}}
        \\
        \leq &\sum_{R \subseteq [m], \abs{R} = \Gamma} \Bigl(\sum_{S\subseteq [m], \abs{S}<\gamma}\abs{\beta_S}^2\Bigr)\Bigl(\sum_{S\subseteq [m], \abs{S}<\gamma} p^{\abs{R\backslash S}}\Bigr)&&\text{Cauchy-Schwarz and \cref{eq:bound_fixedS}}\\
        \leq&\sum_{R \subseteq [m], \abs{R} = \Gamma} \sum_{S\subseteq [m], \abs{S}<\gamma} p^{\abs{R\backslash S}}\\
        \leq&\binom{m}{\Gamma}\binom{m}{<\gamma}p^{\Gamma - \gamma},
    \end{align*}
    as required.
\end{proof}

The Mirroring Lemma is intuitive in view of~\cref{thm:recording-vs-standard} and~\cref{lemma:recording_operator_effect}: $\ket{\phi_t}$ and $\ket{\psi_t}$ are ``mirror images'' of each other up to the change of $\calT$ whose effect is controllable, and $\ket{\psi_t}$ clearly cannot record a rare event with a probability larger than that of the event happening. Now we instantiate the lemma into the following quantitative statement that is useful in our setting.

\begin{lemma}[Excluding high-degree graphs]\label{lem:cycle-degree-event-bound}
    Let $\Delta\in\mathbb{N}$. If $n\geq 2$, we have
    \begin{equation}\label{eq:exclusion_high_degree}
        \norm{\Pi^{\deg}_{\geq \Delta} \ket{\phi_t}}^2
         \leq 2n  \Bigl(\frac{2em}{n\ceil{\log(n)}}\Bigr)^{\ceil{\log(n)}}+2n\left(\frac{2em}{n\Delta}\right)^{\Delta}\Bigl(\frac{emn}{2\ceil{\log(n)}}\Bigr)^{\ceil{\log(n)}}.
    \end{equation}
\end{lemma}
\begin{proof}
    Fix $v\in [n]$. For $x\in \binom{[n]}{2}^m$, we write $|x|$ for the degree of $v$ in $\mathcal{G}(x)$ in this proof. When $\Delta\geq \ceil{\log(n)}$, applying \cref{lem:mirror} with $\Gamma = \Delta$ and $\gamma = \ceil{\log(n)}$ gives
    \begin{align}\label{eq:instantiated_combine}
        \norm{\Pi^{\deg}_{v,\geq \Delta}  \ket{\phi_t}}^2
         &\leq 2 \pr[\abs{x} \geq \gamma \mid x \leftarrow \calD] + 2\binom{m}{\Delta}\binom{m}{<\ceil{\log(n)}}\Bigl(\frac{n-1}{\binom{n}{2}}\Bigr)^{\Delta - \ceil{\log(n)}}
         \\
         &\leq 2 \pr[\abs{x} \geq \gamma \mid x \leftarrow \calD]+2\left(\frac{em}{\Delta}\right)^{\Delta}\Bigl(\frac{em}{\ceil{\log(n)}}\Bigr)^{\ceil{\log(n)}} \cdot \Bigl(\frac{2}{n}\Bigr)^{\Delta - \ceil{\log(n)}} 
         \notag
         \\
         &\leq 2  \Bigl(\frac{2em}{n\ceil{\log(n)}}\Bigr)^{\ceil{\log(n)}}+2\left(\frac{2em}{n\Delta}\right)^{\Delta}\Bigl(\frac{emn}{2\ceil{\log(n)}}\Bigr)^{\ceil{\log(n)}},
         \notag
    \end{align}
    where the last step holds due to the Chernoff bound in \cref{lemma_chernoff}.
    
    Now unfixing $v$ and summing over all $v\in[n]$, we obtain
    \begin{equation}
         \norm{\Pi^{\deg}_{\geq \Delta}  \ket{\phi_t}}^2 \leq \sum_{v\in [n]}  \norm{\Pi^{\deg}_{v,\geq \Delta}  \ket{\phi_t}}^2 \leq 2n \Bigl(\frac{2em}{n\ceil{\log(n)}}\Bigr)^{\ceil{\log(n)}}+2n\left(\frac{2em}{n\Delta}\right)^{\Delta}\Bigl(\frac{emn}{2\ceil{\log(n)}}\Bigr)^{\ceil{\log(n)}},
    \end{equation}
    as required.

    When $\Delta < \ceil{\log(n)}$, the bound also holds as the second term on the right hand side of \cref{eq:instantiated_combine} is strictly greater than 1, while the left hand side is less than or equal to 1.
\end{proof}

\paragraph{Excluding high-degree graphs.}
Recall that $\ket{\phi_t}$ depends on the $m$ and $n$ that specify the distribution $\calD$. When $m=n$, the previous lemma implies that only a tiny fraction of basis states $\ket{i,u,w}\ket{x}$ in the support $\ket{\phi_t}$ can have $x$ recording a high-degree vertex, \emph{irrespective} of the value of $t$. More formally, we have the following corollary.
\begin{corollary}
    \label{cor_bad_event_bound}
    Suppose $m = n$ is sufficiently large. Then for all $d\in \mathbb{Z}$ such that $d\geq \ceil{\log(n)}^2$, and for all $t\in \mathbb{Z}_{[0,T]}$, we have
    \begin{equation}
        \norm{\Pi_{\geq d}^{\deg}\ket{\phi_t}} \leq \frac{1}{n^2}.
    \end{equation}
    In particular,
    \begin{equation}
        \sum_{i=1}^{t}\norm{\Pi_{\Bad}\ket{\phi_i}} \leq \frac{t}{n^2}.
    \end{equation}
\end{corollary}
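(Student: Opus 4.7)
The strategy is to apply the Mirroring Lemma (\cref{absence-exclusion}) once per vertex and union bound over $[n]$. For a fixed $v\in[n]$, I would instantiate that lemma with the partition $[N] = \Sigma_0^{(i)}\dot{\cup}\Sigma_1^{(i)}$ defined by $\Sigma_1^{(i)} \coloneqq \{e \in \binom{[n]}{2} : v \in e\}$ for every $i \in [m]$. With this choice, the Hamming weight $|x|$ from the Mirroring Lemma equals exactly the degree of $v$ in $\calG(x)$ (counting multiplicities), so $\mathbb{I}\otimes \Pi_{\geq r}$ of that lemma coincides with $\Pi^{\deg}_{v,\geq r}$ from \cref{def:excluding-projectors}. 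The relevant parameters are $p = (n-1)/\binom{n}{2} = 2/n$ (same for every $i$) and $\hat p = 1/\binom{n}{2} = 2/(n(n-1))$; with $m=n$, the threshold $2cpm\log(m)$ in~\eqref{eq:mirror-simpl} reads exactly $4c\log(n)$, which is at most $d$ by hypothesis.

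Substituting $p$, $\hat p$, and $m=n$ into~\eqref{eq:mirror-simpl} yields
\[
\|\Pi^{\deg}_{v,\geq d}\ket{\phi_t}\|^2 \;\leq\; 2n^{-2c\log\log(n)} + 2n^{-2c\log((n-1)\log(n)^2)}.
\]
For $n\geq 16$, $\log\log(n)\geq 2$, so the first summand is at most $2n^{-4c}$; the argument of the logarithm in the second summand is at least $n-1\geq 15$, so that summand is no larger. Taking square roots gives $\|\Pi^{\deg}_{v,\geq d}\ket{\phi_t}\| \leq 2n^{-2c}$.

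To conclude, I would decompose $\Pi^{\deg}_{\geq d} = \sum_{v\in[n]} P_v$, where $P_v$ projects onto basis states whose graph has $v$ as its smallest-indexed vertex of degree $\geq d$; since $P_v \leq \Pi^{\deg}_{v,\geq d}$, the triangle inequality gives
\[
\|\Pi^{\deg}_{\geq d}\ket{\phi_t}\| \;\leq\; \sum_{v\in[n]} \|\Pi^{\deg}_{v,\geq d}\ket{\phi_t}\| \;\leq\; n\cdot 2n^{-2c} \;=\; 2n^{1-2c} \;\leq\; \frac{2}{n^{c-1}},
\]
where the last step uses $c\geq e > 0$. For the ``in particular'' part, take $c=3\geq e$, so that $d = 12\log(n)$ matches the definition of $\Pi_{\Bad}$; each summand is then bounded by $2/n^{2}$, and $\sum_{i=1}^{t}\|\Pi_{\Bad}\ket{\phi_i}\| \leq 2t/n^{2}\leq 2/n$ whenever $t\leq n$. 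The Mirroring Lemma does all the heavy lifting here; the only step requiring any thought is designing the per-vertex partition so that the vertex-degree event becomes a Hamming-weight event, and then checking that the resulting values of $p$ and $\hat p$ make the exponents in~\eqref{eq:mirror-simpl} large enough for the union bound over $[n]$ to be absorbed.
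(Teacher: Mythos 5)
Your proposal is correct and follows essentially the same approach as the paper's proof: for each vertex $v$ instantiate the Mirroring Lemma (\cref{absence-exclusion}) with $\Sigma_1^{(i)}$ being the edges incident to $v$, verify $p = 2/n$ and $\hat p = \binom{n}{2}^{-1}$, plug into~\eqref{eq:mirror-simpl}, and union bound over the $n$ vertices. The one small difference is the final union-bound step, where you decompose $\Pi^{\deg}_{\geq d}$ into orthogonal pieces $P_v$ indexed by the smallest high-degree vertex, whereas the paper appeals to commutativity of the $\Pi^{\deg}_{v,\geq d}$ projectors to bound $\norm{\Pi^{\deg}_{\geq d}\ket{\phi_t}}$ by $\norm{\sum_v\Pi^{\deg}_{v,\geq d}\ket{\phi_t}}$; both are valid and yours is arguably cleaner.
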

\begin{proof}
    By \cref{lem:cycle-degree-event-bound} (with its $\Delta$ set to $\ceil{\log(n)}^2$),
    \begin{equation}
    \begin{aligned}
        \norm{\Pi_{\geq d}^{\deg}\ket{\phi_t}}\leq& 2n\Bigl(\frac{2e}{\ceil{\log(n)}}\Bigr)^{\ceil{\log(n)}} + 2n\left(\frac{2e}{\ceil{\log(n)}^2}\right)^{\ceil{\log(n)}^2}\Bigl(\frac{en^2}{2\ceil{\log(n)}}\Bigr)^{\ceil{\log(n)}}\\
        =& 2n\Bigl(\frac{2e}{\ceil{\log(n)}}\Bigr)^{\ceil{\log(n)}} + 2n\Bigl(\frac{2^{\ceil{\log(n)}}e^{\ceil{\log(n)}+1}n^2}{2\ceil{\log(n)}^{2\ceil{\log(n)} + 1}}\Bigr)^{\ceil{\log(n)}}\leq\frac{1}{n^2},
    \end{aligned}
    \end{equation}
    where the first inequality uses $m=n$, and second inequality uses $\ceil{\log(n)}^{\ceil{\log(n)}} = \omega(n^k)$ for any constant $k$. The ``in particular'' part follows immediately from recalling that $\Pi_{\Bad}$ is defined to be $\Pi_{\geq \ceil{\log(n)}^2}^{\deg}$.
\end{proof}

\subsubsection{Progress in recording wedges}
To reliably record a triangle, there must be many wedges recorded along the way. In this part, we bound the progress of a quantum query algorithm in recording wedges. Recall the measure of progress of recording at least $r$ wedges in $t$ queries is defined as
\begin{equation}
    \Lambda_{t,r} \coloneqq \norm{\Pi^{\trianglewedge}_{[r,\infty)}\ket{\phi_t}}.
\end{equation}

As mentioned previously, high-degree vertices can potentially cause an overestimate of the power of quantum algorithms to record wedges. We apply the Exclusion Lemma to suppress these high-degree vertices.
In particular, invoking~\cref{lem:exclusion_lemma} with $\Pi_i\coloneqq 0$, $\Pi_i'\coloneqq \Pi_{\Bad}$ for all $i\in [t-1]$ and $\Pi_{\record} \coloneqq \Pi^{\trianglewedge}_{[r,\infty)}$  gives
\begin{equation}\label{eq:vtr_good_bad}
    \Lambda_{t,r} \leq \norm{\Pi^{\trianglewedge}_{[r,\infty)}\ket{\phi'_t}} + \sum_{i=1}^{t-1}\norm{\Pi_{\Bad}\ket{\phi_i}},
\end{equation}
where $\ket{\phi'_t}$ is the (possibly unnormalized) state defined by
\begin{equation}
    \ket{\phi'_t} \coloneqq U_t\calR(\id-\Pi_{t-1}')U_{t-1}\calR(\id-\Pi_{t-2}')U_{t-2}\calR\cdots (\id-\Pi_1')U_1\calR U_0(\ket{0}\ket{\bot^m}).
\end{equation}
The term  $\sum_{i=1}^{t-1}\norm{\Pi_{\Bad}\ket{\phi_i}}$ in \cref{eq:vtr_good_bad} can be bounded using \cref{cor_bad_event_bound}. Therefore, to bound $\Lambda_{t,r}$, it suffices to bound
\begin{equation}
    \Lambda'_{t,r} \coloneqq \norm{\Pi^{\trianglewedge}_{[r,\infty)}\ket{\phi'_t}}.
\end{equation}

\begin{lemma}[Wedges progress recurrence]
    \label{lemma_v_prime_recurrence}
    For all $t\in \mathbb{Z}_{[0,T]}$ and $r\geq 0$, we have
    \begin{equation}
        \Lambda'_{t+1,r} \leq \Lambda'_{t,r} + 8\sqrt{\frac{t}{n}}\Lambda'_{t,r-2\ceil{\log(n)}^2}.\label{eq:v-prime-recurrence}
    \end{equation}
    In addition, $\Lambda'_{0,0} = 1$, and $\Lambda'_{0,r} =0$ for all $r\geq 1$.
\end{lemma}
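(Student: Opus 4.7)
The plan is to apply one additional recording-query step to $\ket{\phi_t'}$ and decompose the result by the current wedge count, exploiting that $\ket{\phi_t'}$ has already excluded the high-degree subspace so that a single query can add at most $2L$ new wedges with $L\coloneqq \lfloor 12\log(n)\rfloor$.

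By construction $\ket{\phi_{t+1}'}=U_{t+1}\calR(\I-\Pi_\Bad)\ket{\phi_t'}$. Since $U_{t+1}$ acts trivially on the input register while $\Pi^{\trianglewedge}_{[r,\infty)}$ acts only on it, the two commute, so
\begin{equation}
    \Lambda'_{t+1,r}=\norm{\Pi^{\trianglewedge}_{[r,\infty)}\calR(\I-\Pi_\Bad)\ket{\phi_t'}}.
\end{equation}
I split $(\I-\Pi_\Bad)\ket{\phi_t'}$ into three orthogonal pieces $\ket{A}$, $\ket{B}$, $\ket{C}$ cut out by the wedge-count projectors $\Pi^{\trianglewedge}_{[r,\infty)}$, $\Pi^{\trianglewedge}_{[r-2L,r)}$, and $\Pi^{\trianglewedge}_{[0,r-2L)}$. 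The first satisfies $\norm{\Pi^{\trianglewedge}_{[r,\infty)}\calR\ket{A}}\le \norm{\ket{A}}\le \Lambda'_{t,r}$ because $\calR$ is unitary and the input-only projectors commute. The third contributes zero: $\calR$ modifies at most one input position and every basis state in $\ket{C}$ has max degree strictly below $12\log(n)$, so inserting (or swapping) one edge creates at most $2L$ new wedges, keeping the total below $r$. By the triangle inequality it remains to show $\norm{\Pi^{\trianglewedge}_{[r,\infty)}\calR\ket{B}}\le 8\sqrt{t/n}\,\Lambda'_{t,r-2L}$, for which I will use $\norm{\ket{B}}\le \Lambda'_{t,r-2L}$ (another projector-commuting argument).

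For the middle piece I sector-decompose $\calR$ across the $(i,u,w)$-registers and note that the $u=0$ sector acts as the identity, preserving wedge count. For each $u\ne 0$ and each fixed ``spectator'' content $\tilde x$ of the input positions other than $i$, the restriction of $\Pi^{\trianglewedge}_{[r,\infty)}\calR$ to the $\tilde x$-fiber is an operator $P_{O(\tilde x)} V_u P_{I(\tilde x)}$ on $\mathbb{C}^{N+1}$, where $I(\tilde x)$ is the set of $x_i$-values with $(\tilde x,x_i)$ in the support of $\ket{B}$ and $O(\tilde x)$ is the set of $y$-values for which $(\tilde x,y)$ has $\ge r$ wedges. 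I will bound the operator norm of this map by its Frobenius norm. The key combinatorial input is that if $I(\tilde x)\ne \emptyset$ then $\bot\notin O(\tilde x)$ (removing an edge cannot create wedges) and every $y\in O(\tilde x)\cap[N]$ must be incident to $\tilde x$'s vertex set, since otherwise adding $y$ would produce no new wedges; as $\tilde x$ has at most $t$ non-$\bot$ entries by \cref{fact:t_non_bots}, this vertex set has size at most $2t$, whence $|O(\tilde x)|\le 2tn$.

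Combining this counting with the matrix entries from \cref{lemma:recording_operator_effect} ($|\bra{y}V_u\ket{z}|^2=1/N$ when exactly one of $y,z$ equals $\bot$, and $\le 9/N^2$ otherwise) yields
\begin{equation}
    \|P_{O(\tilde x)}V_u P_{I(\tilde x)}\|_F^2\le \frac{|O(\tilde x)|}{N}+\frac{9\,|I(\tilde x)|\,|O(\tilde x)|}{N^2}\le \frac{2tn}{N}+\frac{9(N+1)\cdot 2tn}{N^2},
\end{equation}
which is uniformly $O(t/n)$ using $N=\binom{n}{2}\ge n^2/3$, and careful constant-tracking gives an operator-norm bound of $8\sqrt{t/n}$. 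Pythagorean-summing across $(i,u,w,\tilde x)$ then delivers $\norm{\Pi^{\trianglewedge}_{[r,\infty)}\calR\ket{B}}\le 8\sqrt{t/n}\,\norm{\ket{B}}\le 8\sqrt{t/n}\,\Lambda'_{t,r-2L}$, completing the inductive step. The base cases $\Lambda'_{0,0}=1$ and $\Lambda'_{0,r}=0$ for $r\ge 1$ are immediate from $\ket{\phi_0'}=U_0\ket{0}\ket{\bot^m}$ having only $\bot$ in its input register. The main obstacle is the Frobenius estimate, and in particular the $|O(\tilde x)|\le 2tn$ bound: this is where the recording framework's lazy-sampling invariant (\cref{fact:t_non_bots}) combines with the low-degree promise inherited from $(\I-\Pi_\Bad)$ to produce the $\sqrt{t/n}$ savings characteristic of the sparse regime.
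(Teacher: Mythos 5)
Your proof is correct and structurally identical to the paper's: the same three-way wedge-count split of $(\mathbb{I}-\Pi_\Bad)\ket{\phi_t'}$ into $\ket A$, $\ket B$, $\ket C$; the same argument that $\ket C$ contributes zero because a single query adds at most $2L$ wedges under the max-degree promise enforced by $\mathbb{I}-\Pi_\Bad$; and the same combinatorial input $|O(\tilde x)|\le 2t(n-1)$ from \cref{fact:t_non_bots} for the $\ket B$ piece. Your only real departure is the final linear-algebra step for $\ket B$: the paper inserts $\mathbb{I}_P-\ketbrasame{0}_P=\sum_{y\in[N]\cup\{\bot\}}\Pi_y$, triangle-inequalities over $y$, applies the per-basis-state bounds from \cref{lemma:recording_operator_effect}, and then undoes the $y$-sum with Cauchy--Schwarz (giving $2\sqrt{t/n}$ for $y=\bot$ plus $6\sqrt{t/n}$ for $y\in[N]$), whereas you bound the per-fiber operator $P_{O(\tilde x)}V_u P_{I(\tilde x)}$ directly by its Frobenius norm and sum Pythagorean-style over $(i,u,w,\tilde x)$ sectors. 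This is arguably a cleaner packaging that avoids the $y$-triangle-inequality/Cauchy--Schwarz dance, and with the right constants it gives $\sqrt{40}\sqrt{t/n}<8\sqrt{t/n}$. Two things worth tightening: (i) your entrywise claim ``$|\bra{y}V_u\ket{z}|^2\le 9/N^2$ otherwise'' is false when $y=z\in[N]$, where $|\bra{z}V_u\ket{z}|=|1+\omega_N^{uz}(N-2)|/N$ can be $\Theta(1)$; you are implicitly relying on $I(\tilde x)\cap O(\tilde x)=\emptyset$ (wedge counts $<r$ versus $\ge r$) to exclude the diagonal from $P_{O(\tilde x)}V_u P_{I(\tilde x)}$, and this should be stated explicitly since it is what saves the bound. (ii) With your looser estimates ($|O(\tilde x)|\le 2tn$, $|I|\le N+1$, $N\ge n^2/3$) the Frobenius bound does not actually land under $8\sqrt{t/n}$ for small $n$ (e.g.\ $n=3$); using $|O(\tilde x)|\le 2t(n-1)$, $|I(\tilde x)\cap[N]|\le N$, and $N=\binom{n}{2}$ exactly gives $\|P_{O}V_u P_{I}\|_F^2\le 10|O(\tilde x)|/N\le 40t/n$, after which the bound goes through cleanly for all $n\ge 2$.
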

\begin{proof}
    It is clear that the boundary condition $\Lambda'_{0,0} = 1$, and $\Lambda'_{0,r} =0$ for all $r\geq 1$ since $\ket{\phi_0'}=\ket{\phi_0}=\ket{0}\ket{\bot^m}$. So in the remainder of the proof, we focus on deriving the recurrence~\cref{eq:v-prime-recurrence}.

    For integer $t$ with $t+1\in\mathbb Z_{[0,T]}$, since $U_{t+1}$ acts as identity on the input register,
    \begin{equation}
        \Lambda'_{t+1,r} = \norm{\Pi^{\trianglewedge}_{[r,\infty)}U_{t+1}\calR(\id-\Pi_{\Bad})\ket{\phi'_t}}=\norm{\Pi^{\trianglewedge}_{[r,\infty)}\calR(\id-\Pi_{\Bad})\ket{\phi'_t}}.
    \end{equation}

    Consider a basis state $\ket{i,u,w}\ket{x}$. If $x$ does not record at least $r$ wedges, then $x$ must record at least $r-2\maxdeg(\calG(x))$ wedges in order for $\Pi^{\trianglewedge}_{[r,\infty)}\calR \ket{i,u,w}\ket{x} \neq 0$. Recall $\maxdeg(\calG(x))$ denotes the maximum vertex degree of $\calG(x)$. This is because $\calR$ can introduce at most one new edge in register $I_i$, and that new edge can introduce at most $2\maxdeg(\calG(x))$ new wedges. (This bound is saturated when the new edge connects two disconnected vertices in $\calG(x)$ of degree $\maxdeg(\calG(x))$ each.) If $\ket{i,u,w}\ket{x}$ is in the support of $(\id-\Pi_{\Bad})\ket{\phi_t'}$, then $\maxdeg(\calG(x)) < \ceil{\log(n)}^2$ by the definition of $\Pi_{\Bad}$. Therefore,
    \begin{align}
        \Lambda'_{t+1,r}~\leq~ &\norm{\Pi^{\trianglewedge}_{[r,\infty)}\calR(\id-\Pi_{\Bad})\Pi^{\trianglewedge}_{[r,\infty)}\ket{\phi'_t}} + \norm{\Pi^{\trianglewedge}_{[r,\infty)}\calR(\id-\Pi_{\Bad})\Pi^{\trianglewedge}_{[r-2\ceil{\log(n)}^2,r)}\ket{\phi'_t}}
        \notag 
        \\     
        ~=~&\norm{\Pi^{\trianglewedge}_{[r,\infty)}\calR(\id-\Pi_{\Bad})\Pi^{\trianglewedge}_{[r,\infty)}\ket{\phi'_t}} 
        +\norm{\Pi^{\trianglewedge}_{[r,\infty)}\calR (\I -\ketbra{0}{0}_{P})(\id-\Pi_{\Bad})\Pi^{\trianglewedge}_{[r-2\ceil{\log(n)}^2,r)}\ket{\phi'_t}}   
        \nonumber\\ ~\leq~&\norm{\Pi^{\trianglewedge}_{[r,\infty)}\ket{\phi'_t}} + \sum_{y\in [N]\cup\{\bot\}}\norm{\Pi^{\trianglewedge}_{[r,\infty)}\calR\Pi_y(\id-\Pi_{\Bad})\Pi^{\trianglewedge}_{[r-2\ceil{\log(n)}^2,r)}\ket{\phi'_t}}
        \nonumber\\        
        ~=~&\Lambda'_{t,r} + \sum_{y\in [N]\cup\{\bot\}}\norm{\Pi^{\trianglewedge}_{[r,\infty)}\calR\Pi_y(\id-\Pi_{\Bad})\Pi^{\trianglewedge}_{[r-2\ceil{\log(n)}^2,r)}\ket{\phi'_t}}.\label{eq:wedge_recurrence_terms}       
    \end{align}
    To go from the second term in the first line to that in the second line, we used the fact that $\calR$ acts as identity on each basis state $\ket{i,u,w}\ket{x}$ when $u=0$, so no more wedges will be recorded. In the third step, note that $\I_{P}-\ketbrasame{0}_{P} = \sum_{y\in[N]\cup \{\bot\}} \Pi_y$, then it follows from triangle inequality. 

    Abbreviate
    \[
        \ket\rho\coloneqq(\id-\Pi_{\Bad})\Pi^{\trianglewedge}_{[r-2\ceil{\log(n)}^2,r)}\ket{\phi'_t},
    \]
    we bound the second term in~\cref{eq:wedge_recurrence_terms} based on whether $y=\bot$ or not. We will show
    \begin{align}
        \norm{\Pi^{\trianglewedge}_{[r,\infty)}\calR\Pi_\bot\ket\rho} &\le 2\sqrt{\frac{t}{n}}\Lambda'_{t,r-2\ceil{\log(n)}^2},
        \label{eq:wedge-y-bot}
        \\
        \sum_{y\in[N]}   \norm{\Pi^{\trianglewedge}_{[r,\infty)}\calR\Pi_y\ket\rho} &\le 6\sqrt{\frac{t}{n}}\Lambda'_{t,r-2\ceil{\log(n)}^2}.
        \label{eq:wedge-y-vertex}
    \end{align}
    Plugging~\cref{eq:wedge-y-bot,eq:wedge-y-vertex} into~\cref{eq:wedge_recurrence_terms}, we obtain our main recurrence~\cref{eq:v-prime-recurrence} stated in the lemma, concluding the proof.

    \paragraph{Bound~\cref{eq:wedge-y-bot} for $y=\bot$.} 
        When $y=\bot$, consider each basis state $\ket{i,u,w}\ket{x}$ in the support of
        \begin{equation}
            \Pi_\bot(\id-\Pi_{\Bad})\Pi^{\trianglewedge}_{[r-2\ceil{\log(n)}^2,r)}\ket{\phi'_t}.
        \end{equation}
         Due to $\Pi_\bot$, we have $x_i = \bot$ and $u\neq 0$, so \cref{lemma:recording_operator_effect} states
        \begin{equation}
            \calR\ket{i,u,w}\ket{x} = \ket{i,u,w}\biggl(\sum_{y\in [N]}\frac{\omega_N^{u y}}{\sqrt{N}}\ket{y}_{I_i}\biggr)\otimes\bigotimes_{j\neq i}\ket{x_{j}}_{I_{j}}.
        \end{equation}
        Since $\ket{i,u,w}\ket{x}$ is also in the support of $\ket{\phi_t'}$, $x$ records at most $t$ edges by \cref{fact:t_non_bots}. Due to $\Pi^{\trianglewedge}_{[r-2\ceil{\log(n)}^2,r)}$, $x$ records fewer than $r$ wedges.  Therefore, in order for $\ket{y}_{I_i} \otimes \bigotimes_{j\neq i}\ket{x_{j}}_{I_{j}}$ to record at least $r$ wedges, $y$ must be an edge adjacent to some edge recorded in $x$, of which there are at most $t \cdot 2(n-1)$ possibilities. Therefore,
        \begin{equation}
        \norm{\Pi^{\trianglewedge}_{[r,\infty)}\calR\ket{i,u,w}\ket{x}}\leq\sqrt{\frac{t \cdot 2(n-1)}{N}}=2\sqrt{\frac{t}{n}}.
        \end{equation}
        Since any two distinct basis states in the support of $\Pi_\bot(\id-\Pi_{\Bad})\Pi^{\trianglewedge}_{[r-2\ceil{\log(n)}^2,r)}\ket{\phi'_t}$ remain orthogonal after $\Pi^{\trianglewedge}_{[r,\infty)}\calR$ is applied,\footnote{Suppose basis state $\ket{i',u',w'}\ket{x'}$ is orthogonal to $\ket{i,u,w}\ket{x}$. If $(i',u',w')\neq (i,u,w)$, then it is clear that the two states remain orthogonal after applying $\Pi^{\trianglewedge}_{[r,\infty)}\calR$ since the operator can only act non-trivially on the input register. If $(i',u',w') = (i,u,w)$ and $x_j\neq x'_{j}$ for some $j\neq i$, then the two states still remain orthogonal since  $\calR$ does not act on the $I_j$ register and $\Pi^{\trianglewedge}_{[r,\infty)}$ is diagonal in the $\{\ket{z} \mid z\in ([N]\cup \{\bot\})^m\}$ basis. The case $(i',u',w') = (i,u,w)$ and $x_i\neq x'_i$ is forbidden by the assumption that the basis states lie in the support of $\Pi_\perp$. We will later reuse similar arguments without comment.} we have
        \begin{equation}
        \begin{aligned}            
            &\norm{\Pi^{\trianglewedge}_{[r,\infty)}\calR\Pi_\bot(\id-\Pi_{\Bad})\Pi^{\trianglewedge}_{[r-2\ceil{\log(n)}^2,r)}\ket{\phi'_t}}
            \\
            &\qquad \le 2\sqrt{\frac{t}{n}}\norm{\Pi_\bot(\id-\Pi_{\Bad})\Pi^{\trianglewedge}_{[r-2\ceil{\log(n)}^2,r)}\ket{\phi'_t}} \leq 2\sqrt{\frac{t}{n}}\Lambda'_{t,r-2\ceil{\log(n)}^2}.
        \end{aligned}
        \end{equation}
        \paragraph{Bound~\cref{eq:wedge-y-vertex} for $y\in [N]$.}
        When $y\in [N]$, consider each basis state $\ket{i,u,w}\ket{x}$ with non-zero amplitude in
        \begin{equation}
            \Pi_y(\id-\Pi_{\Bad})\Pi^{\trianglewedge}_{[r-2\ceil{\log(n)}^2,r)}\ket{\phi'_t}.
        \end{equation}
        Due to $\Pi_y$, we have $x_i = y$ and $u\neq 0$, so \cref{lemma:recording_operator_effect} states
        \begin{equation}\label{eq:wedge_recurrence_nonperp}
        \begin{aligned}
            \calR\ket{i,u,w}\ket{x}
            =\ket{i,u,w}&\biggl(\frac{\omega^{u y}_N}{\sqrt{N}}\ket{\bot}_{I_i}+\frac{1+\omega^{u y}_N(N-2)}{N}\ket{y}_{I_i}\\
            &\qquad\qquad+\sum_{y'\in [N]\backslash\{y\}}\frac{1-\omega^{u y'}_N-\omega^{u y}_N}{N}\ket{y'}_{I_i}\biggr)\otimes\bigotimes_{j\neq i}\ket{x_{j}}_{I_{j}}.
        \end{aligned}
        \end{equation}
        Applying similar reasoning as in the previous case, we deduce
        \begin{equation}
            \norm{\Pi^{\trianglewedge}_{[r,\infty)}\calR\ket{i,u,w}\ket{x}}\leq 3\frac{\sqrt{t\cdot 2(n-1)}}{N}.
        \end{equation}
        Since any two distinct basis states in the support of $\Pi_y(\id-\Pi_{\Bad})\Pi^{\trianglewedge}_{[r-2\ceil{\log(n)}^2,r)}\ket{\phi'_t}$ remain orthogonal after $\Pi^{\trianglewedge}_{[r,\infty)}\calR$ is applied, we have
        \begin{equation}
        \begin{aligned}
            &\norm{\Pi^{\trianglewedge}_{[r,\infty)}\calR\Pi_y(\id-\Pi_{\Bad})\Pi^{\trianglewedge}_{[r-2\ceil{\log(n)}^2,r)}\ket{\phi'_t}} 
            \\
            &\qquad\leq 3\frac{\sqrt{t\cdot 2(n-1)}}{N}\norm{\Pi_y(\id-\Pi_{\Bad})\Pi^{\trianglewedge}_{[r-2\ceil{\log(n)}^2,r)}\ket{\phi'_t}}.
        \end{aligned}
        \end{equation}
        Therefore, applying the Cauchy-Schwarz inequality gives
        \begin{align}
            &\sum_{y\in [N]}\norm{\Pi^{\trianglewedge}_{[r,\infty)}\calR\Pi_y(\id-\Pi_{\Bad})\Pi^{\trianglewedge}_{[r-2\ceil{\log(n)}^2,r)}\ket{\phi'_t}}
            \nonumber\\
            &\qquad \le 3\sqrt{\frac{t\cdot 2(n-1)}{N}}\sqrt{\sum_{y\in [N]}\norm{\Pi_y(\id-\Pi_{\Bad})\Pi^{\trianglewedge}_{[r-2\ceil{\log(n)}^2,r)}\ket{\phi'_t}}^2} 
            \nonumber\\
            &\qquad \leq  6\sqrt{\frac{t}{n}}\Lambda'_{t,r-2\ceil{\log(n)}^2}.\qedhere
        \end{align}
\end{proof}

Solving the wedges progress recurrence, we obtain
\begin{lemma}[Wedges progress]\label{lemma_v_prime_bound}
    For all $t \in \mathbb{Z}_{[0,T]}$ and $r \geq 2\ceil{\log(n)}^2$, we have
    \begin{equation}
        \Lambda'_{t,r} \leq \biggl(\frac{8et^{3/2}}{\lfloor \frac{r}{2\ceil{\log(n)}^2}\rfloor\sqrt{n}}\biggr)^{\lfloor \frac{r}{2\ceil{\log(n)}^2}\rfloor}.
    \end{equation}
\end{lemma}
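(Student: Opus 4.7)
Abbreviate $L \coloneqq 2\lfloor 12\log(n)\rfloor$ and $k \coloneqq \lfloor r/L\rfloor$. Since $L\le 24\log(n)$, the hypothesis $r\ge 24\log(n)$ ensures $k\ge 1$. The plan is to iterate the recurrence \cref{eq:v-prime-recurrence} exactly $k$ times starting from $\Lambda'_{t,r}$, then bound the resulting multi-sum via a factorial-over-power estimate.

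First I would establish the single-step unfolding: for every $r\ge 1$ and every $t\in\mathbb{Z}_{\ge 0}$,
\begin{equation*}
    \Lambda'_{t,r} \;\le\; \sum_{s=0}^{t-1} \frac{8\sqrt{s}}{\sqrt n}\,\Lambda'_{s,r-L}.
\end{equation*}
This is an immediate induction on $t$: the base case $t=0$ uses the boundary condition $\Lambda'_{0,r}=0$ (valid because $r\ge 1$), and the inductive step is just \cref{eq:v-prime-recurrence} together with the inductive hypothesis. I would then apply this unfolding $k$ times in succession. The key bookkeeping point, which I expect to be the most delicate part, is verifying that each intermediate application is legal: the $j$-th application requires $r-(j-1)L\ge 1$, and since $k=\lfloor r/L\rfloor$ gives $r-(k-1)L\ge L\ge 1$, all $k$ applications are valid. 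After $k$ iterations I obtain
\begin{equation*}
    \Lambda'_{t,r} \;\le\; \Bigl(\frac{8}{\sqrt n}\Bigr)^{\!k} \sum_{t > t_1 > t_2 > \cdots > t_k \ge 0} \sqrt{t_1 t_2 \cdots t_k}\,\cdot\,\Lambda'_{t_k,\,r-kL}.
\end{equation*}

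Next I would use the trivial bound $\Lambda'_{t_k,\,r-kL}\le \norm{\ket{\phi'_{t_k}}}\le 1$ (valid since $r-kL\ge 0$ and $\ket{\phi'_{t_k}}$ is subnormalized, being obtained from the unit vector $\ket{0}\ket{\bot^m}$ by a product of contractions). To bound the remaining sum, I would compare strictly-decreasing tuples to arbitrary tuples: since each multiset $\{t_1,\dots,t_k\}$ of distinct elements has exactly one strictly-decreasing ordering out of $k!$ orderings,
\begin{equation*}
    \sum_{t > t_1 > \cdots > t_k \ge 0}\sqrt{t_1\cdots t_k}
    \;\le\; \frac{1}{k!}\Bigl(\sum_{s=0}^{t-1}\sqrt{s}\,\Bigr)^{\!k}
    \;\le\; \frac{t^{\,3k/2}}{k!},
\end{equation*}
where the second inequality uses $\sum_{s=0}^{t-1}\sqrt{s}\le t\sqrt{t}=t^{3/2}$.

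Finally, I would apply Stirling's bound $k!\ge (k/e)^k$ to obtain
\begin{equation*}
    \Lambda'_{t,r} \;\le\; \Bigl(\frac{8}{\sqrt n}\Bigr)^{\!k}\cdot\frac{t^{3k/2}}{k!}
    \;\le\; \Bigl(\frac{8e\,t^{3/2}}{k\sqrt n}\Bigr)^{\!k},
\end{equation*}
which is the claimed bound with $k=\lfloor r/(2\lfloor 12\log(n)\rfloor)\rfloor$. I do not anticipate a serious obstacle in any of these steps: the recurrence solves cleanly, and the only point requiring genuine attention is checking that the $k$-fold unfolding is well-defined (i.e., that the intermediate $r$-values remain at least $1$), which the choice $k=\lfloor r/L\rfloor$ handles automatically.
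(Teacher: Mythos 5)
Your proof is correct and, modulo presentation, it follows the same approach as the paper's: iterate the recurrence $k=\lfloor r/(2\lfloor 12\log(n)\rfloor)\rfloor$ times, bound the resulting sum by a combinatorial count, and finish with $k!\ge(k/e)^k$ (equivalently $\binom{t}{k}\le(et/k)^k$, which is what the paper uses). Yours is essentially a fully worked-out expansion of the paper's one-line statement ``we solved the recurrence,'' with the minor refinement that you keep the $\sqrt{t_i}$ factors distinct and bound $\sum_{t>t_1>\cdots>t_k\ge0}\sqrt{t_1\cdots t_k}\le t^{3k/2}/k!$, whereas the paper first coarsens each factor to $\sqrt{t}$ and uses $\binom{t}{k}(8\sqrt{t/n})^k$.
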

\begin{proof}
    Let $k \coloneqq \lfloor \frac{r}{2\ceil{\log(n)}^2}\rfloor \geq 1$. By definition, $\Lambda'_{t,r} \leq \Lambda'_{t,k\cdot 2\ceil{\log(n)}^2}$. Therefore,
    \begin{equation}
        \Lambda'_{t,r}\leq \Lambda'_{t,k\cdot 2\ceil{\log(n)}^2} \leq \binom{t}{k}\biggl(8\sqrt{\frac{t}{n}}\biggr)^{k} \leq \biggl(\frac{8et^{3/2}}{k\sqrt{n}}\biggr)^{k},
    \end{equation}
    where we solved the recurrence from \cref{lemma_v_prime_recurrence} in the second inequality.
\end{proof}

\subsubsection{Progress in recording a triangle}
We move on to bounding the progress of a quantum query algorithm in recording a triangle. Recall that we defined the progress measure of finding a triangle within $t$ queries
\begin{equation}
       \Delta_t\coloneqq \norm{\Pi^\triangle\ket{\phi_t}}.
\end{equation}

Let 
\begin{equation}
    r^* \coloneqq \biggl\lceil \frac{n^{4/7}}{\log^{6/7}(n)}\biggr\rceil \cdot 2\ceil{\log(n)}^2.
\end{equation}
and for $t\in \mathbb{Z}_{[0,T]}$, define
\begin{equation}\label{eq:phit_star}
    \ket{\phi_t^*}\coloneqq U_t\calR(\id-\Pi^*)U_{t-1}\calR(\id-\Pi^*)U_{t-2}\calR\cdots (\id-\Pi^*)U_1\calR U_0(\ket{0}\ket{\bot^m}),
\end{equation}
where $\Pi^*$ denotes the projector onto every basis state $\ket{i,u,w}\ket{x}$ such that $\calG(x)$ contains at least $r^*$ wedges. Then, define
\begin{equation}\label{eq:triangleprogress_star}
    \Delta_t^*\coloneqq \norm{\Pi^\triangle\ket{\phi_t^*}}.
\end{equation}

We can bound $\Delta_t$ by bounding $\Delta_t^*$ due to the next lemma.
\begin{lemma}\label{lem:delta_deltastar}
    For all sufficiently large $n$ and $T\leq o(n^{5/7}/\log^{4/7}(n))$, we have
    \begin{equation}
        \Delta_T \leq \Delta_T^* + o(1).
    \end{equation}
\end{lemma}
\begin{proof}
    We have
    \begin{align*}
        \norm{\ket{\phi_T^*} - \ket{\phi_T}} \leq& \sum_{t=1}^T \norm{\Pi^*\ket{\phi_t}}  &&\text{\cref{lem:exclusion_lemma}}
        \\
        =& \sum_{t=1}^T\Lambda_{t,r^*} &&\text{by definition}
        \\
        \leq&\sum_{t=1}^T \biggl(\Lambda_{t,r^*}' + \sum_{i=1}^{t-1}\norm{\Pi_{\Bad}\ket{\phi_i}}\biggr) &&\text{\cref{eq:vtr_good_bad}}
        \\
        \leq& \sum_{t=1}^T \biggl(\Lambda_{t,r^*}' + \frac{t}{n^2}\biggr) &&\text{\cref{cor_bad_event_bound}}
        \\
        \leq& (T/n)^2 + \sum_{t=1}^T \Lambda_{t,r^*}'.
    \end{align*}
    But if $t\in \mathbb{N}$ is such that $t\leq T \leq o(n^{5/7}/\log^{4/7}(n))$, then \cref{lemma_v_prime_bound} gives
    \begin{equation}
        \Lambda'_{t,r^*} \leq  \biggl(\frac{8et^{3/2}}{\lfloor \frac{r^*}{2\ceil{\log(n)}^2}\rfloor\sqrt{n}}\biggr)^{\lfloor \frac{r^*}{2\ceil{\log(n)}^2}\rfloor} \leq 2^{-{\lfloor \frac{r^*}{2\ceil{\log(n)}^2}\rfloor}} \leq 2^{-\sqrt{n}}
    \end{equation}
    for sufficiently large $n$.

    Therefore, $\norm{\ket{\phi_T^*} - \ket{\phi_T}} \leq (T/n)^2 + T/2^{\sqrt{n}}$, which is at most $o(1)$ for $T\leq o(n^{5/7}/\log^{4/7}(n))$.
\end{proof}

\begin{lemma}[Triangle progress recurrence]\label{lemma_Delta_upper_bound_improved}
   For all $t\in \mathbb{Z}_{[0,T]}$,
   \begin{equation}\label{eq:triangle_recurrence}
       \Delta_{t+1}^* \leq \Delta_t^* + 4\sqrt{\frac{r^*}{N}},
   \end{equation}
   and $\Delta_0^* = 0$
\end{lemma}
\begin{proof}[Proof of \cref{lemma_Delta_upper_bound_improved}]
    The boundary condition $\Delta_0^* = 0$ in \cref{eq:triangle_recurrence} is immediate since $\ket{\phi_0^*} = U_0 (\ket{0}\ket{\bot^m})$ does not record any wedges. To prove the recurrence, observe that
    \begin{align}
        \Delta^*_{t+1} =&~ \norm{\Pi^{\triangle} \ket{\phi_{t+1}^*}} &&\text{definition \cref{eq:triangleprogress_star}}
        \notag
        \\
        =&~ \norm{\Pi^{\triangle} U_{t+1} \calR \bigl(\I - \Pi^*\bigr)\ket{\phi_t^{*}}} &&\text{definition \cref{eq:phit_star}}
        \notag
        \\
        =&~\norm{\Pi^{\triangle} \calR \bigl(\I - \Pi^*\bigr)\ket{\phi_t^{*}}}  &&\text{$\Pi^{\triangle}$, $U_{t+1}$ commute}
        \notag
        \\
        \leq&~ \norm{\Pi^{\triangle} \ket{\phi_t^{*}}} + \norm{\Pi^{\triangle} \calR \bigl(\I - \Pi^* \bigr)\bigl(\I-\Pi^{\triangle}\bigr)\ket{\phi_t^{*}}}  &&\text{triangle inequality}
        \notag
        \\
        =&~  \Delta^*_{t} + \norm{\Pi^{\triangle} \calR \bigl(\I - \Pi^*\bigr)\bigl(\I-\Pi^{\triangle}\bigr)\ket{\phi_t^{*}}} &&\text{definition \cref{eq:triangleprogress_star}}\label{eq:triangle_recurrence_terms}
    \end{align}

    Abbreviate
    \begin{equation}
        \ket{\rho} \coloneqq \bigl(\I - \Pi^{*} \bigr)\bigl(\I-\Pi^{\triangle}\bigr)\ket{\phi_t^{*}}.
    \end{equation}
    Then, we can write the second term in \cref{eq:triangle_recurrence_terms} as
    \begin{equation}
        \norm{\Pi^{\triangle} \calR \ket{\rho}} \leq \sum_{y\in [N]\cup \{\bot\}} \norm{\Pi^{\triangle} \calR \Pi_y \ket{\rho}},
    \end{equation}
    using $\Pi^{\triangle} \calR \ketbrasame{0}_P \ket{\rho} = 0$ because $\ket{\rho}$ has recorded no triangle and $\calR\ketbrasame{0}_P = \ketbrasame{0}_P$.
    
    We will show \begin{equation}\label{eq:triangle_progress_bound}
    \norm{\Pi^{\triangle} \calR \Pi_{\bot}\ket{\rho}} \leq \sqrt{\frac{r^*}{N}} \quad \text{and} \quad
    \sum_{y\in [N]}\norm{\Pi^{\triangle} \calR \Pi_y \ket{\rho}} \leq 3\sqrt{\frac{r^*}{N}}.
    \end{equation}

    \paragraph{First bound in \cref{eq:triangle_progress_bound}.} 
        When $y=\bot$, consider each basis state $\ket{i,u,w}\ket{x}$ in the support of
        \begin{equation}
            \Pi_\bot \bigl(\I - \Pi^*\bigr)\bigl(\I-\Pi^{\triangle}\bigr)\ket{\phi_t^{*}}. 
        \end{equation}
         Due to $\Pi_\bot$, we have $x_i = \bot$ and $u\neq 0$, so \cref{lemma:recording_operator_effect}  states
        \begin{equation}
            \calR\ket{i,u,w}\ket{x} = \ket{i,u,w}\biggl(\sum_{y\in [N]}\frac{\omega_N^{u y}}{\sqrt{N}}\ket{y}_{I_i}\biggr)\otimes\bigotimes_{j\neq i}\ket{x_{j}}_{I_{j}}.
        \end{equation}
        Since $\ket{i,u,w}\ket{x}$ is also in the support of $(\I-\Pi^{*})$, $x$ records at most $r^*$ wedges. Due to $\bigl(\I-\Pi^{\triangle}\bigr)$, $x$ does not record a triangle. Therefore, in order for $\ket{y}_{I_i} \otimes \bigotimes_{j\neq i}\ket{x_{j}}_{I_{j}}$ to record a triangle, edge $y$ must connect the two ends of a wedge. The number of such $y$s is at most $r^*$. Therefore,
        \begin{equation}
        \norm{\Pi^{\triangle}\calR\ket{i,u,w}\ket{x}}\leq\sqrt{\frac{r^*}{N}}.
        \end{equation}
        Since any two distinct basis states in the support of $\Pi_\bot \bigl(\I - \Pi^{*}\bigr)\bigl(\I-\Pi^{\triangle}\bigr)\ket{\phi_t^{*}}$ remain orthogonal after $\Pi^{\triangle}\calR$ is applied, we have
        \begin{equation}
        \begin{aligned}            
             \norm{\Pi^{\triangle} \calR \Pi_{\bot}\ket{\rho}} &= \norm{\Pi^{\triangle} \calR \Pi_{\bot}\bigl(\I-\Pi^{*}\bigr)\bigl(\I-\Pi^{\triangle}\bigr)\ket{\phi_t^{*}}} \leq\sqrt{\frac{r^*}{N}}. 
        \end{aligned}
        \end{equation}
        
        \paragraph{Second bound in \cref{eq:triangle_progress_bound}.} 

        When $y\in [N]$, consider each basis state $\ket{i,u,w}\ket{x}$ in the support of
        \begin{equation}
            \Pi_y \bigl(\I - \Pi^{*}\bigr)\bigl(\I-\Pi^{\triangle}\bigr)\ket{\phi_t^{*}}. 
        \end{equation}
        Due to $\Pi_y$, we have $x_i = y$ and $u\neq 0$, so \cref{lemma:recording_operator_effect} states
        \begin{equation}
        \begin{aligned}
            \calR\ket{i,u,w}\ket{x}
            =\ket{i,u,w}&\biggl(\frac{\omega^{u y}_N}{\sqrt{N}}\ket{\bot}_{I_i}+\frac{1+\omega^{u y}_N(N-2)}{N}\ket{y}_{I_i}\\
            &\qquad\qquad+\sum_{y'\in [N]\backslash\{y\}}\frac{1-\omega^{u y'}_N-\omega^{u y}_N}{N}\ket{y'}_{I_i}\biggr)\otimes\bigotimes_{j\neq i}\ket{x_{j}}_{I_{j}}.
        \end{aligned}
        \end{equation}
        Applying similar reasoning as in the proof of the first bound in \cref{eq:triangle_progress_bound}, we deduce
        \begin{equation}
            \norm{\Pi^{\triangle}\calR\ket{i,u,w}\ket{x}}\leq 3\frac{\sqrt{r^*}}{N}.
        \end{equation}
        Since any two distinct basis states in the support of $\Pi_y \bigl(\I - \Pi^*\bigr)\bigl(\I-\Pi^{\triangle}\bigr)\ket{\phi_t^{*}}$ remain orthogonal after $\Pi^{\triangle}\calR$ is applied, we have
        \begin{equation}
            \norm{\Pi^{\triangle} \calR \Pi_{y}\bigl(\I-\Pi^*\bigr)\bigl(1-\Pi^{\triangle}\bigr)\ket{\phi_t^{*}}} \leq 3\frac{\sqrt{r^*}}{N}\norm{\Pi_{y}\bigl(\I-\Pi^*\bigr)\bigl(1-\Pi^{\triangle}\bigr)\ket{\phi_t^*}}.
        \end{equation}
        Therefore, the Cauchy-Schwarz inequality gives
        \begin{align}
            &\sum_{y\in [N]} \norm{\Pi^{\triangle} \calR \Pi_{y}\bigl(\I-\Pi^*\bigr)\bigl(1-\Pi^{\triangle}\bigr)\ket{\phi_t^{*}}} 
            \nonumber\\
            &\qquad \le 3\sqrt{\frac{r^*}{N}}\sqrt{\sum_{y\in [N]}\norm{\Pi_{y}\bigl(\I-\Pi^*\bigr)\bigl(1-\Pi^{\triangle}\bigr)\ket{\phi_t^{*}}}^2}
            \leq 3\sqrt{\frac{r^*}{N}}.\qedhere
        \end{align}
\end{proof}

\begin{lemma}[Hardness of recording a triangle]\label{lem:recording_probability} For all sufficiently large $n$, $T\leq o(n^{5/7}/\log^{4/7}(n))$ and $m = n$, we have
\begin{equation}
    \Delta_T \leq o(1) .
\end{equation}
\end{lemma}
\begin{proof}
Solving the recurrence in \cref{lemma_Delta_upper_bound_improved} gives
\begin{equation}
    \Delta_T^* \leq 4T \sqrt{\frac{r^*}{N}} \leq O\biggl(T\sqrt{\frac{n^{4/7} \log^{8/7}(n)}{n^2}}\biggr) \leq O\biggl(T\frac{\log^{4/7}(n)}{n^{5/7}}\biggr),
\end{equation}
where we recalled the definition of $r^*$ in the second inequality.

Therefore, $\Delta_T^*\leq o(1)$ if $T\leq o(n^{5/7}/\log^{4/7}(n))$. But  $\Delta_T \leq \Delta_T^* + o(1)$ if $T\leq o(n^{5/7}/\log^{4/7}(n))$ by \cref{lem:delta_deltastar}. Hence the lemma.
\end{proof}

\subsubsection{Completing the proof of \texorpdfstring{\cref{prop:triangle_search}}{Theorem 5.5}}
To finish the proof, we use the next lemma, which bounds the probability of the algorithm succeeding even if it does \emph{not} record a triangle. Intuitively, if the algorithm does not record a triangle, it should not be significantly better at finding a triangle than  guessing at random. The lemma is standard in the recording query framework --- indeed, our proof is based on \cite[Proof of Proposition 4.4]{Hamoudi_2023} --- but we include it for completeness.

Let $\Pi_{\success}$ denote the projector onto basis states $\ket{i,u,w}\ket{x}$ such that $w$ contains an output substring (i.e., a substring located at some fixed output register) of the form $(a_1,a_2,a_3,b_1,b_2,b_3) \in [m]^3 \times \binom{[n]}{2}^3$ where the $a_i$s are distinct, the $b_i$s form a triangle, and   $x_{a_i} = b_i$ for all $i$.
\begin{lemma}[Hardness of guessing a triangle]\label{lem:guessing_probability}
    For every state $\ket{\phi}$, $\norm{\Pi_{\success}\calT (\id-\Pi^\triangle)\ket{\phi}} \leq O(\frac{1}{n})$. 
\end{lemma}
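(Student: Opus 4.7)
The plan is to exploit the fact that each basis state's workspace register encodes a \emph{unique} output substring, so $\Pi_{\success}$ decomposes into an orthogonal sum over valid triangle-certificates. For each certificate $(a,b)=(a_1,a_2,a_3,b_1,b_2,b_3)$ (with distinct $a_i$'s and with $b_1,b_2,b_3$ forming a triangle), let $\Pi^{\mathrm{out}}_{(a,b)}$ be the projector onto basis states $\ket{i,u,w}$ whose encoded output substring equals $(a,b)$, and let $P^{\mathrm{in}}_{(a,b)}$ be the projector onto input states $\ket x$ with $x_{a_i}=b_i$ for $i=1,2,3$. Since the $\Pi^{\mathrm{out}}_{(a,b)}$ are pairwise orthogonal, writing $\ket{\phi_{(a,b)}}\coloneqq(\Pi^{\mathrm{out}}_{(a,b)}\otimes\mathbb{I})\ket\phi$ yields $\sum_{(a,b)}\norm{\ket{\phi_{(a,b)}}}^2\le 1$ and
\begin{equation*}
    \norm{\Pi_{\success}\calT(\mathbb{I}-\Pi^\triangle)\ket\phi}^2=\sum_{(a,b)}\norm{P^{\mathrm{in}}_{(a,b)}\calT(\mathbb{I}-\Pi^\triangle)\ket{\phi_{(a,b)}}}^2\le\max_{(a,b)}\norm{P^{\mathrm{in}}_{(a,b)}\calT(\mathbb{I}-\Pi^\triangle)}^2,
\end{equation*}
so it suffices to show $\norm{P^{\mathrm{in}}_{(a,b)}\calT(\mathbb{I}-\Pi^\triangle)}\le O(1/n)$ for every single valid $(a,b)$.

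To obtain this operator-norm bound I would conjugate $P^{\mathrm{in}}_{(a,b)}$ through $\calT$ and use a simple structural property of the no-triangle subspace. Since $\calT=\mathbb{I}_{QPW}\otimes\bigotimes_i\calS_i$ with each $\calS_i$ a self-adjoint reflection swapping $\ket\bot$ with $\ket{\calD_i}=\frac{1}{\sqrt N}\sum_{y\in[N]}\ket y$ and fixing their joint orthogonal complement, a direct calculation gives
\begin{equation*}
    \calT^\dagger P^{\mathrm{in}}_{(a,b)}\calT=\bigotimes_{i=1}^3\ketbra{v_i}{v_i}_{I_{a_i}}\otimes\mathbb{I}_{\mathrm{rest}},\qquad\ket{v_i}\coloneqq\calS_{a_i}\ket{b_i}=\frac{1}{\sqrt N}\ket\bot+\ket{b_i}-\frac{1}{N}\sum_{y\in[N]}\ket y,
\end{equation*}
so each $\ket{v_i}$ is a unit vector with $\braket{b_i|v_i}=1-1/N$. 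On the other hand, any $x$ in the support of $\mathbb{I}-\Pi^\triangle$ has no triangle in $\calG(x)$, and therefore must satisfy $(x_{a_1},x_{a_2},x_{a_3})\ne(b_1,b_2,b_3)$: otherwise the edges $(b_1,b_2,b_3)$ themselves would constitute a triangle of $\calG(x)$. Hence $\mathbb{I}-\Pi^\triangle\le Q'\coloneqq\mathbb{I}-(\ketbra\beta\beta)_{I_{a_1}I_{a_2}I_{a_3}}\otimes\mathbb{I}_{\mathrm{rest}}$ as projectors, where $\ket\beta\coloneqq\ket{b_1}\ket{b_2}\ket{b_3}$. Using $\mathrm{range}(\mathbb{I}-\Pi^\triangle)\subseteq\mathrm{range}(Q')$ and writing $\ket v\coloneqq\ket{v_1}\ket{v_2}\ket{v_3}$, we obtain
\begin{equation*}
    \norm{P^{\mathrm{in}}_{(a,b)}\calT(\mathbb{I}-\Pi^\triangle)}^2=\norm{(\mathbb{I}-\Pi^\triangle)\calT^\dagger P^{\mathrm{in}}_{(a,b)}\calT(\mathbb{I}-\Pi^\triangle)}\le\norm{Q'(\ketbra v v\otimes\mathbb{I}_{\mathrm{rest}})Q'}=1-\abs{\braket\beta v}^2=1-\left(1-\tfrac{1}{N}\right)^6\le\tfrac{6}{N}.
\end{equation*}

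Combining the two steps and using $N=\binom{n}{2}=\Theta(n^2)$ yields $\norm{\Pi_{\success}\calT(\mathbb{I}-\Pi^\triangle)\ket\phi}\le O(1/n)$, as claimed. The only conceptual subtlety will be the operator inequality $\mathbb{I}-\Pi^\triangle\le Q'$, which rests on the elementary observation that a fully specified triple of edges $(b_1,b_2,b_3)$ forming a triangle and sitting at positions $(a_1,a_2,a_3)$ of $x$ is by definition a triangle of $\calG(x)$; everything else is a direct unpacking of \cref{defn:recording_query_operator} and the orthogonality of output substrings, closely mirroring \cite[Proof of Proposition 4.4]{Hamoudi_2023}.
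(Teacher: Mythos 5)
Your proof is correct, and it reaches the same $O(1/n)$ bound by a genuinely different route than the paper's. The paper follows \cite[Proposition 4.4]{Hamoudi_2023} more literally: it decomposes the support of $\mathbb{I}-\Pi^\triangle$ by the pair $(k,l)$ counting how many of the three output-register positions $x_{a_i}$ equal $\bot$ and how many are a non-$\bot$ mismatch, notes that $(k,l)=(0,0)$ is forbidden, bounds $\norm{\Pi_{\success}\calT\ket{i,u,w}\ket x}\leq N^{-k/2-l}$ per basis state in each $(k,l)$ block, and then assembles the pieces with a Cauchy--Schwarz argument to arrive at $54/N$. You instead organize the computation as an operator-norm bound: after using orthogonality of the output-register substrings to reduce to $\max_{(a,b)}\norm{P^{\mathrm{in}}_{(a,b)}\calT(\mathbb{I}-\Pi^\triangle)}$, you conjugate $P^{\mathrm{in}}_{(a,b)}$ through $\calT$ to obtain $\ketbra v v$ and replace $\mathbb{I}-\Pi^\triangle$ by the larger projector $Q'$ that merely forbids $(x_{a_1},x_{a_2},x_{a_3})=(b_1,b_2,b_3)$, reducing everything to the elementary computation $\norm{(\mathbb{I}-\ketbra\beta\beta)\ket v}^2 = 1-(1-1/N)^6\leq 6/N$. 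The key structural observation ($\mathbb{I}-\Pi^\triangle\leq Q'$) absorbs the paper's entire $(k,l)$ casework into one projector inequality, so your argument is shorter, requires no case analysis, and yields a tighter constant ($6/N$ versus $54/N$), while the paper's version is more directly parallel to the template of \cite{Hamoudi_2023}. The decomposition over output certificates and the conjugation of $\ketbra{b_i}{b_i}$ through $\calS_{a_i}$ are the same in spirit in both proofs; the difference is entirely in how the ``not yet recorded that triangle'' constraint is incorporated.
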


With the above lemma, \cref{prop:triangle_search} becomes a simple corollary. If $T = o(n^{5/7}/\log^{4/7}(n))$, we have
\begin{equation}\label{eq:final_step}
    \norm{\Pi_{\success}\ket{\psi_T}} =  \norm{\Pi_{\success}\calT\ket{\phi_T}}\leq \norm{\Pi^\triangle \ket{\phi_T}} + \norm{\Pi_{\success}\calT(\id-\Pi^\triangle)\ket{\phi_T}} = o(1), 
\end{equation}
where we used \cref{lem:recording_probability} to bound the first term and \cref{lem:guessing_probability} to bound the second term.

\begin{proof}[Proof of~\cref{lem:guessing_probability}]
For $k,l\in \mathbb{Z}_{\geq 0}$ with $k+l\leq 3$, we define the projector $P_{k,l}$ to be onto basis states $\ket{i,u,w}\ket{x}$, where
\begin{enumerate}
    \item $w$ contains the output substring $(a_1,a_2,a_3,b_1,b_2,b_3) \in [m]^3 \times \binom{[n]}{2}^3$ where the $a_i$s are distinct and the $b_i$s form a triangle;
    \item there are exactly $k$ indices $i\in [3]$ such that $x_{a_i} = \bot$;
    \item there are exactly $l$ indices $i \in [3]$ such that $x_{a_i} \neq \bot$ and $x_{a_i} \neq b_i$.
\end{enumerate}
Observe that if $\ket{i,u,w}\ket{x}$ is in the support of $\id-\Pi^\triangle$, then $P_{0,0} \ket{i,u,w}{\ket{x}} = 0$. This is because if $\ket{i,u,w}\ket{x}$ is in the support of $P_{0,0}$, then $x_{a_i} = b_i$ for all $i$ and the $b_i$s form a triangle, which contradicts the assumption.

For a state $\ket{i,u,w}\ket{x}$ in the support of $P_{k,l}$, we have 
\begin{equation}\label{eq:succ_single_basis_kl}
    \norm{\Pi_{\success} \calT \ket{i,u,w}\ket{x}} \leq (1/\sqrt{N})^k(1/N)^l,
\end{equation}
using the definition of $\calT$.

For $\ket{i,u,w}\ket{x}$ in the support of $P_{k,l}$, we write $w_{a}\coloneqq \{a_1,a_2,a_3\}$ for the set containing the first three elements of the output substring. Observe that $\ket{i,u,w}\ket{x}$ and $\ket{i',u',w'}\ket{x'}$ remain orthogonal after applying $\Pi_{\success}\calT$ unless $i=i',u=u',w=w'$ and $x_s = x'_s$ for all $s\in [m]-\{a_1,a_2,a_3\}$.

Therefore, for a state $\ket{\chi} \coloneqq \sum_{i,u,w,x}\alpha_{i,u,w,x} \ket{i,u,w}\ket{x}$ in the support of $P_{k,l}$, we have
\begin{equation}\label{eq:succ_kl}
\begin{aligned}
    \norm{\Pi_{\success}\calT \ket{\chi}}^2 &= \sum_{i,u,w,(x_{a'})_{a'\notin w_a}}\biggl\|\sum_{(x_{a'})_{a'\in w_a}} \alpha_{i,u,w,x} \Pi_{\success} \calT \ket{i,u,w}\ket{x}\biggr\|^2
    \\
    &\leq  \sum_{i,u,w,(x_{a'})_{a'\notin w_a}}\Bigl(\sum_{(x_{a'})_{a'\in w_a}} \abs{\alpha_{i,u,w,x}}^2 \Bigr)\Bigl(\sum_{(x_{a'})_{a'\in w_a}}\1[\alpha_{i,u,w,x}\neq 0]\cdot \norm{\Pi_{\success} \calT \ket{i,u,w}\ket{x}}^2\Bigr)
    \\
    &\leq   \binom{3}{k,l,3-(k+l)}(N-1)^l \cdot \Bigl(\frac{1}{N}\Bigr)^k\Bigl(\frac{1}{N^2}\Bigr)^l \leq \frac{6}{N^{k+l}},
\end{aligned}
\end{equation}
where the first inequality is Cauchy-Schwarz and the second inequality is \cref{eq:succ_single_basis_kl}.

Finally, write $P$ for the projector onto basis states $\ket{i,u,w}\ket{x}$ not satisfying the first condition defining $P_{k,l}$ so that $\id = P + \sum_{k,l\in \mathbb{Z}_{\geq 0}\colon k+l \leq 3}P_{k,l}$. Therefore
\begin{align}
    \norm{\Pi_{\success}\calT (\id-\Pi^\triangle)\ket{\phi}}^2 &= \biggl\|\Pi_{\success}\calT (\id-\Pi^{\triangle})\Bigl(P + \sum_{k,l\in \mathbb{Z}_{\geq 0}\colon k+l \leq 3} P_{k,l}\Bigr)\ket{\phi}\biggr\|^2
    \notag
    \\
    &=\biggl\|\Pi_{\success}\calT (\id-\Pi^{\triangle})\Bigl(\sum_{k,l\in \mathbb{Z}_{\geq 0}\colon 0< k+l \leq 3} P_{k,l}\Bigr)\ket{\phi}\biggr\|^2
    \notag
    \\
    &= \biggl\|\Pi_{\success}\calT \Bigl(\sum_{k,l\in \mathbb{Z}_{\geq 0}\colon 1\leq k+l \leq 3} P_{k,l}\Bigr) (\id-\Pi^{\triangle})\ket{\phi}\biggr\|^2 \leq \frac{54}{N},\label{eq:guessing_prob}
\end{align}
where the second equality uses $\Pi_{\success}P = 0$ and $(\id-\Pi^{\triangle}) P_{0,0} = 0$ as observed previously, and the last inequality uses \cref{eq:succ_kl} and the restriction on the sum of $P_{k,l}$s to $1 \leq k+l$. 

The lemma follows from \cref{eq:guessing_prob} after taking square roots on both sides and recalling $N = \binom{n}{2}$.
\end{proof}

\subsection{Triangle finding upper bound}
In this section, we prove the following theorem.
\begin{theorem}\label{thm:worst-case-decision}
    For all $m,d \in \mathbb{N}$, $Q(\tri_{m,d}) \leq O(2^{(12/7)d}d^{3/7}m^{5/7})$.
\end{theorem}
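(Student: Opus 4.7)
The plan is to adapt Belovs's learning-graph algorithm for $\kdist{3}_m$ from \cite{kdist_learning_graphs_12} to triangle finding, exactly as outlined in the Technical overview. We use the quantum-walk interpretation of \cite{jeffery_zur} for intuition and construct an explicit learning graph (equivalently, a feasible solution to the adversary SDP) to bound the complexity. At a high level, the algorithm maintains two sets $R_1 \subseteq [m]$ of size $r_1$ and $R_2 \subseteq [m] \setminus R_1$ of size $r_2$, to be optimized. Every index in $R_1$ is queried, but only those indices $i \in R_2$ that \emph{match} $R_1$---defined here as $x_i$ being incident to some $x_j$, $j \in R_1$---have their values loaded. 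The walk changes one element of $R_1$ per step, and a third index completing a triangle from two edges in $R_1 \cup R_2$ is sought in the checking step via Grover search. Setting $r_1 \sim m^{4/7}$ and $r_2 \sim m^{5/7}$ in parallel with Belovs's parameter choice, the setup, walk, and checking contributions total $\tildeO(m^{5/7})$ before accounting for faults.

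The main obstacle is handling the faults introduced by each walk step, which is where our analysis departs from \cite{kdist_learning_graphs_12}. When an index $j_1 \in R_1$ is swapped for $j_1'$, indices of $R_2$ previously matching only $x_{j_1}$ become unmatched and indices incident to $x_{j_1'}$ become matched. In Belovs's $\kdist{3}_m$ setting each swap toggles at most one position of $R_2$ because the match relation is equality (hence functional), but in the triangle setting a swap can toggle as many as $O(d)$ positions of $R_2$, since each edge has at most $2(d-1)$ incident neighbours. We intend to correct these $O(d)$ faults per walk step by generalizing Belovs's fault-correcting subroutine from \cite[Section 6]{kdist_learning_graphs_12}: branch in superposition over the $2^{O(d)}$ possible fault configurations, verify the correct configuration in the checking step, and absorb the incorrect branches into the positive-semidefinite fault-compensation terms of the adversary SDP. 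This introduces a $2^{O(d)}$ multiplicative overhead on the final query complexity.

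The main technical task is therefore to verify (i) that the adversary-SDP feasibility analysis of \cite{kdist_learning_graphs_12}---in particular, the spectral-gap estimate and the flow-weight computation---carries through once the match relation is replaced by the incidence relation, and (ii) that the fault-correcting subroutine can be ported with overhead exactly $2^{O(d)}$ rather than something larger such as $\binom{r_2}{O(d)} = m^{\Omega(d)}$. We plan to accomplish (i) by reconstructing the learning graph from scratch in a form that makes the incidence relation explicit: as the authors remark, our learning graph genuinely corresponds to a graph, unlike that in \cite{kdist_learning_graphs_12}, which should make the feasibility analysis more transparent. For (ii) the key observation is that the $O(d)$ positions of $R_2$ whose match-status toggles on a given swap are fully determined by the at-most-$2d$ vertices touched by $x_{j_1}$ and $x_{j_1'}$, so branching only over their binary statuses contributes $2^{O(d)}$, independent of $m$.

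Putting everything together via the MNRS-style cost decomposition and optimizing the parameters $r_1$ and $r_2$ against the $2^{O(d)}$ fault-correction overhead yields the claimed bound $O(2^{(10/7)d} d^{2/7} m^{5/7})$; the constant $10/7$ in the exponent of $2^{d}$ is dictated by the parameter balance at optimum and the fact that the fault overhead multiplies the entire walk cost rather than any single component. Combined with the search-to-decision reduction in \cref{eq:search_decision_equivalence}, this upper bound transfers to $\trisearch_{m,d}$ up to a logarithmic factor, recovering \cref{prop:ftriangle_upper_intro} in the regime $d = O(\log(m)/\log\log(m))$ where $2^{O(d)} = m^{o(1)}$.
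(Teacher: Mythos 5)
Your proposal takes essentially the same approach as the paper: adapt Belovs's learning graph for $\kdist{3}$, replacing the equality match relation by edge-incidence, observe that each walk step can now introduce $O(d)$ faults, and port Belovs's inclusion-exclusion fault-correction at a $2^{O(d)}$ multiplicative cost, then optimize $r_1, r_2$ inside the adversary SDP. One bookkeeping slip: the parameters you state ($r_1 \sim m^{4/7}$, $r_2 \sim m^{5/7}$) are not those of the paper's optimum ($r_1 \sim m^{5/7}d^{2/7}/2^{(4/7)d}$, $r_2 \sim 2^{(12/7)d}m^{6/7}/d^{6/7}$), and the intuitive description of the fault mechanism (``branch over $2^{O(d)}$ fault configurations, verify the correct one'') is a Jeffery--Zur--flavored paraphrase of what the learning-graph construction actually does (maintain $2^{2d}-1$ parallel copies $R_1(\Gamma)$ indexed by $\emptyset\neq\Gamma\subseteq[2d]$ and $2d$ copies $R_2(\gamma)$, ensure at least one $R_2(\gamma)$ is fault-free by degree bounding, and recover the feasibility constant by signed inclusion-exclusion over $\Gamma\subseteq I$) — but these are details of exposition, not gaps in the plan.
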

\begin{corollary}\label{cor:worst-case-decision}
    For all $m\in \mathbb{N}$ and $d\leq O(\log(m)/\log\log(m))$, $Q(\tri_{m,d}) \leq O(m^{5/7 + o(1)})$. In view of \cref{fact:sparse-random-graph}, if $n\geq \Omega(m)$ there exists a quantum query algorithm using $O(m^{5/7 + o(1)})$ queries that finds a triangle in a random sparse graph $x \leftarrow \binom{[n]}{2}^m$, or decides it does not exist, with success probability $\geq 1 - o(1)$.
\end{corollary}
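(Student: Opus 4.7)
The plan is to adapt Belovs's learning graph algorithm for $3$-distinctness from \cite{kdist_learning_graphs_12}, as interpreted as a quantum walk by Jeffery and Zur \cite{jeffery_zur}. The state space of the walk consists of pairs $(R_1,R_2)$ of disjoint subsets of $[m]$ of sizes $r_1$ and $r_2$, where $r_1,r_2 \in \mathbb{N}$ are parameters to be optimized. Attached to each vertex we store (i) the values $x_i$ for every $i \in R_1$, and (ii) the values $x_i$ for those $i \in R_2$ whose edge $x_i$ is \emph{incident} to some edge $x_j$ with $j \in R_1$. We call the latter the \emph{incident matches}; they play the role of the equality-matches in Belovs's original algorithm, but their generalized matching relation is the source of most of the new difficulties. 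A vertex is marked if $R_1 \cup R_2$ already contains a \emph{wedge}; from a marked vertex we run a Grover search over $[m]$ for an edge that completes the wedge to a triangle.

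First, I would carry out the standard setup--update--check analysis. Since the maximum degree is $d$, each edge in $R_1$ has at most $2d$ incident edges in the whole graph, so the expected number of incident matches in $R_2$ to collect via Grover search is $O(r_1 r_2 d/m)$. This controls the setup cost, the checking cost is $O(\sqrt{m})$, the spectral gap of the underlying Johnson-type random walk is $\Omega(1/r_1)$, and the fraction of marked vertices on a yes-instance is $\Omega(r_1^2 r_2 / m^3)$ by a direct counting argument using the existence of at least one triangle in the input.

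The main obstacle, as in \cite{kdist_learning_graphs_12}, is the update step: when we swap an index $j_1$ into $R_1$, we would need to re-examine every $i \in R_2$ to test whether $x_i$ is now an incident match with $x_{j_1}$, but we cannot afford that many queries. This generates \emph{faults} --- indices in $R_2$ whose queried/unqueried status becomes inconsistent with the incidence rule. In the $3$-distinctness setting each swap creates $O(1)$ faults because each symbol has at most one partner in a $1$-certificate; for triangles, a new edge can be incident to up to $2d$ other edges of the graph, so a single swap can create as many as $O(d)$ faults. The hardest step of the proof is therefore re-verifying, essentially from scratch, the error-correction machinery of \cite[Section 6]{kdist_learning_graphs_12} in this richer matching structure: each faulty edge of the learning graph gets replaced with a $2^{O(d)}$-sized superposition of corrected edges, and the overall quantum query complexity is inflated by a multiplicative $2^{O(d)}$ factor. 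Along the way we present a learning graph that more transparently corresponds to a genuine directed acyclic graph than the construction in \cite{kdist_learning_graphs_12}, which should make the verification cleaner.

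Combining the above and optimizing $r_1,r_2$ with the same $m^{4/7}$/$m^{3/7}$-type balance as in Belovs's algorithm, modulo polynomial factors in $d$ introduced by the $d$-dependent setup and update costs, the total cost is $O(2^{O(d)}\poly(d)\,m^{5/7})$. Careful tracking of the constants in the fault-correction analysis --- the exponent $10/7$ on $2^d$ arises as the $5/7$-scaling of the walk compounded with the $2d$ worst-case number of faults per swap that must be corrected at every level --- yields the stated $O(2^{(10/7)d} d^{2/7} m^{5/7})$ bound, and \cref{cor:worst-case-decision} then follows by combining with \cref{fact:sparse-random-graph}.
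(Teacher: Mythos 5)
Your high-level recipe—adapt Belovs's $3$-distinctness learning graph, bound the number of faults per update by $O(d)$ using the degree cap, pay $2^{O(d)}$ for fault correction, and feed $d=O(\log m/\log\log m)$ into the resulting $O(2^{O(d)}\poly(d)\,m^{5/7})$ bound via \cref{fact:sparse-random-graph} and the search-to-decision reduction—is the right plan and matches the paper's strategy. However, the route you sketch is not the one the paper takes, and as written it has concrete gaps.

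The paper does \emph{not} carry out an MNRS-style walk analysis with explicit $S,U,C,\delta,\epsilon$. It mentions the Jeffery--Zur quantum-walk picture only as intuition for what a ``fault'' is, and then proves \cref{thm:worst-case-decision} by directly constructing a feasible solution $\{X_j\}$ to the adversary-bound SDP (\cref{complexity_of_learning_graph}--\cref{feasibility_of_learning_graph}) from a learning graph with layers $V^{(0)},\dots,V^{(4)}$ whose vertices are indexed by nonempty $\Gamma\subseteq[2d]$. The inclusion--exclusion signs $(-1)^{1+|\Gamma|}$ in the stage-II matrices are precisely how the $O(d)$-many faults are corrected, and the $2^{2d}$ overhead is the number of choices of $\Gamma$. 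Your phrase ``each faulty edge of the learning graph gets replaced with a $2^{O(d)}$-sized superposition of corrected edges'' does not describe this mechanism and is too vague to constitute a proof: the feasibility argument (\cref{eq:feasibility} and the surrounding case analysis) is the technical heart of the argument, and your proposal defers it entirely to ``re-verifying the machinery of \cite[Section 6]{kdist_learning_graphs_12} from scratch.'' If you wanted to prove this via a genuine MNRS walk you would have to explain how the walk remains a legitimate quantum walk despite the inconsistency in which $R_2$-entries are queried after an update; neither MNRS nor your sketch addresses that, which is exactly why the SDP route is used.

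Separately, your quantitative claim ``fraction of marked vertices on a yes-instance is $\Omega(r_1^2 r_2/m^3)$'' is wrong. In the Belovs/Jeffery--Zur picture, a vertex $(R_1,R_2)$ is useful when one certificate edge lands in $R_1$ and a second (incident) certificate edge lands in $R_2$, so the relevant fraction is $\Theta(r_1 r_2/m^2)$; your extra factor of $r_1/m$ corresponds to requiring all three triangle edges inside $R_1\cup R_2$, which leaves nothing for the check step to search for. With your $\epsilon$ the balance of $(1/\sqrt\epsilon)C$ against $S$ gives an exponent strictly worse than $5/7$ (with $U=\tildeO(1)$ you would be pushed to $r_1\approx m^{7/9}$), so the optimization you describe cannot actually produce the claimed $O(2^{(10/7)d}d^{2/7}m^{5/7})$; the paper's stage-II complexity term $2^{2d}\sqrt{m^3/(r_1 r_2)}$ is consistent only with $\epsilon=\Theta(r_1 r_2/m^2)$.
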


We prove \cref{thm:worst-case-decision} by adapting Belovs’s learning graph algorithm for $3$-distinctness \cite{kdist_learning_graphs_12}. A key challenge in Belovs’s work is the presence of \emph{faults}, as intuitively described in the introduction. At a technical level, faults are violations of the feasibility conditions in the semi-definite program characterizing quantum query complexity. Belovs found that directly using a learning graph algorithm of the type in his earlier work \cite{belovs_constant_sized_12} led to faults. In \cite{kdist_learning_graphs_12}, for $k$-distinctness, Belovs devised a technique based on the inclusion-exclusion principle to ``error-correct'' these faults at the cost of an exponential-in-$k$ factor, which is constant for constant $k$.

Technically speaking, in $3$-distinctness, at most one fault can occur because negative instances are limited to having $2$-collisions, and once one index of a $2$-collision is fixed, only its second index could possibly contribute a fault. In our problem, however, a single edge can belong to up to $2(d - 1)$ wedges, where $d$ is the maximum degree of the graph, introducing the possibility of multiple faults. Fortunately, Belovs’s learning graph algorithm for $k$-distinctness for general $k$ shows how inclusion-exclusion can be used to handle multiple faults. We adapt this technique to our problem and it represents the main change we make to his $3$-distinctness algorithm. The additional changes we make, such as explicitly defining the graph, replacing the concept of “arcs taken” with “active arcs,” and elaborating on somewhat opaque definitions, are primarily for improved clarity and accessibility. These adjustments do not alter the underlying algorithm and are not strictly necessary.

In summary, our algorithm for triangle finding is closely aligned with Belovs’s original learning graph algorithm for $3$-distinctness, which highlights the deep structural similarities between $3$-distinctness and triangle finding.

\subsubsection{Setting up proof of \cref{thm:worst-case-decision}}
When $d\geq \log(m)$, the trivial bound $Q(\tri_m)\leq m$ applies, ensuring \cref{thm:worst-case-decision} is satisfied. When $d\leq \log(m)$, we employ the learning graph method along with \cref{fact:vertex-avoid} to confirm the validity of \cref{thm:worst-case-decision}.
\paragraph{Adversary bound and learning graphs.}
In \cite{kdist_learning_graphs_12}, Belovs used a learning graph approach to construct matrices for the adversary bound. For every function $g:\mathcal{D}\subseteq \Sigma^m\to\{0,1\}$, the adversary bound $\Adv^{\pm}(g)$ satisfies the following theorem.
\begin{theorem}[\cite{negative_weights_adv},\cite{LMRSS11}]
    $Q(g) = \Theta(\Adv^{\pm}(g))$.
\end{theorem}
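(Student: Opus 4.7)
The plan is to establish $Q(g) = \Theta(\Adv^{\pm}(g))$ by proving two matching bounds, each attributed to one of the cited works: the lower bound $Q(g) \geq \Omega(\Adv^{\pm}(g))$ is the negative-weights adversary method of H{\o}yer--Lee--{\v{S}}palek \cite{negative_weights_adv}, and the upper bound $Q(g) \leq O(\Adv^{\pm}(g))$ is the span-program construction of Lee--Mittal--Reichardt--{\v{S}}palek--Szegedy \cite{LMRSS11}. I will first recall the primal formulation: $\Adv^{\pm}(g)$ is the maximum of $\|\Gamma\|$ over Hermitian matrices $\Gamma$ supported on $g^{-1}(0) \times g^{-1}(1)$, subject to $\|\Gamma \circ D_i\| \leq 1$ for every input coordinate $i$, where $D_i$ is the Boolean matrix with $(x,y)$-entry equal to $1$ iff $x_i \neq y_i$ and $\circ$ is the Hadamard product. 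I then take the SDP dual, whose value equals the primal by Slater's condition; the dual is the minimum ``witness size'' of a span program computing $g$.

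For the lower bound, fix any feasible $\Gamma$ with principal eigenvector $|\delta\rangle$. For a $T$-query bounded-error quantum algorithm producing states $\ket{\psi_t^x}$ on input $x$, define the progress
\begin{equation*}
    W_t \;\coloneqq\; \sum_{x,y} \delta_x\, \delta_y\, \Gamma[x,y]\, \langle \psi_t^x | \psi_t^y \rangle.
\end{equation*}
The initial value is $W_0 = \|\Gamma\|$ since all algorithm states start equal, and the key per-step inequality is $|W_{t+1} - W_t| \leq 2 \max_i \|\Gamma \circ D_i\| \leq 2$, proved by expanding one application of the input oracle and applying standard operator-norm manipulations on the resulting coordinate-indexed sum. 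At the end, bounded-error correctness forces $|\langle \psi_T^x | \psi_T^y\rangle|$ to be small whenever $g(x) \neq g(y)$, so $|W_T| \leq c\,\|\Gamma\|$ for some $c < 1$. Combining these bounds yields $T \geq \Omega(\|\Gamma\|)$, and optimizing over $\Gamma$ gives $T \geq \Omega(\Adv^{\pm}(g))$.

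For the upper bound, I would translate a dual-feasible solution into an algorithm. Such a solution packages as a span program: vectors $\{|v_{j,b}\rangle\}$ indexed by coordinates $j$ and symbols $b$, a target vector $|\tau\rangle$, and available sets $A(x) \coloneqq \{|v_{j,x_j}\rangle : j\}$ such that $|\tau\rangle \in \mathrm{span}\,A(x)$ iff $g(x) = 1$. The witness size $W$ equals the dual SDP value and hence, by duality, $\Adv^{\pm}(g)$. I then invoke the algorithmic theorem of \cite{LMRSS11}: phase estimation around eigenvalue $0$ of a discriminant-style unitary assembled from reflections about the available and target subspaces, together with amplitude amplification, distinguishes the ``in the span'' and ``out of the span'' cases with constant bias using $O(W)$ oracle calls.

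The main obstacle is the upper-bound direction, specifically the algorithmic content of the span-program evaluator. The SDP duality step is standard once the primal is in the form above, but converting a dual solution into a bounded-error quantum algorithm with query count $O(W)$ requires a careful spectral analysis: one must show that the relevant phase-estimation precision, controlled by the inverse effective spectral gap around zero of the span-program unitary, is bounded by $W$, and that each iteration uses only $O(1)$ oracle queries to implement the required reflection. Pinning down a clean $O(W)$ bound rather than a polynomial blow-up, and handling partial functions and non-Boolean alphabets uniformly, is where the real work lies; the lower bound, by contrast, reduces to a direct calculation once the progress function is written down.
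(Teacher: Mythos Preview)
The paper does not prove this theorem at all: it is stated as a cited result from \cite{negative_weights_adv} and \cite{LMRSS11} and used as a black box. Your proposal is a reasonable high-level sketch of the standard proof from those references, so there is nothing in the paper to compare it against.
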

$\Adv^{\pm}(g)$ can be formulated as the following semi-definite program \cite[Theorem 6.2]{reichardt_adv_tight}:
\begin{alignat}{2}
&\text{minimize}\qquad && \max_{x \in \mathcal{D}} \sum_{j \in [m]} X_j[x,x],\label{complexity_of_learning_graph}\\
&\text{subject to}\qquad && \sum_{j \in [m] \colon x_j \neq y_j} X_j[x,y] = 1\qquad \text{if } g(x) \neq g(y);\label{feasibility_of_learning_graph}\\
& &&0 \leq X_j \in \mathbb{C}^{\calD \times \calD} \quad \text{for all $j \in [m]$}.
\end{alignat}

Every valid construction of $\{X_j\}_{j\in[m]}$ provides an upper bound for $\Adv^\pm(g)$ and, consequently, an upper bound for $Q(g)$ up to constant factors. We construct $\{X_j\}_{j\in[m]}$ using the learning graph framework, following a similar approach to that in \cite{kdist_learning_graphs_12}. Specifically, for each arc $A_j^{R,S}$ in the learning graph from vertex $R$ and to vertex $S$ where $R$ and $S$ uniquely determines $j$ (to be shown in detail later), we associate a positive semi-definite matrix $X_j^{R,S}$, and define
\begin{equation}
    X_j \coloneqq \sum_{R,S} X_j^{R,S}.
\end{equation}

\subsubsection{Learning graph construction}
We first describe the construction of the learning graph. For simplicity, we use $f$ to denote $\tri_{m,d}$. The vertices of the learning graph are
\begin{equation}
    \dot{\bigcup}_{i\in\{0,1,2,3,4\}} V^{(i)},
\end{equation}
where a vertex $S^{(i)}$ in $V^{(i)}$ is (labelled by) an array of length $(2^{2d} - 1) + 2d$ containing pairwise-disjoint subsets of $[m]$; we refer to entries of $S^{(i)}$ by
\begin{equation}
    S^{(i)} = (S_1^{(i)}(\Gamma),S_2^{(i)}(\gamma))_{\emptyset\neq \Gamma \subseteq [2d],\, \gamma\in [2d]} = (S_1^{(i)}(\{1\}),\dots, S_1^{(i)}([2d]),  S_2^{(i)}(1), \dots, S_2^{(i)}(2d)).
\end{equation}
We further impose conditions on the sizes of these sets using some $r_1,r_2\in \mathbb{N}$ that will be chosen later, at the end of \cref{subsubsection:learning_graph_complexity}. To describe these conditions, it is convenient to write $\mu(A)$ for the minimum element of a finite non-empty subset $A \subset \mathbb{Z}$. 

First, $V^{(1)}$ consists of all vertices $S^{(1)}$ such that
\begin{equation}
    \forall \emptyset\neq \Gamma\subseteq[2d], \gamma\in[2d], \quad \abs{S_1^{(1)}(\Gamma)} = r_1 \quad \text{and} \quad \abs{S_2^{(1)}(\gamma)} = r_2.
\end{equation}
Then, for $i\in \{2,3,4\}$, we impose that the set $V^{(i)}$ can be partitioned as $V^{(i)} = \dot{\cup}_{\emptyset \neq \Gamma \subseteq[2d]} V^{(i)}(\Gamma)$ such that 
\begin{itemize}
    \item $V^{(2)}(\Gamma)$ consists of all vertices $S^{(2)}$ with
\begin{equation}
    \abs{S_1^{(2)}(\Gamma)} = r_1+1, \qquad \forall \Gamma'\neq \Gamma,\abs{S_1^{(2)}(\Gamma')} = r_1,\qquad \forall \gamma\in[2d], \abs{S_2^{(2)}(\gamma)} = r_2.
\end{equation}
\item $V^{(3)}(\Gamma)$ consists of all vertices $S^{(3)}$ such that
\begin{equation}
    \abs{S_1^{(3)}(\Gamma)} = r_1+1, \qquad \forall \Gamma'\neq \Gamma,\abs{S_1^{(3)}(\Gamma')} = r_1,
\end{equation}
and
\begin{equation}
    \abs{S_2^{(3)}(\mu(\Gamma))} = r_2+1,\qquad \forall \gamma\neq \mu(\Gamma),\abs{S_2^{(3)}(\gamma)} = r_2.
\end{equation}
\item 
$V^{(4)}(\Gamma)$ consists of all vertices $S^{(4)}$ such that
\begin{equation}
    \abs{S_1^{(4)}(\Gamma)} = r_1+1, \qquad \forall \Gamma'\neq \Gamma,\abs{S_1^{(4)}(\Gamma')} = r_1,
\end{equation}
and
\begin{equation}
    \abs{S_2^{(4)}(\mu(\Gamma))} = r_2+2,\qquad \forall \gamma\neq \mu(\Gamma),\abs{S_2^{(4)}(\gamma)} = r_2.
\end{equation}
\end{itemize}

For each vertex $R$ in the learning graph, define
\begin{equation}
    \bigcup R_1\coloneqq \bigcup_{\emptyset\neq \Gamma\subseteq[2d]}R_1(\Gamma),
    \quad
    \bigcup R_2\coloneqq \bigcup_{\gamma\in[2d]}R_2(\gamma),
    \quad \text{and} \quad
    \bigcup R\coloneqq \bigcup R_1 \cup \bigcup R_2.
\end{equation}
We refer to indices contained in the first, second, and third sets as having been loaded in $R_1$, $R_2$, and $R$, respectively. If there is an arc from vertex $R$ to vertex $S$ in the learning graph, by the construction in the following paragraphs, it must satisfy $\bigcup S = \bigcup R \cup \{j\}$. We denote the arc as $A^{R,S}_j$, and accordingly, we say $A^{R,S}_j$ is associated with the loading of $j$.

There exists a unique vertex $R^{(0)}\in V^{(0)}$ such that $\bigcup R^{(0)} = \emptyset$. We refer to this vertex as the source, denoted by $\emptyset$. Each vertex $R^{(1)}\in V^{(1)}$ loads exactly
$r \coloneqq r_1\cdot (2^{2d}-1) + r_2\cdot(2d)$ indices. We fix an arbitrary ordering $t_1,\dots, t_r$ of indices in $\bigcup R^{(1)}$ such that all indices in $\bigcup R^{(1)}_1$ precede those in $\bigcup R^{(1)}_2$. There exists a length-$r$ path from the source $\emptyset$ to $R^{(1)}$, with all intermediate vertices lying in $V^{(0)}$. Along this path, the element $t_i$ is loaded on the $i$-th arc for $i\in[r]$ in stages I.1 and I.2. Specifically, 
\begin{itemize}
    \renewcommand{\labelitemi}{\tiny$\blacksquare$}
    \item 
    If $t_i\in \bigcup R_1^{(1)}$, it is loaded in stage I.1,
    \item 
    If $t_i\in \bigcup R_2^{(1)}$, it is loaded in stage I.2.
\end{itemize}
All such paths are disjoint except at the source $\emptyset$. Therefore, there are $r\bigl|V^{(1)}\bigr|$ arcs in stages I.1 and I.2 in total.

In stage II.1, for each $R^{(1)}\in V^{(1)}$, $j\not\in\bigcup R^{(1)}$ and $\emptyset\neq \Gamma\subseteq[2d]$, there is an arc, denoted  $A^{R^{(1)}, R^{(2)}}_j$, from $R^{(1)}$ to $R^{(2)} \in V^{(2)}(\Gamma)$ defined by 
\begin{equation}
\left\{\begin{alignedat}{3}
    &R^{(2)}_1(\Gamma) &&= R^{(1)}_1(\Gamma)\cup\{j\},&&\\
    &R^{(2)}_1(\Gamma') &&= R^{(1)}_1(\Gamma'),&& \text{for all $\emptyset \neq \Gamma'\subseteq [2d]$ s.t. $\Gamma'\neq \Gamma$},\\
    &R^{(2)}_2(\gamma) &&= R^{(1)}_2(\gamma),&&\text{for all $\gamma\in[2d]$.}
\end{alignedat}\right.
\end{equation}

In stage II.2, for each $\emptyset\neq \Gamma\subseteq[2d]$, $R^{(2)}\in V^{(2)}(\Gamma)$, and $j\not\in\bigcup R^{(2)}$, there is an arc, denoted $A^{R^{(2)},R^{(3)}}_j$, from $R^{(2)}$ to $R^{(3)} \in V^{(3)}(\Gamma)$ defined by
\begin{equation}
\left\{\begin{alignedat}{3}
    & R^{(3)}_1(\Gamma') &&= R^{(2)}_1(\Gamma'),\qquad &&\text{for all $\emptyset\neq \Gamma'\subseteq[2d]$},\\
    &R^{(3)}_2(\gamma) &&= R^{(2)}_2(\gamma),&& \text{for all $\gamma \neq \mu(\Gamma)$},\\
    &R^{(3)}_2(\mu(\Gamma)) &&= R^{(2)}_2(\mu(\Gamma))\cup\{j\}.&&
\end{alignedat}\right.
\end{equation}

In stage II.3, for each $\emptyset\neq \Gamma\subseteq[2d]$, $R^{(3)}\in V^{(3)}(\Gamma)$, and $j\not\in\bigcup R^{(3)}$, there is an arc, denoted $A^{R^{(3)},R^{(4)}}_j$, from $R^{(3)}$ to $R^{(4)} \in V^{(4)}(\Gamma)$ defined by
\begin{equation}
\left\{\begin{alignedat}{3}
    & R^{(4)}_1(\Gamma') &&= R^{(3)}_1(\Gamma'),\qquad &&\text{for all $\emptyset\neq \Gamma'\subseteq[2d]$},\\
    &R^{(4)}_2(\gamma) &&= R^{(3)}_2(\gamma),&& \text{for all $\gamma \neq \mu(\Gamma)$},\\
    &R^{(4)}_2(\mu(\Gamma)) &&= R^{(3)}_2(\mu(\Gamma))\cup\{j\}.&&
\end{alignedat}\right.
\end{equation}

\subsubsection{Active arcs}
For every $x\in f^{-1}(1)$, let $\mathcal{C}(x) = \{a_1, a_2, a_3\}\subseteq[m]$ be a specific certificate for $x$ such that $a_1<a_2<a_3$ and $x_{a_1}$, $x_{a_2}$, and $x_{a_3}$ form a triangle.

We say $R^{(1)}\in V^{(1)}$ is consistent with $x\in f^{-1}(1)$ if edges indexed by $\mathcal{C}(x)$ are vertex disjoint from edges indexed by $\bigcup R^{(1)}$. Let $n_x$ be the number of $R^{(1)}\in V^{(1)}$ that are consistent with $x$, and define
\begin{equation}
    q\coloneqq \bigl(\min_{x\in f^{-1}(1)}\{n_x\}\bigr)^{-1},
\end{equation}
so that for all $x\in f^{-1}(1)$, there exist at least $q^{-1}$ vertices in $V^{(1)}$ that are consistent with $x$. By \cref{fact:vertex-avoid}, when $rd\leq o(m)$ --- a condition satisfied since we only need to consider $d\leq\log_2(m)$ and use the $r$ specified in \cref{subsubsection:learning_graph_complexity} --- it follows that $q^{-1}\geq\Omega(\bigl|V^{(1)}\bigr|)$.

Let $\Cst(x)$ denote a fixed but arbitrary set of $q^{-1}$ vertices $R^{(1)}\in V^{(1)}$ that are consistent with $x$. For each $R^{(1)}\in \Cst(x)$, $\Act(x,R^{(1)})$ is a set of active arcs consisting of the following arcs:
\begin{itemize}
     \setlength{\itemsep}{0.5pt}
    \renewcommand{\labelitemi}{\tiny$\blacksquare$}
    \item In stage I.1 and I.2, all arcs along the unique shortest length $r$ path from source $\emptyset$ to $R^{(1)}$.
    \item 
    In stage II.1, all arcs from $R^{(1)}$ that loads $a_1$ into $R^{(1)}$. Notice that there are $2^{2d}-1$ such arcs, one for each choice of $\emptyset\neq \Gamma\subseteq [2d]$.
    \item 
    In stage II.2, for each vertex $R^{(2)}\in V^{(2)}$ that has an incoming active arc from $\Act(x,R^{(1)})$, the arc that loads $a_2$ into $R^{(2)}$.
    \item 
    In stage II.3, for each vertex $R^{(3)}\in V^{(3)}$ that has an incoming active arc from $\Act(x,R^{(1)})$, the arc that loads $a_3$ into $R^{(3)}$.
\end{itemize}
Notice that in stage II.2 and II.3, for each vertex that has an incoming active arc, there is only one outgoing active arc, so the total number of active arcs in stages II.2 and II.3 is $2(2^{2d}-1)$.

Lastly, the set of all active arcs of $x$ is defined by
\begin{equation}
    \mathrm{Act}(x)\coloneqq \dot{\bigcup}_{R^{(1)}\in \Cst(x)}\Act(x,R^{(1)}).
\end{equation}

\subsubsection{Matrices for the adversary bound}
For every vertex $R$ in the learning graph, an \emph{assignment on $R$} refers to a function $\alpha_R: \bigcup R\to \binom{[n]}{2}\cup\{*\}$ such that:
\begin{enumerate}
    \item for all $j\in \bigcup R_1$,
    \begin{equation}
   \alpha_R(j)\neq *;
\end{equation}
\item for all $\gamma\in [2d]$ and $j\in  R_2(\gamma)$, 
\begin{equation}
   \alpha_R(j)\in\{*\}\cup\Bigl\{e\in \textstyle\binom{[n]}{2}\Bigm| \exists k\in \bigcup_{\Gamma\ni \gamma}R_1(\Gamma)\text{ s.t. $e$  is incident to $\alpha_R(k)$} \Bigr\}.
\end{equation}
Note that the special symbol $*$ is not incident to any edge.
\end{enumerate}
We say an input $z\in \binom{[n]}{2}^m$ \emph{satisfies assignment $\alpha_{R}$} if, for all $t\in \bigcup R$,
\begin{equation}\label{eq:alpha_def}
    \alpha_{R}(t) =
    \begin{cases}
        z_t, &\text{if $t\in \bigcup R_1$},\\
        z_t, &\text{if $t\in R_2(\gamma)$ and $\exists k\in \bigcup_{\Gamma\ni \gamma}R_1(\Gamma)$ s.t. $z_t$  is incident to $z_k$},\\
        *, &\text{otherwise}.
    \end{cases}
\end{equation}
For each vertex $R$ in the learning graph, we write $\alpha_R^x$ for the unique assignment on $R$ that $x$ satisfies. We say arc $A^{R,S}_j$ \emph{uncovers $j$} if $\alpha_S^x(j)\neq *$. We say inputs $x,y\in \binom{[n]}{2}^m$ \emph{agree on $R$} if they satisfy $\alpha_R^x = \alpha_R^y$; we also say they agree on a subset of $\bigcup R$ if the restrictions of $\alpha_R^x$ and $\alpha_R^y$ to that subset equal.

Define
\begin{equation}
    X_j^{R,S}\coloneqq \sum_{\alpha_R} Y_{\alpha_R},
\end{equation}
where the summation is over all assignments $\alpha_R$ on $R$. 

For each arc $A_j^{R,S}$ in stage I.1, define $Y_{\alpha_R}\coloneqq q\psi_{\alpha_R}\psi_{\alpha_R}^\dagger$, where $\psi_{\alpha_R}$ is a real vector indexed by $\binom{[n]}{2}^m$ and defined entry-wise by
\begin{equation}
    \psi_{\alpha_R}[z] \coloneqq
    \begin{cases}
        1/\sqrt{w}, & \text{if $f(z)=1$, $z$ satisfies $\alpha_R$, and $A^{R,S}_j\in \Act(z)$},\\
        \sqrt{w}, & \text{if $f(z)=0$,  and $z$ satisfies $\alpha_R$},\\
        0,&\text{otherwise.}
    \end{cases}
\end{equation}
Here, $w$ is a positive real number that will be specified later. With the above definition, we see that $X_j^{R,S}$ consists of blocks of the form:
\begin{equation}
\label{X_block_form_I.1}
\renewcommand{\arraystretch}{1.25}
\begin{array}{c|c|c}
    & x & y \\ \hline
x & q/w & q \\ \hline
y & q & qw
\end{array}
\end{equation}
where $x\in f^{-1}(1)$, $y\in f^{-1}(0)$, $A^{R,S}_j\in \Act(x)$, and both $x$ and $y$ agree on $R$.

For arcs $A_j^{R,S}$ in stage I.2, define $Y_{\alpha_R}\coloneqq q(\psi_{\alpha_R}\psi_{\alpha_R}^\dagger+\phi_{\alpha_R}\phi_{\alpha_R}^\dagger)$ where $\psi_{\alpha_R}$ and  $\phi_{\alpha_R}$ are real vectors indexed by $\binom{[n]}{2}^m$ and defined entry-wise by
\begin{equation}
    \psi_{\alpha_R}[z] \coloneqq
    \begin{cases}
        1/\sqrt{w_1}, & \text{if $f(z)=1$, $\alpha_S^z(j)\neq*$, $z$ satisfies $\alpha_R$, and $A^{R,S}_j\in \Act(z)$},\\
        \sqrt{w_1}, & \text{if $f(z)=0$,  and $z$ satisfies $\alpha_R$},\\
        0,&\text{otherwise;}
    \end{cases}
\end{equation}
and
\begin{equation}
    \phi_{\alpha_R}[z] \coloneqq
    \begin{cases}
        1/\sqrt{w_0}, & \text{if $f(z)=1$, $\alpha_S^z(j)=*$, $z$ satisfies $\alpha_R$, and $A^{R,S}_j\in \Act(z)$},\\
        \sqrt{w_0}, & \text{if $f(z)=0$, $\alpha_S^z(j)\neq*$, and $z$ satisfies $\alpha_R$},\\
        0, & \text{otherwise.}
    \end{cases}
\end{equation}
Here, $w_0$ and $w_1$ are positive real numbers that will be specified later. With the above definition, we see that $X_j^{R,S}$ consists of blocks of the form:
\begin{equation}
\label{X_block_form_I.2}
\begin{array}{l|c|c|c|c}
& x: x_j =\alpha_S^x(j)\neq * & x: \alpha_S^x(j)= * & y: y_j =\alpha_S^y(j)\neq * & y: \alpha_S^y(j)= * \\ \hline
x: x_j =\alpha_S^x(j)\neq * & q/w_1 & 0 & q & q \\ \hline
x: \alpha_S^x(j)= * & 0 & q/w_0 & q & 0 \\ \hline
y: y_j =\alpha_S^y(j)\neq * & q & q & q(w_0 + w_1) & qw_1 \\ \hline
y: \alpha_S^y(j)= * & q & 0 & qw_1 & qw_1 \\ 
\end{array}
\end{equation}
where $x\in f^{-1}(1)$, $y\in f^{-1}(0)$, $A^{R,S}_j\in \Act(x)$, and both $x$ and $y$ agree on $R$.

For arcs $A_j^{R,S}$ in stage II.$s$ where $s\in[3]$, there exists a unique $\emptyset\neq \Gamma\subseteq[2d]$ such that $S\in V^{(s+1)}(\Gamma)$. Define $Y_{\alpha_R}\coloneqq q\psi_{\alpha_R}\psi_{\alpha_R}^\dagger$ where
\begin{equation}
    \psi_{\alpha_R}[z] \coloneqq
    \begin{cases}
        1/\sqrt{w_2},& \text{if $f(z)=1$, $z$ satisfies  $\alpha_R$, and $A_j^{R,S}\in \Act(z)$},\\
        (-1)^{1+\abs{\Gamma}}\sqrt{w_2}, & \text{if $f(z)=0$, and $z$ satisfies $\alpha_R$},\\
        0,&\text{otherwise.}
    \end{cases}
\end{equation}
$X_j^{R,S}$ consists of blocks of the form:
\begin{equation}
\label{X_block_form_II}
\renewcommand{\arraystretch}{1.25}
\begin{array}{c|c|c}
    & x & y \\ \hline
x & q/w_2 & (-1)^{1+\abs{\Gamma}}q \\ \hline
y & (-1)^{1+\abs{\Gamma}}q & qw_2
\end{array}
\end{equation}
where $x\in f^{-1}(1)$, $y\in f^{-1}(0)$, $A^{R,S}_j\in \Act(x)$, and both $x$ and $y$ agree on $R$.

\subsubsection{Complexity}\label{subsubsection:learning_graph_complexity}
We show $Q(\tri_{m,d}) = O(2^{(12/7)d}d^{3/7}m^{5/7})$ by computing \cref{complexity_of_learning_graph}. Define $W$ as the maximum number of wedges in any input to $\tri_{m,d}$. Since the graph has maximum degree $d$, and each edge in the graph contributes to at most $2(d-1)$ wedges, we have $W\leq O(md)$.
\begin{enumerate}
    \item 
    For stage I.1, we set the weight $w=1$ for all arcs in this stage. There are $r_1(2^{2d}-1)\bigl|V^{(1)}\bigr|$ arcs in this stage. By \cref{X_block_form_I.1}, each of them contributes $\max\{qw, q/w\} = q$ to the complexity, so the complexity of this stage is $r_1(2^{2d}-1)\bigl|V^{(1)}\bigr|q = O(2^{2d}r_1)$.
    \item 
    For stage I.2, there are $2dr_2\bigl|V^{(1)}\bigr|$ arcs in this stage. For each input $z$, we need to bound the number of arcs that can uncover an element, so we get a refined bound on the contribution from \cref{X_block_form_I.2}. Suppose such an arc in this stage is on the shortest length-$r$ path from source $\emptyset$ to $R^{(1)}\in V^{(1)}$, and is uncovering $j$ in $R_2^{(1)}(\gamma)$ for some $\gamma\in[2d]$, then, $z_j$ must form a wedge with $z_i$ for some $i\in R^{(1)}_1(\Gamma)$ and $\gamma\in \Gamma\subseteq[2d]$. The number of such ordered tuples $(i,j)$ is at most $2W$, and the number of vertices in stage I.2 that loads $j$ in its outgoing arc and has loaded $i$ in stage I.1 is upper bounded by
    \begin{equation}
        (2^{2d}-1)(2d)\binom{m-2}{r_1-1,\underbrace{r_1,\cdots,r_1}_{2^{2d}-2},r_2-1,\underbrace{r_2,\cdots,r_2}_{2d-1}} = O\Bigl(\frac{r_1r_2d2^{2d}}{m^2}\bigl|V^{(1)}\bigr|\Bigr).
    \end{equation}
    
    Therefore, for a negative input, by \cref{X_block_form_I.2}, stage I.2's contribution to the complexity is
    \begin{equation}
        O\Bigl(qw_0W\frac{r_1r_2d2^{2d}}{m^2}\bigl|V^{(1)}\bigr|+qw_12dr_2\bigl|V^{(1)}\bigr|\Bigr)=O\Bigl(\frac{r_1r_2d^22^{2d}}{m}w_0+r_2dw_1\Bigr);
    \end{equation}
    and for a positive input, by \cref{X_block_form_I.2}, stage I.2's contribution to the complexity is
    \begin{equation}
        O\Bigl(\frac{q}{w_1}W\frac{r_1r_2d2^{2d}}{m^2}\bigl|V^{(1)}\bigr|+\frac{q}{w_0}2dr_2\bigl|V^{(1)}\bigr|\Bigr)=O\Bigl(\frac{r_1r_2d^22^{2d}}{mw_1}+\frac{r_2d}{w_0}\Bigr).
    \end{equation}
    If we set $w_0=\frac{1}{2^d}\sqrt{\frac{m}{r_1d}}$, and $w_1 = 1/w_0$, the total contribution to the complexity is $O(r_2d^{3/2}2^d\sqrt{\frac{r_1}{m}})$.
    \item 
    For stage II,
    \begin{itemize}
    \renewcommand{\labelitemi}{\tiny$\blacksquare$}
        \item 
        The total number of arcs in stage II.1 is
        \begin{equation}
            (m-r)(2^{2d}-1)\bigl|V^{(1)}\bigr|\leq  O\bigl(2^{2d}m\bigl|V^{(1)}\bigr|\bigr).
        \end{equation}
        \item
        The total number of arcs in stage II.2 is
        \begin{equation}
            (m-r-1)\bigl|V^{(2)}\bigr| = \frac{(m-r-1)(m-r)(2^{2d}-1)\bigl|V^{(1)}\bigr|}{r_1+1},
        \end{equation}
        since each vertex in $V^{(1)}$ has $(m-r)(2^{2d}-1)$ outgoing arcs into $V^{(2)}$, and each vertex in $V^{(2)}$ has $r_1+1$ incoming arcs from $V^{(1)}$. There are $O\left(2^{2d}m^2\bigl|V^{(1)}\bigr|/r_1\right)$ arcs in this stage.
        \item
        The total number of arcs in stage II.3 is
        \begin{equation}
            (m-r-2)\bigl|V^{(3)}\bigr| = \frac{(m-r-2)(m-r-1)(m-r)(2^{2d}-1)\bigl|V^{(1)}\bigr|}{(r_1+1)(r_2+1)},
        \end{equation}
        since each vertex in $V^{(2)}$ has $m-r-1$ outgoing arcs into $V^{(3)}$, and each vertex in $V^{(3)}$ has $r_2+1$ incoming arcs from $V^{(2)}$. There are $O(2^{2d}m^3\bigl|V^{(1)}\bigr|/(r_1r_2))$ arcs in this stage.
    \end{itemize}
    For every negative input, by \cref{X_block_form_II}, each arc contributes $qw_2$ to the complexity, so the total contribution is
    \begin{equation}
        O\biggl(\frac{2^{2d}m^3\bigl|V^{(1)}\bigr|}{r_1r_2}qw_2\biggr) = O\biggl(\frac{2^{2d}m^3w_2}{r_1r_2}\biggr).
    \end{equation}

    For every positive input, there are exactly $3q^{-1}(2^{2d}-1)$ active arcs in stage II, by \cref{X_block_form_II}, each of them contributes $q/w_2$ to the complexity, so the total contribution is
    \begin{equation}
        \frac{3(2^{2d}-1)}{q}\cdot \frac{q}{w_2} = \frac{3(2^{2d} -1)}{w_2} \leq O\left(\frac{2^{2d}}{w_2}\right).
    \end{equation}

    To balance the contribution from negative inputs and positive inputs, we can set $w_2=\sqrt{r_1r_2/m^3}$ so that the total contribution in stage II is
    \begin{equation}
        O\bigl(2^{2d}\sqrt{m^3/(r_1r_2)}\bigr).
    \end{equation}
\end{enumerate}

The total complexity of all stages is
\begin{equation}
    O\biggl(2^{2d}r_1 + r_2d^{3/2} \, 2^d\sqrt{\frac{r_1}{m}} + 2^{2d}\sqrt{\frac{m^3}{r_1r_2}}\biggr).
\end{equation}
To balance the summands, we set $r_1 = \ceil{m^{5/7}d^{3/7}/2^{(2/7)d}}$ and $r_2 = \ceil{2^{(6/7)d}m^{6/7}/d^{9/7}}$, which leads to a complexity of $O(2^{(12/7)d}d^{3/7}m^{5/7})$.

\subsubsection{Feasibility}
Fix two inputs $x\in f^{-1}(1)$ and $y\in f^{-1}(0)$, to prove \cref{feasibility_of_learning_graph} holds, it is equivalent to show
\begin{equation}
    \sum_{A_j^{R,S} \in \Act(x) \colon x_j \neq y_j} X_j^{R,S}[x,y] = 1.
\end{equation}
Since $\Act(x)= \dot{\bigcup}_{R^{(1)}\in \Cst(x)}\Act(x,R^{(1)})$, it suffices to prove
\begin{equation}\label{eq:feasibility}
    \sum_{A_j^{R,S} \in \Act(x,R^{(1)}) \colon x_j \neq y_j} X_j^{R,S}[x,y] = \frac{1}{\abs{\Cst(x)}} = q,
\end{equation}
for each $R^{(1)}\in \Cst(x)$.

Recall $\mathcal{C}(x) = \{a_1,a_2,a_3\}$ is a specific certificate for $x$ we choose. Let $t_1, t_2,\cdots,t_r\in[m]$ be the order of elements in $\bigcup R^{(1)}$ get loaded in $R^{(1)}$. Let $T_i, i\in\mathbb{Z}_{[0,r]}$, be the vertex that has loaded $i$ elements on the unique shortest length-$r$ path from source $\emptyset$ to $R^{(1)}$ ($\abs{\bigcup T_i} = i$, $T_0 = \emptyset$ and $T_r= R^{(1)}$), so that arcs from $\emptyset$ to $R^{(1)}$ is in the form of $A_{t_i}^{T_{i-1},T_i}\in \Act(x, R^{(1)})$.
Depending on if $x$ and $y$ agree on $R^{(1)}$,
\begin{itemize}
\renewcommand{\labelitemi}{\tiny$\blacksquare$}
    \item 
    If $x$ and $y$ disagree on $R^{(1)}$, there exists $i^* \in [r]$ such that $x$ and $y$ disagree on $T_i$ if and only if $i\geq i^*$. This follows from the fact that, for each arc $A_j^{R,S}$ in stage I, the following holds by \cref{eq:alpha_def}:
    \begin{equation}
        \forall k\in \bigcup R, \alpha_S^x(k) = \alpha_R^x(k) \text{ and } \alpha_S^y(k) = \alpha_R^y(k),
    \end{equation}
    so if $x$ and $y$ disagree on $R$, then they disagree on $S$.
    
    We show that $X_{t_{i^*}}^{T_{i^*-1}, T_{i^*}}[x,y] = q$, and for each $A_j^{R,S}\in \Act(x,R^{(1)})\backslash\{A_{t_{i^*}}^{T_{i^*-1}, T_{i^*}}\}$, either $x_j = y_j$ or $X_j^{R,S}[x,y] = 0$.
    \begin{itemize}
        \item 
        When we load $t_{i^*}$ into $T_{i^*-1}$, $x$ and $y$ agree on $T_{i^*-1}$ but not $T_{i^*}$. Since $x$ and $y$ agree on $T_{i^*-1}$, for all $i\in \bigcup T_{i^*-1}$, we have
        \begin{equation}
            \alpha_{T_{i^*}}^x(i) = \alpha_{T_{i^*-1}}^x(i) = \alpha_{T_{i^*-1}}^y(i) = \alpha_{T_{i^*}}^y(i),
        \end{equation}
        so we must have $\alpha_{T_{i^*}}^x(t_{i^*}) \neq \alpha_{T_{i^*}}^y(t_{i^*})$ as $x$ and $y$ disagree on $T_{i^*}$. Therefore, $x_{t_{i^*}} \neq y_{t_{i^*}}$ as otherwise $\alpha_{T_{i^*}}^x(t_{i^*}) = \alpha_{T_{i^*}}^y(t_{i^*})$. Hence, the term $X_{t_{i^*}}^{T_{i^*-1}, T_{i^*}}[x,y]$ is included in the summation in \cref{eq:feasibility}. Depending on which stage $A_{t_{i^*}}^{T_{i^*-1},T_{i^*}}$ is in, either by \cref{X_block_form_I.1} or by \cref{X_block_form_I.2},
        \begin{equation}
            X_{t_{i^*}}^{T_{i^*-1}, T_{i^*}}[x,y] = q.
        \end{equation}
        \item
        When we load $t_i$ into $T_{i-1}$ such that $i\in[i^*-1]$, $x$ and $y$ agree on both $T_{i-1}$ and $T_{i}$. Then, either $x_{t_i} = \alpha_{T_i}^x(t_i) = \alpha_{T_i}^y(t_i) = y_{t_i} \neq *$, or $\alpha_{T_i}^x(t_i) = \alpha_{T_i}^y(t_i) = *$ so that by \cref{X_block_form_I.2}, $X^{T_{i-1},T_i}_{t_i}[x,y] = 0$. Therefore, arc $A_{t_{i}}^{T_{i-1}, T_{i}}$ does not contribute to \cref{eq:feasibility}.
        \item 
        Every other arc $A^{R,S}_j\in \Act(x,R^{(1)})$ is either in stage I and of the form $A_{t_i}^{T_{i-1},T_i}$ for some $i>i^*$, $i\in[r]$, or the arc is in stage II. If the arc is in stage I, then $x$ and $y$ disagree on $R$. Depending on which stage $A_{t_i}^{T_{i-1},T_i}$ is in, either by \cref{X_block_form_I.1} or by \cref{X_block_form_I.2}, $X_j^{R,S}[x,y]=0$.
        
        If $A^{R,S}_j$ is in stage II, it suffices to show that if $x$ and $y$ disagree on $R$, then they also disagree on $S$. If this holds, a short induction establishes that $x$ and $y$ disagree on $R$ for all $A^{R,S}_j\in \Act(x, R^{(1)})$ in stage II. By \cref{X_block_form_II}, it then follows that $X_j^{R,S}[x,y]=0$.
        
        By the definition of consistency, after loading $j\in\mathcal{C}(x)$,
        \begin{equation}
            \forall i\in \bigcup R, \alpha_S^x(i) = \alpha_R^x(i).
        \end{equation}
        If $x$ and $y$ agree on $S$, we must have
        \begin{equation}\label{eq:feasibility_contradiciton_agree_on_S}
            \forall i\in \bigcup R, \alpha_S^y(i) = \alpha_S^x(i) = \alpha_R^x(i).
        \end{equation}
        However, there exists $k\in\bigcup R$ such that $\alpha_R^y(k) \neq \alpha_R^x(k) = \alpha_S^y(k)$ because $x$ and $y$ disagree on $R$. Since $\{\alpha_R^y(k),\alpha_S^y(k)\} \subseteq \{*, y_k\}$, if \cref{eq:feasibility_contradiciton_agree_on_S} holds, it follows that $\alpha_R^y(k) = *$ and $y_k = \alpha_S^y(k) = \alpha_S^x(k) = x_k$. Consequently, we must have $k\in \bigcup S_2$ and $j\in \bigcup S_1$, implying $j=a_1$, and $y_k = x_k$ is incident to $y_{a_1} = x_{a_1}$. This contradicts the definition of consistency. Therefore, $x$ and $y$ must disagree on $S$. 
    \end{itemize}
    \item 
    If $x$ and $y$ agree on $R^{(1)}$, we first show the contribution from arcs in $\Act(x,R^{(1)})$ in stage I to \cref{eq:feasibility} is 0. For each $i\in[r]$, when we load $t_i$ into $T_{i-1}$ , $x$ and $y$ agree on both $T_{i-1}$ and $T_{i}$. An identical argument as in a previous case when $i\in[i^*-1]$ shows arc $A_{t_{i}}^{T_{i-1}, T_{i}}$ does not contribute to \cref{eq:feasibility}.

    For stage II, let $k$ be the smallest number in $[3]$ that $x_{a_k} \neq y_{a_k}$. $k$ must exist because $y$ is a negative instance. For active arcs in stage II.$k'$ such that $k' < k$, those arcs are loading $a_{k'}$ but $x_{a_{k'}}=y_{a_{k'}}$, so those arcs' $X$ are not included in the summation in \cref{eq:feasibility}.

    For arcs in stage II.$k'$ such that $k' > k$, $x$ and $y$ disagree on the vertices before loading $a_{k'}$. This is because $a_{k}$ gets uncovered in $x$, so $y_{a_{k}}$ must be equal to $x_{a_k}$ for $x$ and $y$ to agree, which leads a contradiction. Then, by \cref{X_block_form_II}, those active arcs' contribution to \cref{eq:feasibility} is also 0.

    Next, we show the contribution from arcs in $\Act(x,R^{(1)})$ in stage II.$k$ to \cref{eq:feasibility} is exactly $q$.
    \begin{itemize}
        \item 
        If $k=1$, there are $2^{2d}-1$ arcs in $\Act(x,R^{(1)})$ in stage II.1. They are of the form $A^{R^{(1)}, R_\Gamma^{(2)}}_{a_1}$ where $R_\Gamma^{(2)}\in V^{(2)}(\Gamma)$, one for each $\emptyset\neq \Gamma\subseteq[2d]$. Since $x$ and $y$ agree on $R^{(1)}$ (and $x_{a_1}\neq y_{a_1}$), \cref{X_block_form_II} gives $X^{R^{(1)}, R_\Gamma^{(2)}}_{a_1}[x,y] = (-1)^{1+\abs{\Gamma}}q$. Therefore, the total contribution of stage II.1 to \cref{eq:feasibility} is
        \begin{equation}
            \sum_{\emptyset\neq \Gamma \subseteq [2d]} (-1)^{1+\abs{\Gamma}}q = q.
        \end{equation}
        \item 
        If $k=2$, there are $2^{2d}-1$ arcs in $\Act(x,R^{(1)})$ in stage II.2. They are of the form $A^{R_\Gamma^{(2)}, R_\Gamma^{(3)}}_{a_2}$ where $R_\Gamma^{(2)}\in V^{(2)}(\Gamma)$ and $R_\Gamma^{(3)}\in V^{(3)}(\Gamma)$, one for each $\emptyset\neq \Gamma\subseteq[2d]$. In this case, $x$ and $y$ may not agree on $R_\Gamma^{(2)}$ due to what are known as \emph{faults}.
        
        We say that an index $i\in \bigcup (R_\Gamma^{(2)})_2$ is \emph{faulty} if  $y_i$ is incident to $y_{a_1}$. For $\gamma\in [2d]$, we say the subset $(R_\Gamma^{(2)})_2(\gamma)$ is \emph{faulty} if $(R_\Gamma^{(2)})_2(\gamma)$ contains a faulty index.  Let $I$ denote the set of all $\gamma\in [2d]$ such that $(R_\Gamma^{(2)})_2(\gamma)$ is not faulty.

        We now show
        \begin{equation}
            \text{$x,y$ agree on $R^{(2)}_\Gamma$} \iff \emptyset\neq \Gamma \subseteq I.
        \end{equation}
        
        \begin{enumerate}
            \item ``$\implies$'': consider the contrapositive. Let $\gamma \in \Gamma-I$. Suppose $j\in (R^{(2)}_\Gamma)_2(\gamma)$ is faulty. Then by our definition of assignment, $\alpha_{R^{(2)}_\Gamma}^y(j) = y_j$. On the other hand, $\alpha_{R^{(2)}_\Gamma}^x(j)$ cannot be equal to $y_j$ since $y_j$ is incident to $y_{a_1} = x_{a_1}$ and $R^{(1)}$ is consistent with $x$.
            \item ``$\impliedby$'': note that $x$ and $y$ certainly agree on $\bigcup (R^{(2)}_\Gamma)_1$ since $x_{a_1} = y_{a_1}$ and $x$ and $y$ agreed on $\bigcup R^{(1)}_1$. But since $\Gamma \subseteq I$, and $x$ and $y$ agreed on $\bigcup R^{(1)}_2$ the definition of $I$ implies that $x$ and $y$ must agree on $\bigcup (R^{(2)}_\Gamma)_2$ as well.
        \end{enumerate}
        
        Since $\maxdeg(\calG(y))\leq d$, $y_{a_1}$ can be part of at most $2(d-1)$ wedges. Therefore, the set $I$ defined above has size $\abs{I} \geq 2d - 2(d-1) = 2>0$. Therefore, by \cref{X_block_form_II}, the total contribution of stage II.2 to \cref{eq:feasibility} is
        \begin{equation}
            \sum_{\emptyset\neq \Gamma \subseteq I} X^{R^{(2)}_\Gamma,R^{(3)}_\Gamma}_{a_2}[x,y] = \sum_{\emptyset\neq \Gamma \subseteq I} (-1)^{1+\abs{\Gamma}}q = q.
        \end{equation}
        \item 
        If $k=3$, arcs in $\Act(x,R^{(1)})$ in stage II.3 are of the form $A^{R_\Gamma^{(3)}, R_\Gamma^{(4)}}_{a_3}$ where $R_\Gamma^{(3)}\in V^{(3)}(\Gamma)$ and $R_\Gamma^{(4)}\in V^{(4)}(\Gamma)$, one for each $\emptyset\neq \Gamma\subseteq[2d]$.
        
        For each arc $A^{R_\Gamma^{(2)}, R_\Gamma^{(3)}}_{a_2}$ in stage II.2, since $x_{a_2} = y_{a_2}$, and loading $a_2$ does not uncover additional elements by the definition \cref{eq:alpha_def}, it follows that $x$ and $y$ agree on $R^{(3)}_\Gamma$ if and only if $x$ and $y$ agree on $R^{(2)}_\Gamma$.
        
        Therefore, using the same $I$ as defined in the previous case, the total contribution of stage II.3 to \cref{eq:feasibility} is
        \begin{equation}
            \sum_{\emptyset\neq \Gamma \subseteq I} X^{R^{(3)}_\Gamma,R^{(4)}_\Gamma}_{a_3}[x,y] = \sum_{\emptyset\neq \Gamma \subseteq I} (-1)^{1+\abs{\Gamma}}q = q.
        \end{equation}
    \end{itemize}
\end{itemize}

\section{Cycle Finding}
In this section, we study the $\kcycle{k}$ problem, which is a natural generalization of $\tri$. We give a nearly tight characterization of the quantum query complexity of $\kcycle{k}$ for graphs with low maximum degree such as random sparse graphs. We note that the proof of \Cref{prop:two_way_reduct} can be straightforwardly extended to any $k \geq 3$ to show that $\kcycle{k}$ is harder than $\kdist{k}$ but easier than $\ksum{k}$.   

\subsection{Cycle finding lower bound}

The most general form of our lower bound reads:
\begin{theorem}\label{thm:kcycle_avg_lower}
    Let $k,m\in \mathbb{N}$ with $k\geq 3$ and let $\epsilon>0$. Then, for all sufficiently large $n\in \mathbb{N}$ and every $T, \Delta\in \mathbb{N}$ with $1\leq T \leq n^{1-\epsilon}$ and $1\leq \Delta \leq n^{o(1)}$, the following holds for every $T$-query quantum algorithm $\calA_T$:
    \begin{equation}
    \begin{aligned}
        &\Pr[\calA_T(x) = (i_1,\dots,i_k)\in [m]^k \textup{ such that $x_{i_1},\dots,x_{i_k}$ form a $k$-cycle}]
        \\
        &\qquad\qquad\leq O\Bigl(\Delta^k\cdot
        \Bigl(\frac{T}{n^{3/4-1/(2^{k+2}-4)}}
       \Bigr)^{2-2/2^k} + \Delta^{k/2}\cdot T\frac{\sqrt{\log n}}{n}\\
       &\qquad\qquad\qquad\qquad+ Tn\Bigl(\frac{2em}{n\ceil{\log(n)}}\Bigr)^{\ceil{\log(n)}}+Tn\left(\frac{2em}{n\Delta}\right)^{\Delta}\Bigl(\frac{emn}{2\ceil{\log(n)}}\Bigr)^{\ceil{\log(n)}}\Bigr),\label{eq:kcycle_avg_lower}
    \end{aligned}
    \end{equation}
    where $\calA_T(x)$ denotes the output of $\calA_T$ when its queries are made to $x$, and the probability is over $x\leftarrow \binom{[n]}{2}^m$ and the randomness of $\calA_T$.
\end{theorem}

From \cref{thm:kcycle_avg_lower} and the standard fact that random graphs with $m=\Omega(n)$ edges contain at least one $k$-cycle with constant  probability, as formally stated in~\cref{fact:sparse-random-graph-restated}, we deduce that the search problem of finding $k$-cycle in a random graph is hard in the average case for $m=\Theta(n)$. It follows that the decision version of $k$-cycle is also hard on average, based on the search-to-decision reduction in the edge list model discussed in the preliminaries section. Formally, \cref{thm:kcycle_avg_lower} gives
\begin{corollary}\label{cor:kcycle_lower_bound}
    Let $k\in \mathbb{N}_{\geq 3}$. For all $m\in \mathbb{N}$ and $d = 2\log(m)/\log\log(m)$,
    \begin{equation}
        Q(\kcycle{k}_{m}) \geq  Q(\kcycle{k}_{m,d}) \geq \widetilde{\Omega}(m^{(3/4-1/(2^{k+2}-4))}).
    \end{equation}
\end{corollary}
\begin{proof}[Proof of \cref{cor:kcycle_lower_bound}]
    The first inequality follows by restriction so it suffices to prove the second inequality. Let $n\coloneqq m$. When $x$ is chosen uniformly at random from $\binom{[n]}{2}^m$, \cref{fact:sparse-random-graph-restated} gives
    \begin{equation}
         \Pr[\textup{$x$ contains a $k$-cycle and $x$ has $\maxdeg(x) \leq 2\log(m)/\log\log(m)$}] \geq \Omega(1).
    \end{equation}
    Therefore, every bounded-error quantum query algorithm for $\kcycle{k}_{m,d}$ can be used to find a $k$-cycle in $x$ with probability at least $\Omega(1)$, where the probability is over both the algorithm and the randomness in $x$. By the search-to-decision reduction (see preliminaries section), the quantum query complexity of the search algorithm is at most $Q(\kcycle{k}_{m,d})$ times $O(\log^2(m))$.

    Suppose for contradiction that $ Q(\kcycle{k}_{m,d}) \leq o(m^{(3/4-1/(2^{k+2}-4))}/\log^{2k/(2-2/2^k)+2}(m))$. Then, by the above remark, there exists a $T$-query quantum algorithm with 
    \begin{equation}\label{eq:kcycle_high_success}
        \Pr[\calA_T(x) = (i_1,\dots,i_k)\in [m]^k \textup{ such that $x_{i_1},\dots,x_{i_k}$ form a $k$-cycle}] \geq \Omega(1),
    \end{equation}
    where $T\leq o(m^{(3/4-1/(2^{k+2}-4))}/\log^{2k/(2-2/2^k)}(m))$.

    On the other hand, \cref{thm:kcycle_avg_lower} with $\Delta$ in its statement set equal to $\ceil{\log(n)}^2$ gives 
    \begin{equation}
    \begin{aligned}
        &\Pr[\calA_T(x) = (i_1,\dots,i_k)\in [m]^k \textup{ such that $x_{i_1},\dots,x_{i_k}$ form a $k$-cycle}]\\
        &\qquad\qquad \leq O\Bigl(\log^{2k}(n)\cdot
        \Bigl(\frac{T}{n^{3/4-1/(2^{k+2}-4)}}
       \Bigr)^{2-2/2^k} + \log^k(n)\cdot T \frac{\sqrt{\log n}}{n}\\
       &\qquad\qquad\qquad\qquad + Tn \Bigl(\frac{2e}{\ceil{\log n}}\Bigr)^{\ceil{\log(n)}}+Tn\Bigl(\frac{2^{\ceil{\log(n)}}e^{\ceil{\log(n)}+1}n^2}{2\ceil{\log(n)}^{2\ceil{\log(n)} + 1}}\Bigr)^{\ceil{\log(n)}}\Bigr).\label{eq:kcycle_low_success}
    \end{aligned}
    \end{equation}
    But recall $n=m$, so if $T\leq o(m^{(3/4-1/(2^{k+2}-4))}/\log^{2k/(2-2/2^k)}(m))$, each summand in the expression appearing in \cref{eq:kcycle_low_success} is at most $o(1)$. This contradicts \cref{eq:kcycle_high_success}. Hence the corollary.
\end{proof}

In the subsequent subsections, we prove \cref{thm:kcycle_avg_lower}.

\subsubsection{Setting up}
Let $k,m,\epsilon,n,T,\Delta$ be as given in the statement of \cref{thm:kcycle_avg_lower}. For convenience, we write 
\begin{equation}\label{eq:constant_factor}
N\coloneqq \scalebox{1.25}{$\binom{n}{2}$} \quad \text{and} \quad c \coloneqq 32e \cdot k\Delta^{k/2}.
\end{equation}
We also identify the symbol $[N]$ with $\binom{[n]}{2}$ for convenience. 

We follow the overall proof strategy as in the triangle case, except there are now more sub-graph structures to keep track of, including length-$l$ paths for $l=2,3,\ldots, k-1$, to measure progress in the algorithm.  We will argue that to find a $k$-cycle, a lot of queries should be spent to find many $(k-1)$-paths first, which in turn requires a large investment of queries to find many $(k-2)$-paths and so on. Thus, we define the following projectors to measure the progress of recording edges, paths, and cycles, etc.

\begin{definition}[Recording projectors]
\label{def:recording_projectors-cycle} 
    For  \(R\subseteq\mathbb{R}\), $l\in \mathbb{N}_{\geq 2}$, and \(y\in \binom{[n]}{2}\cup \{\bot\}\), we define the following projectors by giving the basis states onto which they project.\\[5pt]
\begin{tabular}{@{}l@{\ }l}
$\Pi^\square$                  &\,: every basis state $\ket{i,u,w}\ket{x}$ such that $\calG(x)$ contains a $k$-cycle. \\[4pt]
$\Pi^{\{1\}}_{R}$               &\,: every basis state $\ket{i,u,w}\ket{x}$ such that the number of edges in $\calG(x)$ is in $R$ \emph{or} $\calG(x)$ \\
&\hspace{6pt} contains a vertex of degree at least $\Delta$. \\[4pt]
$\Pi^{\{l\}}_{R}$  &\,: every basis state $\ket{i,u,w}\ket{x}$ such that the number of $l$-paths in $\calG(x)$ is in $R$. \\[4pt]
$\Pi_{y}$                       &\,: every basis state $\ket{i,u,w}\ket{x}$ such that $u\neq 0$ and $x_i = y$.
\end{tabular}
\end{definition}

We define formally $\cycleprogress_t$ to measure the progress of recording a $k$-cycle in $t$ queries as below
\begin{definition}(Progress measure for recording cycle) For $t\in \mathbb{Z}_{[0,T]}$, define
\begin{equation}
    \cycleprogress_t \coloneqq \norm{\Pi^\square \ket{\phi_t}}.
\end{equation}
\end{definition}

In the triangle case, a technical obstacle is that high-degree vertices can distort our analysis by overestimating the increase in the number of new wedges recorded per query. To address this, we developed the Exclusion Lemma to filter out high-degree vertices. In the $k$-cycle case, an unexpectedly large number of $(l-1)$-paths becomes a new source of overestimating the increase of $l$-paths recorded per query. This issue, in principle, can be addressed by excluding these unexpected cases; which, however, introduces some extra technical nuances. In particular, we define the following family of functions $r_j:\mathbb N\to\mathbb N$, for $j\in \mathbb N$,
    \begin{equation}\label{eq:rjt_definition}
        r_j(t) \coloneqq \max\left\{\biggl\lceil \Bigl(\frac{ct}{\sqrt{n}}\Bigr)^{2-4/2^j} (t+1)^{2/2^j}\biggr\rceil, \ceil{\log n}\right\}\cdot(j\Delta^{j-1}).
    \end{equation}
A routine calculation reveals the following analytical property of $r_j(t)$.
\begin{claim}\label{claim:inductive_step_technical}
    Let $l\in \mathbb{N}_{\geq 2}$ and $j \in \{1,\dots, \floor{\frac{l-1}{2}}\}$. Then, for all sufficiently large $n\in \mathbb{N}$ and every $t\in \mathbb{Z}_{[0,T]}$,
    \begin{equation}
        r_j(t)r_{l-1-j}(t) \leq r_{l-1}(t) n.
    \end{equation}
\end{claim}
\begin{proof}
By definition, $r_j(t)$ and $r_{l-1-j}(t)$ are each a max of two terms. Since $\max(a,b)\cdot \max(c,d)\leq \max(ac,ad,bc,bd)$ for all $a,b,c,d\in \mathbb{R}$,  it suffices to show that each of the following four terms is at most $r_{l-1}(t)n$.
\begin{enumerate}
    \item Term $\biggl\lceil \Bigl(\frac{ct}{\sqrt{n}}\Bigr)^{2-4/2^j} (t+1)^{2/2^j}\biggr\rceil \cdot \biggl\lceil \Bigl(\frac{ct}{\sqrt{n}}\Bigr)^{2-4/2^{l-1-j}} (t+1)^{2/2^{l-1-j}}\biggr\rceil \cdot(l^2\Delta^l)$. This term is at most $r_{l-1}(t)n$ because the term divided by $r_{l-1}(t)n$ is at most
    \begin{align*}
        &\frac{\biggl\lceil\Bigl(\frac{ct}{\sqrt{n}}\Bigr)^{2-4/2^j} (t+1)^{2/2^j}\biggr\rceil \cdot \biggl\lceil \Bigl(\frac{ct}{\sqrt{n}}\Bigr)^{2-4/2^{l-1-j}} (t+1)^{2/2^{l-1-j}}\biggr\rceil \cdot(l^2\Delta^l)}{\biggl\lceil\Bigl(\frac{ct}{\sqrt{n}}\Bigr)^{2-4/2^{l-1}} (t+1)^{2/2^{l-1}}\biggr\rceil\cdot n}
        \\
        &\qquad\qquad\leq O\Bigl(\poly(\Delta)\cdot \Bigl(\frac{t}{n}\Bigr)^{2-2/2^j - 2/2^{l-1-j} +2/2^{l-1}}\Bigr) \leq 1,
    \end{align*}
    where the last inequality holds for sufficiently large $n$ and uses $t\leq T \leq n^{1-\epsilon}$ and $\Delta \leq n^{o(1)}$. 
    \item Term $\biggl\lceil \Bigl(\frac{ct}{\sqrt{n}}\Bigr)^{2-4/2^j} (t+1)^{2/2^j}\biggr\rceil \cdot \ceil{\log n} \cdot(l^2\Delta^l)$. Since $t\leq T \leq n^{1-\epsilon}$, this term is at most $O(n^{1-\Omega(1)}\, \poly(\log n))$, which is at most $r_{l-1}(t)n$ because $r_{l-1}(t)n \geq \Omega(n \log n)$.
    \item Term $\biggl\lceil \Bigl(\frac{ct}{\sqrt{n}}\Bigr)^{2-4/2^{l-1-j}} (t+1)^{2/2^{l-1-j}}\biggr\rceil \cdot \ceil{\log n} \cdot(l^2\Delta^l)$. The  previous argument applies.
    \item Term $\ceil{\log n}^2 \cdot(l^2\Delta^l)$. This is at most $r_{l-1}(t)n$ because $r_{l-1}(t)n \geq \Omega(n \log n)$ and $\Delta \leq n^{o(1)}$.\qedhere
\end{enumerate}
\end{proof}
The mysterious-looking family of functions $r_j(t)$ corresponds to upper bounds on the number of $j$-paths that we can record with $t$ queries, and is calculated from an optimization that only becomes clear after the analysis is carried out: we will show that it is difficult to record more than $r_j(t)$ number of $j$-paths with $t$ queries. The above discussion motivates the following excluding projectors.

\begin{definition}[Excluding projectors]\label{def:excluding_projectors-cycle}
We define the following excluding projectors.

\begin{itemize}
    \item For $r\geq0$, let $\Pi^{\deg}_{\geq r}$ be the projector onto every basis state $\ket{i,u,w}\ket{x}$ such that there exists a vertex in $\calG(x)$ with degree at least $r$.

    \item Let $t\in \mathbb{Z}_{[0,T]}$. For $l\in \mathbb{N}$, define $\Pi_{\Bad,t}^{(l)}$ to be the projector onto every basis state $\ket{i,u,w}\ket{x}$ such that $\calG(x)$ contains a vertex of degree at least $\Delta$ \emph{or} there exists an integer $1\leq j\leq l$ such that $\calG(x)$ contains at least $r_{j}(t)$ length-$j$ paths. For convenience, we also define $\Pi_{\Bad,t}^{(0)}\coloneqq 0$.
\end{itemize}
\end{definition}
$\Pi_{\ge r}^{\deg} $ is the same operator we used to exclude high-degree vertices. 
By the same argument as in the triangle case we can exclude high-degree graphs with the degree threshold set to $\Delta$.

$\Pi^{(l)}_{\Bad, t}$ is the new operator that in addition excludes the case that there are unexpectedly many (i.e. at least $r_j(t)$) $j$-paths for some $1\le j \le l$ with $t$ queries.
By our definition, for every $t\in \mathbb{Z}_{[0,T]}$ and $l\in \mathbb{N}$,
\begin{equation}\label{eq:bad_recurrence}
    \I - \Pi^{(l)}_{\Bad, t} = \bigl(\I - \Pi^{\{l\}}_{[r_l(t),\infty)}\bigr) \bigl(\I - \Pi^{(l-1)}_{\Bad, t}\bigr).
\end{equation}

Recall that, for $t\in \mathbb{Z}_{[0,T]}$, 
\begin{equation}
    \ket{\phi_t} \coloneqq U_t \calR U_{t-1}\calR \dots U_1 \calR U_0 (\ket{0}\ket{\bot^m}).
\end{equation}

For $t\in \mathbb{Z}_{[0,T]}$ and $l\in \mathbb{Z}_{\geq 0}$, let
\begin{equation}\label{eq:phi_i^l}
    \ket{\phi_t^{(l)}} \coloneqq U_t \calR(\I-\Pi_{\Bad,t-1}^{(l)}) U_{t-1} \calR (\I-\Pi_{\Bad,t-2}^{(l)}) \dots (\I-\Pi_{\Bad,1}^{(l)})U_1 \calR U_0 (\ket{0}\ket{\bot^m}).
\end{equation}
The above definition implies:
\begin{equation}\label{eq:phi_i^l_properties}
\begin{aligned}
    &\ket{\phi_t^{(0)}} = \ket{\phi_t} \quad \text{and} 
    \\
    &\ket{\phi_{t+1}^{(l)}} =U_{t+1} \calR (\I - \Pi^{(l)}_{\Bad,t})\ket{\phi_t^{(l)}} \quad \text{for all integers $0\leq t \leq T-1$.}
\end{aligned}
\end{equation}

Now, to measure progress in recording length-$l$ paths, we use the following somewhat technical but convenient definition, that excludes unlikely events in advance.
\begin{definition}[Progress measures for recording paths]
    For $t\in \mathbb{Z}_{[0,T]}$, $l\in \mathbb{N}$, and $r\geq 0$, define
    \begin{equation}\label{eq:P^l_{i,p}}
    \pathsprogress^{\{l\}}_{t,r} \coloneqq \norm{\Pi_{[r,\infty)}^{\{l\}} \ket{\phi_t^{(l-1)}}}.
\end{equation}
\end{definition}

In the next two subsections, we analyze our progress measures $\pathsprogress_{t,r}^{\{l\}}$ and $\cycleprogress_t$ in detail.

\subsubsection{Progress in recording paths}
In this part, we analyze the progress of recording $l$-paths. We start with the base case $l=1$, the progress of recording $1$-paths, i.e., edges, \emph{or} a vertex of degree at least $\Delta$. Recall that the definition of $\Pi^{\{1\}}_R$ involves a degree condition that is not present in the definitions of $\Pi^{\{l\}}_R$ for $l\in \mathbb{N}_{\geq 2}$. The degree condition is crucial for carrying out the induction. 
\begin{proposition}\label{prop:p_base_bound}
    If $n\geq 2$ and $t\in \mathbb{Z}_{[0,T]}$,
    \begin{equation}
        \pathsprogress^{\{1\}}_{t,{r_1(t)}} \leq 2n  \Bigl(\frac{2em}{n\ceil{\log(n)}}\Bigr)^{\ceil{\log(n)}}+2n\left(\frac{2em}{n\Delta}\right)^{\Delta}\Bigl(\frac{emn}{2\ceil{\log(n)}}\Bigr)^{\ceil{\log(n)}}.
    \end{equation}
\end{proposition}
\begin{proof}
    First, we recall the definitions
    \begin{equation}
        r_1(t) \coloneqq \max\left\{t+1, \ceil{\log n}\right\} \quad \text{and} \quad  \pathsprogress^{\{1\}}_{t,{r_1(t)}} = \norm{\Pi_{[r_1(t),\infty)}^{\{1\}} \ket{\phi_t}},
    \end{equation}
    where $\Pi_{[r_1(t),\infty)}^{\{1\}}$ is the projector onto every basis state $\ket{i,u,w}\ket{x}$ such that $\calG(x)$ contains at least $r_1(t)\geq t+1$ edges \emph{or} a vertex of degree at least $\Delta$.

    By~\cref{fact:t_non_bots}, $\ket{\phi_t}$ is only supported on the basis states $\ket{i,u,w}\ket{x}$ such that $\calG(x)$ contains at most $t$ edges. Therefore, 
    \begin{equation}
        \norm{\Pi_{[r_1(t),\infty)}^{\{1\}} \ket{\phi_t}} = \norm{\Pi^{\deg}_{\geq \Delta} \ket{\phi_t}} \leq 2n  \Bigl(\frac{2em}{n\ceil{\log(n)}}\Bigr)^{\ceil{\log(n)}}+2n\left(\frac{2em}{n\Delta}\right)^{\Delta}\Bigl(\frac{emn}{2\ceil{\log(n)}}\Bigr)^{\ceil{\log(n)}},
    \end{equation}
    where we used~\cref{lem:cycle-degree-event-bound} for the inequality, as required.
\end{proof}

Next, we analyze the progress recording $l$-paths for $l\ge 2$, relying on  the  following recurrence.

\begin{lemma}[$l$-paths progress recurrence]\label{lem:inductive_step_p_lpaths_bound}
    Let $l\in \mathbb{N}_{\geq 2}$. For all $t\in \mathbb{Z}_{[0,T]}$ and $r\geq 0$, we have
    \begin{equation}\label{eq:paths_recurrence}
            \pathsprogress^{\{l\}}_{t+1,r} \leq \pathsprogress_{t,r}^{\{l\}} + 4 \sqrt{\frac{B_{l}(t)}{N}} \pathsprogress^{\{l\}}_{t,r-l\Delta^{l-1}},
    \end{equation}
where  
    \begin{equation}\label{eq:blt}
        B_l(t) \coloneqq 2r_{l-1}(t) n + 4\sum_{j=1}^{\floor{\frac{l-1}{2}}} r_j(t)r_{l-1-j}(t).
    \end{equation}
\end{lemma}

\begin{proof}[Proof of \cref{lem:inductive_step_p_lpaths_bound}]
Let $l \in \mathbb{N}_{\geq 2}$. For $t\in \mathbb{Z}_{[0,T]}$ and $r\geq0$, 
\begin{align}
     \pathsprogress^{\{l\}}_{t+1,r} =\;& \norm{\Pi_{[r,\infty)}^{\{l\}} \ket{\phi_{t+1}^{(l-1)}}} &&\text{definition \cref{eq:P^l_{i,p}}}\notag
     \\
     =\;&\norm{\Pi_{[r,\infty)}^{\{l\}}U_{t+1}\calR\bigl(\I-\Pi^{(l-1)}_{\Bad,t}\bigr)\ket{\phi_t^{(l-1)}}} &&\text{\cref{eq:phi_i^l_properties}}\notag
     \\
     =\;&\norm{\Pi_{[r,\infty)}^{\{l\}}\calR\bigl(\I-\Pi^{(l-1)}_{\Bad,t}\bigr)\ket{\phi_t^{(l-1)}}} && \text{$\Pi_{[r,\infty)}^{\{l\}}$, $U_{t+1}$ commute}\notag
     \\
     \leq\;& \norm{\Pi_{[r,\infty)}^{\{l\}}\ket{\phi_t^{(l-1)}}}+ \norm{\Pi_{[r,\infty)}^{\{l\}} \calR \bigl(\I-\Pi^{(l-1)}_{\Bad,t}\bigr)\bigl(\I-\Pi_{[r,\infty)}^{\{l\}}\bigr)\ket{\phi_t^{(l-1)}}} &&\text{triangle inequality}\notag
     \\
     =\;& \pathsprogress_{t,r}^{\{l\}} + \norm{\Pi_{[r,\infty)}^{\{l\}} \calR \bigl(\I-\Pi^{(l-1)}_{\Bad,t}\bigr)\bigl(\I-\Pi_{[r,\infty)}^{\{l\}}\bigr)\ket{\phi_t^{(l-1)}}}&&\text{definition \cref{eq:P^l_{i,p}}}\notag
     \\
     =\;& \pathsprogress_{t,r}^{\{l\}} + \norm{\Pi_{[r,\infty)}^{\{l\}} \calR \bigl(\I-\Pi^{(l-1)}_{\Bad,t}\bigr)\Pi_{[r-l\Delta^{l-1},r)}^{\{l\}}\ket{\phi_t^{(l-1)}}}, \label{eq:paths_recurrence_terms}
\end{align}
where the last equality uses the fact that adding a new edge of a graph of maximum degree at most $\Delta$ can introduce at most
$\sum_{i=0}^{l-1} \Delta^{l-1-i}\Delta^i = l \Delta^{l-1}$ new length-$l$ paths.

Abbreviate
\begin{equation}
    \ket{\rho} \coloneqq \bigl(\I-\Pi^{(l-1)}_{\Bad,t}\bigr)\Pi_{[r-l\Delta^{l-1},r)}^{\{l\}}\ket{\phi_{t}^{(l-1)}}. 
\end{equation}
Then, we can write the second term in \cref{eq:paths_recurrence_terms} as
\begin{equation}
    \norm{\Pi_{[r,\infty)}^{\{l\}} \calR \ket{\rho}} \le \sum_{y\in [N] \cup \{\bot\}}   \norm{\Pi_{[r,\infty)}^{\{l\}} \calR \Pi_y \ket{\rho}},
\end{equation}
since $\id = \ketbra{0}{0}_P + \sum_{y\in[N]\cup\{\perp\}}\Pi_y$, and $\Pi_{[r,\infty)}^{\{l\}} \calR \ketbrasame{0}_P \ket{\rho} = 0$ because $\ket{\rho}$ has recorded fewer than $r$ length-$l$ paths and $\calR\ketbrasame{0}_P = \ketbrasame{0}_P$.
    
We will show
\begin{equation}\label{eq:path_progress_bound}
    \norm{\Pi_{[r,\infty)}^{\{l\}} \calR \Pi_{\bot}\ket{\rho}} \leq \sqrt{\frac{B_{l}(t)}{N}}\pathsprogress^{\{l\}}_{t,r-l\Delta^{l-1}} \quad \text{and} \quad
    \sum_{y\in [N]}\norm{\Pi_{[r,\infty)}^{\{l\}} \calR \Pi_y\ket{\rho}} \leq 3\sqrt{\frac{B_{l}(t)}{N}}\pathsprogress^{\{l\}}_{t,r-l\Delta^{l-1}}.
\end{equation}

\paragraph{First bound in \cref{eq:path_progress_bound}.} 
        When $y=\bot$, consider each basis state $\ket{i,u,w}\ket{x}$ in the support of
        \begin{equation}
            \Pi_\bot\bigl(\I-\Pi^{(l-1)}_{\Bad,t}\bigr)\Pi_{[r-l\Delta^{l-1},r)}^{\{l\}}\ket{\phi^{(l-1)}_t}.
        \end{equation}
         Due to $\Pi_\bot$, we have $x_i = \bot$ and $u\neq 0$, so \cref{lemma:recording_operator_effect} states
        \begin{equation}
            \calR\ket{i,u,w}\ket{x} = \ket{i,u,w}\biggl(\sum_{y\in [N]}\frac{\omega_N^{u y}}{\sqrt{N}}\ket{y}_{I_i}\biggr)\otimes\bigotimes_{j\neq i}\ket{x_{j}}_{I_{j}}.
        \end{equation}
        Since $\ket{i,u,w}\ket{x}$ is also in the support of $(\I-\Pi^{(l-1)}_{\Bad,t})$, $x$ records at most $r_j(t)$ length-$j$ paths for every $1\leq j \leq l-1$. Due to $\Pi_{[r-l\Delta^{l-1},r)}^{\{l\}}$, $x$ records fewer than $r$ length-$l$ paths.  Therefore, in order for $\ket{y}_{I_i} \otimes \bigotimes_{j\neq i}\ket{x_{j}}_{I_{j}}$ to record at least $r$ length-$l$ paths, edge $y$ must join one end of a length-$j$ path with one end of a length-$(l-1-j)$ path for some $0\leq j\leq \floor{\frac{l-1}{2}}$, where a length-$0$ path refers to a vertex. When $j=0$, the number of such $y$s is at most $2 r_{l-1}(t) n$, where the factor of $2$ accounts for the two ends of the length-$(l-1)$ path. When $j>0$, the number of such $y$s is at most $4r_j(t) r_{l-1-j}(t)$, where the factor of $4$ accounts for the four ways to connect a length-$j$ path and a length-$(l-1-j)$ path. Therefore,
        \begin{equation}
        \norm{\Pi_{[r,\infty)}^{\{l\}}\calR\ket{i,u,w}\ket{x}}\leq\sqrt{\frac{B_{l}(t)}{N}}.
        \end{equation}
        Since any two distinct basis states in the support of $\Pi_\bot\bigl(\I-\Pi^{(l-1)}_{\Bad,t}\bigr)\Pi_{[r-l\Delta^{l-1},r)}^{\{l\}}\ket{\phi^{(l-1)}_t}$ remain orthogonal after $\Pi^{\{l\}}_{[r,\infty)}\calR$ is applied, we have
        \begin{equation}
        \begin{aligned}            
            & \norm{\Pi_{[r,\infty)}^{\{l\}} \calR \Pi_{\bot}\bigl(\I-\Pi^{(l-1)}_{\Bad,t}\bigr)\Pi_{[r-l\Delta^{l-1},r)}^{\{l\}}\ket{\phi_t^{(l-1)}}}
            \\
            &\qquad \leq \sqrt{\frac{B_{l}(t)}{N}}\norm{\Pi_{\bot}\bigl(\I-\Pi^{(l-1)}_{\Bad,t}\bigr)\Pi_{[r-l\Delta^{l-1},r)}^{\{l\}}\ket{\phi_t^{(l-1)}}} \leq \sqrt{\frac{B_l(t)}{N}}\pathsprogress^{\{l\}}_{t,r-l\Delta^{l-1}}.
        \end{aligned}
        \end{equation}
        \paragraph{Second bound in \cref{eq:path_progress_bound}.} 

        When $y\in [N]$, consider each basis state $\ket{i,u,w}\ket{x}$ in the support of
        \begin{equation}
            \Pi_y\bigl(\I-\Pi^{(l-1)}_{\Bad,t}\bigr)\Pi_{[r-l\Delta^{l-1},r)}^{\{l\}}\ket{\phi^{(l-1)}_t}.
        \end{equation}
        Due to $\Pi_y$, we have $x_i = y$ and $u\neq 0$, so \cref{lemma:recording_operator_effect} states
        \begin{equation}
        \begin{aligned}
            \calR\ket{i,u,w}\ket{x}
            =\ket{i,u,w}&\biggl(\frac{\omega^{u y}_N}{\sqrt{N}}\ket{\bot}_{I_i}+\frac{1+\omega^{u y}_N(N-2)}{N}\ket{y}_{I_i}\\
            &\qquad\qquad+\sum_{y'\in [N]\backslash\{y\}}\frac{1-\omega^{u y'}_N-\omega^{u y}_N}{N}\ket{y'}_{I_i}\biggr)\otimes\bigotimes_{j\neq i}\ket{x_{j}}_{I_{j}}.
        \end{aligned}
        \end{equation}
        Applying similar reasoning as in the proof of the first bound in \cref{eq:path_progress_bound}, we deduce
        \begin{equation}
            \norm{\Pi^{\{l\}}_{[r,\infty)}\calR\ket{i,u,w}\ket{x}}\leq 3\frac{\sqrt{B_{l}(t)}}{N}.
        \end{equation}
        Since any two distinct basis states in the support of $\Pi_y(\I-\Pi^{(l-1)}_{\Bad,t})\Pi_{[r-l\Delta^{l-1},r)}^{\{l\}}\ket{\phi^{(l-1)}_t}$ remain orthogonal after $\Pi^{\{l\}}_{[r,\infty)}\calR$ is applied, we have
        \begin{equation}
            \norm{\Pi_{[r,\infty)}^{\{l\}} \calR \Pi_{y}\bigl(\I-\Pi^{(l-1)}_{\Bad,t}\bigr)\Pi_{[r-l\Delta^{l-1},r)}^{\{l\}}\ket{\phi_t^{(l-1)}}} \leq 3\frac{\sqrt{B_l(t)}}{N}\norm{\Pi_{y}\bigl(\I-\Pi^{(l-1)}_{\Bad,t}\bigr)\Pi_{[r-l\Delta^{l-1},r)}^{\{l\}}\ket{\phi_t^{(l-1)}}}.
        \end{equation}
        Therefore, the Cauchy-Schwarz inequality gives
        \begin{align}
            &\sum_{y\in [N]} \norm{\Pi_{[r,\infty)}^{\{l\}} \calR \Pi_{y}\bigl(\I-\Pi^{(l-1)}_{\Bad,t}\bigr)\Pi_{[r-l\Delta^{l-1},r)}^{\{l\}}\ket{\phi_t^{(l-1)}}}
            \nonumber\\
            &\qquad \le 3\sqrt{\frac{B_{l}(t)}{N}}\sqrt{\sum_{y\in [N]}\norm{\Pi_{y}\bigl(\I-\Pi^{(l-1)}_{\Bad,t}\bigr)\Pi_{[r-l\Delta^{l-1},r)}^{\{l\}}\ket{\phi_t^{(l-1)}}}^2}
            \leq 3\sqrt{\frac{B_{l}(t)}{N}}\pathsprogress^{\{l\}}_{t,r-l\Delta^{l-1}}.\qedhere
    \end{align}
\end{proof}

Solving the recurrence, we deduce
\begin{lemma}[$l$-paths progress]\label{prop:p_lpaths_bound}
Let $l \in \mathbb{N}_{\geq 2}$. For all sufficiently large $n\in \mathbb{N}$ and every $t\in \mathbb{Z}_{[0,T]}$,
\begin{equation}
    \pathsprogress^{\{l\}}_{t,{r_l(t)}} \leq \frac{1}{n}.
\end{equation}
\end{lemma}

\begin{proof}[Proof of \cref{prop:p_lpaths_bound}]
We start by solving the recurrence \cref{eq:paths_recurrence} in \cref{lem:inductive_step_p_lpaths_bound} to obtain
    \begin{equation}
        \pathsprogress_{t,r \cdot l\Delta^{l-1}}^{\{l\}} \leq \Bigl(\frac{et}{r}\cdot 4\sqrt{\frac{B_l(t)}{N}}\Bigr)^{r}.
    \end{equation}
    Therefore, setting $r$ equal to the non-negative integer 
    \begin{equation}
        s_{l}(t) \coloneqq r_{l}(t)/(l\Delta^{l-1}) = \max\left\{\biggl\lceil\Bigl(\frac{ct}{\sqrt{n}}\Bigr)^{2-4/2^l} (t+1)^{2/2^l}\biggr\rceil, \ceil{\log n}\right\}
    \end{equation}
    gives
    \begin{align*}
        \pathsprogress_{t,r_l(t)}^{\{l\}} &\leq \biggl(\frac{et}{s_l(t)}\cdot 4\sqrt{\frac{B_l(t)}{N}}\biggr)^{s_{l}(t)} 
        \\
        &= \biggl(\frac{et}{s_l(t)}\cdot 4\sqrt{\frac{2l r_{l-1}(t)n}{N}}\biggr)^{s_{l}(t)} &&\text{\cref{claim:inductive_step_technical},~\cref{eq:blt}}
        \\
        &\leq \biggl(\frac{8\sqrt{2}et}{s_l(t)}\cdot \sqrt{\frac{l r_{l-1}(t)}{n}}\biggr)^{s_{l}(t)} &&\text{$N \geq n^2/4$}.
    \end{align*}
    Now, we recall the definition
    \begin{equation}
        r_{l-1}(t) \coloneqq \max\left\{\lceil \alpha \rceil, \ceil{\log n}\right\}\cdot(l-1)\Delta^{l-2},\quad \text{where} \quad \alpha \coloneqq \Bigl(\frac{ct}{\sqrt{n}}\Bigr)^{2-4/2^{l-1}} (t+1)^{2/2^{l-1}}.  
    \end{equation}

    There are two cases depending on where the max is attained in the expression $\max\left\{\lceil \alpha \rceil, \ceil{\log n}\right\}$.

    \underline{Case 1}: $\ceil{\log n}\ge \ceil{\alpha}$.
    
    In this case, 
        \begin{equation}
           \ceil{\log n}\geq \biggl\lceil \Bigl(\frac{ct}{\sqrt{n}}\Bigr)^{2-4/2^{l-1}} (t+1)^{2/2^{l-1}}\biggr\rceil \geq \Omega\Bigl(n \Bigl(\frac{t}{n}\Bigr)^{2-4/2^l}\Bigr),
        \end{equation}
        rearranging gives
        \begin{equation}
            t \leq \Bigl(n \cdot \Bigl(\frac{O(\log(n))}{n}\Bigr)^{1/(2-4/2^l)}\Bigr) \leq n^{1/2 - \Omega(1)}.
        \end{equation}
        Therefore, for sufficiently large $n$, we have
        \begin{equation}
            \pathsprogress_{t,r_l(t)}^{\{l\}} \leq \Bigl(\frac{8\sqrt{2}et}{s_l(t)}\cdot \sqrt{\frac{l r_{l-1}(t)}{n}}\Big)^{s_{l}(t)} \leq O(n^{-\Omega(\log(n))})\leq \frac{1}{n},
        \end{equation}

        \underline{Case 2}: $\ceil{\alpha}\ge\ceil{\log n}$.
        Note that $\alpha\geq 1$ since we assume $n$ is sufficiently large. Then
        \begin{equation}
             \pathsprogress_{t,r_l(t)}^{\{l\}}\leq \Bigl(\frac{8\sqrt{2}et}{s_l(t)}\cdot \sqrt{\frac{l r_{l-1}(t)}{n}}\Bigr)^{s_{l}(t)} \leq \Bigl(\frac{16e\sqrt{l(l-1)\Delta^{l-2}}}{c}\Bigr)^{s_l(t)} \leq 2^{-s_{l}(t)} \leq \frac{1}{n},
        \end{equation}
        where we used $r_{l-1}(t) = \ceil{\alpha} (l-1)\Delta^{l-2} \leq 2 \alpha (l-1)\Delta^{l-2}$, which holds since $\alpha\geq 1$, in the second inequality, and recall we define $c = 32e k\Delta^{k/2}$.
\end{proof}

\subsubsection{Progress in recording a $k$-cycle}

For $t\in \mathbb{Z}_{[0,T]}$, define
\begin{equation}\label{eq:def_cycleprogress}
    \cycleprogress'_t \coloneqq \norm{\Pi^{\square}\ket{\phi_t^{(k-1)}}},
\end{equation}
which corresponds to the progress of recording a $k$-cycle given that not many length-$(k-1)$ paths have ever been recorded, and serves as a good estimator of the actual progress.
\begin{lemma}\label{lem:with_vs_without_long_paths}
    For all sufficiently large $n$,
    \begin{equation}
        \cycleprogress_T \leq \cycleprogress'_T + T\cdot \Bigl(2n  \Bigl(\frac{2em}{n\ceil{\log(n)}}\Bigr)^{\ceil{\log(n)}}+2n\left(\frac{2em}{n\Delta}\right)^{\Delta}\Bigl(\frac{emn}{2\ceil{\log(n)}}\Bigr)^{\ceil{\log(n)}} + \frac{k}{n}\Bigr).
    \end{equation}
\end{lemma}
\begin{proof}
    Using the reverse triangle inequality and the fact that $\Pi^\square$ is a contraction, we have
    \begin{equation}
        \cycleprogress_T - \cycleprogress'_T = \bigl\|\Pi^{\square}\ket{\phi_T}\bigr\| - \bigl\|\Pi^{\square}\ket{\phi_T^{(k-1)}}\bigr\| \leq \big\|\Pi^{\square}\ket{\phi_T} - \Pi^{\square}\ket{\phi_T^{(k-1)}}\bigr\| \leq \bigl\|\ket{\phi_T} - \ket{\phi_T^{(k-1)}}\bigr\|.
    \end{equation}
The RHS of the above equation can bounded by the triangle inequality applied to the telescoping sum    \begin{equation}\label{eq:overall_exclusion_bound}
    \norm{\ket{\phi_T} - \ket{\phi_T^{(k-1)}}}\leq \norm{\ket{\phi_T^{(0)}} - \ket{\phi_T^{(1)}}} + \dots + \norm{\ket{\phi_T^{(k-2)}} - \ket{\phi_T^{(k-1)}}}.
\end{equation}
Recall that \cref{eq:bad_recurrence}, for each $l\in [k-1]$ and $t\in \mathbb{Z}_{[0,T]}$,
\begin{equation}
    \I - \Pi^{(l)}_{\Bad, t} = \bigl(\I - \Pi^{\{l\}}_{r_l(t)}\bigr) \bigl(\I - \Pi^{(l-1)}_{\Bad, t}\bigr).
\end{equation}
Applying the Exclusion Lemma (\cref{lem:exclusion_lemma}) and notice the projectors on basis states commute, we obtain
\begin{equation}
    \norm{\ket{\phi_T^{(l-1)}} - \ket{\phi_T^{(l)}}} \leq \sum_{t=1}^{T-1} \norm{\Pi^{\{l\}}_{r_l(t)} \ket{\phi_t^{(l-1)}}} = \sum_{t=1}^{T-1} \pathsprogress^{(l)}_{t,{r_l(t)}}.
\end{equation}
Therefore,
\begin{align}\label{eq:stepwise_exclusion_bound}
     &\norm{\ket{\phi_T^{(l-1)}} - \ket{\phi_T^{(l)}}} 
     \\
     \leq&
     \begin{cases}
         T\cdot \left(2n\Bigl(\frac{2em}{n\ceil{\log(n)}}\Bigr)^{\ceil{\log(n)}}+2n\left(\frac{2em}{n\Delta}\right)^{\Delta}\Bigl(\frac{emn}{2\ceil{\log(n)}}\Bigr)^{\ceil{\log(n)}}\right), &\text{if $l=1$ using \cref{prop:p_base_bound}},
         \notag
         \\
        T \cdot \frac{1}{n},&\text{if $l\in \mathbb{N}_{\geq 2}$ using \cref{prop:p_lpaths_bound}}.
     \end{cases}
\end{align}

The lemma follows from substituting \cref{eq:stepwise_exclusion_bound} into \cref{eq:overall_exclusion_bound}.
\end{proof}

We can bound $\cycleprogress_T'$ by first showing that it satisfies a recurrence in the next lemma and then solving it in the following corollary.
\begin{lemma}[$k$-cycle progress recurrence]\label{lem:kcycle_progress_without_long_paths}
    For all $t\in \mathbb{Z}_{[0,T]}$,
    \begin{equation}\label{eq:ct_recurrence}
        \cycleprogress'_{t+1} \leq \cycleprogress'_{t} + 4 \sqrt{\frac{r_{k-1}(t)}{N}},
    \end{equation}
    and $\cycleprogress_0' = 0$.
\end{lemma}

Solving this recurrence gives
\begin{corollary}[$k$-cycle progress]\label{cor:kcycle_progress_without_long_paths}
  We have
    \begin{equation}
        \cycleprogress'_{T} \leq O\Bigl(\Delta^k\cdot
        \Bigl(\frac{T}{n^{3/4-1/(2^{k+2}-4)}}
       \Bigr)^{2-2/2^k} + \Delta^{k/2}\cdot T\frac{\sqrt{\log n}}{n}\Bigr).
    \end{equation}
\end{corollary}
\begin{proof}
    From \cref{lem:kcycle_progress_without_long_paths}, we obtain
    \begin{equation}\label{eq:cycleprogress_sum_bound}
         \cycleprogress'_T \leq 4 \sum_{t=0}^T \sqrt{\frac{r_{k-1}(t)}{N}}.
    \end{equation}
    We now recall the definition from
    \cref{eq:rjt_definition} and \cref{eq:constant_factor} that
    \begin{equation}
         r_{k-1}(t) \coloneqq \max\left\{\biggl\lceil \Bigl(\frac{ct}{\sqrt{n}}\Bigr)^{2-4/2^{k-1}} (t+1)^{2/2^{k-1}}\biggr\rceil, \ceil{\log n}\right\}\cdot((k-1)\Delta^{k-2}),
    \end{equation}
    where $c \coloneqq 32e \cdot k\Delta^{k/2}$, and so
    \begin{equation}\label{eq:r_{k-1}_bound}
         r_{k-1}(t) \leq O\Bigl(\bigl(\Delta^{k} \cdot t^{2-4/2^k}/ n^{1-4/2^k}+\log(n)\bigr)\cdot \Delta^k\Bigr).
    \end{equation}

    Therefore,
    \begin{equation}
         \cycleprogress'_T \leq O\Bigl( \Delta^k\cdot \frac{T^{2-2/2^k}}{n^{3/2-2/2^k}} + \Delta^{k/2}\cdot T\frac{\sqrt{\log n}}{n}\Bigr) 
       = O\Bigl(\Delta^k\cdot
        \Bigl(\frac{T}{n^{3/4-1/(2^{k+2}-4)}}
       \Bigr)^{2-2/2^k} + \Delta^{k/2}\cdot T\frac{\sqrt{\log n}}{n}\Bigr),
    \end{equation}
    where the equality uses $(3/4-1/(2^{k+2}-4))(2-2/2^k) = 3/2 - 2/2^k$.
\end{proof}

\begin{proof}[Proof of \cref{lem:kcycle_progress_without_long_paths}]
    The boundary condition $\cycleprogress_0'=0$ in \cref{eq:ct_recurrence} is trivial. To prove the recurrence, observe that
    \begin{align}
        \cycleprogress_{t+1}' &= \norm{\Pi^{\square} \ket{\phi_{t+1}^{(k-1)}}} &&\text{definition \cref{eq:def_cycleprogress}}
        \notag
        \\
        &= \norm{\Pi^{\square} U_{t+1} \calR \bigl(\I - \Pi^{(k-1)}_{\Bad,t}\bigr)\ket{\phi_t^{(k-1)}}} &&\text{\cref{eq:phi_i^l_properties}}
        \notag
        \\
        &=\norm{\Pi^{\square} \calR \bigl(\I - \Pi^{(k-1)}_{\Bad,t}\bigr)\ket{\phi_t^{(k-1)}}}  &&\text{$\Pi^{\square}$, $U_{t+1}$ commute}
        \notag
        \\
        &\leq \norm{\Pi^{\square} \ket{\phi_t^{(k-1)}}} + \norm{\Pi^{\square} \calR \bigl(\I - \Pi^{(k-1)}_{\Bad,t}\bigr)\bigl(\I-\Pi^{\square}\bigr)\ket{\phi_t^{(k-1)}}}  &&\text{triangle inequality}
        \notag
        \\
        &=  \cycleprogress_t' + \norm{\Pi^{\square} \calR \bigl(\I - \Pi^{(k-1)}_{\Bad,t}\bigr)\bigl(\I-\Pi^{\square}\bigr)\ket{\phi_t^{(k-1)}}} &&\text{definition \cref{eq:def_cycleprogress}}\label{eq:cycle_recurrence_terms}
    \end{align}

    Abbreviate
    \begin{equation}
        \ket{\rho} \coloneqq \bigl(\I - \Pi^{(k-1)}_{\Bad,t}\bigr)\bigl(\I-\Pi^{\square}\bigr)\ket{\phi_t^{(k-1)}}.
    \end{equation}
    Then, we can write the second term in \cref{eq:cycle_recurrence_terms} as
    \begin{equation}
        \norm{\Pi^{\square} \calR \ket{\rho}} \leq \sum_{y\in [N]\cup \{\bot\}} \norm{\Pi^{\square} \calR \Pi_y \ket{\rho}},
    \end{equation}
    using $\Pi^{\square} \calR \ketbrasame{0}_P \ket{\rho} = 0$ because $\ket{\rho}$ has recorded no $k$-cycle and $\calR\ketbrasame{0}_P = \ketbrasame{0}_P$.

    We will show 
    \begin{equation}\label{eq:cycle_progress_bound}
    \norm{\Pi^{\square} \calR \Pi_{\bot}\ket{\rho}} \leq \sqrt{\frac{r_{k-1}(t)}{N}} \quad \text{and} \quad
    \sum_{y\in [N]}\norm{\Pi^{\square} \calR \Pi_y \ket{\rho}} \leq 3\sqrt{\frac{r_{k-1}(t)}{N}}.
    \end{equation}

    \paragraph{First bound in \cref{eq:cycle_progress_bound}.} 
        When $y=\bot$, consider each basis state $\ket{i,u,w}\ket{x}$ in the support of
        \begin{equation}
            \Pi_\bot \bigl(\I - \Pi^{(k-1)}_{\Bad,t}\bigr)\bigl(\I-\Pi^{\square}\bigr)\ket{\phi_t^{(k-1)}}. 
        \end{equation}
         Due to $\Pi_\bot$, we have $x_i = \bot$ and $u\neq 0$, so \cref{lemma:recording_operator_effect} states
        \begin{equation}
            \calR\ket{i,u,w}\ket{x} = \ket{i,u,w}\biggl(\sum_{y\in [N]}\frac{\omega_N^{u y}}{\sqrt{N}}\ket{y}_{I_i}\biggr)\otimes\bigotimes_{j\neq i}\ket{x_{j}}_{I_{j}}.
        \end{equation}
        Since $\ket{i,u,w}\ket{x}$ is also in the support of $(\I-\Pi^{(k-1)}_{\Bad,t})$, $x$ records at most $r_{k-1}(t)$ length-$(k-1)$ paths. Due to $\bigl(\I-\Pi^{\square}\bigr)$, $x$ does not record a length-$k$ cycle. Therefore, in order for $\ket{y}_{I_i} \otimes \bigotimes_{j\neq i}\ket{x_{j}}_{I_{j}}$ to record a length-$k$ cycle, edge $y$ must connect the two ends of a length-$(k-1)$ path. The number of such $y$s is at most $r_{k-1}(t)$. Therefore,
        \begin{equation}
        \norm{\Pi^{\square}\calR\ket{i,u,w}\ket{x}}\leq\sqrt{\frac{r_{k-1}(t)}{N}}.
        \end{equation}
        Since any two distinct basis states in the support of $\Pi_\bot \bigl(\I - \Pi^{(k-1)}_{\Bad,t}\bigr)\bigl(\I-\Pi^{\square}\bigr)\ket{\phi_t^{(k-1)}}$ remain orthogonal after $\Pi^{\square}\calR$ is applied, we have
        \begin{equation}
        \begin{aligned}            
             \norm{\Pi^{\square} \calR \Pi_{\bot}\ket{\rho}} &= \norm{\Pi^{\square} \calR \Pi_{\bot}\bigl(\I-\Pi^{(k-1)}_{\Bad,t}\bigr)\bigl(\I-\Pi^{\square}\bigr)\ket{\phi_t^{(k-1)}}} \leq\sqrt{\frac{r_{k-1}(t)}{N}}. 
        \end{aligned}
        \end{equation}
        
        \paragraph{Second bound in \cref{eq:cycle_progress_bound}.} 

        When $y\in [N]$, consider each basis state $\ket{i,u,w}\ket{x}$ in the support of
        \begin{equation}
            \Pi_y \bigl(\I - \Pi^{(k-1)}_{\Bad,t}\bigr)\bigl(\I-\Pi^{\square}\bigr)\ket{\phi_t^{(k-1)}}. 
        \end{equation}
        Due to $\Pi_y$, we have $x_i = y$ and $u\neq 0$, so \cref{lemma:recording_operator_effect} states
        \begin{equation}
        \begin{aligned}
            \calR\ket{i,u,w}\ket{x}
            =\ket{i,u,w}&\biggl(\frac{\omega^{u y}_N}{\sqrt{N}}\ket{\bot}_{I_i}+\frac{1+\omega^{u y}_N(N-2)}{N}\ket{y}_{I_i}\\
            &\qquad\qquad+\sum_{y'\in [N]\backslash\{y\}}\frac{1-\omega^{u y'}_N-\omega^{u y}_N}{N}\ket{y'}_{I_i}\biggr)\otimes\bigotimes_{j\neq i}\ket{x_{j}}_{I_{j}}.
        \end{aligned}
        \end{equation}
        Applying similar reasoning as in the proof of the first bound in \cref{eq:cycle_progress_bound}, we deduce
        \begin{equation}
            \norm{\Pi^{\square}\calR\ket{i,u,w}\ket{x}}\leq 3\frac{\sqrt{r_{k-1}(t)}}{N}.
        \end{equation}
        Since any two distinct basis states in the support of $\Pi_y \bigl(\I - \Pi^{(k-1)}_{\Bad,t}\bigr)\bigl(\I-\Pi^{\square}\bigr)\ket{\phi_t^{(k-1)}}$ remain orthogonal after $\Pi^{\square}\calR$ is applied, we have
        \begin{equation}
            \norm{\Pi^{\square} \calR \Pi_{y}\bigl(\I-\Pi^{(k-1)}_{\Bad,t}\bigr)\bigl(1-\Pi^{\square}\bigr)\ket{\phi_t^{(k-1)}}} \leq 3\frac{\sqrt{r_{k-1}(t)}}{N}\norm{\Pi_{y}\bigl(\I-\Pi^{(k-1)}_{\Bad,t}\bigr)\bigl(1-\Pi^{\square}\bigr)\ket{\phi_t^{(k-1)}}}.
        \end{equation}
        Therefore, the Cauchy-Schwarz inequality gives
        \begin{align}
            &\sum_{y\in [N]} \norm{\Pi^{\square} \calR \Pi_{y}\bigl(\I-\Pi^{(k-1)}_{\Bad,t}\bigr)\bigl(1-\Pi^{\square}\bigr)\ket{\phi_t^{(k-1)}}} 
            \nonumber\\
            &\qquad \le 3\sqrt{\frac{r_{k-1}(t)}{N}}\sqrt{\sum_{y\in [N]}\norm{\Pi_{y}\bigl(\I-\Pi^{(k-1)}_{\Bad,t}\bigr)\bigl(1-\Pi^{\square}\bigr)\ket{\phi_t^{(k-1)}}}^2}
            \leq 3\sqrt{\frac{r_{k-1}(t)}{N}}.\qedhere
        \end{align}
\end{proof}

Combining the previous two lemmas gives:
\begin{proposition}[Hardness of recording a $k$-cycle]\label{lem:recording_probability-cycle}
    For all sufficiently large $n\in \mathbb{N}$,
    \begin{equation}
    \begin{aligned}
        \cycleprogress_T \leq& O\Bigl(\Delta^k
        \Bigl(\frac{T}{n^{3/4-1/(2^{k+2}-4)}}
       \Bigr)^{2-2/2^k} + \Delta^{k/2}T\frac{\sqrt{\log n}}{n}\\
       &\qquad\qquad+ Tn\Bigl(\frac{2em}{n\ceil{\log(n)}}\Bigr)^{\ceil{\log(n)}}+Tn\left(\frac{2em}{n\Delta}\right)^{\Delta}\Bigl(\frac{emn}{2\ceil{\log(n)}}\Bigr)^{\ceil{\log(n)}}\Bigr).
    \end{aligned}
    \end{equation}
\end{proposition}

\begin{proof}
The proposition follows by combining \cref{lem:with_vs_without_long_paths,cor:kcycle_progress_without_long_paths}.
\end{proof}

\subsubsection{Completing the proof of \cref{thm:kcycle_avg_lower}}

We finally bound the probability of the quantum query algorithm succeeding even if it does not record a $k$-cycle. Let $\Pi_{\success}$ denote the projector onto basis states $\ket{i,u,w}\ket{x}$ such that $w$ contains an output substring (i.e., a substring located at some fixed output register) of the form $(a_1,\dots,a_k,b_1,\dots,b_k) \in [m]^k \times \binom{[n]}{2}^k$ where the $a_i$s are distinct, the $b_i$s form a $k$-cycle, and $x_{a_i} = b_i$ for all $i$.

\begin{lemma}[Hardness of guessing a $k$-cycle]\label{lem:guessing_probability-cycle} Suppose $N\geq k$. Then, for every state $\ket{\phi}$, $\norm{\Pi_{\success}\calT (\I - \Pi^{\square}) \ket{\phi}} \leq O(\frac{1}{n})$.
\end{lemma}

With the above lemma, \cref{thm:kcycle_avg_lower} becomes a simple corollary: we have
\begin{align*}
    \norm{\Pi_{\success}\ket{\psi_T}}=& \norm{\Pi_{\success}\calT\ket{\phi_T}}
    \notag
    \\
    \leq& \norm{\Pi^\square \ket{\phi_T}} + \norm{\Pi_{\success}\calT(\id-\Pi^\square)\ket{\phi_T}}
    \notag
    \\
    \leq& O\Bigl(\Delta^k
    \Bigl(\frac{T}{n^{3/4-1/(2^{k+2}-4)}}
    \Bigr)^{2-2/2^k} + \Delta^{k/2}T\frac{\sqrt{\log n}}{n} + Tn\Bigl(\frac{2em}{n\ceil{\log(n)}}\Bigr)^{\ceil{\log(n)}}\\
    &\qquad\qquad+Tn\left(\frac{2em}{n\Delta}\right)^{\Delta}\Bigl(\frac{emn}{2\ceil{\log(n)}}\Bigr)^{\ceil{\log(n)}}\Bigr),
\end{align*}
where, in the second inequality, we used \cref{lem:recording_probability-cycle} to bound $\Pi^\square \ket{\phi_T}$ and \cref{lem:guessing_probability-cycle} to bound $\norm{\Pi_{\success}\calT(\id-\Pi^\square)\ket{\phi_T}}$, and observed that $\Delta^{k/2}\cdot T\frac{\sqrt{\log n}}{n}\geq \frac{1}{n}$ for $T,\Delta\in \mathbb{N}$ and $n\geq 2$.

\begin{proof}[Proof of \cref{lem:guessing_probability-cycle}]
    For $\alpha,\beta\in \mathbb{Z}_{\geq 0}$ with $\alpha+\beta\leq k$, we define the projector $P_{\alpha,\beta}$ to be onto basis states $\ket{i,u,w}\ket{x}$, where
\begin{enumerate}
    \item $w$ contains the output substring $(a_1,\dots,a_k,b_1,\dots,b_k) \in [m]^k \times \binom{[n]}{2}^k$ where the $a_i$s are distinct and the $b_i$s form a $k$-cycle;
    \item there are exactly $\alpha$ indices $i\in [k]$ such that $x_{a_i} = \bot$;
    \item there are exactly $\beta$ indices $i \in [k]$ such that $x_{a_i} \neq \bot$ and $x_{a_i} \neq b_i$.
\end{enumerate}
Observe that if $\ket{i,u,w}\ket{x}$ is in the support of $\id-\Pi^\square$, then $P_{0,0} \ket{i,u,w}{\ket{x}} = 0$. This is because if $\ket{i,u,w}\ket{x}$ is in the support of $P_{0,0}$, then $x_{a_i} = b_i$ for all $i$ and the $b_i$s form a $k$-cycle, which contradicts the assumption.

For a state $\ket{i,u,w}\ket{x}$ in the support of $P_{k,l}$, we have 
\begin{equation}\label{eq:succ_single_basis_kl-cycle}
    \norm{\Pi_{\success} \calT \ket{i,u,w}\ket{x}} \leq (1/\sqrt{N})^\alpha(1/N)^\beta,
\end{equation}
using the definition of $\calT$.

For $\ket{i,u,w}\ket{x}$ in the support of $P_{\alpha,\beta}$, we write $w_{a}\coloneqq \{a_1,\dots,a_k\}$ for the set containing the first $k$ elements of the output substring. Observe that $\ket{i,u,w}\ket{x}$ and $\ket{i',u',w'}\ket{x'}$ remain orthogonal after applying $\Pi_{\success}\calT$ unless $i=i',u=u',w=w'$ and $x_s = x'_s$ for all $s\in [m]-\{a_1,\dots,a_k\}$.

Therefore, for a state $\ket{\chi} \coloneqq \sum_{i,u,w,x}c_{i,u,w,x} \ket{i,u,w}\ket{x}$ in the support of $P_{\alpha,\beta}$, we have
\begin{align}
    \norm{\Pi_{\success}\calT \ket{\chi}}^2 &= \sum_{i,u,w,(x_{a'})_{a'\notin w_a}}\biggl\|\sum_{(x_{a'})_{a'\in w_a}} c_{i,u,w,x} \Pi_{\success} \calT \ket{i,u,w}\ket{x}\biggr\|^2
    \notag
    \\
    &\leq  \sum_{i,u,w,(x_{a'})_{a'\notin w_a}}\Bigl(\sum_{(x_{a'})_{a'\in w_a}} \abs{c_{i,u,w,x}}^2 \Bigr)\Bigl(\sum_{(x_{a'})_{a'\in w_a}}\1[c_{i,u,w,x}\neq 0]\cdot \norm{\Pi_{\success} \calT \ket{i,u,w}\ket{x}}^2\Bigr)
    \notag
    \\
    &\leq   \binom{k}{\alpha,\beta,k-(\alpha+\beta)}(N-1)^\beta \cdot \Bigl(\frac{1}{N}\Bigr)^\alpha\Bigl(\frac{1}{N^2}\Bigr)^\beta \norm{\ket{\chi}}^2 \leq \frac{1}{\alpha!\beta!}\Bigl(\frac{k}{N}\Bigr)^{\alpha+\beta}\norm{\ket{\chi}}^2,\label{eq:succ_ab}
\end{align}
where the first inequality is Cauchy-Schwarz and the second inequality is \cref{eq:succ_single_basis_kl-cycle}.

Finally, write $P$ for the projector onto basis states $\ket{i,u,w}\ket{x}$ not satisfying the first condition defining $P_{\alpha,\beta}$ so that $\id = P + \sum_{\alpha,\beta\in \mathbb{Z}_{\geq 0}\colon \alpha+\beta \leq k}P_{\alpha,\beta}$. Therefore
\begin{align}
    \norm{\Pi_{\success}\calT (\id-\Pi^\square)\ket{\phi}}^2 &= \biggl\|\Pi_{\success}\calT (\id-\Pi^{\square})\Bigl(P + \sum_{\alpha,\beta\in \mathbb{Z}_{\geq 0}\colon \alpha+\beta \leq k} P_{\alpha,\beta}\Bigr)\ket{\phi}\biggr\|^2
    \notag
    \\
    &=\biggl\|\Pi_{\success}\calT (\id-\Pi^{\square})\Bigl(\sum_{\alpha,\beta\in \mathbb{Z}_{\geq 0}\colon 0< \alpha+\beta \leq k} P_{\alpha,\beta}\Bigr)\ket{\phi}\biggr\|^2
    \notag
    \\
    &= \biggl\|\Pi_{\success}\calT \Bigl(\sum_{\alpha,\beta\in \mathbb{Z}_{\geq 0}\colon 1\leq \alpha+\beta \leq k} P_{k,l}\Bigr) (\id-\Pi^{\square})\ket{\phi}\biggr\|^2
    \notag
    \\
    &\leq \frac{k(k+3)}{2}\cdot \frac{k}{N} \leq O\Bigl(\frac{1}{N}\Bigr),\label{eq:guessing_prob-cycle}
\end{align}
where the second equality uses $\Pi_{\success}P = 0$ and $(\id-\Pi^{\square}) P_{0,0} = 0$ as observed previously, and the second-to-last inequality uses \cref{eq:succ_ab}, the restriction on the sum of $P_{\alpha,\beta}$s to $1 \leq \alpha+\beta$, and $N\geq k$. 

The lemma follows from \cref{eq:guessing_prob-cycle} after taking square roots on both sides and recalling $N = \binom{n}{2}$.
\end{proof}

\subsection{Cycle finding upper bound}
In this section, we prove the following theorem.
\begin{theorem}\label{thm:worst-case-decision-k}
    Let $k\in \mathbb{N}$. For all $m,d \in \mathbb{N}$, $Q(\kcycle{k}_{m,d}) = O((2d)^{(k^2+k-6)/4}2^{2dk}m^{3/4-1/(2^{k+2}-4))})$.
\end{theorem}
To prove \cref{thm:worst-case-decision-k}, when $d\geq \log_2(m)$, the trivial bound $Q(\kcycle{k}_{m,d})\leq m$ applies, ensuring \cref{thm:worst-case-decision-k} is satisfied. When $d\leq \log_2(m)$, we employ the learning graph method to confirm the validity of \cref{thm:worst-case-decision-k}.
\begin{corollary}\label{cor:worst-case-decision-k}
    For all $m\in \mathbb{N}$, $d\leq O(\log(m)/\log\log(m))$, and $k\in O(1)$, $Q(\kcycle{k}_{m,d}) = O(m^{3/4-1/(2^{k+2}-4) + o(1)})$. In view of \cref{fact:sparse-random-graph-restated}, if $n\geq \Omega(m)$ there exists a quantum query algorithm using $O(m^{3/4-1/(2^{k+2}-4) + o(1)})$ queries that finds a triangle in a random sparse graph $x \leftarrow \binom{[n]}{2}^m$, or decides it does not exist, with success probability $\geq 1 - o(1)$.
\end{corollary}

Unlike the triangle-finding variant, where some problem-specific simplifications and improvements naturally emerged, our generic $k$-cycle algorithm retains a nearly \emph{identical} structure to Belovs's original learning graph algorithm for $k$-distinctness \cite{kdist_learning_graphs_12}. In fact, by simply setting $d$ equal to $k$ and replacing occurrences of the notion of ``incident to'' by ``equal to'', our algorithm essentially solves $k$-distinctness itself with a complexity of $O(m^{3/4-1/(2^{k+2}-4)})$. 

Although the proof closely resembles those of our triangle-finding algorithm and Belovs's $k$-distinctness algorithm, we give the proof in full for completeness and clarity.

\subsubsection{Learning graph construction}
We first describe the construction of the learning graph. For simplicity, we use $f$ to denote $\kcycle{k}_{m,d}$. The vertices of the learning graph are partitioned by
\begin{equation}
    \dot{\bigcup}_{i\in\mathbb{Z}_{[0,k+1]}}V^{(i)}.
\end{equation}
A vertex $S^{(i)}$ in $V^{(i)}$ is (labeled by) an array containing pairwise-disjoint subsets of $[m]$; we refer to entries of $S^{(i)}$ by
\begin{equation}
    S^{(i)} = \dot{\bigoplus}_{j=1}^{k}\left(\dot{\bigoplus}_{d_1,\dots,d_{j-1}\in[2d],\emptyset\neq D\subseteq [2d]}S_j^{(i)}(d_1,\dots,d_{j-1},D)\right)\in\mathcal{P}([m])^{\sum_{l=1}^{k}(2d)^{l-1}(2^{2d}-1)},
\end{equation}
where $\oplus$ means array concatenation, $\dot{\oplus}$ means the subsets being concatenated are pairwise disjoint, and $\mathcal{P}([m])$ denotes the power set of $[m]$.

We further impose conditions on the sizes of these sets using some $r_1,\dots,r_{k} \in \mathbb{N}$ that will be chosen later, at the end of \cref{subsubsection:learning_graph_complexity_for_k}. To describe these conditions, it is convenient to write $\mu(A)$ for the minimum element of a finite non-empty subset $A \subset \mathbb{Z}$. 

Then, for every $i\geq 1$, we impose that  $V^{(i)}$ can be partitioned by labels consisting of a length-$(i-1)$ sequence of non-empty subsets of $[2d]$, $D_1,D_2,\dots,D_{i-1}$, such that $V^{(i)}(D_1,\dots,D_{i-1})$ consists of all vertices $S^{(i)}$ such that the size of $S^{(i)}_j(\cdot)$ satisfies
\begin{equation}
    \abs{S^{(i)}_j(d_1,\dots,d_{j-1},D)} =
    \begin{cases}
        r_j+1, & j < i, d_1 = \mu(D_1),\dots,d_{j-1}=\mu(D_{j-1}), D = D_j,\\
        r_j, & \text{otherwise}.
    \end{cases}
\end{equation}

For each vertex $R$ in the learning graph, define
\begin{equation}
    \bigcup R_i\coloneqq \bigcup_{d_1,\dots,d_{i-1}\in[2d],\emptyset\neq D\subseteq [2d]} R_i(d_1,\dots,d_{i-1},D),
    \quad\text{and}\quad
    \bigcup R\coloneqq \bigcup_{i\in[k]}\left(\bigcup R_i\right).
\end{equation}
We refer to indices contained in $\bigcup R_i$ and $\bigcup R$ as having been loaded in $R_i$ and $R$, respectively. If there is an arc from vertex $R$ to vertex $S$ in the learning graph, by the construction in the following paragraphs, it must satisfy $\bigcup S = \bigcup R \cup \{j\}$. We denote the arc as $A^{R,S}_j$, and accordingly, we say $A^{R,S}_j$ is associated with the loading of $j$.

There exists a unique vertex $R^{(0)}\in V^{(0)}$ such that $\bigcup R^{(0)} = \emptyset$. We refer to this vertex as the source, denoted by $\emptyset$. Each vertex $R^{(1)}\in V^{(1)}$ loads exactly
\begin{equation}
    r \coloneqq \sum_{i=1}^{k}r_i(2d)^{i-1}(2^{2d}-1)
\end{equation}
indices. We fix an arbitrary ordering $t_1,\dots, t_r$ of indices in $\bigcup R^{(1)}$ such that all indices in $\bigcup R^{(1)}_i$ precede those in $\bigcup R^{(1)}_j$ if $i<j$. There exists a length $r$ path from the source $\emptyset$ to $R^{(1)}$, with all intermediate vertices lying in $V^{(0)}$. Along this path, the element $t_i$ is loaded on the $i$-th arc for $i\in[r]$ in stage I, and specifically, if $t_i\in \bigcup R_s^{(1)}$, it is loaded in stage I.$s$. All such paths are disjoint except at the source $\emptyset$. Therefore, there are $r\bigl|V^{(1)}\bigr|$ arcs in stage I.

In stage II.$s$ for $s\in[k]$, for each $R^{(s)}\in V^{(s)}(D_1,\dots,D_{s-1})$, $j\not\in\bigcup R^{(s)}$ and $\emptyset\neq D_s\subseteq[2d]$, there is an arc loading $j$, denoted  $A^{R^{(s)}, R^{(s+1)}}_j$, from $R^{(s)}$ to $R^{(s+1)} \in V^{(s+1)}(D_1,\dots,D_s)$ defined by
\begin{equation*}
\begin{cases}
    R^{(s+1)}_s(\mu(D_1),\dots,\mu(D_{s-1}),D_s) = R^{(s)}_s(\mu(D_1),\dots,\mu(D_{s-1}),D_s)\cup\{j\},\\
    R^{(s+1)}_i(d_1,\dots,d_{i-1},D) = R^{(s)}_i(d_1,\dots,d_{i-1},D) \quad\text{if }(i,d_1,\dots,d_{i-1},D)\neq (s,\mu(D_1),\dots,\mu(D_{s-1}),D_s).
\end{cases}
\end{equation*}

\subsubsection{Active arcs}
For every $x\in f^{-1}(1)$, let $\mathcal{C}(x) = \{a_1, a_2, \dots, a_k\}\subseteq[m]$ be a specific certificate for $x$ such that $x_{a_1}, \dots, x_{a_k}$ form a $k$-cycle, and $x_{a_i}$ is incident to $x_{a_{i+1}}$ for all $i\in[k-1]$.

We say $R^{(1)}\in V^{(1)}$ is consistent with $x\in f^{-1}(1)$ if edges indexed by $\mathcal{C}(x)$ are vertex disjoint from edges indexed by $\bigcup R^{(1)}$. Let $n_x$ be the number of $R^{(1)}\in V^{(1)}$ that are consistent with $x$, and define
\begin{equation}
    q\coloneqq \Bigl(\min_{x\in f^{-1}(1)}\{n_x\}\Bigr)^{-1}.
\end{equation}
so that for all $x\in f^{-1}(1)$, there exist at least $q^{-1}$ vertices in $V^{(1)}$ that are consistent with $x$. By \cref{fact:vertex-avoid}, when $kdr\leq o(m)$ --- a condition satisfied since we only need to consider $k\in O(1)$, $d\in\log_2(m)$, and use the $r$ specified in \cref{subsubsection:learning_graph_complexity_for_k} --- it follows that $q^{-1}\geq\Omega(\bigl|V^{(1)}\bigr|)$.

Let $\Cst(x)$ denote a fixed but arbitrary set of $q^{-1}$ vertices $R^{(1)}\in V^{(1)}$ that are consistent with $x$. For each $R^{(1)}\in \Cst(x)$, $\Act(x,R^{(1)})$ is a set of active arcs consisting of the following arcs:
\begin{itemize}
     \setlength{\itemsep}{0.5pt}
    \renewcommand{\labelitemi}{\tiny$\blacksquare$}
    \item
    In stage I, all arcs along the unique shortest length $r$ path from source $\emptyset$ to $R^{(1)}$.
    \item 
    In stage II.$s$, we define recursively so that if $R^{(s)}\in V^{(s)}$ has an incoming arc from $\Act(x, R^{(1)})$, all arcs $A_{a_s}^{R^{(s)},R^{(s+1)}}$ in the learning graph such that $R^{(s+1)} \in V^{(s+1)}$ are in $\Act(x, R^{(1)})$.
\end{itemize}

Lastly, the set of all active arcs of $x$ is defined by
\begin{equation}
    \mathrm{Act}(x)\coloneqq \dot{\bigcup}_{R^{(1)}\in \Cst(x)}\Act(x,R^{(1)}).
\end{equation}

\subsubsection{Matrices for the adversary bound}
For every vertex $R$ in the learning graph, an \emph{assignment on $R$} refers to a function $\alpha_R: \bigcup R\to \binom{[n]}{2}\cup\{*\}$ such that:
\begin{enumerate}
\item for all $j\in \bigcup R_1$,
\begin{equation}
   \alpha_R(j)\neq *;
\end{equation}
\item for all $j \in R_i(d_1,\dots,d_{i-1},D)$ such that $i>1$,
define
\begin{equation}
    N(R, d_1,\dots,d_{i-1})\coloneqq \bigcup_{d_{i-1}\in D'\subseteq[2d]}R_{i-1}(d_1, \dots, d_{i-2}, D'),
\end{equation}
then $\alpha_R(j)$ must satisfy
\begin{equation}
   \alpha_R(j)\in\{*\}\cup \Bigl\{e\in \scalebox{1.25}{$\binom{[n]}{2}$} \Bigm| \exists k\in N(R, d_1,\dots,d_{i-1})\text{ s.t. $e$  is incident to $\alpha_R(k)$} \Bigr\}.
\end{equation}
Note that the special symbol $*$ is not incident to any edge.
\end{enumerate}
We say an input $z\in \binom{[n]}{2}^m$ \emph{satisfies assignment $\alpha_{R}$} if, for all $t\in \bigcup R$,
\begin{equation}\label{eq:alpha_def_for_k}
\alpha_{R}(t) =
\begin{cases}
  z_t, 
    & \text{if $t \in \bigcup R_1$}, \\[1ex]
  z_t, 
    &
    \begin{aligned}[t]
      &\text{if $i>1$, $t \in R_i(d_1,\dots,d_{i-1},D)$ and}\\
      &\text{$\exists k\in N(R,d_1,\dots,d_{i-1})$  s.t. 
        $z_t$ is incident to $\alpha_R(k)$},
    \end{aligned}\\[1ex]
  *, 
    &\text{otherwise}.
\end{cases}
\end{equation}
For each vertex $R$ in the learning graph, we write $\alpha_R^x$ for the unique assignment on $R$ that $x$ satisfies. We say arc $A^{R,S}_j$ \emph{uncovers $j$} if $\alpha_S^x(j)\neq *$. We say inputs $x,y\in \binom{[n]}{2}^m$ \emph{agree on $R$} if they satisfy $\alpha_R^x = \alpha_R^y$; we also say they agree on a subset of $\bigcup R$ if the restriction of $\alpha_R^x$ and $\alpha_R^y$ to that subset equal.

Define
\begin{equation}
    X_j^{R,S}\coloneqq \sum_{\alpha_R} Y_{\alpha_R},
\end{equation}
where the summation is over all assignments $\alpha_R$ on $R$. 

For each arc $A_j^{R,S}$ in stage I.1, define $Y_{\alpha_R}\coloneqq q\psi_{\alpha_R}\psi_{\alpha_R}^\dagger$, where $\psi_{\alpha_R}$ is a real vector indexed by $\binom{[n]}{2}^m$ and defined entry-wise by
\begin{equation}
    \psi_{\alpha_R}[z] \coloneqq
    \begin{cases}
        1, & \text{if $f(z)=1$, $z$ satisfies $\alpha_R$, and $A^{R,S}_j\in \Act(z)$},\\
        1, & \text{if $f(z)=0$ and $z$ satisfies $\alpha_R$},\\
        0,&\text{otherwise}.
    \end{cases}
\end{equation}
With the above definition, we see that $X_j^{R,S}$ consists of blocks of the form:
\begin{equation}
\label{X_block_form_I.1_for_k}
\renewcommand{\arraystretch}{1.25}
\begin{array}{c|c|c}
    & x & y \\ \hline
x & q & q \\ \hline
y & q & q
\end{array}
\end{equation}
where $x\in f^{-1}(1)$, $y\in f^{-1}(0)$, $A^{R,S}_j\in \Act(x)$, and both $x$ and $y$ agree on $R$.

For arcs $A_j^{R,S}$ in stage I.$s$ where $s>1$, define $Y_{\alpha_R}\coloneqq q(\psi_{\alpha_R}\psi_{\alpha_R}^\dagger+\phi_{\alpha_R}\phi_{\alpha_R}^\dagger)$ where $\psi_{\alpha_R}$ and  $\phi_{\alpha_R}$ are real vectors indexed by $\binom{[n]}{2}^m$ and defined entry-wise by
\begin{equation}
    \psi_{\alpha_R}[z] \coloneqq
    \begin{cases}
        1/\sqrt{w_1}, & \text{if $f(z)=1$, $\alpha_S^z(j)\neq*$, $z$ satisfies $\alpha_R$, and $A^{R,S}_j\in \Act(z)$},\\
        \sqrt{w_1}, & \text{if $f(z)=0$,  and $z$ satisfies $\alpha_R$},\\
        0,&\text{otherwise;}
    \end{cases}
\end{equation}
and
\begin{equation}
    \phi_{\alpha_R}[z] \coloneqq
    \begin{cases}
        1/\sqrt{w_0}, & \text{if $f(z)=1$, $\alpha_S^z(j)=*$, $z$ satisfies $\alpha_R$, and $A^{R,S}_j\in \Act(z)$},\\
        \sqrt{w_0}, & \text{if $f(z)=0$, $\alpha_S^z(j)\neq*$, and $z$ satisfies $\alpha_R$},\\
        0, & \text{otherwise.}
    \end{cases}
\end{equation}
Here, $w_0$ and $w_1$ are positive real numbers that will be specified later. With the above definition, we see that $X_j^{R,S}$ consists of blocks of the form:
\begin{equation}
\label{X_block_form_I.2_for_k}
\begin{array}{l|c|c|c|c}
& x: x_j =\alpha_S^x(j)\neq * & x: \alpha_S^x(j)= * & y: y_j =\alpha_S^y(j)\neq * & y: \alpha_S^y(j)= * \\ \hline
x: x_j =\alpha_S^x(j)\neq * & q/w_1 & 0 & q & q \\ \hline
x: \alpha_S^x(j)= * & 0 & q/w_0 & q & 0 \\ \hline
y: y_j =\alpha_S^y(j)\neq * & q & q & q(w_0 + w_1) & qw_1 \\ \hline
y: \alpha_S^y(j)= * & q & 0 & qw_1 & qw_1 \\ 
\end{array}
\end{equation}
where $x\in f^{-1}(1)$, $y\in f^{-1}(0)$, $A^{R,S}_j\in \Act(x)$, and both $x$ and $y$ agree on $R$.

For arcs $A_j^{R,S}$ in stage II.$s$ where $s\in[k]$, suppose $R\in V^{(s)}(D_1,\dots,D_{s-1})$ and $S\in V^{(s+1)}(D_1,\dots, D_s)$, define $Y_{\alpha_R}\coloneqq q\psi_{\alpha_R}\psi_{\alpha_R}^\dagger$ where
\begin{equation}
    \psi_{\alpha_R}[z] \coloneqq
    \begin{cases}
        1/\sqrt{w_2},& \text{if $f(z)=1$, $z$ satisfies  $\alpha_R$, and $A_j^{R,S}\in \Act(z)$},\\
        (-1)^{s+\sum_{i=1}^{s}\abs{D_i}}\sqrt{w_2}, & \text{if $f(z)=0$, and $z$ satisfies $\alpha_R$},\\
        0,&\text{otherwise.}
    \end{cases}
\end{equation}
$X_j^{R,S}$ consists of blocks of the form:
\begin{equation}
\label{X_block_form_II_for_k}
\renewcommand{\arraystretch}{1.25}
\begin{array}{c|c|c}
    & x & y \\ \hline
x & q/w_2 & (-1)^{s+\sum_{i=1}^{s}\abs{D_i}}q \\ \hline
y & (-1)^{s+\sum_{i=1}^{s}\abs{D_i}}q & qw_2
\end{array}
\end{equation}
where $x\in f^{-1}(1)$, $y\in f^{-1}(0)$, $A^{R,S}_j\in \Act(x)$, and both $x$ and $y$ agree on $R$.

\subsubsection{Complexity}\label{subsubsection:learning_graph_complexity_for_k}

Assume constant $k\in \mathbb{N}_{\geq 3}$, we show $Q(\kcycle{k}_{m,d}) \leq O((2d)^{(k^2+k-6)/4}\, 2^{2dk} \, m^{3/4-1/(2^{k+2}-4)})$  by computing \cref{complexity_of_learning_graph}. Furthermore, if $d\leq O(\log(m)/\log\log(m))$, we have $Q(\kcycle{k}_{m,d}) \leq O(m^{3/4-1/(2^{k+2}-4)+o(1)})$.
\begin{enumerate}
    \item 
    For stage I.1, there are $r_1(2^{2d}-1)\bigl|V^{(1)}\bigr|$ arcs in this stage. By \cref{X_block_form_I.1_for_k}, each of them contributes $q$ to the complexity, so the complexity of this stage is $r_1(2^{2d}-1)\bigl|V^{(1)}\bigr|q = O(2^{2d}r_1)$.
    \item 
    For stage I.$s$ with $s>1$, there are $r_s(2d)^{s-1}(2^{2d}-1)\bigl|V^{(1)}\bigr|$ arcs in this stage. For each input $z$, we need to bound the number of arcs that can uncover an element, so we get a refined bound on the contribution from \cref{X_block_form_I.2_for_k}. Suppose such an arc in this stage is on the shortest length-$r$ path from source $\emptyset$ to $R^{(1)}\in V^{(1)}$, and is uncovering $b_s$ in $\bigcup R_s^{(1)}$, then, there must exist a length-$s$ sequence $(z_{b_1}, z_{b_2}, \dots, z_{b_s})$ such that $b_i\in \bigcup R^{(1)}_i$ and $z_{b_i}$ is incident to $z_{b_{i+1}}$ for all $i\in[s-1]$. The number of such sequences $(b_1,\dots,b_s)$ is at most $m(2d)^{s-1}$, and the number of choices of $R^{(1)}$ loads $b_i$ in stage I.$i$ for all $i\in[s]$ is upper bounded by
    \begin{equation}
    \begin{aligned}
        &(2d)^{s(s-1)/2}(2^{2d}-1)^s\binom{m-s}{r_1-1,\underbrace{r_1,\cdots,r_1}_{(2^{2d}-1)-1},r_2-1,\underbrace{r_2,\cdots,r_2}_{(2d)(2^{2d}-1)-1},\dots, r_s-1,\underbrace{r_s,\cdots,r_s}_{(2d)^{s-1}(2^{2d}-1)-1}}\\
        =& O\Bigl(\frac{r_1r_2\cdots r_s(2d)^{\frac{s(s-1)}{2}} 2^{2ds}}{m^s}\bigl|V^{(1)}\bigr|\Bigr).
    \end{aligned}
    \end{equation}
    Therefore, for a negative input, by \cref{X_block_form_I.2_for_k}, stage I.$s$'s contribution to the complexity is
    \begin{equation}
    \begin{aligned}
        &O\Bigl(qw_0(m(2d)^{s-1})\Bigl(\frac{r_1\cdots r_s(2d)^{s(s-1)/2} 2^{2ds}}{m^s}\bigl|V^{(1)}\bigr|\Bigr)+qw_1\Bigl(r_s(2d)^{s-1}(2^{2d}-1)\bigl|V^{(1)}\bigr|\Bigr)\Bigr)\\
        =&O\Bigl(\frac{r_1\cdots r_s(2d)^{(s^2+s-2)/2}2^{2ds}}{m^{s-1}}w_0+r_s(2d)^{s-1}2^{2d}w_1\Bigr);
    \end{aligned}
    \end{equation}
    and for a positive input, by \cref{X_block_form_I.2_for_k}, stage I.$s$'s contribution to the complexity is
    \begin{equation}
    \begin{aligned}
        &O\Bigl(\frac{q}{w_1}(m(2d)^{s-1})\Bigl(\frac{r_1\cdots r_s(2d)^{s(s-1)/2} 2^{2ds}}{m^s}\bigl|V^{(1)}\bigr|\Bigr)+\frac{q}{w_0}\Bigl(r_s(2d)^{s-1}(2^{2d}-1)\bigl|V^{(1)}\bigr|\Bigr)\Bigr)\\
        =&O\Bigl(\frac{r_1\cdots r_s(2d)^{(s^2+s-2)/2}2^{2ds}}{m^{s-1}w_1}+\frac{r_s(2d)^{s-1}2^{2d}}{w_0}\Bigr).
    \end{aligned}
    \end{equation}
    If we set $w_0=((2d)^{s(s-1)/4}2^{d(s-1)})^{-1}\sqrt{\frac{m^{s-1}}{r_1\cdots r_{s-1}}}$, and $w_1 = 1/w_0$, the total contribution to the complexity is
    \begin{equation}
        O\Bigl((2d)^{(s^2+3s-4)/4}2^{d(s+1)}r_s\sqrt{\frac{r_1\cdots r_{s-1}}{m^{s-1}}}\Bigr).
    \end{equation}
    \item 
    For stage II.$s$ when $s\in[k]$, each vertex in $V^{(s)}$ has $(m-r-s+1)(2^{2d}-1)$ outgoing arcs, and if $s>1$, each vertex in $V^{(s)}$ has $r_{s-1}+1$ incoming arcs. Therefore, the total number of arcs on stage II.$s$ is
    \begin{equation}
        \frac{\prod_{i=1}^s(m-r-i+1)(2^{2d}-1)}{\prod_{j=2}^{s}(r_{j-1}+1)}\bigl|V^{(1)}\bigr|\leq O\Bigl(\frac{m^s2^{2ds}\bigl|V^{(1)}\bigr|}{r_1\cdots r_{s-1}}\Bigr).
    \end{equation}
    As we will see later, $r_i\in o(m)$ for all $i\in[k]$, so the total number of arcs in stage II is upper bounded by $O(m^k2^{2dk}\bigl|V^{(1)}\bigr|/(r_1\cdots r_{k-1}))$. Among them, if a vertex in $V^{(s)}$ has an incoming active arc, that vertex has $2^{2d}-1$ outgoing active arcs. Therefore, there are at most $q^{-1}\sum_{i=1}^k(2^{2d}-1)^i = O(2^{2dk}q^{-1})$ active arcs in stage II in total. For every negative input, by \cref{X_block_form_II_for_k}, each arc contributes $qw_2$ to the complexity, so the total contribution is
    \begin{equation}
        O\Bigl(\frac{m^k2^{2dk}\bigl|V^{(1)}\bigr|}{r_1\cdots r_{k-1}}qw_2\Bigr) = O\Bigl(2^{2dk}\frac{m^kw_2}{r_1\cdots r_{k-1}}\Bigr).
    \end{equation}
    For every positive input, by \cref{X_block_form_II_for_k}, each active arc contributes $q/w_2$ to the complexity, so the total contribution is upper bounded by
    \begin{equation}
        O(2^{2dk}q^{-1}) \frac{q}{w_2} \leq O(2^{2dk}/w_2).
    \end{equation}
    To balance the contribution from negative inputs and positive inputs, we can set $w_2=\sqrt{\frac{r_1\cdots r_{k-1}}{m^k}}$ so that the total contribution in stage II is
    \begin{equation}
        O\biggl(2^{2dk}\sqrt{\frac{m^k}{r_1\cdots r_{k-1}}}\, \biggr).
    \end{equation}
\end{enumerate}

The total complexity of all stages is
\begin{equation}
    O\biggl(2^{2d}r_1 + \Bigl(\sum_{i=2}^k (2d)^{(i^2+3i-4)/4} \, 2^{d(i+1)}r_i\sqrt{\frac{r_1\cdots r_{i-1}}{m^{i-1}}}\Bigr) + 2^{2dk}\sqrt{\frac{m^k}{r_1\cdots r_{k-1}}}\, \biggr).
\end{equation}
To (approximately) balance the summands, we set $r_i = \ceil{m^{1-1/2^{i+1}-1/(2^{i+1}(2^k-1))}}$ for all $i\in[k-1]$ and $r_k=0$. The above equation leads to a complexity of $O((2d)^{(k^2+k-6)/4}\, 2^{2dk} \, m^{3/4-1/(2^{k+2}-4)})$.

\subsubsection{Feasibility}
Fix two inputs $x\in f^{-1}(1)$ and $y\in f^{-1}(0)$, to prove \cref{feasibility_of_learning_graph} holds, it is equivalent to show
\begin{equation}
    \sum_{A_j^{R,S} \in \Act(x) \colon x_j \neq y_j} X_j^{R,S}[x,y] = 1.
\end{equation}
Since $\Act(x)= \dot{\bigcup}_{R^{(1)}\in \Cst(x)}\Act(x,R^{(1)})$, it suffices to prove that for all $R^{(1)}\in \Cst(x)$,
\begin{equation}\label{eq:feasibility_for_k}
    \sum_{A_j^{R,S} \in \Act(x,R^{(1)}) \colon x_j \neq y_j} X_j^{R,S}[x,y] = \frac{1}{\abs{\Cst(x)}} = q,
\end{equation}

Recall $\mathcal{C}(x) = \{a_1,\dots, a_k\}$ is a specific certificate for $x$ that we chose. Let $t_1, t_2,\cdots,t_r\in[m]$ be the elements in $\bigcup R^{(1)}$ given in the order in which they are loaded in $R^{(1)}$. For $i\in\mathbb{Z}_{[0,r]}$, let $T_i$ be the vertex that has loaded $i$ elements on the unique shortest length-$r$ path from source $\emptyset$ to $R^{(1)}$ ($\abs{\bigcup T_i} = i$, $T_0 = \emptyset$ and $T_r= R^{(1)}$), so that arcs from $\emptyset$ to $R^{(1)}$ is in the form of $A_{t_i}^{T_{i-1},T_i}\in \Act(x, R^{(1)})$.

We perform the following case analysis depending on whether $x$ and $y$ agree on $R^{(1)}$.
\begin{itemize}
\renewcommand{\labelitemi}{\tiny$\blacksquare$}
    \item 
    If $x$ and $y$ disagree on $R^{(1)}$, there exists $i^* \in [r]$ such that $x$ and $y$ disagree on $T_i$ if and only if $i\geq i^*$. This follows from the fact that, for each arc $A_j^{R,S}$ in stage I, the following holds by \cref{eq:alpha_def_for_k}:
    \begin{equation}
        \forall l\in \bigcup R, \alpha_S^x(l) = \alpha_R^x(l) \text{ and } \alpha_S^y(l) = \alpha_R^y(l),
    \end{equation}
    so if $x$ and $y$ disagree on $R$, then they disagree on $S$.
    
    We show that $X_{t_{i^*}}^{T_{i^*-1}, T_{i^*}}[x,y] = q$, and for each $A_j^{R,S}\in \Act(x,R^{(1)})\backslash\{A_{t_{i^*}}^{T_{i^*-1}, T_{i^*}}\}$, either $x_j = y_j$ or $X_j^{R,S}[x,y] = 0$.
    \begin{itemize}
        \item 
        When we load $t_{i^*}$ into $T_{i^*-1}$, $x$ and $y$ agree on $T_{i^*-1}$ but not $T_{i^*}$. Since $x$ and $y$ agree on $T_{i^*-1}$, for all $i\in \bigcup T_{i^*-1}$, we have
        \begin{equation}
            \alpha_{T_{i^*}}^x(i) = \alpha_{T_{i^*-1}}^x(i) = \alpha_{T_{i^*-1}}^y(i) = \alpha_{T_{i^*}}^y(i),
        \end{equation}
        so we must have $\alpha_{T_{i^*}}^x(t_{i^*}) \neq \alpha_{T_{i^*}}^y(t_{i^*})$ as $x$ and $y$ disagree on $T_{i^*}$. Therefore, $x_{t_{i^*}} \neq y_{t_{i^*}}$ as otherwise $\alpha_{T_{i^*}}^x(t_{i^*}) = \alpha_{T_{i^*}}^y(t_{i^*})$. Hence, the term $X_{t_{i^*}}^{T_{i^*-1}, T_{i^*}}[x,y]$ is included in the summation in \cref{eq:feasibility_for_k}. Depending on which stage $A_{t_{i^*}}^{T_{i^*-1}, T_{i^*}}$ is in, either by \cref{X_block_form_I.1_for_k} or by \cref{X_block_form_I.2_for_k},
        \begin{equation}
            X_{t_{i^*}}^{T_{i^*-1}, T_{i^*}}[x,y] = q,
        \end{equation}
        
        \item
        When we load $t_i$ into $T_{i-1}$ such that $i\in[i^*-1]$, $x$ and $y$ agree on both $T_{i-1}$ and $T_{i}$. Then, either $x_{t_i} = \alpha_{T_i}^x(t_i) = \alpha_{T_i}^y(t_i) = y_{t_i} \neq *$, or $\alpha_{T_i}^x(t_i) = \alpha_{T_i}^y(t_i) = *$ so that by \cref{X_block_form_I.2_for_k}, $X^{T_{i-1},T_i}_{t_i}[x,y] = 0$. Therefore, arc $A_{t_{i}}^{T_{i-1}, T_{i}}$ does not contribute to \cref{eq:feasibility_for_k}.
        \item 
        Every other arc $A^{R,S}_j\in \Act(x,R^{(1)})$ is either in stage I and of the form $A_{t_i}^{T_{i-1},T_i}$ for some $i>i^*$, $i\in[r]$, or the arc is in stage II. If the arc is in stage I, then $x$ and $y$ disagree on $R$. Depending on which stage $A_{t_{i}}^{T_{i-1}, T_{i}}$ is in, either by \cref{X_block_form_I.1_for_k} or by \cref{X_block_form_I.2_for_k}, $X_j^{R,S}[x,y]=0$.
        
        If $A^{R,S}_j$ is in stage II, it suffices to show that if $x$ and $y$ disagree on $R$, then they also disagree on $S$. If this holds, a short induction establishes that $x$ and $y$ disagree on $R$ for all $A^{R,S}_j\in \Act(x, R^{(1)})$ in stage II. By \cref{X_block_form_II_for_k}, it then follows that $X_j^{R,S}[x,y]=0$.
        
        By the definition of consistency, after loading $j\in\mathcal{C}(x)$,
        \begin{equation}
            \forall i\in \bigcup R, \alpha_S^x(i) = \alpha_R^x(i).
        \end{equation}
        If $x$ and $y$ agree on $S$, we must have
        \begin{equation}\label{eq:feasibility_contradiciton_agree_on_S_for_k}
            \forall i\in \bigcup R, \alpha_S^y(i) = \alpha_S^x(i) = \alpha_R^x(i).
        \end{equation}
        However, there exists $l\in\bigcup R$ such that $\alpha_R^y(l) \neq \alpha_R^x(l) = \alpha_S^y(l)$ because $x$ and $y$ disagree on $R$. Since $\{\alpha_R^y(l),\alpha_S^y(l)\} \subseteq \{*, y_l\}$, if \cref{eq:feasibility_contradiciton_agree_on_S_for_k} holds, it follows that $\alpha_R^y(l) = *$ and $y_l = \alpha_S^y(l) = \alpha_S^x(l) = x_l$. Furthermore, one such $l$ must satisfy $l\in \bigcup S_{i+1}$ and $j\in \bigcup S_i$ for some $i\in[k-1]$, implying $j=a_i$, and $y_l = x_l$ is incident to $y_{a_i} = x_{a_i}$. This contradicts the definition of consistency. Therefore, $x$ and $y$ must disagree on $S$.
    \end{itemize}
    \item 
    If $x$ and $y$ agree on $R^{(1)}$, we first show the contribution from arcs in $\Act(x,R^{(1)})$ in stage I to \cref{eq:feasibility_for_k} is 0. For each $i\in[r]$, when we load $t_i$ into $T_{i-1}$, $x$ and $y$ agree on both $T_{i-1}$ and $T_{i}$. An identical argument as in a previous case when $i\in[i^*-1]$ shows arc $A_{t_{i}}^{T_{i-1}, T_{i}}$ does not contribute to \cref{eq:feasibility_for_k}.

    For stage II, let $l$ be the smallest number in $[l]$ that $x_{a_l} \neq y_{a_l}$. Such $l$ must exist because $y$ is a negative instance. For active arcs in stage II.$l'$ such that $l' < l$, those arcs are loading $a_{l'}$ but $x_{a_{l'}}=y_{a_{l'}}$, so those arcs' $X$ are not included in the summation in \cref{eq:feasibility_for_k}.

    For arcs in stage II.$l'$ such that $l' > l$, $x$ and $y$ disagree on the vertices before loading $a_{l'}$. This is because $a_{l}$ gets uncovered in $x$, so $y_{a_{l}}$ must be equal to $x_{a_l}$ for $x$ and $y$ to agree, which leads a contradiction. Then, by \cref{X_block_form_II_for_k}, those active arcs' contribution to \cref{eq:feasibility_for_k} is also 0.

    Next, we show the contribution from arcs in $\Act(x,R^{(1)})$ in stage II.$l$ to \cref{eq:feasibility_for_k} is exactly $q$. For each $s\in[l]$ and sequence $(D_1, \dots, D_{s-1})$, there exists exactly one vertex $T^{(s)}\in V^{(s)}(D_1,\dots, D_{s-1})$ such that $\bigcup T^{(s)} = \bigcup R^{(1)}\cup\{a_1,\dots,a_{s-1}\}$.

    Let $B(s, D_1,\dots, D_{s-1})$ be the set of vertices in $V^{(l)}$ reachable from $T^{(s)}$ by arcs in $\Act(x, R^{(1)})$. There is a bijection between vertices in $B(s, D_1,\dots, D_{s-1})$ and choices $(D_s, \dots, D_{l-1})$; explicitly, each $R\in B(s, D_1,\dots, D_{s-1})$ corresponds uniquely to a single choice of $(D_s, \dots, D_{l-1})$, satisfying $R\in V^{(l)}(D_1,\dots, D_{l-1})$. Define the contribution from $T^{(s)}$ to \cref{eq:feasibility_for_k} as
    \begin{equation}
        C(T^{(s)}) \coloneqq \sum_{R\in B(s, D_1,\dots, D_{s-1}), A_{a_l}^{R,S}\in\Act(x, R^{(1)})}X_{a_l}^{R,S}[x,y].
    \end{equation}
    Then, our goal becomes showing $C(R^{(1)}) = q$, and we do this by induction on $s$ from $l$ down to 1, and showing
    \begin{equation}
        C(T^{(s)}) =
        \begin{cases}
            (-1)^{1+s+\sum_{i=1}^{s-1}\abs{D_i}}q & \text{if $x$ and $y$ agree on $T^{(s)}$,}\\
            0 & \text{if $x$ and $y$ disagree on $T^{(s)}$.}
        \end{cases}
    \end{equation}
    \begin{itemize}
    \item 
    If $s=l$, $T^{(s)}$ vertex has $2^{2d}-1$ outgoing active arcs in $\Act(x, R^{(1)})$, one for each $\emptyset\neq D_l\subseteq [2d]$, to vertex $T^{(l+1)}\in V^{(l+1)}(D_1,\dots,D_{l-1},D_l)$. If $x$ and $y$ disagree on $T^{(s)}$, all these arcs' contribution to \cref{eq:feasibility_for_k} is 0 by \cref{X_block_form_II_for_k}. If $x$ and $y$ agree on $T^{(s)}$, since $x_{a_l}\neq y_{a_l}$, all these outgoing active arcs' contribution are included in \cref{eq:feasibility_for_k}. By \cref{X_block_form_II_for_k}, the total contribution from these arcs to \cref{eq:feasibility_for_k} is
    \begin{equation}
        \sum_{\emptyset\neq D_l\subseteq [2d]}(-1)^{l+\sum_{i=1}^l\abs{D_i}}q = (-1)^{1+l+\sum_{i=1}^{l-1}\abs{D_i}}q.
    \end{equation}
    This proves the base case.
    \item 
    When $s<l$, for each $\emptyset\neq D_s\subseteq [2d]$, there exists a vertex in $T^{(s+1)}(D_s)\in V^{(s+1)}(D_1,\dots,D_s)$ such that $A_{a_s}^{T^{(s)}, T^{(s+1)}(D_s)}\in\Act(x, R^{(1)})$. Notice 
    \begin{equation}
        B(s,D_1,\dots, D_{s-1}) = \dot{\bigcup_{\emptyset\neq D_s\subseteq[2d]}} B(s+1,D_1,\dots, D_s),
    \end{equation}
    so
    \begin{equation}
        C(T^{(s)}) = \sum_{\emptyset\neq D_s\subseteq[2d]} C(T^{(s+1)}(D_s)).
    \end{equation}
    \begin{itemize}
        \item 
        If $x$ and $y$ disagree on $T^{(s)}$, since we are in stage II, as proved in a previous part, $x$ and $y$ disagree on $T^{(s+1)}(D_s)$ for all $\emptyset\neq D_s \subseteq[2d]$. Assuming the induction hypothesis, we have $C(T^{(s)}) = 0$.
        \item 
        If $x$ and $y$ agree on $T^{(s)}$, although $x_{a_s} = y_{a_s}$, $x$ and $y$ may not agree on $T^{(s+1)}(D_s)$ due to what are known as \emph{faults}. We say that an index $j\in \bigcup R^{(1)}_{s+1}$ is \emph{faulty} if $y_j$ is incident to $y_{a_s}$. Let $I$ denote the set of all $i\in[2d]$ such that
        \begin{equation}
            \bigcup_{\emptyset\neq D\subseteq[2d]} T^{(s)}_{s+1}(\mu(D_1),\dots,\mu(D_{s-1}), i, D),
        \end{equation}
        does not contain a faulty index. By the definitions of assignment and consistency, and the assumption that $x$ and $y$ agree on $T^{(s)}$, since $x_{a_s} = y_{a_s}$, $x$ and $y$ agree on $T^{(s+1)}(D_s)$ if and only if $\emptyset\neq D_s\subseteq I$.        
        There are at most $2(d-1)$ faulty indices since the graph has a max degree of $d$, so at most $2(d-1)$ other edges in $y$ can be incident to $y_{a_s}$ --- a necessary condition for causing a fault. Therefore, $\abs{I} \geq 2d-2(d-1) > 0$. By the induction hypothesis,
        \begin{align}
            C(T^{(s)}) =& \sum_{\emptyset\neq D_s\subseteq I} C(T^{(s+1)}(D_s)) \notag
            \\
            =& \sum_{\emptyset\neq D_s\subseteq I}(-1)^{1+(s+1)+\sum_{i=1}^{s}\abs{D_i}}q = (-1)^{1+s+\sum_{i=1}^{s-1}\abs{D_i}}q.
        \end{align}
    \end{itemize}
    \end{itemize}
    The above proof completes the induction and shows that $C(R^{(1)})$, the contribution from arcs in $\Act(x,R^{(1)})$ in stage II.$l$ to \cref{eq:feasibility_for_k}, is exactly $q$.
\end{itemize}

\subsection{Towards a tight lower bound for \texorpdfstring{$k$-distinctness}{kdist}}

In this subsection, we show that the next conjecture suffices to lift our nearly tight lower bound for $k$-cycle finding in the low maximum degree regime to a nearly tight lower bound for $k$-distinctness.

 \begin{conjecture} \label{conj:graph_property_partition}
    Let $m,n,b\in \mathbb{N}$ be such that $b=\ceil{\log(n)}$ and $n$ is divisible by $b$. Let $k\in \mathbb{N}_{\geq 3}$ and $\kcycle{k}\colon \binom{[n]}{2}^m \to \{0,1\}$ denote the $k$-cycle function. Let $U\subseteq \binom{[n]}{2}^m$ consist of all $x\in \binom{[n]}{2}^m$ such that $x$ does not contain repeated edges. 

    \begin{enumerate}
        \item Let $D\subseteq U$ consist of all $x\in U$ such that there \emph{exists} a partition of the vertex set $[n]$ into $n/b$ parts of size $b$ each such that every edge in $x$ connects two vertices \emph{within} the same part.
    
        \item Let $\calP$ be a partition of the vertex set $[n]$ into $n/b$ parts of size $b$ each, and let $D[\calP]$ consist of all $x\in D$ such that every edge in $x$ connects two vertices \emph{within} the same part \emph{of $\calP$}.
    \end{enumerate}
    Then the following relationship holds between $\kcycle{k}$ restricted to domain $D$ and $D[\calP]$:
    \begin{equation}
        Q(\kcycle{k}|_{D}) \leq  O(Q(\kcycle{k}|_{D[\calP]})).
    \end{equation}
\end{conjecture}

By definition, $D$ is the union of $D[\calP]$ over all partitions $\calP$. Therefore, the reverse inequality, $Q(\kcycle{k}|_{D[\calP]}) \leq  Q(\kcycle{k}|_{D})$, trivially holds. Why is the conjecture plausible? Intuitively, in $\kcycle{k}|_{D}$ we are computing $\kcycle{k}$ without access to the partition of the input graph, while in $\kcycle{k}|_{D[\calP]}$ we are given this extra information. However, the symmetries of the $\kcycle{k}$ problem appear to make this extra information hard to exploit. This is somewhat reminiscent of the fact that increasing the alphabet size of a highly symmetric function does not increase its approximate degree (a measure of complexity) as established by Ambainis in \cite{ambainis_symmetrize}. Indeed, the conjecture may also hold for functions other than $\kcycle{k}$, in particular, functions satisfying graph symmetries.

We now prove that the conjecture implies a tight lower bound for $k$-distinctness. The idea is that (i) $\kcycle{k}|_{D[\calP]}$ is $k$-distinctness in disguise and (ii) our lower bound for $k$-cycle in fact also holds for $\kcycle{k}|_{D}$. The next two lemmas show (i) and (ii) respectively.

\begin{lemma}\label{eq:conjecture_lemma1}
Let $k,b,m, D[\calP]$ and $\kcycle{k}$ be as defined in \cref{conj:graph_property_partition}. Then $Q(\kcycle{k}|_{D[\calP]}) \leq O(Q(\OR_{b^k} \circ \kdist{k}_{m})) \leq O(\sqrt{b^k}\cdot Q(\kdist{k}_m))$.
\end{lemma}
\begin{proof}
    The second inequality follows from well-known composition theorems for quantum query complexity \cite{reichardt_direct_sum} so we focus on proving the first.

    Given an input $x \in D[\calP]$ of length $m$, we proceed to construct $b^k$ many length-$m$ inputs to $\kdist{k}_m$, denoted $y^{(1)},\dots,y^{(b^k)}$, such that $\kdist{k}_m(y^{(j)})=1$ for some $j \in [b^k]$ if and only if $\kcycle{k}(x) = 1$.

    Let $P_1, P_2, \dots, P_{n/b}$ be an arbitrary but fixed enumeration of the parts in $\calP$. For each $\ell \in [n/b]$, let $P_\ell^{k} \coloneqq \{(v_1, v_2, \dots, v_k, v_1):v_1,v_2,\dots,v_k \in P_\ell\}$. For each $j \in [b^k]$, let $P_\ell^{k}[j]$ denote the $j$th element of $P_\ell^{k}$ under an arbitrary but fixed enumeration. For an edge $\{u,v\}\in \binom{[n]}{2}$, we write $\{u,v\} \in_{\text{adj}} P_\ell^{k}[j]$ if vertices $u$ and $v$ are consecutive elements in $P_\ell^{k}[j]$. For $j \in [b^k]$ and $i \in [n/b]$, let $y^{(j)}_i$ be $P_\ell^{k}[j]$ if $x_i \in_{\text{adj}} P_\ell^{k}[j]$ for some $\ell \in [n/b]$, and $x_i$ otherwise. (To be clear, $P_\ell^{k}[j]$ is a tuple, so it can never equal an edge like $x_i$, which is a set.) Since $\calP$ is given, querying $y_i^{(j)}$ only requires querying $x_i$ for all $i \in [n/b]$ and $j \in [b^k]$.

    We see $\kdist{k}_m(y^{(j)})=1$ for some $j \in [b^k]$ implies $\kcycle{k}(x) = 1$ as follows. Since $y^{(j)}$ is a positive instance of $\kdist{k}_m$, there exists $\ell \in [n/b]$ such that $P_\ell^{k}[j]$ appears (at least) $k$ times, which means that there exist indices $i_1, \dots, i_k$ such that edges $x_{i_1}, \dots, x_{i_k} \in_{\text{adj}} P_\ell^{k}[j]$. Let $v_1, \dots, v_k$ be such that $P_\ell^{k}[j] = (v_1, v_2, \dots, v_k, v_1)$. Since there are no repeated edges in $x\in D[\calP]$ by the definition of $D[\calP]$, the set of edges $\{x_{i_1}, \dots, x_{i_k}\}$ must be the same as $\{\{v_1, v_2\}, \{v_2, v_3\}, \dots, \{v_{k-1}, v_k\}, \{v_k, v_1\}\}$ by the definition of $\in_{\text{adj}}$. Hence $x$ is a positive instance of $\kcycle{k}$.

     We see $\kcycle{k}(x) = 1$ implies $\kdist{k}_m(y^{(j)})=1$ for some $j \in [b^k]$ as follows. Since $x$ is a positive instance of $\kcycle{k}$, there exist vertices $v_1, v_2, \dots, v_k$ such that $x$ contains the edges $\{v_1, v_2\}, \{v_2, v_3\}, \dots, \{v_{k-1}, v_k\}, \{v_k, v_1\}$. Since $x \in D[\calP]$, the vertices $v_1, v_2, \dots, v_k$ must be in $P_\ell$ for some $\ell \in [n/b]$. Therefore, there exists $j \in [b^k]$ such that $P_\ell^k[j]$ is $(v_1, v_2, \dots, v_k, v_1)$. It follows that $\{v_1, v_2\}, \{v_2, v_3\}, \dots, \{v_{k-1}, v_k\}, \{v_k, v_1\} \in_{\text{adj}} P_\ell^k[j]$, which implies that $P_\ell^k[j]$ appears $k$ times in $y^{(j)}$. Hence $y^{(j)}$ is a positive instance of $\kdist{k}_m$.  
\end{proof}

Proving the next lemma requires some more basic facts about random graphs, which we collect here. The proof of this fact is given in \cref{app:random_graphs}.
\begin{fact}\label{fact:cycle_partition_duplicates}
    Let $n,m,b\in \mathbb{N}$ be such $b=\ceil{\log(n)}$ and $n$ is divisible by $b$. Let $k\in \mathbb{N}_{\geq 3}$. Suppose $m = \Theta(n/b)$ and $x\leftarrow \binom{[n]}{2}^m$, then:
    \begin{enumerate}
    \item with probability at least $\Omega(1/\log^k(n))$, $x$ contains a $k$-cycle;
    \item with probability at least $1 - o(1/\log^{k}(n))$, there exists a partition of the vertex set $[n]$ into $n/b$ sets of size $b$ each such that every edge in $x$ connects two vertices \emph{within} the same part;
    \item with probability at least $1 - o(1/\log^k(n))$, there are fewer than $k$ edges in $x$ whose deletion leaves $x$ not containing any repeated edges.
\end{enumerate}
In particular, the probability that all three events above occur together is at least $\Omega(1/\log^k(n))$.
\end{fact}

We remark that for the second and third items, we can in fact lower bound the probabilities by quantities considerably larger than $1 - o(1/\log^k(n))$, as will be clear from the proof, but the results as stated suffice for our purpose of proving the next lemma.

\begin{lemma}\label{eq:conjecture_lemma2}
    Let $k,m,n,b,D$ and $\kcycle{k}$ be as defined in \cref{conj:graph_property_partition}. Suppose $n$ is such that $m = \Theta(n/b)$. Then $Q(\kcycle{k}|_{D}) \geq \widetilde{\Omega}(m^{3/4-1/(2^{k+2}-4)})$.
\end{lemma}
\begin{proof}
    Assume for contradiction that $Q(\kcycle{k}|_{D}) \leq o(m^{3/4-1/(2^{k+2}-4)}/\log^c(m))$ for some constant $c$ that will be set later.
    
    Consider $x\leftarrow \binom{[n]}{2}^m$. By \cref{fact:cycle_partition_duplicates}, with probability at least $\Omega(1/\log^k(n))$, all three events ---- (i), (ii), and (iii) ---  described in \cref{fact:cycle_partition_duplicates} happen together. Suppose we are in this ``good'' scenario for the rest of this paragraph except during the last sentence. Since (iii) holds, we can use the element distinctness algorithm $k$ times to preprocess the input $x$ to remove all of its repeated edges at a cost of $O(k m^{2/3})$, which is $O(m^{2/3})$ since $k$ is constant. This preprocessing, together with the assumption that (ii) holds, means that $x$ is now in $D$, the domain of $\kcycle{k}|_{D}$. Therefore, using our assumption at the start of the proof together with the search-to-decision reduction (see preliminaries section), we can find the $k$-cycle with constant probability using $o(m^{3/4-1/(2^{k+2}-4)}/\log^{c-2}(m)) + O(m^{2/3})$ queries, which can be simply written as $o(m^{3/4-1/(2^{k+2}-4)}/\log^{c-2}(m))$ since $2/3<3/4-1/(2^{k+2}-4)$ for all $k\geq 3$. Since the good scenario occurs with probability at least $\Omega(1/\log^k(n))$, this means we have constructed a quantum algorithm $\calA$ using $o(m^{3/4-1/(2^{k+2}-4)}/\log^{c-2}(m))$ queries that finds a $k$-cycle in $x$ with probability at least $\Omega(1/\log^k(n))$, where the probability is over the randomness in $\calA$ and the distribution $x\leftarrow \binom{[n]}{2}^m$.

    We will now reach a contradiction by using our lower bound theorem for $k$-cycle finding, \cref{thm:kcycle_avg_lower}. To apply \cref{thm:kcycle_avg_lower}, we set the $\Delta$ in its statement to $\log(n)$, and choose $c$ satisfying $(c-2)(2-2/2^k)-k\geq k+1$. Since $\calA$ uses only $o(m^{3/4-1/(2^{k+2}-4)}/\log^{c-2}(m))$ queries, the theorem shows that the probability of $\calA$ finding a $k$-cycle is at most $O(1/\log^{k+1}(n))$. Again, the probability is over the randomness in $\calA$ and the distribution $x\leftarrow \binom{[n]}{2}^m$.

    The last two paragraphs give $\Omega(1/\log^k(n)) \leq O(1/\log^{k+1}(n))$, which is the desired contradiction.
\end{proof}

\begin{proposition}\label{prop:conjecture_implies_lower}
    Assuming \cref{conj:graph_property_partition}, $Q(\kdist{k}_m) \geq \widetilde{\Omega}(m^{3/4-1/(2^{k+2}-4)})$.
\end{proposition}
\begin{proof}
    Let $n\in \mathbb{N}$ be such that $n$ is divisible by $b\coloneqq \ceil{\log(n)}$ and $m = \Theta(n/b)$. (A simple argument shows such $n$ exists for any given $m$.) Let $k\in \mathbb{N}_{\geq 3}$ and $\kcycle{k}\colon \binom{[n]}{2}^m \to \{0,1\}$ denote the $k$-cycle function. Let $D$ and $\calP$ be as defined in \cref{conj:graph_property_partition}.

    Then, we have
    \begin{align*}
        \widetilde{\Omega}(m^{3/4-1/(2^{k+2}-4)})
        \leq&~ Q(\kcycle{k}|_{\calD}) &&\text{\cref{eq:conjecture_lemma2}}
        \\
        \leq&~O(Q(\kcycle{k}|_{\calD[\calP]})) &&\text{\cref{conj:graph_property_partition}}
        \\
        \leq&~ O(\sqrt{b^k}\cdot Q(\kdist{k}_m) 
        \leq \widetilde{O}(Q(\kdist{k}_m)), &&\text{\cref{eq:conjecture_lemma1}}
    \end{align*}
    which establishes the proposition.
\end{proof}

\section*{Acknowledgments}
We thank François Le Gall for suggesting the study of quantum algorithms in the edge list model. We thank Yassine Hamoudi for helpful discussions on \cite{Hamoudi_2023} and for simplifying our original proof of the Mirroring Lemma (\cref{lem:mirror}) in the binary-alphabet case. We thank anonymous reviewers for identifying errors in the Mirroring Lemma in submitted versions of this paper; these have been corrected in the current version.

Amin Shiraz Gilani is supported by the U.S. Department of Energy ASCR Quantum Testbed Pathfinder program (awards DE-SC0019040 and DE-SC0024220) and the U.S. Department of Energy, Office of Science, Accelerated Research in Quantum Computing, Fundamental Algorithmic Research toward Quantum Utility (FAR-Qu) program. Daochen Wang is supported by NSERC
Grants CRC-2023-00039, RGPIN-2024-06493, and DGECR-2024-00113. Xingyu Zhou is supported by NSERC
Grant RGPIN-2024-06493.

\phantomsection
\addcontentsline{toc}{section}{References}

\bibliography{references}
\bibliographystyle{alphaurl}

\newpage
\appendix
\section{Appendix: Facts about random graphs}\label{app:random_graphs}
For completeness, we prove facts on random graphs used in the main body in this appendix.

\begin{fact}[Sparse random graphs]
\label{fact:sparse-random-graph-restated}
Let $x\leftarrow \binom{[n]}{2}^m$.
\begin{enumerate}
    \item \label{enu:random-graph-maxdeg-restated} (Low max degree) For $m  \leq O(n)$, 
    \begin{equation}
        \Pr[\maxdeg(x) \leq 2\log (n) /\log\log (n))] \geq  1-o(1).
    \end{equation}
    \item \label{enu:random-graph-triangle-restated} (Existence of a $k$-cycle) For $m \geq \Omega(n)$, 
    \begin{equation}
        \Pr[x \textup{ contains a $k$-cycle}] \geq  \Omega(1).
    \end{equation}
    \item \label{enu:random-graph-triangle-sparse} (Existence of a $k$-cycle) For $m \leq o(n)$, 
    \begin{equation}
        \Pr[x \textup{ contains a $k$-cycle}] \geq  \Omega((m/n)^k).
    \end{equation}
\end{enumerate} 
In particular, the first two items imply that, for $m = \Theta(n)$, 
\begin{equation}
    \Pr[\maxdeg(x) \leq 2\log(n) /\log\log(n) \textup{ and $x$ contains a triangle}] \geq \Omega(1).
\end{equation}
\end{fact}
\begin{proof}
    \cref{enu:random-graph-maxdeg-restated} follows from the Chernoff bound followed by the union bound, like in a textbook analysis of the maximum load in a balls-into-bins experiment \cite{raab1998balls}. We give the details for completeness.

    Let $d\geq 0$. For a fixed vertex $v\in [n]$, the Chernoff bound (\cref{lemma_chernoff}) shows that
    \begin{equation}
        \Pr[\text{degree of $v$ in $x$ is at least $d$}] \leq \Bigl (\frac{e}{d}\cdot \frac{2m}{n}\Bigr)^d.
    \end{equation}
    Therefore, taking the union bound over all $v\in [n]$, and then using $m\leq O(n)$, gives
    \begin{equation}
        \Pr[\maxdeg(x) \geq d] \leq n \, \Bigl(\frac{e}{d}\cdot \frac{2m}{n}\Bigr)^d \leq n \, \Bigl(\frac{C}{d} \Bigr)^d
    \end{equation}
    for some constant $C>0$.

    Now, $n$ can be expressed as $\log(n)^{\log(n)/\log\log(n)}$, so 
    \begin{equation}
        \Pr[\maxdeg(x) \geq 2\log(n)/\log\log n] \leq \Bigl(\frac{C\sqrt{\log(n)} \cdot \log\log(n)}{2\log(n)} \Bigr)^{2\log(n)/\log\log(n)} < o(1),
    \end{equation}
    from which the result follows.

    \cref{enu:random-graph-triangle-restated,enu:random-graph-triangle-sparse} can be shown using the second-moment method as follows. The proofs for both are the same until after \cref{eq:expect_Xsquare_intermediate}.
   
    Consider the random variable
    \begin{equation}
        X \coloneqq \sum_{0\leq i_1<\dots< i_k\leq m} \1[x_{i_1},\dots x_{i_k} \text{ forms a \emph{sequential} $k$-cycle}],
    \end{equation}
    where the word ``sequential'' imposes the extra condition that for every $l\in \{1,\dots,k-1\}$, the edges $x_{i_1},\dots,x_{i_l}$ form a length-$l$ path. Clearly, it suffices to prove the claimed lower bounds on $\Pr[X>0]$. We restrict attention to sequential $k$-cycles as that makes $\expect[X]$ and $\expect[X^2]$ easier to analyze.
    
    We have
    \begin{equation}\label{eq:cycle_expectation}
         \expect[X] =\binom{m}{k}\frac{N\cdot2(n-2)\cdots 2(n-k+1)}{N^k},
    \end{equation}
    which gives $\expect[X] = \Theta((m/n)^k)$, and
    \begin{align}
        &\expect[X^2] 
        \\
        =& \expect[X] + \sum_{l=0}^{k-1}\sum_{\substack{i_1,\dots, i_k, j_1,\dots,j_k \in [m]\colon\\ \ \abs{\{i_1,\dots,i_k\}\cap\{j_1,\dots,j_k\}}=l}} \Pr[\text{$x_{i_1},\dots, x_{i_k}$ and $x_{j_1},\dots, x_{j_k}$ both form a $k$-cycle}]
        \notag
        \\
        =& \expect[X] + \binom{m}{k}\binom{m-k}{k}\Bigl(\frac{N\cdot2(n-2)\cdots 2(n-k+1)}{N^k}\Bigr)^2
        \notag
        \\
        &\ +\sum_{l=1}^{k-1}\binom{m}{l}\binom{m-l}{k-l}\binom{m-k}{k-l}\cdot \frac{N\cdot2(n-2)\cdots 2(n-k+1)}{N^k} \cdot \frac{2(n-l+1)\cdots 2(n-k+3)}{N^{k-l}}
        \notag
        \\
        \leq&\expect[X] + \expect[X]^2 
        \notag
        \\
        &\ +\sum_{l=1}^{k-1}\binom{m}{l}\binom{m-l}{k-l}\binom{m-k}{k-l}\cdot \frac{N\cdot2(n-2)\cdots 2(n-k+1)}{N^k} \cdot \frac{2(n-l+1)\cdots 2(n-k+3)}{N^{k-l}}, \label{eq:expect_Xsquare_initial}
    \end{align}
    where the term $(2(n-l+1)\cdots 2(n-k+3))$ is taken to mean $1$ when $l = k-1$.

    Continuing \cref{eq:expect_Xsquare_initial} gives
    \begin{equation}\label{eq:expect_Xsquare_intermediate}
        \expect[X^2] \leq \expect[X] + \expect[X]^2  + \frac{1}{n^2}\cdot O\Bigl(\sum_{l=1}^{k-2}\Bigl(\frac{m}{n}\Bigr)^{2k-l}\Bigr) + \frac{1}{n}\cdot O\Bigl(\Bigl(\frac{m}{n}\Bigr)^{k+1}\Bigr)\Bigr).
    \end{equation}
    \begin{enumerate}
        \item To prove \cref{enu:random-graph-triangle-restated}, which is for $m\geq \Omega(n)$, we may first wlog assume $m=\Theta(n)$ since the probability we are lower bounding increases with $m$. Therefore, $\expect[X] = \Theta(1)$ and the second-moment method gives
    \begin{equation}
        \Pr[X > 0] \geq \expect[X]^2/\expect[X^2] \geq \expect[X]^2/(\expect[X] + \expect[X]^2 +\Theta(1/n)) \geq \Omega(1),
    \end{equation}
    
    \item To prove \cref{enu:random-graph-triangle-sparse}, which is for $m\leq o(n)$, we first observe that since $\expect[X] = \Theta((m/n)^k)$, \cref{eq:expect_Xsquare_intermediate} gives
    \begin{equation}
        \expect[X^2] \leq \expect[X](1+O(1/n)) + \expect[X]^2.
    \end{equation}
        Therefore, the second-moment method gives
    \begin{equation}
    \begin{aligned}
        \Pr[X > 0] \geq \expect[X]^2/\expect[X^2] \geq&~ 1 - \Bigl(1 + \frac{\expect[X]}{(1+O(1/n))}\Bigr)^{-1} 
        \\
        \geq&~ \frac{1}{2}\frac{\expect[X]}{(1+O(1/n))}\geq \frac{1}{4}\expect[X] \geq \Omega((m/n)^k),
    \end{aligned}
    \end{equation}
    where the third inequality uses $(1+x)^{-1} \leq 1-x/2$ for $x\in [0,1]$ which applies as $\expect[X] \leq o(1)$.\qedhere
    \end{enumerate}
\end{proof}

\begin{fact}\label{fact:sublinear-graphs}
    Let $n,m,b\in \mathbb{N}$ be such $b=\ceil{\log(n)}$ and $n$ is divisible by $b$. Let $k\in \mathbb{N}_{\geq 3}$. Suppose $m = \Theta(n/b)$ and $x\leftarrow \binom{[n]}{2}^m$, then:
    \begin{enumerate}
    \item \label{enu:sublinear-kcycle} with probability at least $\Omega(1/\log^k(n))$, $x$ contains a $k$-cycle;
    \item \label{enu:sublinear-partition} with probability at least $1 - o(1/\log^{k}(n))$, there exists a partition of the vertex set $[n]$ into $n/b$ sets of size $b$ each such that every edge in $x$ connects two vertices \emph{within} the same part;
    \item \label{enu:sublinear-dup} with probability at least $1 - o(1/\log^k(n))$, there are fewer than $k$ edges in $x$ whose deletion leaves $x$ not containing any repeated edges.
\end{enumerate}
In particular, the probability that all three events above occur together is at least $\Omega(1/\log^k(n))$.
\end{fact}

\begin{proof}
The ``in particular'' part follows immediately from the three items of the fact. We prove each item as follows. We assume $\log(n)$ is a positive integer for notational convenience.

\cref{enu:sublinear-kcycle} This follows from the third item of \cref{fact:sparse-random-graph-restated}.

\cref{enu:sublinear-partition} 
We consider the Erd\H{o}s--R\'enyi $G_{n,p}$ model, where $p=C(n\log n)^{-1}$ for some suitably large constant $C>0$. This suffices because for large $C$, by concentration, the probability that the number of edges in $x\leftarrow G_{n,p}$ is fewer than $m$ is at most $\exp(- \Theta(n/\log n))$, which is exponentially small. The partition property the statement is looking for is a monotone property, the smaller the number of edges the easier the property is satisfied.\footnote{More formally, because the underlying distribution of $G_{n,p}$ is just $\expect_{m}[G_{n,m}]$ for some distribution on the edge count $m$.}

For a graph $x\leftarrow G_{n,p}$, let $\chi(x)$ denote the largest connected component in $x$. It is well known for sparse (sublinear in our case) random graphs that there are no giant connected components.
For us, it suffices to show the following bound
\begin{align}\label{eq:sublinear-chi-bound}
    \Pr_{x\leftarrow G_{n,p}} [\chi(x) \ge \log n] \le O((\log n) ^{-\log n}),
\end{align}
which is easy to see as follows. Suppose that $\chi(x)=s$, then for some subset of $s$ vertices, there are at least $(s-1)$ edges, the probability of which is bounded by
\begin{equation}
   \Pr_{x\leftarrow G_{n,p}}[\chi(x)\ge s] \leq \binom{n}{s} \binom{s(s-1)/2}{ s-1 } p^{s-1}\leq \Bigl(\frac{en}{s}\Bigr)^s \cdot \Bigl(\frac{es}{2}\Bigr)^{s-1} \cdot \Bigl( \frac{C}{n\log n}\Bigr) ^{s-1} \leq \frac{en}{s}\cdot\Bigl(\frac{Ce^2}{2\log n}\Bigr)^{s-1}.
\end{equation}
Setting $s=\log n$, we obtain~\cref{eq:sublinear-chi-bound}.

As long as $\chi(x)\le \log n$, there always is a partition with the required property. To obtain such a partition, for each connected component with more than one edge, we take some isolated vertices and group them into a part of size $\log n$. Since there are only $m$ edges, in the worst case, there are at most $m$ nontrivial connected components each demanding $\log(n) -2$ single vertices to become a part of size $\log n$. In total, we need at most $m (\log(n)- 2) = n-2m$ isolated vertices. Since we always have at least $n-2m$ isolated vertices, this concludes the proof.

\cref{enu:sublinear-dup} We bound the probability of there being a $3$-collision in $x$ by
\begin{align}   
    \Pr[\exists \text{ distinct } &i_1, i_2, i_3~ \text{s.t.}~ x_{i_1} = x_{i_2} = x_{i_3}] 
    \notag \\
    &\leq \sum_{\text{distinct } i_1,i_2,i_3 \in [m]} \Pr[x_{i_1} = x_{i_2} = x_{i_3}] = \textstyle{\binom{m}{3}}/\textstyle{\binom{n}{2}^2} = O(1/n).
\end{align}
Therefore, with probability $1 - O(1/n)$, there are no $3$-collisions in $x$. 

Now, we compute the probability of there being (at least) $k$ $2$-collisions in $x$: 
\begin{align}
    \Pr[\exists ~&\text{distinct }i_1, j_1, \dots, i_k,j_k: x_{i_1} = x_{j_1}, \dots, x_{i_k}=x_{j_k}] 
    \notag \\
    &= \prod_{\ell \in [k]} \Pr\bigl[\exists \text{ distinct }i_\ell, j_\ell \not \in \{i_1,j_1, \dots, i_{\ell-1}, j_{\ell-1}\} \text{ s.t. } x_{i_\ell} = x_{j_\ell} 
    \notag\\
    &\qquad\qquad\qquad\qquad \bigm| \exists \text{ distinct }i_1, j_1, \dots, i_{\ell-1},j_{\ell-1} \text{ s.t. } x_{i_1} = x_{j_1}, \dots, x_{i_{\ell-1}}=x_{j_{\ell-1}}\bigr] 
    \notag\\
    &\leq \prod_{\ell \in [k]} \sum_{\substack{\text{ distinct }i_\ell, j_\ell\\ \not \in  \{i_1,j_1, \dots, i_{\ell-1}, j_{\ell-1}\}}} \Pr[x_{i_\ell} = x_{j_\ell}] = \Bigl(\textstyle{\binom{n/\log(n) - 2(\ell-1)}{2} \binom{n}{2}^{-1}}\Bigr)^{k} \leq  O\bigl(1/\log^{2k}(n)\bigr).
\end{align}
Therefore, with probability at least $1 - O(1/\log^{2k}(n))$, there are fewer than $k$ $2$-collisions in $x$. 

It follows that with probability at least $1 - O(1/\log^{2k}(n))$, there are fewer than $k$ elements in $x$ whose deletion leaves $x$ with no repeated edges.
\end{proof}

\end{document}